

\documentclass[final,3p, times, 12pt]{elsarticle}




\usepackage{amssymb}
\usepackage[english]{babel}
\usepackage{graphicx,epsfig,color}
\usepackage{float}
\usepackage{wrapfig}
\usepackage{subfig}
\usepackage{amsfonts,amsmath,latexsym}
\usepackage{amsmath}
\usepackage{amsthm}
\usepackage[makeroom]{cancel}
\usepackage{tabularx}
\usepackage{bbm, dsfont}
\usepackage{upgreek}
\usepackage{multirow}
\usepackage{soul}


\addcontentsline{toc}{section}{References}
\addtolength{\voffset}{1.0cm}
\definecolor{light}{gray}{.85}
\fboxsep=10pt
\topmargin -0.5in \oddsidemargin -0.15in \textwidth 6.85in \textheight
10.00in \headheight 0.0in \headsep 0.0in

 \newcommand{\norm}[1]{\left\lVert#1\right\rVert}

\newtheorem{cor}{Corollary}[section]
\newtheorem{prn}{Proposition}[section]
\newtheorem{asm}{Assumption}

\numberwithin{equation}{section} \numberwithin{prn}{section}
\numberwithin{cor}{section} \numberwithin{thm}{section}
\numberwithin{lea}{section}


\makeatletter
\def\hlinewd#1{%
\noalign{\ifnum0=`}\fi\hrule \@height #1 %
\futurelet\reserved@a\@xhline}
\makeatother

\usepackage{lipsum}
\makeatletter
\def\ps@pprintTitle{%
 \let\@oddhead\@empty
 \let\@evenhead\@empty
 \def\@oddfoot{}%
 \let\@evenfoot\@oddfoot}
\makeatother
\begin{document}
\begin{frontmatter}



\title{\bf
Estimation of multiple
change points under a generalised Ornstein-Uhlenbeck 
framework}


\author[*]
{Fuqi Chen}~~~~~~~~~~~~~~~~~
\author[*,AM]
{Rogemar Mamon\corref{cor1}}
\author[*,AM]{Matt Davison}
\address[*]
{Department of Statistical and Actuarial Sciences,
University of Western Ontario, London, Ontario, Canada}
\address[AM]{Department of Applied Mathematics,
University of Western Ontario,
London, Ontario, Canada}
\cortext[cor1]{Corresponding author. Department of Statistical
and Actuarial Sciences, University of Western Ontario,
1151 Richmond Street, London, Ontario, Canada, N6A 5B7.
Email: \textit{rmamon@stats.uwo.ca}}

\begin{abstract}
The use of an Ornstein-Uhlenbeck (OU) process is ubiquitous
in business, economics and finance to capture
various price processes and evolution of economic indicators
exhibiting mean-reverting properties. When structural changes
happen, econo
mic dynamics drastically change and the
times at which these occur are of particular interest
to policy makers, investors and financial product providers.
This paper addresses
the change-point problem under a generalised OU model and
investigates the associated statistical inference. We propose two estimation methods to locate multiple change
points and show the
asymptotic properties of the estimators.
An informational approach is employed in
detecting the change points, and the
consistency of our methods is also theoretically
demonstrated. Estimation is considered under the setting where
both the number and location of change points are unknown.
Three computing algorithms are further developed for implementation.
The practical applicability of our methods is illustrated using simulated
and observed financial market data.
\end{abstract}

\begin{keyword}
Ornstein-Uhlenbeck process
\sep sequential analysis \sep least sum of squared errors
\sep maximum likelihood \sep consistent estimator
\sep segment neighbourhood search method
\sep PELT algorithm


\end{keyword}
\end{frontmatter}

\section{Introduction}
\noindent We examine the change-point detection problem on the drift parameters of a generalised version of the Ornstein-Uhlenbeck (OU) process introduced in Dehling, et al. (2010); see also Dehling, et al. (2014) and Zhang~(2015). Such a process $X_t$ is a solution to the stochastic differential equation (SDE)
\begin{eqnarray}\label{ou1}
dX_t&=&S(\boldsymbol{\mathrm \uptheta},t,X_t)dt+\sigma dW_t,\quad 0<t\leq T,
\end{eqnarray}
where $S(\boldsymbol{\mathrm \uptheta},t,X_t)=L(t)-a X_t=\sum_{i=1}^p\mu_i\varphi_k(t)-a X_t$, $i=1,~\ldots,~p$,
$\boldsymbol{\mathrm \uptheta}=(\mu_1,...,\mu_p,-a)^{\top}$ and $\top$ denotes the transpose of a matrix. Here, $W_t$ is a one-dimensional standard Brownian motion defined on some probability space $(\Omega, {\cal F}, P).$ In particular, if $L(t)=\mu$ then (\ref{ou1}) is the SDE of the classical OU process, which is
commonly used to model the stochastic dynamics of various financial variables. \\
\ \\
\noindent Many economic indicators, prices in the financial market as well as processes in the natural and physical sciences and engineering are captured sufficiently by the OU model. The classical work of Vasicek
(1977) employs an OU model for bond valuation. The importance of
this stochastic process is also demonstrated by its ubiquity in many fields.
For instance, the OU process is {\color{black}used in mathematical models of the electricity market} (e.g., Erlwein, et al. (2010)), commodity futures market
(e.g., Date, et al. (2010)), weather derivatives (e.g., Elias,
et al. (2014)), central-bank {\color{black}rate setting} policy (e.g., Elliott and Wilson (2007)), {\color{black} spreads between pairs of securities (e.g., Elliott (2005)),}
stochastic control-driven insurance problems (e.g., Liang, et al. (2011)),
spot freight rates in the shipping industry (e.g., Benth, et al. (2015)), risk management (e.g.,
Date and Bustreo (2015), and power generation (e.g., Howell, et al. (2011)).
{\color{black} In the OU-modelling context, Tenyakov, et al. (2016) {\color{black}proposed} a
signal processing-based approach to determine presence of market liquidity regimes.}
Various applications of the OU process are also highlighted in biology (e.g., Rohlfs, et al. (2010)),
neurology (e.g., Shinomoto, et al. (1999)), survival analysis (e.g., Aalen and Gjessing (2004)),
physics (e.g., Lansk\'{y} {\color{black}and} Sacerdote (2001)), and chemistry (e.g., Lu (2003) and (2004)).
\\
\ \\
\noindent We note that the mean-reverting level of an OU process is constant, which {\color{black} can be} a notable weakness {\color{black} for many financial datasets}. This
{\color{black}may be} rectified by introducing a generalised OU process where a time-dependent function
describes its level of mean reversion. Such a generalised version
incorporates time-inhomogeneity and seasonality of mean reversion simultaneously.
Dehling, et al. (2014) developed the framework to study a change-point phenomenon under
the generalised OU process. This allows the model to capture drastic changes {\color{black}at} certain time points (e.g., {\color{black}drastic-moving} interest rates
due to the outbreaks of financial crisis or war).
In practice, many data series are characterised by some potential changes
in their evolution structure, i.e., a sudden change in mean or variance and other model parameters.
It is then of interest to determine the (i) existence
and (ii) location of the change point. This implies segregating the data series
into different segments and analysing them in a {\color{black}less efficient but more accurate} way.
{\color{black} Thus, our research contributions
support and {\color{black}complement} the objective of papers employing regime-switching OU-process
as we provide a methodology to verify the switching phenomenon in the data.
We go further by precisely estimating where the switch occurred
and how many switches are possible given a data set. An instance of this
support and {complementarity} are depicted in Subsection 3.1 of Tenyakov, et al. (2016),
where a simple statistical testing of regime-switching in the data was performed.}
\\
\ \\
\noindent Pioneering contributions to this field of change-point detection
were spearheaded by
Page~(1954) and Shiryaev~(1963). Advances in recent years
have tackled the (i) estimation of change points and coefficients
of linear regression models with multiple change points (Bai and Perron~(1998);
Perron and Qu~(2006); Lu and Lund~(2007), Gombay~(2010), and Chen and Nkurunziza~(2015)); (ii) change-point
testing for the drift parameters of a periodic mean-reverting process (cf. Dehling,
et al.~(2014)); (iii) applications in finance
(cf. Spokoiny~(2009)); (iv) detection of malware within software
(Yan, et al.~(2008)); (v) climatology (Reeves, et al. (2007), Robbins, et al.~(2011), Gallagher et al.~(2012));
and epidemiology (Yu, et al.~(2013)).
The analysis of change points could be described more generally as a
hypothesis-testing problem for the existence of change points in various
locations.  This could be viewed, from another perspective, as a model selection problem
where the change points are additional unknown parameters to be estimated. \\
\ \\
\noindent The change-point problems
are typically examined depending on two {\color{black}alternatives: (i) the} number of change points is known but their exact
locations are unknown (Perron and Qu (2006) and Chen and Nkurunziza~(2015)) and
(ii) both the number and the exact locations
of the change points are unknown. The estimation methods under the first scenario only require
the identification of the exact locations of the change points. {\color{black}It is easier for the first alternative than for the second}. Closed-form solutions
{\color{black}for} the direct calculation of the change point are usually not available.
Current change-point estimation approaches are
normally constructed to perform a search at every possible location of
unknown candidate change points via some efficient computational algorithms
subject to some constraint or criteria.
Examples of well-known algorithms for
change point detection include: (i) {\color{black}the} binary segmentation type algorithm
(Scott and Knott, (1974); Sen and Shrivastava, (1975)),
(ii) {\color{black}the} segment-neighbourhood type algorithm (Auger and Lawrence, (1989);
Bai and Perron, (1998)) with adaption to the restricted regression
model (Perron and Qu, (2006)); and (iii) {\color{black}the} optimal partitioning type algorithm (Jackson et al., (2005)) and its pruned version, PELT method by Killick, et al. (2012). Further details of these algorithms
can be found in Killick, et al. (2012) and Maidstone, et al., (2014).\\
\ \\
\noindent The intents of our work are motivated by
two major research results. The first motivation is from Dehling, et al. (2010) that derives
a maximum likelihood estimator (MLE) for the drift parameters of the diffusion
process and establishes its asymptotic properties. This was extended in
Dehling, et al. (2014), where
there is one unknown change point and a likelihood-ratio test statistic was
constructed to
determine such change point. The second motivation is from
Zhang~(2015) that establishes the asymptotic properties of both the unrestricted and restricted MLE for
the drift parameters of the generalised OU process with a single change point.
A James-Stein-type shrinkage estimator
for the drift parameters is proposed in Zhang~(2015) as an improvement
and it is also shown that the previously established asymptotic properties also hold for any consistent
estimator for the rate of the change point. \\
\ \\
\noindent {\color{black}Neither Dehling, et al.~(2014) nor Zhang~~(2015) offer} a specific methodology to identify the change point.
This led us to the three main contributions {\color{black}of} this paper.
First, we extend the single-change point framework to the multiple-change point setting and  present two consistent methods to estimate the unknown locations of change points.
Second, we prove the asymptotic normality of the drift parameters' MLE.
Third, we employ information-based statistics to resolve the issue
of estimating the unknown number of change points and then created three
algorithms to implement the calculations. We validate
the performance of our estimation techniques using simulated and
real market data.
\\
\ \\
\noindent This paper is structured as follows.
In Section 2, we present the formulation of the multiple change-point problem.
Section 3 summarises the
results of Dehling, et al. (2014) and Zhang~(2015)
on MLE and the related asymptotic properties, which provide an
impetus on the asymptotic performance of our proposed methods.
Two estimation methods are put forward to determine
the unknown locations of change
points in section 4 along with the discussion of the asymptotic properties of the estimators;
we find that
the asymptotic properties obtained in
in Zhang~(2015) also hold in our proposed techniques.
Section 5 deals with
the problem of both the existence issue and location of the change points
using an information approach. We develop
computing algorithms in section 6 in order to implement the
proposed methods. In section 7, we assess the
applicability of our methods through numerical examples
on simulated and observed financial
market data. Finally, section 8 provides some concluding remarks.

\section{Problem description in determining change points}
\noindent We study the
generalised version of the OU process with SDE representation given in (\ref{ou1}).
It is assumed that there {\color{black}exist} $m$ ($m\geq 1$) unknown change points $\tau_j=s_j T$, where $j=1,~\ldots,~m$ and $0<s_1<~\ldots~<s_m<1$. To simplify the notation, we let $\tau_0=0$ and $\tau_{m+1}=T$. {\color{black}In our setup} $\boldsymbol{\mathrm \uptheta}=(\boldsymbol{\mathrm \uptheta}^{(1)\top}, \ldots,~\boldsymbol{\mathrm \uptheta}^{(m+1)\top})^{\top}$ with $\boldsymbol{\mathrm \uptheta}^{(j)}=(\mu_1^{(j)},...,\mu_p^{(j)}, -a^{(j)})^{\top}$ for $\tau_{j-1}<t\leq \tau_j$ and
\begin{equation}\label{eq-S}
S(\boldsymbol{\mathrm \uptheta},t,X_t)=\sum_{j=1}^{m+1}\left(\sum_{k=1}^p\mu_k^{(j)}\varphi_k(t)-a^{(j)} X_t\right) \mathbbm{1}_{( \tau_{j-1}<t<\tau_j)}
\end{equation}
with $\mathbbm{1}_{(.)}$ as the indicator function. {\color{black}Note that} $\boldsymbol{\uptheta}^{(j)}$ may also be a vector.\\
\ \\
\noindent We start by assuming that the number of change points $m$ is known, but the exact value of each change point denoted by $\tau_1^0,~\ldots~,\tau_m^0$ (and correspondingly the exact rates $s_j^0$, $j=1,~\ldots,~m$) are unknown. Furthermore, considering that we have multiple change points in the model, we posit that these change points are asymptotically distinct.  We further impose the following assumptions.\\
\begin{asm}\label{asm1}$\tau_j^0=Ts_j^0$, $0< s_1^0<~\dots~<s_m^0<1$.
We call $\displaystyle s_j=\frac{\tau_j^0}{T}$ the {\color{black} change points' arrival rate}, and if we have $\widehat s_j$,
the value $\widehat \tau_j^0$ is immediate.
\end{asm}
\noindent Assumption~\ref{asm1} implies that the length of each regime $[\tau_{j-1}^0,\tau_j^0]$ is proportional to $T$. The structure of the model in each regime is similar to that {\color{black}of} the no-change point process studied in Dehling~et. al (2010); see also Zhang~(2015) for the case of a single change point. This means that the results established in the existing literature could also be adapted to the case of multiple change points. \\

\noindent MLEs
for the drift parameters and their asymptotic properties were shown
in Dehling, et al. (2010) for the case of no change point and
in Zhang (2015) for the case of one change point. Certainly, Zhang~(2015)
is a special case of our study with
with $m=1$. The next section reviews previous results and extends them
to the multiple change points problem.\\

\section{Prior MLE-based results and our extension}\label{section-mle}
\noindent \noindent The asymptotic normality for the MLE estimator of the drift parameters
in Zhang (2015) assumes that the estimator is already consistent. In our case,
we shall prove ({\color{black}rather than simply assume}) that such an estimator of the change point is consistent.\\
\ \\
\noindent In the {\color{black}subsequent} discussion, we write ``$\xrightarrow[T\rightarrow \infty]{p}$", $``\xrightarrow[T\rightarrow \infty]{D}"$, and $``\xrightarrow[T\rightarrow \infty]{a.s.}"$ to mean convergence in probability, convergence in distribution, and convergence almost surely, respectively. The notation $||.||$ denotes the Frobenius norm for matrices. We use bold, unitalicised English or Greek letters in lowercase for vectors; and bold, unitalicised English or Greek letters in upper case for matrices. \\
\ \\
The ``$O(\cdot)$'' denotes the Landau symbol, also known as the ``Big O'' notation, which is used to describe the asymptotic behaviour of functions. So, for a set of random variables $U_n$ and a corresponding set of constants
$a_n$, $U_n=O_p(a_n)$ means $U_n/a_n$ is stochastically bounded. Formally, this means $\forall \epsilon > 0
,~~\exists~M >0
,~~\ni~P(\left|U_n/a_n \right| > M) < \epsilon,~~\forall n.$
On the other hand, the symbol involving ``small o'', i.e., $U_n=o_p(a_n)$ means $U_n/a_n$ converges in probability to zero as $n$ approaches an appropriate limit. So, since $U_n=o_p(a_n)$ is equivalent to $U_n/a_n=0_p(1)$,
convergence in probability {\color{black}is here} defined as $\displaystyle \lim_{n \rightarrow \infty}\left(
P(|U_n/a_n)| \geq \epsilon \right)=0.$

\subsection{Log likelihood function}\label{section21}
\noindent The following assumption from Dehling, et. al.~(2010) is also {\color{black}retained} here.
\begin{asm}\label{asm2}$\textrm{P}\left(\int_{0}^{T}S^2(\boldsymbol{\mathrm \uptheta},t,X_t)<\infty\right)=1$, for all $0<T<\infty$ and elements $\boldsymbol{\mathrm \uptheta}^{(j)}$ of $\boldsymbol{\mathrm \uptheta}$ involved in $S(\boldsymbol{\mathrm \uptheta},t,X_t)$ given by equation (\ref{eq-S}).\end{asm}
\noindent Under Assumptions \ref{asm1}--\ref{asm2} and Theorem 7.6 of Lipster and Shiryayev~(2001), the corresponding likelihood function in our modelling framework is
\begin{eqnarray*}
\ell^*(\boldsymbol{\mathrm \uptheta},X_t)&=&\exp\left(\frac{1}{\sigma^2}\int_{0}^T S\left(\boldsymbol{\mathrm \uptheta},t,X_t\right)dX_t-\frac{1}{2\sigma^2}\int_{0}^TS^2\left(\boldsymbol{\mathrm \uptheta},t,X_t\right)dt\right).
\end{eqnarray*}
The log-likelihood function is therefore
\begin{eqnarray*}
\log \ell(\boldsymbol{\mathrm \uptheta},X_t)&=&\frac{1}{\sigma^2}\int_{0}^T S(\boldsymbol{\mathrm \uptheta},t,X_t)dX_t-\frac{1}{2\sigma^2}\int_{0}^TS^2(\boldsymbol{\mathrm \uptheta},t,X_t)dt
=\frac{1}{\sigma^2}\sum_{j=1}^{m+1}\boldsymbol{\mathrm \uptheta}^{(j)\top}\boldsymbol{\mathrm{\tilde{r}}}_{(\tau_{j-1}^0,\tau_j^0)}\\&&-\frac{1}
{2\sigma^2}\sum_{j=1}^{m+1}\boldsymbol{\mathrm \uptheta}^{(j)\top}\boldsymbol{\mathrm Q}_{(\tau_{j-1}^0,\tau_j^0)}\boldsymbol{\mathrm \uptheta}^{(j)},
\end{eqnarray*}
where
$$\boldsymbol{\mathrm Q}_{(\tau_{j-1}^0,\tau_j^0)}=
\left[\begin{array}{cccc}
    \int_{\tau_{j-1}^0}^{\tau_j^0}\varphi_1^2(t)dt &\dots& \int_{\tau_{j-1}^0}^{\tau_j^0}\varphi_1(t)\varphi_p(t)dt
    & -\int_{\tau_{j-1}^0}^{\tau_j^0}\varphi_1(t)X_tdt \\
    \dots&&& \\
    -\int_{\tau_{j-1}^0}^{\tau_j^0}\varphi_p(t)X_tdt &\dots & -\int_{\tau_{j-1}^0}^{\tau_j^0}\varphi_p(t)X_tdt & \int_{\tau_{j-1}^0}^{\tau_j^0}X_t^2dt
\end{array}\right]$$
and $\boldsymbol{\mathrm{\tilde{r}}}_{(\tau_{j-1}^0,\tau_j^0)}
=\left( \int_{\tau_{j-1}^0}^{\tau_j^0}\varphi_1(t)dX_t,~\ldots,~\int_{\tau_{j-1}^0}^{\tau_j^0}\varphi_p(t)dX_t,
-\int_{\tau_{j-1}^0}^{\tau_j^0}X_tdX_t \right)^{\top}.$

\subsection{Maximum likelihood estimators for the draft parameters}\label{section22}
\noindent By setting the first partial derivatives with respect to each of the parameters of $\ell(\boldsymbol{\mathrm \uptheta},X_t)$ to 0, we obtain the MLE of the drift parameters, provided $\boldsymbol{\mathrm Q}_{(\tau_{j-1}^0,\tau_j^0)}$ is invertible for each $j=1,~\ldots,~m+1$.
When $\boldsymbol{\mathrm Q}_{(\tau_{j-1}^0,\tau_{j}^0)}^{-1}$ exists, Remark 3
of Dehling, et al. (2010) shows that $(s_{j}^0-s_{j-1}^0)T\boldsymbol{\mathrm Q}_{(\tau_{j-1}^0,\tau_{j}^0)}^{-1}$
must exist almost
surely if $T$ is large enough. Moreover, Proposition 2.1.1. of
Zhang~(2015) is also adapted to give the positive definiteness of $\frac{1}{(s_{j}^0-s_{j-1}^0)T}\boldsymbol{\mathrm Q}_{(\tau_{j-1}^0,\tau_{j}^0)}$
under the following assumption.
\begin{asm}\label{asm3}For any $T>0$, the base function $\{\varphi_k(t),k=1,~\ldots,~p\}$ is
Riemann-integrable on $[0,T]$ and satisfies two properties.
\begin{enumerate}
  \item Periodicity. That is, $\varphi_k(t+v)=\varphi_k(t)$ $\forall$ $i=1,~\ldots,~p$ and $v$ is the
  period observed in the data.
  \item Orthogonality. That is, $\forall$ $k_1,k_2=1,~\ldots,~p$, $\int_{0}^v\varphi_{k_1}(t)\varphi_{k_2}(t)dt$ is
  equal to $v$ if $k_1=k_2$ and 0 otherwise.
\end{enumerate}
\end{asm}
\noindent By Assumption~\ref{asm3}, $\varphi_k(t)$ is bounded on $\boldsymbol{\mathrm r}_{+}$ (i.e.  $\varphi_k(t)\leq K_{\varphi}$ for some $0< K_{\varphi}<\infty$ ) as for every $k$ the base function $\varphi_k(t)$ is bounded on $[0,T]$ and $v$-periodic. The following result is obtained by reducing the time period from $[0,T]$ to $[\tau_{j-1}^0,\tau_j^0]$, $j=1,~\ldots,~m+1$ and utilising the same arguments as in Proposition 2.1.1 of Zhang~(2015).

\begin{prn}\label{prn01}
Suppose Assumptions \ref{asm1}-\ref{asm3} hold for any $T>0$ and $t\in [\tau_{j-1}^0,\tau_j^0]$. The base
functions $\{ \varphi_k(t),k=1,...,p \}$ are incomplete if and
only if $\frac{1}{(s_{j}^0-s_{j-1}^0)T}\boldsymbol{\mathrm Q}_{(\tau_{j-1}^0,\tau_{j}^0)}$ is a positive definite matrix.
\end{prn}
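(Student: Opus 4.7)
The plan is to adapt the argument of Proposition~2.1.1 of Zhang~(2015) to the sub-interval $[\tau_{j-1}^0,\tau_j^0]$ by exploiting Assumption~\ref{asm1}. First, note that $1/((s_j^0-s_{j-1}^0)T)=1/(\tau_j^0-\tau_{j-1}^0)>0$ is merely a strictly positive rescaling, so I can equivalently work with $\boldsymbol{\mathrm{Q}}_{(\tau_{j-1}^0,\tau_j^0)}$ itself. The essential observation is that $\boldsymbol{\mathrm{Q}}_{(\tau_{j-1}^0,\tau_j^0)}$ is (up to a sign convention on its last coordinate) the Gram matrix in $L^2([\tau_{j-1}^0,\tau_j^0])$ of the collection $\{\varphi_1,\ldots,\varphi_p,X_t\}$; a direct expansion yields, for every $\boldsymbol{\mathrm{c}}=(c_1,\ldots,c_p,c_{p+1})^{\top}\in\mathbb{R}^{p+1}$,
\[
\boldsymbol{\mathrm{c}}^{\top} \boldsymbol{\mathrm{Q}}_{(\tau_{j-1}^0,\tau_j^0)}\boldsymbol{\mathrm{c}}=\int_{\tau_{j-1}^0}^{\tau_j^0}\left(\sum_{k=1}^p c_k\varphi_k(t)-c_{p+1}X_t\right)^{\!2} dt\ge 0,
\]
which is finite almost surely thanks to Assumption~\ref{asm2}.

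The equivalence in the proposition then reduces to a standard Gram-matrix argument: positive definiteness fails precisely when some non-trivial $\boldsymbol{\mathrm{c}}$ makes the integrand above vanish Lebesgue-a.e.\ on $[\tau_{j-1}^0,\tau_j^0]$. The orthogonality and periodicity imposed by Assumption~\ref{asm3} already preclude linear dependence among the $\varphi_k$'s alone (the sub-interval $[\tau_{j-1}^0,\tau_j^0]$ contains at least one full period for sufficiently large $T$, per the hypothesis), so the only way to get a non-trivial vanishing combination is $c_{p+1}\neq 0$ together with $X_t=\sum_{k=1}^p (c_k/c_{p+1})\varphi_k(t)$ a.e.\ on $[\tau_{j-1}^0,\tau_j^0]$. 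This last identity is exactly the statement that the basis $\{\varphi_k\}$ is \emph{complete} with respect to the sample path of $X_t$; negating it gives the stated iff.

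The main obstacle I anticipate is the measure-theoretic step needed in the reverse direction: ruling out, with probability one, that the diffusion path $X_t$ coincides Lebesgue-a.e.\ on $[\tau_{j-1}^0,\tau_j^0]$ with a deterministic periodic linear combination of the $\varphi_k$'s. This is handled by the non-degeneracy $\sigma>0$ of the Brownian component in~(\ref{ou1}): such a coincidence would force $\sigma W_t$ to be of bounded variation on a positive-measure subset of $[\tau_{j-1}^0,\tau_j^0]$, contradicting the quadratic variation of $W_t$. Once this pathwise fact is imported from Zhang~(2015), the remainder of the proof is a verbatim transcription with the integration interval $[0,T]$ replaced by $[\tau_{j-1}^0,\tau_j^0]$, which is the precise content of Assumption~\ref{asm1}.
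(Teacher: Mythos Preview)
Your proposal is correct and follows essentially the same approach as the paper: both reduce the proof to Proposition~2.1.1 of Zhang~(2015) with the integration interval $[0,T]$ replaced by $[\tau_{j-1}^0,\tau_j^0]$, which is precisely what Assumption~\ref{asm1} legitimises. You have simply spelled out more of the Gram-matrix argument (the quadratic-form identity and the role of $\sigma>0$) than the paper does, but the underlying route is identical.
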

\noindent For the rest of this paper, we assume that the sample
size $T$ is an integral multiple of the period length $v$,
i.e., $T=Nv$ for some integer $N$. Without loss of generality,
we let $v=1$, {\color{black}implies} that $\varphi_k(t+1)=\varphi_k(t)$.\\
\ \\
\noindent Using the results in Dehling, et. al (2010) and Zhang~(2015), the MLE of the drift parameters based on the log likelihood function provided above are {\color{black}given} by $\boldsymbol{\mathrm{\hat{\uptheta}}}=(\boldsymbol{\mathrm{\hat{\uptheta}}}^{(1)\top},...,\boldsymbol{\mathrm{\hat{\uptheta}}}^{(m+1)\top})^{\top}$ with
\begin{equation}\label{mlem1}\boldsymbol{\mathrm{\hat{\uptheta}}}^{(j)}
=\boldsymbol{\mathrm Q}_{(\tau_j^0,\tau_{j-1}^0)}^{-1}\boldsymbol{\mathrm{\tilde{r}}}_{(\tau_j^0,\tau_{j-1}^0)},\quad j=1,\ldots,m+1.\end{equation}
Substituting (\ref{ou1}) into (\ref{mlem1}) and going through some algebraic computations will lead to
\begin{equation}\label{mlem2}\boldsymbol{\mathrm{\hat{\uptheta}}}^{(j)}=\boldsymbol{\mathrm \uptheta}^{(j)}+\sigma T \boldsymbol{\mathrm Q}_{(\tau_j^0,\tau_{j-1}^0)}^{-1}\frac{1}{T}\boldsymbol{\mathrm r}_{(\tau_j^0,\tau_{j-1}^0)},\quad j=1,~\ldots,~m+1,\end{equation}
where  $\boldsymbol{\mathrm r}_{(a,b)}=\left(\int_a^b\varphi_1(t)dW_t,...,\int_a^b\varphi_p(t)dW_t,-\int_a^b X_tdW_t\right)^{\top}$ for $0\leq a<b\leq T$.
\subsection{Asymptotic properties of the MLE}\label{section23}
\noindent To study the asymptotic proprieties of the MLE in the next section, equation (\ref{mlem1}) to be precise, we review the established asymptotic results in Dehling, et al.~(2010) for the case where there is no change point ($m=0$) and also the results in Zhang~(2015) when there exists one change point ($m=1$).\\
\ \\
\noindent If there is no change point ($m=0$ and $\boldsymbol{\mathrm \uptheta}=\boldsymbol{\mathrm \uptheta}^{(1)}$ for $0<t\leq T$), the solution of the SDE (\ref{ou1}) has the explicit representation
\begin{equation}\label{sol1}X_t=e^{-a^{(1)}t}X_{0}+h^{(1)}(t)+\boldsymbol{\mathrm z}_t^{(1)},\quad 0<t\leq T,\end{equation}
where
$\displaystyle{h^{(j)}(t)=e^{-a^{(j)}t}\sum_{i=1}^p\mu_i^{(j)}\int_{0}^te^{a^{(j)}s}\varphi_k(s)ds}$
and
$\displaystyle{\boldsymbol{\mathrm z}_t^{(j)}=\sigma e^{-a^{(j)}t}\int_{0}^te^{a^{(j)}s}dW_s}.$\\
Note that {\color{black}as} the process $\{X_t,t\geq 0\}$ is not stationary in the ordinary sense, it is impossible to apply the ergodic theorem directly. To circumvent this, a stationary solution, for $t\in \mathbb{R}$ instead of $t\geq 0$, was introduced in Dehling, et al.~(2010). Consider
\begin{equation}\label{cx0}\tilde{X}_t=\tilde{h}^{(1)}(t)+\tilde{Z}_t^{(1)},\quad 0<t\leq T,\end{equation}
where
$\displaystyle{\tilde{h}^{(j)}(t)=e^{-a^{(j)}t}\sum_{i=1}^p\mu_i^{(j)}\int_{-\infty}^te^{a^{(j)}s}\varphi_k(s)ds}$,
$\displaystyle{\tilde{Z}_t=\sigma e^{-a^{(j)}t}\int_{0}^te^{a^{(j)}s}d\tilde{B}_s},$
and $(\tilde{B}_s)_{s\in  \mathbb{R}}$ denotes a bilateral Brownian motion, i.e.,
$$\tilde{B}_s={B}_s \mathbbm{1}_{ \boldsymbol{\mathrm r}_{+}}(s)+\bar{B}_{-s} \mathbbm{1}_{ \boldsymbol{\mathrm r}_{-}}(s),$$
where $({B}_s)_{s\geq 0}$ and  $(\bar{B}_s)_{s\geq 0}$ are two independent standard Brownian motions. Then, from Lemma~{4.3} in Dehling, et al. (2010), the sequence of $\mathcal{C}[0,1]$-valued random variables $W_k(s)=\tilde{X}_{k-1+s}$, $0\leq s\leq 1$, $k\in \mathbb{N}$ is stationary and ergodic. In this case, by Proposition 4.5 of Dehling, et al. (2010),
\begin{equation}\label{cx2}\frac{1}{T}\int_0^T\tilde{X}_t\varphi_k(t)dt
\xrightarrow[T\rightarrow \infty]{a.s.}\int_0^1(\tilde{h}^{(1)}(t))\varphi_k(t)dt \quad
\mbox{and}\quad \frac{1}{T}\int_0^T\tilde{X}_t^2dt\xrightarrow[T\rightarrow \infty]
{a.s.}\int_0^1(\tilde{h}^{(1)}(t))^2dt+\frac{\sigma^2}{2a^{(1)}}.\end{equation}
\noindent Moreover, under Assumptions \ref{asm1}-\ref{asm3}, Lemma 4.4 in Dehling, et al. (2010), \begin{equation}\label{cx1}|\tilde{X}_t-X_t|\xrightarrow[t \rightarrow \infty]{a.s.} 0.\end{equation}
Using (\ref{cx1}), we have the following properties:
$$\frac{1}{T}\int_0^T\tilde{X}_t\varphi_k(t)dt-\frac{1}{T}\int_0^T{X}_t\varphi_k(t)dt
\xrightarrow[T\rightarrow \infty]{a.s.} 0\quad\mbox{and}~~\frac{1}{T}\int_0^T\tilde{X}_t^2dt-\frac{1}{T}\int_0^T{X}_t^2dt
\xrightarrow[T\rightarrow \infty]{a.s.} 0.$$
It follows from (\ref{cx2}) that
$$\frac{1}{T}\int_0^T{X}_t\varphi_k(t)dt\xrightarrow[T\rightarrow
\infty]{a.s.}\int_0^1\tilde{h}^{(1)}(t)\varphi_k(t)dt\quad \mbox{and}~~\frac{1}{T}\int_0^T{X}_t^2dt\xrightarrow[T\rightarrow \infty]{a.s.}
\int_0^1(\tilde{h}^{(1)}(t))^2dt+\frac{\sigma^2}{2a^{(1)}}.$$
Hence,
\begin{equation}\label{conQ-0}T\boldsymbol{\mathrm Q}_{(0,T)}^{-1}\xrightarrow[T\rightarrow
\infty]{a.s.}\boldsymbol{\mathrm \Sigma}_0^{-1},\end{equation}
where
$$\boldsymbol{\mathrm \Sigma}_0=\begin{bmatrix}
    \boldsymbol{\mathrm I}_p  &\boldsymbol{\mathrm \Lambda}_{(0,T)}\\
    \boldsymbol{\mathrm \Lambda}_{(0,T)}^{\top}   & w \\
\end{bmatrix}$$
with $\boldsymbol{\mathrm \Lambda}_{(0,T)}=\left( \int_0^1\tilde{h}^{(1)}(t)\varphi_1(t)dt,...,\int_0^1
\tilde{h}^{(1)}(t)\varphi_p(t)dt \right)^{\top}$ and $w=\int_0^1(\tilde{h}^{(1)})^2(t)dt+\frac{\sigma^2}{2a^{(1)}}$. \\
\ \\
\noindent Furthermore, under Assumptions \ref{asm1}--\ref{asm3}, the following properties for $\boldsymbol{\mathrm r}_{(0,T)}$ hold.
\begin{enumerate}
\item $\{\boldsymbol{\mathrm r}_{(0,T)}, T>0\}$ is a  martingale.
\item $\frac{1}{T}\boldsymbol{\mathrm r}_{(0,T)}\xrightarrow[T\rightarrow \infty]{a.s.}0$.
\item $\frac{1}{\sqrt{T}}\boldsymbol{\mathrm r}_{(0,T)}\xrightarrow[T\rightarrow
\infty]{D}\boldsymbol{\mathrm r}\sim \mathcal{N}_{p+2}(0,\boldsymbol{\mathrm \Sigma}_0)$.
\end{enumerate}
Detailed proofs of the above are elaborated in Zhang~(2015). The above properties, together with Slutsky's Theorem, yields
$$\frac{1}{\sqrt{T}}(\boldsymbol{\mathrm{\hat{\uptheta}}}-\boldsymbol{\mathrm \uptheta})\xrightarrow[T\rightarrow \infty]
{D}\boldsymbol{\mathrm r}\sim \mathcal{N}_{p+1}\left( 0,\boldsymbol{\mathrm \Sigma}_0^{-1} \right).$$
\noindent Zhang~(2015) extended the above asymptotic properties to the case
of a single change point. Using similar arguments, we extend these results in the context of  multiple change points. We first present a result
covering the coefficients of SDE (\ref{ou1}).
\begin{prn}\label{prncoef}
Under Assumptions \ref{asm1} -\ref{asm3}, the coefficients in SDE (\ref{ou1}) for $m\geq 1$ satisfy both the space-variable Lipschitz and the spatial growth conditions.\end{prn}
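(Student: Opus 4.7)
The plan is to verify the two standard conditions directly from the piecewise definition of the drift in equation~(\ref{eq-S}), exploiting the boundedness of the base functions $\varphi_k$ supplied by Assumption~\ref{asm3}, and then taking a maximum over the finitely many regimes. Note that the diffusion coefficient is the constant $\sigma$, so all of the work concerns the drift $S(\boldsymbol{\uptheta},t,X_t)$.

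For the Lipschitz condition, fix $t\in(\tau_{j-1},\tau_j)$; since the indicator isolates one summand at a time, I compute
\[
|S(\boldsymbol{\uptheta},t,x)-S(\boldsymbol{\uptheta},t,y)|=|a^{(j)}|\,|x-y|\le K_1|x-y|,
\]
where $K_1:=\max_{1\le j\le m+1}|a^{(j)}|<\infty$, since there are only finitely many regimes. The bound is uniform in $t$ because $j$ is the unique index for which $t$ lies in $(\tau_{j-1},\tau_j)$.

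For the spatial growth condition, on the same interval I use the elementary inequality $(u+v)^2\le 2u^2+2v^2$ together with the Cauchy--Schwarz bound $\bigl(\sum_{k=1}^p\mu_k^{(j)}\varphi_k(t)\bigr)^2\le p\,K_\varphi^2\sum_{k=1}^p(\mu_k^{(j)})^2$, where $K_\varphi$ is the uniform bound on $\varphi_k$ guaranteed by Assumption~\ref{asm3} (periodicity plus Riemann-integrability on one period yields a uniform bound on $\mathbb{R}_+$). This gives
\[
|S(\boldsymbol{\uptheta},t,x)|^2\le 2pK_\varphi^2\sum_{k=1}^p(\mu_k^{(j)})^2+2(a^{(j)})^2 x^2.
\]
Setting
\[
K_2:=2\max_{1\le j\le m+1}\Bigl\{pK_\varphi^2\textstyle\sum_{k=1}^p(\mu_k^{(j)})^2,\;(a^{(j)})^2\Bigr\}+\sigma^2,
\]
I obtain $|S(\boldsymbol{\uptheta},t,x)|^2+\sigma^2\le K_2(1+x^2)$ uniformly in $t$. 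Taking $K:=\max(K_1,\sqrt{K_2})$ delivers both conditions simultaneously.

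There is no real obstacle here; the argument is essentially a finite-maximum extension of the single-regime case already handled in Dehling et al.~(2010) and Zhang~(2015). The only point requiring a moment of care is justifying that the piecewise structure does not spoil uniformity in $t$: because the partition $0=\tau_0<\tau_1<\dots<\tau_{m+1}=T$ is finite and the drift is identically zero on the null set where $t$ coincides with a change point, taking maxima over the $m+1$ regimes is legitimate and the constants $K_1,K_2$ are finite.
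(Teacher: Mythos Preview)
Your proof is correct and follows essentially the same route as the paper's: both verify the Lipschitz and growth conditions directly by isolating the active regime, using boundedness of the $\varphi_k$ from Assumption~\ref{asm3} and the elementary inequality $(u+v)^2\le 2u^2+2v^2$, then passing to a uniform constant over the finitely many regimes. The only cosmetic difference is that the paper bounds the indicator sum by the full sum $\sum_{j=1}^{m+1}(a^{(j)})^2$ rather than by $\max_j(a^{(j)})^2$ as you do, and it states the Lipschitz condition in squared form; neither distinction is substantive.
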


\noindent \textit{Proof:} See \ref{appendixsection2}. \\
\ \\
\noindent Using Proposition~\ref{prncoef} and similar methods employed for the proof of Proposition 2.2.1 in Zhang~(2015), it may be verified that SDE (\ref{ou1}) admits a strong and unique solution that is uniformly bounded in $L^2$ and
\begin{equation}\label{sol2}\sup_{t\geq 0}\mathrm{E}\left(X_t^2\right)\leq K_1
\end{equation}
for some $0<K_1<\infty$.\\
\ \\
\noindent Employing (\ref{sol2}), we (\ref{sol1}) and (\ref{cx0}) to their representations in context of multiple change points ($m\geq 1$) given by
 \begin{equation}\label{solm1}X_t=e^{-a^{(j)}t}X_{0}^{(j)}+h^{(j)}(t)+\boldsymbol{\mathrm z}_t^{(j)},\quad \tau_{j-1}^0<t\leq \tau_{j}^0,
\quad j=1,~\ldots,~m+1,\end{equation}
where $X_{0}^{(j)}=X_{\tau_{j-1}}$ and
\begin{equation}\label{cxm0}\tilde{X}_t=\tilde{h}^{(j)}(t)+\tilde{Z}_t^{(j)},\quad \tau_{j-1}^0<t\leq \tau_{j}^0,
\quad j=1,~\ldots,~m+1.\end{equation}
Then
\begin{equation}\label{cxm2}\frac{1}{T}\int_{\tau_{j-1}^0}^{\tau_j^0}\tilde{X}_t\varphi_k(t)dt\xrightarrow[T\rightarrow \infty]{a.s.}(s_{j}-s_{j-1}^0)\int_0^1\tilde{h}^{(j)}(t)\varphi_k(t)dt\end{equation}
and
\begin{equation}\label{cxm3}\frac{1}{T}\int_{\tau_{j-1}^0}^{\tau_j^0}\tilde{X}_t^2dt\xrightarrow[T\rightarrow \infty]{a.s.}(s_{j}^0-s_{j-1}^0)\left(\int_0^1(\tilde{h}^{(j)}(t))^2dt+\frac{\sigma^2}{2a^{(j)}}\right).\end{equation}
Using (\ref{cx1}) and (\ref{sol2}) and similar arguments in the proof of Propositions 2.2.2 and 2.2.3 in Zhang~(2015), the following properties hold:
$$\frac{1}{T}\int_{\tau_{j-1}^0}^{\tau_j^0}\tilde{X}_t\varphi_k(t)dt-\frac{1}{T}\int_{\tau_{j-1}^0}^{\tau_j^0}{X}_t\varphi_k(t)dt
\xrightarrow[T\rightarrow \infty]{a.s.} 0,\quad\mbox{and}~~\frac{1}{T}\int_{\tau_{j-1}^0}^{\tau_j^0}\tilde{X}_t^2dt-\frac{1}{T}\int_{\tau_{j-1}^0}^{\tau_j^0}{X}_t^2dt
\xrightarrow[T\rightarrow \infty]{a.s.} 0.$$
Thus,
$$\frac{1}{T}\int_{\tau_{j-1}^0}^{\tau_j^0}{X}_t\varphi_k(t)dt\xrightarrow[T\rightarrow
\infty]{a.s.}\int_0^1\tilde{h}^{(j)}(t)\varphi_k(t)dt\quad \mbox{and}~~\frac{1}{T}\int_{\tau_{j-1}^0}^{\tau_j^0}{X}_t^2dt\xrightarrow[T\rightarrow \infty]{a.s.}
\int_0^1(\tilde{h}^{(j)}(t))^2dt+\frac{\sigma^2}{2a^{(j)}}.$$
So,
\begin{eqnarray}
\frac{1}{T}\boldsymbol{\mathrm Q}_{(\tau_{j-1}^0,\tau_{j}^0)}&\xrightarrow[T\rightarrow \infty]{a.s.}&(s_j^0-s_{j-1}^0)\boldsymbol{\mathrm \Sigma}_j,\label{convQ-m1}\end{eqnarray}
where
\begin{equation}\label{sigmaj}\boldsymbol{\mathrm \Sigma}_j=\begin{bmatrix}
    \boldsymbol{\mathrm I}_p  & \boldsymbol{\mathrm \Lambda}_j \\
    \boldsymbol{\mathrm \Lambda}_j'   & w_j \\
\end{bmatrix}\end{equation}
with $\boldsymbol{\mathrm \Lambda}_{j}=\left(\int_0^1\tilde{h}^{(j)}(t)\varphi_1(t)dt,...,\int_0^1\tilde{h}^{(j)}(t)\varphi_p(t)dt\right)^{\top}$ and $w_j=\int_0^1(\tilde{h}^{(j)})^2(t)dt+\frac{\sigma^2}{2a^{(j)}}$, $j=1,...,m+1$. Further, by the Continuous Mapping Theorem,
\begin{equation}T\boldsymbol{\mathrm Q}_{(\tau_{j-1}^0,\tau_{j}^0)}^{-1}\xrightarrow[T\rightarrow \infty]{a.s.}\frac{1}{(s_j^0-s_{j-1}^0)}\boldsymbol{\mathrm \Sigma}_j^{-1}.\label{convQ-m2}\end{equation}

\noindent So long as Assumptions \ref{asm1}--\ref{asm3} hold and with the aid of similar argument used in the proof of Proposition 2.2.6 in Zhang~(2015), it may be shown that $\boldsymbol{\mathrm \Sigma}_j$ is positive definite.\\
\ \\
\noindent Note that (\ref{convQ-m1}) and  (\ref{convQ-m2}) are key elements in analysing the asymptotic properties of $\boldsymbol{\mathrm Q}_{({\hat{\tau}}_{j-1},{\hat{\tau}}_j)}$ and its inverse, where ${\hat{\tau}}_j$ is the estimator of $\tau_j^0$.\\
\ \\
\noindent Invoking the boundedness property of $\varphi_k(t)$, we have
\begin{equation}
\mathrm{E}\left(\int_{0}^T\left(\frac{1}{\sqrt{T}}\varphi_k(t)\mathbbm{1}(\tau_{j-1}^0< t\leq \tau_{j}^0)\right)^2dt\right)
=\int_{0}^T\frac{1}{T}\varphi_k^2(t)\mathbbm{1}(\tau_{j-1}^0< t\leq \tau_{j}^0)dt \leq K_{\varphi}^2(s_j^0-s_{j-1}^0)<\infty
\end{equation}
for $k=1,~\ldots,~p$ and $j=1,~\ldots,~m$. Similarly, using (\ref{sol2}),
\begin{equation}
\mathrm{E}\left(\int_{0}^T\left(\frac{1}{\sqrt{T}}X_t\mathbbm{1}(\tau_{j-1}^0< t\leq \tau_{j}^0)\right)^2dt\right)
=\int_{0}^T\frac{1}{T}\mathrm{E}(X_t^2)\mathbbm{1}(\tau_{j-1}^0< t\leq \tau_{j}^0)dt \leq K_{1}^2(s_j^0-s_{j-1}^0)<\infty.
\end{equation}
From Proposition~1.21 in Kutoyants~(2004) (see also Proposition 2.2.10 in Zhang,~2015), we get
\begin{equation}\label{am-R}
\left( \frac{1}{\sqrt{T}}\boldsymbol{\mathrm r}_{(0, \tau_1^0)},~\ldots,\frac{1}{\sqrt{T}}\boldsymbol{\mathrm r}_{(\tau_{m}^0, T)}\right)\xrightarrow[T\rightarrow \infty]{D}\mathcal{N}_{(m+1)(p+1)}\left(\boldsymbol{\mathrm 0},\boldsymbol{\tilde{\mathrm \Sigma}}\right),
\end{equation}
where $\boldsymbol{\mathrm 0}$ is a vector of zeros and $\boldsymbol{\tilde{\mathrm \Sigma}}=\mathrm{diag}({s_1^0}\boldsymbol{\mathrm \Sigma}_1,(s_2^0-s_1^0)\boldsymbol{\mathrm \Sigma}_2,~\ldots,(1-s_m^0)\boldsymbol{\mathrm \Sigma}_{m+1})$.
Combining
(\ref{convQ-m1}), (\ref{convQ-m2}) and (\ref{am-R}), along with Slutsky's Theorem and some algebraic computations,
\begin{equation}\label{am-theta0}
\frac{1}{\sqrt{T}}(\boldsymbol{\mathrm{\hat{\uptheta}}}-\boldsymbol{\mathrm \uptheta}^0)\xrightarrow[T\rightarrow \infty]{D}\mathcal{N}_{(m+1)(p+1)}(\boldsymbol{\mathrm 0}, \sigma^2\boldsymbol{\tilde{\mathrm \Sigma}}^{-1}),
\end{equation}
where $\boldsymbol{\tilde{\mathrm \Sigma}^{-1}}=\mathrm{diag}\left( \frac{1}{s_1^0}\boldsymbol{\mathrm \Sigma}_1^{-1},\frac{1}{s_2^0-s_1^0}\boldsymbol{\mathrm \Sigma}_2^{-1},~\ldots,\frac{1}{1-s_m^0}\boldsymbol{\mathrm \Sigma}_{m+1}^{-1}
\right)$.\\
\ \\
The above asymptotic properties are established based on the exact values of the locations of the change points $\tau_j^0$, $j=1,~\ldots,~m$. However, in practice, $\tau_j^0$ are often unknown. Hence, we shall devise
methods to estimate the unknown $\tau_j^0$ and investigate whether the above asymptotic normality still hold for the estimated change points.

\section{Estimation of change points and pertinent asymptotic properties}\label{mcpe}
\noindent In Subsections 4.1 and 4.2, we develop two techniques to estimate the unknown locations of change points. The asymptotic normality of $\widehat{\boldsymbol{\mathrm \uptheta}}$ based on the estimated change points is discussed in Subsection 4.3.

\subsection{Least sum squared error method}\label{LSSE}
\noindent We introduce the  least sum of squared errors (LSSE) method then investigate the consistency
of our proposed estimator. Consider a partition $0= t_0<~\dots~<t_n=T$ on a time period $[0,T]$ with constant increment $\Delta_t=t_{i+1}-t_{i}$. Also, let $Y_i=X_{t_{i+1}}-X_{t_i}$ and $\boldsymbol{\mathrm z}_{i}=(\varphi_1(t_i),...,\varphi_p(t_i),-X_{t_i})\Delta_t$. \\
\ \\
\noindent Due to the uncertain locations of estimated change points, the exact value of the drift parameters $\boldsymbol{\mathrm \uptheta}$ and the MLE may have different indices. For example, if ${\hat{\tau}}_j>{\tau}_j^0$, then for all $t_i\in(\tau_j^0,{\hat{\tau}}_j]$, the associated exact value of the drift parameters is $\boldsymbol{\mathrm \uptheta}^{(j+1)}$ but the MLE is $\boldsymbol{\mathrm{\hat{\uptheta}}}^{(j)}$. So here, $\boldsymbol{\mathrm \uptheta}_{i}$ and $\boldsymbol{\mathrm{\hat{\uptheta}}}_{i}$ refer to the exact value and MLE of the drift parameters at time point $t_i$, respectively. In this case, $\boldsymbol{\mathrm \uptheta}_i=\sum_{j=1}^{m+1}\boldsymbol{\mathrm \uptheta}^{(j)}\mathbbm{1}(\tau_{j-1}^0\leq t_i\leq \tau_j^0)$  and $\boldsymbol{\mathrm{\hat{\uptheta}}}_i=\sum_{j=1}^{m+1}\boldsymbol{\mathrm{\hat{\uptheta}}}^{(j)}\mathbbm{1}({\hat{\tau}}_{j-1}\leq t_i\leq {\hat{\tau}}_j)$ with
$\boldsymbol{\mathrm{\hat{\uptheta}}}^{(j)}=\boldsymbol{\mathrm Q}_{({\hat{\tau}}_{j-1},{\hat{\tau}}_{j})}^{-1}\boldsymbol{\mathrm{\tilde{r}}}_{({\hat{\tau}}_{j-1},{\hat{\tau}}_{j})}$ for $j=1,~\ldots,~m+1$. Then by the Euler-Maruyama discretisation method,
\begin{eqnarray}\label{oum2}
Y_{i}&=&\boldsymbol{\mathrm z}_{i}\boldsymbol{\mathrm \uptheta}_i+\epsilon_i,\quad i=1,~\ldots~,T,
\end{eqnarray}
where $\epsilon_i$ is the error term $\sigma\sqrt{\Delta_t}\omega$, and $\omega$ is
the standard normal term. Therefore, we could now use the LSSE method to
estimate the change points. \\
\ \\
\noindent From (\ref{oum2}), the estimates for the multiple change points $\tau^0=(\tau_1^0,...,\tau_m^0)$ are given by
\begin{eqnarray}\label{cpm1}
\boldsymbol{\hat{\mathrm \tau}}=\arg\min_{\tau} SSE([0,T],\boldsymbol{\mathrm \tau},\boldsymbol{\mathrm{\hat{\uptheta}}}(\boldsymbol{\mathrm \tau})),
\end{eqnarray}
where
\begin{eqnarray}\label{ssr1}
SSE([0,T],\boldsymbol{\mathrm \tau},\boldsymbol{\mathrm{\hat{\uptheta}}}(\boldsymbol{\mathrm \tau}))&=&\sum_{t_i\in[0,T]}(Y_i-\boldsymbol{\mathrm z}_i\boldsymbol{\mathrm{\hat{\uptheta}}}_i)^{\top}(Y_i-\boldsymbol{\mathrm z}_i\boldsymbol{\mathrm{\hat{\uptheta}}}_i).
\end{eqnarray}

\noindent {\bf Consistency of the proposed estimator}\\
Under Assumptions \ref{asm1}--\ref{asm3}, $\sum_{t_i\in (\tau_{j-1}^0,\tau_j^0]}\boldsymbol{\mathrm z}_i^{\top}\boldsymbol{\mathrm z}_i$ for $j=1,~\ldots~,m$, the discretised versions of $\boldsymbol{\mathrm Q}_{(\tau_{j-1}^0,{\tau}_j^0)}$ are both positive definite with probability 1 provided that the base
functions $\{ \varphi_k(t), i=1,...,p \}$ are incomplete. Moreover, using (\ref{convQ-m1}) (see also Proposition 2.2.6 of Zhang~(2015)), one can show that $\frac{1}{(s_{j}^0-s_{j-1}^0)T}\boldsymbol{\mathrm Q}_{(\tau_{j-1}^0,{\tau}_j^0)}$ converges in probability to some positive definite matrices for large $T$, {\color{black}as do} their respective discretised versions. Hence, for large $T$, it is reasonable to impose a useful assumption in proving the consistency of the estimators of the change points.

\begin{asm}\label{asm4}For every $j=1,~\ldots~,m$, there exists an
$L_0>0$ such that for all $L>L_0$ the minimum eigenvalues of
$\frac{1}\ell\sum_{t_i\in(\tau_j^0,\tau_j^0+L]}\boldsymbol{\mathrm z}_i'\boldsymbol{\mathrm z}_i$ and of
$\frac{1}\ell\sum_{t_i\in(\tau_j^0-L,\tau_j^0]}\boldsymbol{\mathrm z}_i'\boldsymbol{\mathrm z}_i$, as well as their respective continuous-time
versions $\frac{1}\ell\boldsymbol{\mathrm Q}_{(\tau_j^0,\tau_j^0+L)}$ and $\frac{1}\ell\boldsymbol{\mathrm Q}_{(\tau_j^0-L,\tau_j^0]}$,
are all bounded away from 0.\end{asm}
\ \\
\noindent For the motivation of the above assumption, see Perron and Qu (2006) and Chen and Nkurunziza~(2015). The next two propositions provide results characterising consistency.

\begin{prn}\label{prnm13}
Suppose that $\boldsymbol{\mathrm \uptheta}^{(1)}-\boldsymbol{\mathrm \uptheta}^{(2)}$, the shift in the drift parameters, is of fixed non-zero magnitude independent of $T$. Then, under Assumption \ref{asm1}--\ref{asm4}, and $\hat{s}_j-s_j^0\xrightarrow[T\rightarrow \infty]{P}0$, $j=1,~\ldots,~m$.
\end{prn}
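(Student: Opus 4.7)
The plan is to argue by contradiction exploiting the defining property of $\widehat{\boldsymbol{\mathrm \tau}}$ as the minimizer in (\ref{cpm1}). Fix $\eta>0$ and consider the event $\mathcal{A}_\eta=\{\max_{1\leq j\leq m}|\widehat{s}_j-s_j^0|>\eta\}$. It suffices to prove that on $\mathcal{A}_\eta$ one has $SSE([0,T],\widehat{\boldsymbol{\mathrm \tau}},\boldsymbol{\mathrm{\hat{\uptheta}}}(\widehat{\boldsymbol{\mathrm \tau}}))>SSE([0,T],\boldsymbol{\mathrm \tau}^0,\boldsymbol{\mathrm{\hat{\uptheta}}}(\boldsymbol{\mathrm \tau}^0))$ with probability tending to one, as this contradicts (\ref{cpm1}) and therefore forces $P(\mathcal{A}_\eta)\to 0$ for every $\eta>0$, which is the required convergence in probability.

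The key algebraic step is to substitute (\ref{oum2}) into (\ref{ssr1}) and expand, giving
\begin{equation*}
SSE([0,T],\boldsymbol{\mathrm \tau},\boldsymbol{\mathrm{\hat{\uptheta}}}(\boldsymbol{\mathrm \tau}))=\sum_{i}\epsilon_i^{2}+\sum_{i}(\boldsymbol{\mathrm \uptheta}_i-\boldsymbol{\mathrm{\hat{\uptheta}}}_i(\boldsymbol{\mathrm \tau}))^{\top}\boldsymbol{\mathrm z}_i^{\top}\boldsymbol{\mathrm z}_i(\boldsymbol{\mathrm \uptheta}_i-\boldsymbol{\mathrm{\hat{\uptheta}}}_i(\boldsymbol{\mathrm \tau}))+2\sum_{i}(\boldsymbol{\mathrm \uptheta}_i-\boldsymbol{\mathrm{\hat{\uptheta}}}_i(\boldsymbol{\mathrm \tau}))^{\top}\boldsymbol{\mathrm z}_i^{\top}\epsilon_i,
\end{equation*}
in which the first term is free of $\boldsymbol{\mathrm \tau}$ and cancels in any comparison. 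At $\boldsymbol{\mathrm \tau}=\boldsymbol{\mathrm \tau}^0$, (\ref{am-theta0}) delivers $\boldsymbol{\mathrm{\hat{\uptheta}}}^{(j)}(\boldsymbol{\mathrm \tau}^0)-\boldsymbol{\mathrm \uptheta}^{(j)}=O_p(T^{-1/2})$, which combined with $\sum_{i\in(\tau_{j-1}^0,\tau_j^0]}\boldsymbol{\mathrm z}_i^{\top}\boldsymbol{\mathrm z}_i=O_p(T)$ from (\ref{convQ-m1}) makes both the bias and cross sums at $\boldsymbol{\mathrm \tau}^0$ of order $O_p(1)$.

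For $\boldsymbol{\mathrm \tau}\in\mathcal{A}_\eta$, some $\widehat\tau_j$ is separated from $\tau_j^0$ by at least $T\eta$; without loss of generality take $\widehat\tau_j\geq\tau_j^0+T\eta$. On the interval $I=(\tau_j^0,\tau_j^0+T\eta]\subseteq(\widehat\tau_{j-1},\widehat\tau_j]$ the true parameter is $\boldsymbol{\mathrm \uptheta}^{(j+1)}$ whereas $\boldsymbol{\mathrm{\hat{\uptheta}}}^{(j)}(\boldsymbol{\mathrm \tau})$ is fitted on the pooled interval $(\widehat\tau_{j-1},\widehat\tau_j]$ that straddles a true change point. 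Applying (\ref{convQ-m1})--(\ref{convQ-m2}) separately on the sub-intervals $(\widehat\tau_{j-1},\tau_j^0]$ and $(\tau_j^0,\widehat\tau_j]$ yields $\boldsymbol{\mathrm{\hat{\uptheta}}}^{(j)}(\boldsymbol{\mathrm \tau})\xrightarrow[T\rightarrow\infty]{p}\boldsymbol{\mathrm \uptheta}^{\ast}(\boldsymbol{\mathrm \tau})$, where $\boldsymbol{\mathrm \uptheta}^{\ast}(\boldsymbol{\mathrm \tau})$ is a weighted matrix-average of $\boldsymbol{\mathrm \uptheta}^{(j)}$ and $\boldsymbol{\mathrm \uptheta}^{(j+1)}$ whose weights remain bounded away from degeneracy; in particular $\|\boldsymbol{\mathrm \uptheta}^{(j+1)}-\boldsymbol{\mathrm \uptheta}^{\ast}(\boldsymbol{\mathrm \tau})\|\geq c_1\|\boldsymbol{\mathrm \uptheta}^{(j+1)}-\boldsymbol{\mathrm \uptheta}^{(j)}\|$ for some $c_1>0$. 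Assumption~\ref{asm4} then bounds $\lambda_{\min}\bigl((T\eta)^{-1}\sum_{i\in I}\boldsymbol{\mathrm z}_i^{\top}\boldsymbol{\mathrm z}_i\bigr)$ away from zero, so the bias contribution on $I$ alone is at least $cT\eta\,\|\boldsymbol{\mathrm \uptheta}^{(j+1)}-\boldsymbol{\mathrm \uptheta}^{(j)}\|^{2}(1+o_p(1))$. Because the shift has fixed non-zero magnitude, this quantity is of exact order $T$ and dominates both the cross term on $I$, which is $O_p(\sqrt{T})$ by Cauchy--Schwarz and the martingale property of $\sum\boldsymbol{\mathrm z}_i^{\top}\epsilon_i$ under $\epsilon_i=\sigma\sqrt{\Delta_t}\omega_i$, and the $O_p(1)$ baseline at $\boldsymbol{\mathrm \tau}^0$, establishing the strict inequality.

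The main obstacle is making the lower bound \emph{uniform} over $\boldsymbol{\mathrm \tau}\in\mathcal{A}_\eta$, since $\boldsymbol{\mathrm{\hat{\uptheta}}}^{(j)}(\boldsymbol{\mathrm \tau})$ itself depends on the hypothesized partition. The plan is to discretize the candidate rates $(s_1,\ldots,s_m)$ on a mesh with spacing shrinking slower than $\eta$, apply the pointwise contradiction bound on the finitely many grid configurations, and control the oscillation between adjacent configurations via the rank-one perturbation identity for $\boldsymbol{\mathrm Q}_{(\widehat\tau_{j-1},\widehat\tau_j)}^{-1}$, whose magnitude is controlled by Assumption~\ref{asm4}. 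Once uniformity is secured, the contradiction closes and $\widehat s_j\xrightarrow[T\rightarrow\infty]{p}s_j^0$ for each $j=1,\ldots,m$.
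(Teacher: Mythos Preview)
Your overall contradiction strategy and the quadratic expansion of $SSE$ match the paper exactly. The divergence is in how the bias term is lower–bounded on the event $\mathcal{A}_\eta$, and this is where your argument has a real gap.

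You assume that $\widehat\tau_j\geq\tau_j^0+T\eta$ implies $I=(\tau_j^0,\tau_j^0+T\eta]\subseteq(\widehat\tau_{j-1},\widehat\tau_j]$ and that the pooled segment $(\widehat\tau_{j-1},\widehat\tau_j]$ straddles exactly the single true break $\tau_j^0$. Neither is guaranteed: $\widehat\tau_{j-1}$ may lie inside $I$, and the pooled segment may contain several true breaks. More seriously, even when the containment holds, your claim that the weights defining $\boldsymbol{\mathrm \uptheta}^{\ast}(\boldsymbol{\mathrm \tau})$ ``remain bounded away from degeneracy'' can fail: if $\widehat\tau_{j-1}$ sits just to the left of $\tau_j^0$, the weight on $\boldsymbol{\mathrm \uptheta}^{(j)}$ is negligible, $\boldsymbol{\mathrm \uptheta}^{\ast}(\boldsymbol{\mathrm \tau})\approx\boldsymbol{\mathrm \uptheta}^{(j+1)}$, and your lower bound on $I$ collapses. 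This is precisely why you are forced into the discretisation/uniformity detour at the end.

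The paper avoids all of this with one device (its Lemma~C.1). From failure of consistency one extracts a \emph{true} break $\tau_j^0$ having no estimated break in the two–sided window $[\tau_j^0-\eta T,\tau_j^0+\eta T]$; that window then lies in a single estimated segment with fitted value $\boldsymbol{\mathrm{\hat{\uptheta}}}^{(k)}$. Using Assumption~\ref{asm4} on each half and the elementary convexity inequality
\[
\|\boldsymbol{\mathrm \uptheta}^{(j)}-c\|^{2}+\|\boldsymbol{\mathrm \uptheta}^{(j+1)}-c\|^{2}\ \geq\ \tfrac12\,\|\boldsymbol{\mathrm \uptheta}^{(j)}-\boldsymbol{\mathrm \uptheta}^{(j+1)}\|^{2}\qquad\text{for all }c,
\]
one gets the lower bound $\frac{1}{T}\sum_i(\boldsymbol{\mathrm z}_i(\boldsymbol{\mathrm \uptheta}_i-\boldsymbol{\mathrm{\hat{\uptheta}}}_i))^2\geq C_0\|\boldsymbol{\mathrm \uptheta}^{(j)}-\boldsymbol{\mathrm \uptheta}^{(j+1)}\|^2$ with $C_0>0$ \emph{independent of} $\boldsymbol{\mathrm{\hat{\uptheta}}}^{(k)}$. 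This single inequality delivers the uniformity you were planning to obtain by meshing and rank–one perturbation, and it sidesteps the degeneracy scenario above entirely. Replacing your ``weighted-average'' step by this convexity bound closes the gap and also removes the need for the discretisation paragraph.
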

\noindent \textit{Proof:} See \ref{appendixb}.

\begin{prn}\label{prnm14}
Suppose the conditions in Proposition~\ref{prnm13} hold. Then, for every $\epsilon>0$, there exists a $C>0$ such that for large $T$, $P\left(T|\hat{s}_j-s_j|>C\right)<\epsilon$, for every $j=1,...,m$.
\end{prn}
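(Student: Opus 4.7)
\textit{Proof plan:} The strategy follows the standard template for rate-of-convergence results for change-point estimators, in the spirit of Bai and Perron (1998) and Chen and Nkurunziza (2015), but adapted to the continuous-time generalised OU setting. Throughout, fix $\epsilon>0$.

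\textbf{Step 1 (reduction to a small neighbourhood).} First I would use Proposition~\ref{prnm13} to reduce to a $\eta$-neighbourhood of the true rates. For any $\eta>0$, the event $E_\eta=\{|\hat s_j-s_j^0|<\eta,\; j=1,\ldots,m\}$ has probability at least $1-\epsilon/2$ for all $T$ large enough. On $E_\eta$ the ordering $0<\hat s_1<\cdots<\hat s_m<1$ holds and each $\hat\tau_j$ lies strictly between $\tau_{j-1}^0$ and $\tau_{j+1}^0$. It therefore suffices to choose $C>0$ so that, on $E_\eta$, the event $V_{\eta,C}=\{T|\hat s_j-s_j^0|>C\text{ for some }j\}$ has probability at most $\epsilon/2$ for all $T$ large.

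\textbf{Step 2 (decomposing the SSE gap).} Since $\hat{\boldsymbol{\tau}}$ minimises the SSE in (\ref{ssr1}), I would consider the non-positive random variable $\Delta_T(\hat{\boldsymbol{\tau}})=SSE([0,T],\hat{\boldsymbol{\tau}},\boldsymbol{\mathrm{\hat{\uptheta}}}(\hat{\boldsymbol{\tau}}))-SSE([0,T],\boldsymbol{\tau}^0,\boldsymbol{\mathrm{\hat{\uptheta}}}(\boldsymbol{\tau}^0))$ and derive a contradiction if $\hat{\boldsymbol{\tau}}\in V_{\eta,C}$. Using the model $Y_i=\boldsymbol{\mathrm z}_i\boldsymbol{\mathrm \uptheta}_i+\epsilon_i$ in (\ref{oum2}) and the fact that only the intervals $(\min(\tau_j^0,\hat\tau_j),\max(\tau_j^0,\hat\tau_j)]$ contribute to $\Delta_T$, one obtains the algebraic identity
\begin{equation*}
\Delta_T(\hat{\boldsymbol{\tau}})= A_T(\hat{\boldsymbol{\tau}})-2\,B_T(\hat{\boldsymbol{\tau}})+R_T,
\end{equation*}
where $A_T$ collects the deterministic squared differences $\bigl\|\boldsymbol{\mathrm z}_i(\boldsymbol{\mathrm \uptheta}_i-\boldsymbol{\mathrm{\hat{\uptheta}}}_i(\hat{\boldsymbol{\tau}}))\bigr\|^2-\bigl\|\boldsymbol{\mathrm z}_i(\boldsymbol{\mathrm \uptheta}_i-\boldsymbol{\mathrm{\hat{\uptheta}}}_i(\boldsymbol{\tau}^0))\bigr\|^2$, $B_T$ is a linear noise cross-term $\sum\epsilon_i^\top\boldsymbol{\mathrm z}_i(\boldsymbol{\mathrm{\hat{\uptheta}}}_i(\hat{\boldsymbol{\tau}})-\boldsymbol{\mathrm{\hat{\uptheta}}}_i(\boldsymbol{\tau}^0))$, and $R_T$ is a higher-order remainder that vanishes on $E_\eta$.

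\textbf{Step 3 (signal lower bound).} Assume WLOG $\hat\tau_j>\tau_j^0$ for some $j$ (the other case is symmetric). On $(\tau_j^0,\hat\tau_j]$ the true drift is $\boldsymbol{\mathrm \uptheta}^{(j+1)}$ while the plugged-in MLE $\boldsymbol{\mathrm{\hat{\uptheta}}}^{(j)}(\hat{\boldsymbol{\tau}})$ is close to $\boldsymbol{\mathrm \uptheta}^{(j)}$ thanks to Proposition~\ref{prnm13} together with (\ref{am-theta0}). By Assumption~\ref{asm4}, the smallest eigenvalue of $\frac{1}{\hat\tau_j-\tau_j^0}\sum_{t_i\in(\tau_j^0,\hat\tau_j]}\boldsymbol{\mathrm z}_i^\top\boldsymbol{\mathrm z}_i$ is bounded below by a positive constant for all $\hat\tau_j-\tau_j^0$ beyond some $L_0$. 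Combining these ingredients yields, on $E_\eta$ and for $T$ large,
\begin{equation*}
A_T(\hat{\boldsymbol{\tau}})\;\ge\; c_1\,\|\boldsymbol{\mathrm \uptheta}^{(j+1)}-\boldsymbol{\mathrm \uptheta}^{(j)}\|^2\,(\hat\tau_j-\tau_j^0)\;=\;c_1\,\|\boldsymbol{\mathrm \uptheta}^{(j+1)}-\boldsymbol{\mathrm \uptheta}^{(j)}\|^2\,T\,(\hat s_j-s_j^0),
\end{equation*}
for a deterministic constant $c_1>0$ independent of $T$; this uses the fact that the shift is of fixed non-zero magnitude.

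\textbf{Step 4 (stochastic upper bound on $B_T$).} Using the martingale property of $\boldsymbol{\mathrm r}_{(a,b)}$ recorded in Section~\ref{section23}, Doob's inequality and the continuous-time analogue of H\'ajek-R\'enyi, I would show that, uniformly over lengths $L\in[L_0,\eta T]$,
\begin{equation*}
\sup_{L_0\le L\le \eta T}\frac{|B_T|}{\sqrt{L}}\;=\;O_p(1).
\end{equation*}
Hence on $V_{\eta,C}$ we have $|B_T|\le c_2\sqrt{T(\hat s_j-s_j^0)}$ with probability at least $1-\epsilon/4$ for a constant $c_2$ that may be taken as large as needed.

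\textbf{Step 5 (contradiction and choice of $C$).} Putting Steps 3 and 4 into Step 2 gives, on the intersection $E_\eta\cap V_{\eta,C}$ of probability at least $1-\epsilon$,
\begin{equation*}
\Delta_T(\hat{\boldsymbol{\tau}})\;\ge\;c_1\|\boldsymbol{\mathrm \uptheta}^{(j+1)}-\boldsymbol{\mathrm \uptheta}^{(j)}\|^2\,T(\hat s_j-s_j^0)-2c_2\sqrt{T(\hat s_j-s_j^0)}-o_p(1).
\end{equation*}
On $V_{\eta,C}$ the right-hand side equals $T(\hat s_j-s_j^0)\bigl(c_1\|\Delta\boldsymbol{\mathrm \uptheta}\|^2-2c_2/\sqrt{C}\bigr)-o_p(1)$, which is strictly positive for $C$ large enough. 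This contradicts the minimality $\Delta_T(\hat{\boldsymbol{\tau}})\le 0$, so $P(\hat{\boldsymbol{\tau}}\in V_{\eta,C})\le\epsilon$ for $T$ large, proving the claim.

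\textbf{Main obstacle.} The routine parts are Steps 1 and 3; the genuine difficulty is Step 4, namely obtaining a \emph{uniform-in-$L$} stochastic bound of order $\sqrt{L}$ on the noise cross-term $B_T$ as the candidate shift $L=\hat\tau_j-\tau_j^0$ ranges over an interval of length $\eta T$. This requires a maximal inequality for the It\^o-martingale increments entering $\boldsymbol{\mathrm{\tilde r}}_{(\tau_j^0,\tau_j^0+L)}$ together with control, via (\ref{convQ-m1})--(\ref{convQ-m2}), of the plug-in error $\boldsymbol{\mathrm{\hat{\uptheta}}}^{(j)}(\hat{\boldsymbol{\tau}})-\boldsymbol{\mathrm \uptheta}^{(j)}$ uniformly over $\hat{\boldsymbol{\tau}}\in E_\eta$, and it is at this step that the full force of Assumption~\ref{asm4} and the stationarity-ergodicity machinery from Section~\ref{section23} is used.
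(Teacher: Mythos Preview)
Your overall strategy is sound and follows the Bai--Perron template, but it differs from the paper's proof in two structural ways worth noting.

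First, the paper does \emph{not} compare $SSE(\hat{\boldsymbol\tau})$ with $SSE(\boldsymbol\tau^0)$. Instead, for each fixed $j$ it compares $SSE(\hat\tau_1,\ldots,\hat\tau_j,\ldots,\hat\tau_m)$ with $SSE(\hat\tau_1,\ldots,\tau_j^0,\ldots,\hat\tau_m)$, i.e.\ it replaces only the $j$th coordinate by its true value and keeps the other estimated change points fixed. This one-at-a-time comparison makes almost all terms cancel exactly outside the $j$th segment boundary, so your claim in Step~2 that ``only the intervals $(\min(\tau_j^0,\hat\tau_j),\max(\tau_j^0,\hat\tau_j)]$ contribute to $\Delta_T$'' becomes (essentially) true by construction, whereas under your full comparison to $\boldsymbol\tau^0$ it is not: the plug-in MLEs $\boldsymbol{\mathrm{\hat\uptheta}}(\hat{\boldsymbol\tau})$ and $\boldsymbol{\mathrm{\hat\uptheta}}(\boldsymbol\tau^0)$ differ on every segment, and those extra contributions must be controlled separately.

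Second, and more importantly, the paper avoids the uniform-in-$L$ maximal inequality you flag as the main obstacle. It divides $SSE_1-SSE_2$ by $\hat\tau_j-\tau_j^0$ and handles the resulting terms via the matrix-perturbation identity of Tobing and McGilchrist, $\boldsymbol{\mathrm Q}_{(\hat\tau_{j-1},\hat\tau_j)}^{-1}=\boldsymbol{\mathrm Q}_{(\hat\tau_{j-1},\tau_j^0)}^{-1}+O_p\bigl((\hat\tau_j-\tau_j^0)/T^2\bigr)$, together with the pointwise bound $\|\boldsymbol{\mathrm r}_{(a,b)}\|/\sqrt{b-a}=O_p((\log T)^{a^*})$ from (\ref{convm_R2}). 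After division, every term except the signal term $(\boldsymbol{\mathrm\uptheta}^{(j)}-\boldsymbol{\mathrm\uptheta}^{(j+1)})^\top\bigl(\tfrac{1}{\hat\tau_j-\tau_j^0}\sum_{t_i\in(\tau_j^0,\hat\tau_j]}\boldsymbol{\mathrm z}_i^\top\boldsymbol{\mathrm z}_i\bigr)(\boldsymbol{\mathrm\uptheta}^{(j)}-\boldsymbol{\mathrm\uptheta}^{(j+1)})$ is shown to be $o_p(1)$, and Assumption~\ref{asm4} makes that signal term bounded below by a positive constant. So the contradiction is obtained without any H\'ajek--R\'enyi-type argument. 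Your route would also work, but the paper's perturbation approach trades the maximal inequality for a longer but purely algebraic expansion.
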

\noindent \textit{Proof:} See \ref{appendixb}.

\noindent Proposition~\ref{prnm13} shows that the estimated rate $\hat{s}_j=\frac{{\hat{\tau}}_j}{T}$ is consistent for $s_j^0$, for $j=1,~\ldots,~m$. Proposition~\ref{prnm14} gives the convergence rate $T$ of ${\hat{\tau}}_j$, $j=1,~\ldots,~m$. In reality we may encounter the case where the shift is time-dependent, and, in particular as $T$ tends to infinity, the shift may shrink towards 0 at rate $v_T$, i.e., $\boldsymbol{\mathrm \uptheta}^{(j)}-\boldsymbol{\mathrm \uptheta}^{(j-1)}=\mathbf{M}v_T$, where $\mathbf{M}$ is independent of $T$ and $v_T\xrightarrow[T\rightarrow \infty]{}0$. In this case, the validity of Propositions~\ref{prnm13} and \ref{prnm14} depends on the speed $v_T$. In fact, using similar arguments as in the proofs of these two propositions (see \ref{appendixb}), we have the following corollary.

\begin{cor}~\label{corm-1}
Suppose that $\boldsymbol{\mathrm \uptheta}^{(j)}-\boldsymbol{\mathrm \uptheta}^{(j-1)}=\mathbf{M}v_T$, where $j=1,\ldots,m+1$, $\mathbf{M}$ is independent of $T$ and $v_T\xrightarrow[T\rightarrow \infty]{}0$ but $T^{1/2-r^*}v_T\xrightarrow[T\rightarrow \infty]{}\infty$ for some $0<r^*<1/2$. Then under Assumptions 1--\ref{asm4}, we have (i) $\hat{s}-s^0\xrightarrow[T\rightarrow \infty]{P}0$ and (ii) for every $\epsilon>0$, there exists a $C>0$ such that for
large $T$, $P\left(Tv_T^2|\hat{s}-s|>C\right)<\epsilon$.
\end{cor}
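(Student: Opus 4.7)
The plan is to mirror the proofs of Propositions~\ref{prnm13} and~\ref{prnm14} while propagating the factor $v_T$ through every step where the jump size previously entered as a fixed vector. In the fixed-shift case, the signal driving the SSE comparison is of order $T\|\boldsymbol{\mathrm \uptheta}^{(j)}-\boldsymbol{\mathrm \uptheta}^{(j-1)}\|^2$, since on any misclassified segment the squared misfit $\|\boldsymbol{\mathrm z}_i(\boldsymbol{\mathrm{\hat{\uptheta}}}_i-\boldsymbol{\mathrm \uptheta}_i)\|^2$ accumulates linearly in its length. Substituting $\boldsymbol{\mathrm \uptheta}^{(j)}-\boldsymbol{\mathrm \uptheta}^{(j-1)}=\mathbf{M}v_T$ replaces this signal by an expression of order $Tv_T^2\|\mathbf{M}\|^2$. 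The condition $T^{1/2-r^*}v_T\rightarrow\infty$ is equivalent to $T^{1-2r^*}v_T^2\rightarrow\infty$, so $Tv_T^2$ still grows strictly faster than $T^{2r^*}$, and in particular $\sqrt{T}\,v_T\rightarrow\infty$, which is exactly what is needed to dominate the martingale noise encountered below.

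For part (i), I would replay the decomposition of $SSE([0,T],\boldsymbol{\mathrm \tau},\boldsymbol{\mathrm{\hat{\uptheta}}}(\boldsymbol{\mathrm \tau}))-SSE([0,T],\boldsymbol{\mathrm \tau}^0,\boldsymbol{\mathrm{\hat{\uptheta}}}(\boldsymbol{\mathrm \tau}^0))$ from Appendix~B. On the event $\{\|\hat{s}-s^0\|>\eta\}$ for some fixed $\eta>0$, Assumption~\ref{asm4} together with the limit in (\ref{convQ-m1}) bound the deterministic part of this difference from below by a constant multiple of $\eta T v_T^2$. The stochastic part decomposes into inner products of the $v_T$-scale shift with martingale vectors of the type $\boldsymbol{\mathrm r}_{(a,b)}$ appearing in (\ref{am-R}), each of which has norm $O_p(\sqrt{T})$; hence the stochastic part is $O_p(v_T\sqrt{T})$. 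Normalising by $Tv_T^2$, the stochastic contribution is $O_p((\sqrt{T}\,v_T)^{-1})=o_p(1)$ by the growth condition on $v_T$. Consequently $P(\|\hat{s}-s^0\|>\eta)\rightarrow 0$, which is (i).

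For part (ii), I would localise around each true change point following the proof of Proposition~\ref{prnm14}. If a candidate satisfies $T|\hat{s}_j-s_j^0|=\ell$, the signal contribution on the segment of mismatch is of order $\ell v_T^2\|\mathbf{M}\|^2$, while the martingale noise restricted to that segment is $O_p(\sqrt{\ell}\,v_T)$ uniformly in $\ell$, the uniformity being obtained through a H\'ajek--R\'enyi-type maximal inequality applied to the partial sums of $\boldsymbol{\mathrm z}_i\epsilon_i$. For $\hat{\tau}_j$ to yield an SSE no larger than that at $\tau_j^0$, the signal must be at most of the order of the noise, i.e.\ $\ell v_T^2\lesssim \sqrt{\ell}\,v_T$, giving $\ell\lesssim v_T^{-2}$, equivalently $Tv_T^2|\hat{s}_j-s_j^0|=O_p(1)$, which is (ii).

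The main technical obstacle is the uniform-in-$\ell$ control of the stochastic cross-terms once the shift is vanishing. In the fixed-shift case these cross-terms are automatically dwarfed by the linear-in-$T$ signal; with $v_T\rightarrow 0$ the margin becomes delicate, and the assumption $T^{1/2-r^*}v_T\rightarrow\infty$ is precisely the quantitative lower bound that closes the gap. Once a suitable Doob-type maximal inequality is in place to control suprema of the martingale integrals $\boldsymbol{\mathrm r}_{(a,b)}$ over the candidate set of change points, the remaining algebraic steps follow the templates of Propositions~\ref{prnm13}--\ref{prnm14} verbatim, with a factor of $v_T^2$ inserted at each appearance of the squared shift.
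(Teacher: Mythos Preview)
Your approach is essentially the same as the paper's: both parts replay the proofs of Propositions~\ref{prnm13} and~\ref{prnm14} while tracking the shrinking shift $v_T$ through every term. The paper's only explicit change for part~(i) is to replace the normalising factor $\phi=T$ in (\ref{prnm1-eq0}) by $\phi$ such that the signal $T^{1-2r^*}v_T^2\|\mathbf{M}\|^2\to\infty$ while the noise, controlled via the bound (\ref{convm_R2}) as $O_p((\log T)^{2a^*})$, satisfies $(\log T)/T^{2r^*}\to 0$; your choice of normalising by $Tv_T^2$ (so the signal is bounded below by a positive constant) leads to the same contradiction. For part~(ii) the paper simply replaces the set $V_\eta(C)$ in the proof of Proposition~\ref{prnm14} by $V_\eta(C,v_T)=\{\tau:\,C/v_T^2<|\tau-\tau^0|<\eta T\}$, which is exactly your localisation $\ell\lesssim v_T^{-2}$ recast as a constraint set.

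One point to tighten: your claim that ``the stochastic part is $O_p(v_T\sqrt{T})$'' only accounts for the shift--martingale cross terms. The pure-noise pieces of (\ref{prnm1-eq01}), for instance $\frac{1}{T}\sum_{t_i}(\mathbf{z}_i(\boldsymbol{\mathrm\uptheta}_i-\hat{\boldsymbol{\mathrm\uptheta}}_i^0))^2$, carry no $v_T$ factor at all and are $O_p((\log T)^{2a^*})$ by (\ref{convm_R2}), not $O_p(v_T\sqrt{T})$. Under your normalisation by $Tv_T^2$ these still vanish because $Tv_T^2\gg T^{2r^*}\gg(\log T)^{2a^*}$, but this step (and the analogous one for the cross terms, which pick up a $(\log T)^{a^*}$ factor beyond your $O_p(\sqrt{T})$) is precisely where the strict positivity $r^*>0$ is used rather than the weaker $\sqrt{T}\,v_T\to\infty$ you invoke. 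Once these terms are accounted for, your argument matches the paper's.
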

\noindent \textit{Proof:} See \ref{appendixb}.

\subsection{Maximum log-likelihood method}
\noindent We introduce an alternative method to estimate the location of the change points based on the maximum of log-likelihood function. Recall that the log-likelihood function for (\ref{ou1}) with the exact change points $\tau_1^0,~\ldots~\tau_m^0$ is given by
\begin{eqnarray}
\log \ell(\tau_1^0,~\ldots~,\tau_m^0,\boldsymbol{\mathrm \uptheta})&=&
\frac{1}{\sigma^2}\int_{0}^TS(\boldsymbol{\mathrm \uptheta},t,X_t)dX_t
-\frac{1}{2\sigma^2}\int_{0}^TS^2(\boldsymbol{\mathrm \uptheta},t,X_t)dt \nonumber \\
&=&\frac{1}{\sigma^2}\sum_{j=1}^{m+1}\int_{\tau_{j-1}^0}^{\tau_{j}^0}S(\boldsymbol{\mathrm \uptheta}^{(j)},t,X_t)dX_t -\frac{1}{2\sigma^2}\sum_{j=1}^{m+1}\int_{\tau_{j-1}^0}^{\tau_{j}^0}S^2(\boldsymbol{\mathrm \uptheta}^{(j)},t,X_t)dt.\label{loglm}
\end{eqnarray}
\noindent From \eqref{loglm}, the estimator of the location of the change points is given
\begin{equation}\label{mlem1}\boldsymbol{\hat{\mathrm \tau}}=\arg\max_{\tau} \log \ell(\boldsymbol{\mathrm \tau},\hat{\boldsymbol{\mathrm \uptheta}}(\boldsymbol{\mathrm \tau})),\end{equation}
where $\hat{\boldsymbol{\mathrm \uptheta}}(\boldsymbol{\mathrm \tau})$ is the MLE of $\boldsymbol{\mathrm \uptheta}$ based on the given change points $\boldsymbol{\mathrm \tau}=(\tau_1,~\ldots,~\tau_m)$.\\
\ \\
\noindent In practice, the calculation of $\log \ell(\boldsymbol{\mathrm \tau},\hat{\boldsymbol{\mathrm \uptheta}}(\boldsymbol{\mathrm \tau}))$ in (\ref{mlem1}) relies on numerical approximation methods (see Auger and Lawrence (1989)) to compute the integrals inside $\log \ell(\boldsymbol{\mathrm \tau},\boldsymbol{\mathrm{\hat{\uptheta}}}(\boldsymbol{\mathrm \tau}))$. For example, by approximating the Riemann sum based on a partition $0=t_0^*<...<t_n^*=T$ with $\Delta_t^*=t_{i+1}^*-t_i^*$, (\ref{mlem1}) is calculated as
\begin{equation}\label{rmloglm} \log \ell^*([0,T],\boldsymbol{\mathrm \tau},\boldsymbol{\mathrm{\hat{\uptheta}}}(\boldsymbol{\mathrm \tau}))
=\frac{1}{\sigma^2}\sum_{j=1}^{m+1}\sum_{t_i^*\in(\tau_{j-1},\tau_{j}]}\boldsymbol{\mathrm{\hat{\uptheta}}}^{(j)\prime}V(t)'(X_{t_{i+1}^*}-X_{t_i^*})-\frac{1}{2\sigma^2}\sum_{j=1}^{m+1}\sum_{t_i^*\in(\tau_{j-1},\tau_{j}]}\left(\boldsymbol{\mathrm{\hat{\uptheta}}}^{(j)\prime}V(t)'\right)^2\Delta_t^*.\end{equation}
The approximated version of (\ref{mlem1}) is then
\begin{equation}\label{mlem2}\boldsymbol{\hat{\mathrm \tau}}=\arg\max_{\boldsymbol{\mathrm \tau}=(\tau_1,...,\tau_m)} \log \ell^*([0,T],\boldsymbol{\mathrm \tau},\hat{\boldsymbol{\mathrm \uptheta}}(\boldsymbol{\mathrm \tau})).\end{equation}

\noindent {\bf Consistency of the proposed estimator}\\
\noindent We now link results (\ref{mlem2}) and (\ref{cpm1}), which are
the respective results from the LSSE- and MLE-based methods.
\begin{prn}\label{prn-mlem1}
Consider the observed process $X_t$, $t\in [0,T]$. If the increment $\Delta_t^*$ is equal to $\Delta_t$ defined in Section~\ref{LSSE} then under Assumptions \ref{asm1}--\ref{asm4}, the asymptotic results given in Propositions~\ref{prnm13} and \ref{prnm14} as well as in Corollary~\ref{corm-1} also hold for (\ref{mlem2}).
\end{prn}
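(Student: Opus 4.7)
The plan is to show that, under the hypothesis $\Delta_t^* = \Delta_t$, the profile log-likelihood in (\ref{rmloglm}) and the profile sum of squared errors in (\ref{ssr1}) are related by an affine transformation whose additive constant is a function of the sample path alone and does not depend on $\boldsymbol{\mathrm \tau}$. Once this identity is in hand, the two optimization problems (\ref{mlem2}) and (\ref{cpm1}) have \emph{identical} argmaximizer/argminimizer pathwise, so the consistency statement of Proposition \ref{prnm13}, the $O_P(T^{-1})$ localization rate of Proposition \ref{prnm14}, and the shrinking-shift refinement of Corollary \ref{corm-1} apply verbatim to the MLE-based estimator without any further probabilistic work.

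First I would rewrite $\boldsymbol{\mathrm{\hat{\uptheta}}}^{(j)\prime}V(t_i)' = S(\boldsymbol{\mathrm{\hat{\uptheta}}}^{(j)},t_i,X_{t_i})$ and note that, by the definition of $\boldsymbol{\mathrm z}_i$ in Section \ref{LSSE}, one has $\boldsymbol{\mathrm z}_i \boldsymbol{\mathrm{\hat{\uptheta}}}_i = S(\boldsymbol{\mathrm{\hat{\uptheta}}}_i,t_i,X_{t_i})\,\Delta_t$. Expanding the square in (\ref{ssr1}) and regrouping over the candidate segments $(\tau_{j-1},\tau_j]$ produces exactly three summands: a data-only term $\sum_i Y_i^2$, a cross term proportional to $\sum_j \sum_{t_i \in (\tau_{j-1},\tau_j]} S(\boldsymbol{\mathrm{\hat{\uptheta}}}^{(j)},t_i,X_{t_i})\,Y_i\,\Delta_t$, and a quadratic term proportional to $\sum_j \sum_{t_i \in (\tau_{j-1},\tau_j]} S^2(\boldsymbol{\mathrm{\hat{\uptheta}}}^{(j)},t_i,X_{t_i})\,\Delta_t^2$. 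After multiplication by $-\tfrac{1}{2\sigma^2\Delta_t}$, the cross and quadratic pieces line up exactly with the two summations appearing in (\ref{rmloglm}), while the $\sum_i Y_i^2$ piece becomes a $\boldsymbol{\mathrm \tau}$-free additive constant.

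The identity I expect to extract is
\begin{equation*}
\log \ell^*([0,T],\boldsymbol{\mathrm \tau},\boldsymbol{\mathrm{\hat{\uptheta}}}(\boldsymbol{\mathrm \tau})) \;=\; -\frac{1}{2\sigma^2 \Delta_t}\, SSE([0,T],\boldsymbol{\mathrm \tau},\boldsymbol{\mathrm{\hat{\uptheta}}}(\boldsymbol{\mathrm \tau})) \;+\; \frac{1}{2\sigma^2 \Delta_t}\sum_{i} Y_i^2.
\end{equation*}
Because the final term is a function of the observations alone, $\arg\max_{\boldsymbol{\mathrm \tau}} \log \ell^*(\boldsymbol{\mathrm \tau},\boldsymbol{\mathrm{\hat{\uptheta}}}(\boldsymbol{\mathrm \tau})) = \arg\min_{\boldsymbol{\mathrm \tau}} SSE(\boldsymbol{\mathrm \tau},\boldsymbol{\mathrm{\hat{\uptheta}}}(\boldsymbol{\mathrm \tau}))$. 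Thus the random vectors defined by (\ref{mlem2}) and (\ref{cpm1}) coincide, and the three claimed asymptotic properties transfer directly.

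The main point that requires care is the bookkeeping of the $\Delta_t$ factors: the quadratic term of $SSE$ carries $\Delta_t^2$ (each $\boldsymbol{\mathrm z}_i$ contributes one $\Delta_t$), whereas the corresponding quadratic term of $\log \ell^*$ carries a single $\Delta_t^*$, and the matching of these powers is precisely what the hypothesis $\Delta_t^* = \Delta_t$ provides. Were the two grids mismatched, the identity would pick up a $\boldsymbol{\mathrm \tau}$-dependent residual proportional to $(\Delta_t - \Delta_t^*)\sum_j \sum_{t_i \in (\tau_{j-1},\tau_j]} S^2(\boldsymbol{\mathrm{\hat{\uptheta}}}^{(j)},t_i,X_{t_i})$, and one would have to bound this uniformly in $\boldsymbol{\mathrm \tau}$ to push the LSSE consistency through; the stated hypothesis removes this obstacle entirely and reduces the proof to a one-line algebraic identity.
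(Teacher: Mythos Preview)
Your proposal is correct and follows essentially the same route as the paper: the paper also completes the square to obtain the identity
\[
\log \ell^*(\boldsymbol{\mathrm \tau},\boldsymbol{\mathrm{\hat{\uptheta}}}(\boldsymbol{\mathrm \tau}))
=\frac{1}{2\Delta_t^*\sigma^2}\Bigl(\sum_{t_i^*\in[0,T]}Y_i^{*\top}Y_i^*-\sum_{j}\sum_{t_i^*\in(\tau_{j-1},\tau_{j}]}(Y_i^*-\boldsymbol{\mathrm z}_i^*\boldsymbol{\mathrm{\hat{\uptheta}}}^{(j)})^2\Bigr),
\]
observes that the first sum is $\boldsymbol{\mathrm \tau}$-free, and then invokes $\Delta_t^*=\Delta_t$ so that the second sum coincides with $SSE([0,T],\boldsymbol{\mathrm \tau},\boldsymbol{\mathrm{\hat{\uptheta}}}(\boldsymbol{\mathrm \tau}))$. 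Your additional remark explaining precisely which $\Delta_t$-power mismatch the hypothesis eliminates is a nice clarification that the paper leaves implicit.
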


\noindent \textit{Proof:} See \ref{appendixb}. \\
\ \\
\noindent Using the {\color{black}consistency} properties, we can establish the asymptotic normality for the MLE of drift parameters based on the estimated change points.

\subsection{Asymptotic normality of $\hat{\uptheta}$ based on the estimated change points}\label{sectionanormality}
\noindent Previously, we established the T-rate consistency of the estimated change points. Based on these asymptotic consistency results, we extend the asymptotic normality results in Zhang~(2015) to the case of multiple change points.

\begin{prn}\label{anm1}
 Let $\boldsymbol{\hat{\mathrm \tau}}=\{{\hat{\tau}}_1,~\ldots~,{\hat{\tau}}_m\}$ be the estimated change points using (\ref{cpm1}) or (\ref{mlem1}).  Then, under Assumptions \ref{asm1}--\ref{asm4}, we have that for $i=1,~\ldots,~p$,
\begin{equation}\label{anm1-1}\frac{1}{T}\int_{{\hat{\tau}}_{j-1}}^{{\hat{\tau}}_j}{X}_t\varphi_k(t)dt\xrightarrow[T\rightarrow \infty]{p}(s_{j}-s_{j-1})\int_0^1\tilde{h}(t)\varphi_k(t)dt.\end{equation}
Similarly,
\begin{equation}\label{anm1-1}\frac{1}{T}\int_{{\hat{\tau}}_{j-1}}^{{\hat{\tau}}_j}{X}_t^2dt\xrightarrow[T\rightarrow \infty]{p}(s_{j}-s_{j-1})\left(\int_0^1\tilde{h}^2(t)dt+\frac{\sigma^2}{2 a^{(j)}}\right).\end{equation}
\end{prn}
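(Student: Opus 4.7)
The plan is to reduce the problem to the already-established a.s.\ convergence statements \eqref{cxm2}--\eqref{cxm3} on the true intervals $[\tau_{j-1}^0,\tau_j^0]$, by showing the discrepancy caused by replacing $\tau_j^0$ with $\hat{\tau}_j$ is negligible on the $1/T$ scale. Concretely, I would write
\begin{equation*}
\int_{\hat{\tau}_{j-1}}^{\hat{\tau}_j} X_t \varphi_k(t)\,dt
= \int_{\tau_{j-1}^0}^{\tau_j^0} X_t \varphi_k(t)\,dt
+ \int_{\tau_j^0}^{\hat{\tau}_j} X_t \varphi_k(t)\,dt
- \int_{\tau_{j-1}^0}^{\hat{\tau}_{j-1}} X_t \varphi_k(t)\,dt,
\end{equation*}
and analogously for the $X_t^2$ integral, with the understanding that each boundary integral is signed according to whether $\hat{\tau}_j$ lies above or below $\tau_j^0$.

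For the first (main) term, combining \eqref{cxm2}--\eqref{cxm3} with \eqref{cx1} yields the required almost-sure convergence of $\frac{1}{T}\int_{\tau_{j-1}^0}^{\tau_j^0} X_t\varphi_k(t)\,dt$ to $(s_j^0-s_{j-1}^0)\int_0^1 \tilde h^{(j)}(t)\varphi_k(t)\,dt$, and similarly for the quadratic integral. Since $\hat{s}_j \xrightarrow{p} s_j^0$ by Propositions~\ref{prnm13}, \ref{prn-mlem1}, replacing $s_j^0-s_{j-1}^0$ by $\hat{s}_j-\hat{s}_{j-1}$ in the limit costs nothing in probability, so the proof reduces to showing the two boundary terms are $o_p(T)$.

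The boundary terms are controlled via the $T$-rate consistency from Proposition~\ref{prnm14} (and Proposition~\ref{prn-mlem1}), which says $|\hat{\tau}_j-\tau_j^0|=O_p(1)$. Given any $\epsilon>0$, pick $C$ so that $P(|\hat{\tau}_j-\tau_j^0|>C)<\epsilon$ for all large $T$. On the complementary event,
\begin{equation*}
\left|\int_{\tau_j^0}^{\hat{\tau}_j} X_t\varphi_k(t)\,dt\right|
\le K_\varphi \int_{\tau_j^0-C}^{\tau_j^0+C}|X_t|\,dt,
\end{equation*}
and using \eqref{sol2} together with Jensen's inequality, $\mathrm{E}|X_t|\le\sqrt{K_1}$, which gives $\mathrm{E}\int_{\tau_j^0-C}^{\tau_j^0+C}|X_t|\,dt \le 2C\sqrt{K_1}$. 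A Markov-inequality argument then yields $\int_{\tau_j^0}^{\hat{\tau}_j}X_t\varphi_k(t)\,dt=O_p(1)=o_p(T)$. The same treatment, applied with $X_t^2$ in place of $X_t\varphi_k(t)$ and using $\mathrm{E}X_t^2\le K_1$ directly, handles the boundary terms in the second assertion.

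The main obstacle is the subtle entanglement between $\hat{\tau}_j$ and the process $X_t$: one cannot simply condition on $\hat{\tau}_j$ and compute expectations, because $\hat{\tau}_j$ is a functional of the same sample path. The device of truncating on $\{|\hat{\tau}_j-\tau_j^0|\le C\}$ circumvents this by decoupling the length of the boundary interval from the integrand, after which the uniform $L^2$ bound \eqref{sol2} does the rest. All other steps are bookkeeping: applying the Continuous Mapping Theorem and Slutsky's theorem to pass from a.s.\ to in-probability statements and to insert $\hat{s}_j-\hat{s}_{j-1}$ in place of $s_j^0-s_{j-1}^0$ on the right-hand side.
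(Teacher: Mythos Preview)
Your argument is correct, but it proceeds along a genuinely different route from the paper's. The paper first passes from $X_t$ to the auxiliary stationary process $\tilde X_t$ on the estimated interval $[\hat\tau_{j-1},\hat\tau_j]$ (invoking results (B.47) and (B.67) of Zhang~(2015)), and then compares $\frac{1}{T}\int_{\hat\tau_{j-1}}^{\hat\tau_j}\tilde X_t\varphi_k(t)\,dt$ with $\frac{1}{T}\int_{\tau_{j-1}^0}^{\tau_j^0}\tilde X_t\varphi_k(t)\,dt$. For the latter comparison it only uses the \emph{weak} consistency $\hat s_j\to s_j^0$ in probability: the boundary difference is bounded over $[\tau_j^0-\delta_j T,\tau_j^0+\delta_j T]$ for a fixed small $\delta_j$, Markov and Jensen give a bound of order $(\delta_{j-1}+\delta_j)/\epsilon$, and one concludes by sending $\delta_j\downarrow 0$. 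You instead exploit the full $T$-rate result of Proposition~\ref{prnm14} to make the boundary interval of \emph{fixed} length $2C$, so the boundary contribution is $O_p(1)$ outright and divides away as $o_p(1)$; this lets you work with $X_t$ directly and dispense with the $\tilde X_t$ detour. Your approach is shorter and more self-contained; the paper's has the minor advantage that it would go through under weak consistency alone, without appealing to the sharper rate.
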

\noindent \textit{Proof:} See \ref{appendixanormality}.

\begin{prn}\label{anm3}
 Under the same conditions as in Proposition~\ref{anm1}, we have that for $i=1,~\ldots,~p$,
\begin{equation}P\left(\int_{0}^{T}(\frac{1}{\sqrt{T}}\varphi_k(t)\mathbbm{1}_{{\hat{\tau}}_{i-1}<\leq {\hat{\tau}}_i})^2<\infty\right)=1,
\mbox{ and }
P\left(\int_{0}^{T}(\frac{-1}{\sqrt{T}}X_t\mathbbm{1}_{{\hat{\tau}}_{i-1}<\leq {\hat{\tau}}_i})^2<\infty\right)=1.\end{equation}
\end{prn}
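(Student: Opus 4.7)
The plan is to dominate each of the two random integrals by a quantity whose finiteness is either deterministic or an immediate consequence of the uniform $L^2$ bound~(\ref{sol2}). The key observation is that $\mathbbm{1}_{\{\hat{\tau}_{i-1}<t\leq \hat{\tau}_i\}}\leq \mathbbm{1}_{\{0\leq t\leq T\}}$ pathwise, which lets me replace the random interval of integration with the full interval $[0,T]$ at the only cost of enlarging the integrand. After this pointwise reduction, the argument mirrors the one used just before (\ref{am-R}) for the exact change points $\tau_j^0$, so the randomness of $\hat{\tau}_{i-1},\hat{\tau}_i$ never enters in an essential way.

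For the first integral, I would invoke Assumption~\ref{asm3} together with the boundedness $|\varphi_k(t)|\leq K_{\varphi}$ recorded immediately after that assumption. This yields the pathwise estimate
$$\int_0^T\!\Bigl(\tfrac{1}{\sqrt{T}}\varphi_k(t)\mathbbm{1}_{\{\hat{\tau}_{i-1}<t\leq \hat{\tau}_i\}}\Bigr)^{2}dt \;=\; \frac{1}{T}\int_{\hat{\tau}_{i-1}}^{\hat{\tau}_i}\varphi_k^{2}(t)\,dt \;\leq\; \frac{K_{\varphi}^{2}(\hat{\tau}_i-\hat{\tau}_{i-1})}{T}\;\leq\; K_{\varphi}^{2},$$
which is a deterministic finite bound, so the event that the integral is finite has probability one trivially.

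For the second integral, I would first dominate by the full-range integral,
$$\int_0^T\!\Bigl(\tfrac{-1}{\sqrt{T}}X_t\mathbbm{1}_{\{\hat{\tau}_{i-1}<t\leq \hat{\tau}_i\}}\Bigr)^{2}dt \;\leq\; \frac{1}{T}\int_0^T X_t^{2}\,dt,$$
and then pass to expectations using (\ref{sol2}), which gives $\sup_{t\geq 0}\mathrm{E}[X_t^{2}]\leq K_1<\infty$. Tonelli's theorem then yields
$$\mathrm{E}\!\left[\frac{1}{T}\int_0^T X_t^{2}\,dt\right] \;=\; \frac{1}{T}\int_0^T \mathrm{E}[X_t^{2}]\,dt \;\leq\; K_1 \;<\; \infty,$$
so $\tfrac{1}{T}\int_0^T X_t^{2}\,dt$ is finite almost surely, and therefore so is the smaller truncated integral.

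The main obstacle is, in fact, a very mild one: the estimated endpoints $\hat{\tau}_{i-1},\hat{\tau}_i$ are data-driven, so a naive direct computation would require controlling their joint law; the pathwise bound $\mathbbm{1}_{\{\hat{\tau}_{i-1}<t\leq \hat{\tau}_i\}}\leq 1$ bypasses this completely. In particular, neither the consistency of the $\hat{s}_j$ from Propositions~\ref{prnm13}--\ref{prnm14} nor the limits in Proposition~\ref{anm1} are needed for the bare finiteness claim; they would only be invoked if one wished to upgrade this to a rate statement of the form $O_p(s_i^0-s_{i-1}^0)$, which is not what is asserted here.
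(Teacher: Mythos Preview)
Your proposal is correct and matches the paper's approach: the paper's proof is a one-line appeal to (\ref{sol2}) together with the analogous argument in Zhang~(2015), and what you have written is precisely the explicit version of that argument --- boundedness of $\varphi_k$ from Assumption~\ref{asm3} for the first integral, and the uniform $L^2$ bound (\ref{sol2}) plus Tonelli for the second. Your observation that the pathwise bound $\mathbbm{1}_{\{\hat{\tau}_{i-1}<t\leq \hat{\tau}_i\}}\leq 1$ removes any need to control the law of the estimated endpoints is exactly the point that makes the argument go through unchanged from the deterministic-endpoint case treated just before (\ref{am-R}).
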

\noindent \textit{Proof:} This follows from (\ref{sol2}) and the similar {\color{black}arguments} in the proof of Proposition~2.3.5 in Zhang~(2015). \\
\ \\
\noindent
Employing Propositions~\ref{anm1} and \ref{anm3}, we have the following results.
\begin{prn}\label{anm4}
\begin{equation}T\boldsymbol{\mathrm Q}_{({\hat{\tau}}_{j-1}, {\hat{\tau}}_{j})}^{-1} \xrightarrow[T\rightarrow \infty]{P} \frac{1}{s_j^0-s_{j-1}^0}\boldsymbol{\mathrm \Sigma}_j^{-1}, j=1,~\ldots,~m,\end{equation}
where $\boldsymbol{\mathrm \Sigma}_j$ is defined in (\ref{sigmaj}).
\end{prn}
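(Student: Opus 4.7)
\textbf{Proof plan for Proposition \ref{anm4}.} The plan is to first show the ``forward'' convergence $\tfrac{1}{T}\boldsymbol{\mathrm Q}_{(\hat{\tau}_{j-1},\hat{\tau}_j)} \xrightarrow[T\to\infty]{P} (s_j^0-s_{j-1}^0)\boldsymbol{\mathrm \Sigma}_j$, entry by entry, and then invert via the Continuous Mapping Theorem to obtain the claim. Because $\boldsymbol{\mathrm \Sigma}_j$ has already been asserted to be positive definite (by the same argument as in Proposition 2.2.6 of Zhang (2015), once Assumptions \ref{asm1}--\ref{asm3} are in force), matrix inversion is continuous at $\boldsymbol{\mathrm \Sigma}_j$, so the CMT step is automatic once the forward convergence is established.

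The matrix $\tfrac{1}{T}\boldsymbol{\mathrm Q}_{(\hat{\tau}_{j-1},\hat{\tau}_j)}$ has three types of entries. The mixed entries $-\tfrac{1}{T}\int_{\hat{\tau}_{j-1}}^{\hat{\tau}_j}\varphi_k(t)X_t\,dt$ and the last diagonal entry $\tfrac{1}{T}\int_{\hat{\tau}_{j-1}}^{\hat{\tau}_j}X_t^2\,dt$ are handled directly by Proposition \ref{anm1}, which yields precisely $-(s_j^0-s_{j-1}^0)\Lambda_{j,k}$ and $(s_j^0-s_{j-1}^0)w_j$ in the limit. For the remaining top-left block $\tfrac{1}{T}\int_{\hat{\tau}_{j-1}}^{\hat{\tau}_j}\varphi_{k_1}(t)\varphi_{k_2}(t)\,dt$, I would combine Assumption \ref{asm3} (periodicity with $v=1$ and orthogonality on $[0,1]$) with the consistency of $\hat{s}_j$. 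Writing $T=Nv=N$, the integral over any interval of length $L$ differs from $L\cdot\delta_{k_1k_2}$ by at most one period's contribution, which is bounded by $K_\varphi^2$ uniformly. Hence
\begin{equation*}
\left|\frac{1}{T}\int_{\hat{\tau}_{j-1}}^{\hat{\tau}_j}\varphi_{k_1}(t)\varphi_{k_2}(t)\,dt - (\hat{s}_j-\hat{s}_{j-1})\delta_{k_1k_2}\right| \leq \frac{K_\varphi^2}{T},
\end{equation*}
and combining with $\hat{s}_j-\hat{s}_{j-1}\xrightarrow[T\to\infty]{P} s_j^0-s_{j-1}^0$ from Proposition \ref{prnm13} (applied via Proposition \ref{prn-mlem1} for the MLE-based estimator) gives the required limit $(s_j^0-s_{j-1}^0)\boldsymbol{\mathrm I}_p$ for that block.

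Assembling the blocks yields $\tfrac{1}{T}\boldsymbol{\mathrm Q}_{(\hat{\tau}_{j-1},\hat{\tau}_j)} \xrightarrow[T\to\infty]{P} (s_j^0-s_{j-1}^0)\boldsymbol{\mathrm \Sigma}_j$. Multiplying by a nonzero constant and inverting are both continuous operations at this positive-definite limit, so an application of the Continuous Mapping Theorem delivers $T\boldsymbol{\mathrm Q}_{(\hat{\tau}_{j-1},\hat{\tau}_j)}^{-1} \xrightarrow[T\to\infty]{P}\tfrac{1}{s_j^0-s_{j-1}^0}\boldsymbol{\mathrm \Sigma}_j^{-1}$, as required.

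The main obstacle is the mixed and quadratic entries: proving convergence in probability when the integration limits are themselves random. A clean way to handle this rigorously is to decompose $\int_{\hat{\tau}_{j-1}}^{\hat{\tau}_j} = \int_{\tau_{j-1}^0}^{\tau_j^0} + \int_{\tau_j^0}^{\hat{\tau}_j} - \int_{\tau_{j-1}^0}^{\hat{\tau}_{j-1}}$ and show the two boundary integrals are $o_p(T)$; the deterministic part then converges a.s.\ by \eqref{cxm2}--\eqref{cxm3}. Proposition \ref{prnm14} gives $T|\hat{s}_j-s_j^0|=O_p(1)$, i.e.\ $|\hat{\tau}_j-\tau_j^0|=O_p(1)$, which is actually much stronger than what is needed: combined with $\varphi_k$ being bounded and $\mathrm{E}(X_t^2)$ being uniformly bounded by \eqref{sol2}, the boundary integrals are $O_p(1)$, hence $o_p(T)$. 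This detail is carried out in Proposition \ref{anm1}; the present proposition is essentially its matrix-valued corollary combined with the CMT.
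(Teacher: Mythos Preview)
Your proposal is correct and follows essentially the same approach as the paper: establish the forward convergence $\tfrac{1}{T}\boldsymbol{\mathrm Q}_{(\hat\tau_{j-1},\hat\tau_j)}\xrightarrow{P}(s_j^0-s_{j-1}^0)\boldsymbol{\mathrm \Sigma}_j$ entry by entry (the paper simply cites Propositions~\ref{anm1}--\ref{anm3} for this), then invoke positive definiteness of the limit and the Continuous Mapping Theorem with $g(X)=X^{-1}$. Your write-up is in fact more explicit than the paper's, correctly noting that the top-left $\varphi_{k_1}\varphi_{k_2}$ block is not covered by Proposition~\ref{anm1} and supplying the elementary periodicity/orthogonality argument for it.
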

\noindent \textit{Proof:} See \ref{appendixanormality}.

\begin{prn}\label{anm5}
Let $\boldsymbol{\hat{\mathrm \tau}}=\{{\hat{\tau}}_1,~\ldots,~{\hat{\tau}}_m\}$ be the estimated change points from (\ref{cpm1}), and suppose that Assumptions \ref{asm1}--\ref{asm4} hold. Then,
\begin{equation}\frac{1}{\sqrt{T}}\boldsymbol{\mathrm r}_{({\hat{\tau}}_{j-1}, {\hat{\tau}}_{j})}-\frac{1}{\sqrt{T}}\boldsymbol{\mathrm r}_{({\tau}_{j-1}^0, {\tau}_{j}^0)} \xrightarrow[T\rightarrow \infty]{P} \boldsymbol{\mathrm 0},\end{equation}
where
$$\boldsymbol{\mathrm r}_{({\hat{\tau}}_{j-1}, {\hat{\tau}}_{j})}=\left(\int_{{\hat{\tau}}_{j-1}}^{{\hat{\tau}}_j}\varphi_1(t)dW_t,\ldots,\int_{{\hat{\tau}}_{j-1}}^{{\hat{\tau}}_j}\varphi_p(t)dW_t, -\int_{{\hat{\tau}}_{j-1}}^{{\hat{\tau}}_j}X_tdW_t\right)^{\top}.$$
\end{prn}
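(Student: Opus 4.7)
My plan is to reduce Proposition \ref{anm5} to controlling two short boundary stochastic integrals near each estimated change point, then bound them on a high-probability event provided by the $T$-rate consistency established in Propositions \ref{prnm13}--\ref{prnm14} and \ref{prn-mlem1}.

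First, I would telescope the difference as
$$\boldsymbol{\mathrm r}_{(\hat\tau_{j-1}, \hat\tau_j)} - \boldsymbol{\mathrm r}_{(\tau_{j-1}^0, \tau_j^0)} = \boldsymbol{\mathrm r}_{(\tau_j^0, \hat\tau_j)} - \boldsymbol{\mathrm r}_{(\tau_{j-1}^0, \hat\tau_{j-1})},$$
adopting the convention $\int_{a}^{b}=-\int_{b}^{a}$ when $b<a$, and using the exact boundaries $\hat\tau_0=\tau_0^0=0$ and $\hat\tau_{m+1}=\tau_{m+1}^0=T$. It then suffices to show, for each $j=1,\ldots,m$ and each $k=1,\ldots,p$, the componentwise convergences
$$\frac{1}{\sqrt{T}}\int_{\tau_j^0}^{\hat\tau_j}\varphi_k(t)\,dW_t \xrightarrow[T\to\infty]{P} 0, \qquad \frac{1}{\sqrt{T}}\int_{\tau_j^0}^{\hat\tau_j}X_t\,dW_t \xrightarrow[T\to\infty]{P} 0.$$

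To handle each boundary integral, I would use a truncation argument. For any $\delta>0$, Proposition \ref{prnm14} supplies a constant $C>0$ such that $P\bigl(|\hat\tau_j - \tau_j^0| > C\bigr) < \delta$ for all sufficiently large $T$. On the complementary event, the random-endpoint integral is dominated by a deterministic supremum,
$$\left|\int_{\tau_j^0}^{\hat\tau_j}\varphi_k(t)\,dW_t\right| \leq \sup_{|s-\tau_j^0|\leq C}\left|\int_{\tau_j^0}^{s}\varphi_k(t)\,dW_t\right|,$$
and by Doob's $L^2$ maximal inequality together with It\^o isometry,
$$E\left[\sup_{|s-\tau_j^0|\leq C}\left(\int_{\tau_j^0}^{s}\varphi_k(t)\,dW_t\right)^2\right] \leq 4\int_{\tau_j^0-C}^{\tau_j^0+C}\varphi_k^2(t)\,dt \leq 8CK_\varphi^2,$$
a bound independent of $T$. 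Dividing by $\sqrt{T}$ and invoking Chebyshev's inequality then gives the desired $o_P(1)$ conclusion for the $\varphi_k$ component. The same argument applies to the $X_t\,dW_t$ component, using the uniform $L^2$ bound $\sup_{t\geq 0}E(X_t^2) \leq K_1$ from (\ref{sol2}) in place of $K_\varphi^2$.

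The main obstacle to navigate carefully is that $\hat\tau_j$ is a data-driven random time which need not be a stopping time with respect to the filtration of $W$, so It\^o isometry cannot be applied directly on the random interval $[\tau_j^0, \hat\tau_j]$. The truncation onto $\{|\hat\tau_j - \tau_j^0|\leq C\}$ is precisely what removes this difficulty, by replacing the random endpoint with the deterministic envelope $[\tau_j^0-C, \tau_j^0+C]$ on which the maximal inequality and It\^o isometry apply. For the shrinking-shift regime of Corollary \ref{corm-1}, the same argument goes through with $C$ replaced by $Cv_T^{-2}$, and the assumption $T^{1/2-r^*}v_T\to\infty$ (so that $Tv_T^2\to\infty$, hence $v_T^{-2}/T\to 0$) preserves the $o_P(1)$ conclusion.
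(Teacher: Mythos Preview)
Your argument is correct and, in one respect, cleaner than the paper's. The paper also reduces to boundary integrals near each change point and restricts to a high-probability event, but it uses the \emph{weak} consistency $\hat s_j-s_j^0\xrightarrow{P}0$ to work on a window of width $\delta_jT$ (then lets $\delta_j\downarrow 0$ at the end), whereas you use the $T$-rate result of Proposition~\ref{prnm14} to get a window of fixed width $C$. Both routes close: yours gives a bound of order $C/T\to 0$, the paper's gives a bound of order $\delta_j$ that is then made arbitrarily small.

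The more substantive difference is your explicit use of Doob's $L^2$ maximal inequality. The paper writes the random-endpoint integral as $\int_0^T X_t\,\mathbbm{1}_{\{\tau_{j-1}^0-\delta_{j-1}T\le t\le \hat\tau_{j-1}\}}\,dW_t$ and applies Markov and It\^o isometry directly, bounding the second moment by $\int_{\tau_{j-1}^0-\delta_{j-1}T}^{\tau_{j-1}^0+\delta_{j-1}T}E[X_t^2]\,dt$. Since $\hat\tau_{j-1}$ is not an $\mathcal F^W$-stopping time, the integrand is not adapted and that step, as written, is formal; it is precisely a maximal-inequality bound in disguise. Your formulation makes this rigorous and also makes transparent why the shrinking-shift case of Corollary~\ref{corm-1} goes through with $C$ replaced by $Cv_T^{-2}$.
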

\noindent \textit{Proof:} See \ref{appendixanormality}. \\
\ \\
\noindent The next result following from Propositions \ref{anm4} and \ref{anm5} plays an essential role in proving the asymptotic {\color{black}normality} of MLE of the drift paramter based on the proposed estimated change points.
\begin{prn}\label{anm6}
 Suppose $\boldsymbol{\hat{\mathrm \tau}}=\{{\hat{\tau}}_1,~\ldots~,{\hat{\tau}}_m\}$ is the estimated set of change points from (\ref{cpm1}) and Assumptions \ref{asm1}-\ref{asm4} hold. Then
\begin{equation}\left( \frac{1}{\sqrt{T}}\boldsymbol{\mathrm r}_{(0, \tau_1^0)},~\ldots~,\frac{1}{\sqrt{T}}\boldsymbol{\mathrm r}_{(\tau_{m}^0, T)} \right)\xrightarrow[T\rightarrow \infty]{D}  \boldsymbol{\mathrm r}\sim N_{(m+1)(p+1)}\left(0,\boldsymbol{\tilde{\mathrm \Sigma}}\right),\end{equation}
where $\boldsymbol{\tilde{\mathrm \Sigma}}=\mathrm{diag}\left({s_1^0}\boldsymbol{\mathrm \Sigma}_1,(s_2^0-s_1^0)\boldsymbol{\mathrm \Sigma}_2,~\ldots~,(1-s_m^0)\boldsymbol{\mathrm \Sigma}_{m+1}\right)$.
\end{prn}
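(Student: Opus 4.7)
The statement, read in the context of Section 4.3, is asking us to upgrade the joint asymptotic normality of the scaled martingale increments from the true-change-point intervals to the estimated-change-point intervals (if taken literally with $\tau_j^0$, the conclusion is exactly (\ref{am-R}) and there is nothing new to prove; the informative reading, consistent with the title of the subsection, has $\hat{\tau}_j$ throughout the left-hand side). The plan is therefore to combine three ingredients already on the table: (i) the joint CLT (\ref{am-R}) for the vector
$\bigl(T^{-1/2}\boldsymbol{\mathrm r}_{(0,\tau_1^0)},\ldots,T^{-1/2}\boldsymbol{\mathrm r}_{(\tau_m^0,T)}\bigr)$; (ii) Proposition~\ref{anm5}, which states that the coordinatewise differences
$T^{-1/2}\boldsymbol{\mathrm r}_{(\hat\tau_{j-1},\hat\tau_j)}-T^{-1/2}\boldsymbol{\mathrm r}_{(\tau_{j-1}^0,\tau_j^0)}$ are $o_p(1)$; and (iii) multivariate Slutsky.

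First, I would stack the componentwise convergences of Proposition~\ref{anm5} into a single vector statement in $\mathbb{R}^{(m+1)(p+1)}$:
$$
\Bigl(T^{-1/2}\boldsymbol{\mathrm r}_{(0,\hat\tau_1)},\ldots,T^{-1/2}\boldsymbol{\mathrm r}_{(\hat\tau_m,T)}\Bigr)
-\Bigl(T^{-1/2}\boldsymbol{\mathrm r}_{(0,\tau_1^0)},\ldots,T^{-1/2}\boldsymbol{\mathrm r}_{(\tau_m^0,T)}\Bigr)
\xrightarrow[T\to\infty]{P}\boldsymbol{0},
$$
where the joint-in-$j$ version follows from the marginal statements because convergence in probability to a constant is automatically joint. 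Next, I would quote (\ref{am-R}) to obtain
$\bigl(T^{-1/2}\boldsymbol{\mathrm r}_{(0,\tau_1^0)},\ldots,T^{-1/2}\boldsymbol{\mathrm r}_{(\tau_m^0,T)}\bigr)\xrightarrow{D}\mathcal{N}_{(m+1)(p+1)}(\mathbf{0},\boldsymbol{\tilde{\Sigma}})$, with $\boldsymbol{\tilde{\Sigma}}=\operatorname{diag}\bigl(s_1^0\boldsymbol{\Sigma}_1,\ldots,(1-s_m^0)\boldsymbol{\Sigma}_{m+1}\bigr)$. Finally, applying the multivariate Slutsky theorem ($A_T-B_T\xrightarrow{P}\boldsymbol{0}$ and $B_T\xrightarrow{D}Z$ imply $A_T\xrightarrow{D}Z$) delivers the claim.

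The main delicate point, and what I expect to be the only genuine obstacle, is justifying that the coordinatewise differences appearing in Proposition~\ref{anm5} can be assembled into a single $o_p(1)$ vector whose limit covariance is still block-diagonal. The block-diagonality is inherited from the disjointness of the true intervals $(\tau_{j-1}^0,\tau_j^0)$ via the martingale orthogonality that underpins (\ref{am-R}). For the estimated intervals, the intervals $(\hat\tau_{j-1},\hat\tau_j)$ may fail to be exactly disjoint when an endpoint straddles its true counterpart, but Proposition~\ref{prnm14} provides $T$-rate consistency, so on an event of probability tending to $1$ the intervals are disjoint; on the complementary negligible event the integral over the offending sliver is $o_p(1)$ by boundedness of $\varphi_k$ and (\ref{sol2}). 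This is exactly the content needed to run Proposition~\ref{anm5} coordinatewise and hence jointly, after which Slutsky closes the argument.
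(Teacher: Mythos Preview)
Your approach is correct and matches the paper's: invoke Proposition~\ref{anm5} for the $o_p(1)$ differences and combine with the CLT at the true change points via Slutsky's theorem. You are in fact more explicit than the paper's own proof, which only writes out the componentwise (single-$j$) version of the argument and leaves the passage to the joint statement implicit; your final paragraph about disjointness of the estimated intervals is unnecessary, since the block-diagonal limit is already supplied by (\ref{am-R}) and multivariate Slutsky requires nothing further once the stacked difference is $o_p(1)$.
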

\noindent \textit{Proof:} See \ref{appendixanormality}. \\
\ \\
\noindent From Proposition \ref{anm6} together with Slutsky's Theorem, the following corollary establishes the asymptotic normality for $\hat{\boldsymbol{\mathrm \uptheta}}=\left(\boldsymbol{\mathrm{\hat{\uptheta}}}^{(1)},~\ldots~,\boldsymbol{\mathrm{\hat{\uptheta}}}^{(m)}\right)$.

\begin{cor}\label{anc1}
 Let $\boldsymbol{\hat{\mathrm \tau}}=\{{\hat{\tau}}_1,~\ldots~,{\hat{\tau}}_m\}$ be the estimated change points from (\ref{cpm1}), and supposes that Assumption \ref{asm1}-\ref{asm4} hold.
\begin{equation}\sqrt{T}(\boldsymbol{\mathrm{\hat{\uptheta}}}-\boldsymbol{\mathrm \uptheta})\xrightarrow[T\rightarrow \infty]{D} \rho \sim N_{((m+1)(p+1))}\left(0,\sigma^2\boldsymbol{\tilde{\mathrm \Sigma}}^{-1}\right),\end{equation}
where $\boldsymbol{\tilde{\mathrm \Sigma}}^{-1}=\mathrm{diag} \left( \frac{1}{s_1^0}\boldsymbol{\mathrm \Sigma}_1^{-1}, \frac{1}{s_2^0-s_1^0}\boldsymbol{\mathrm \Sigma}_2^{-1},~\ldots~,\frac{1}{1-s_m^0}\boldsymbol{\mathrm \Sigma}_{m+1}^{-1}
\right)$.
\end{cor}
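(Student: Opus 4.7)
The plan is to deduce the asymptotic normality of $\hat{\boldsymbol{\mathrm \uptheta}}$ by mimicking the argument that produced equation (\ref{am-theta0}) in the case of known change points, and then to replace every object indexed by $(\tau_{j-1}^0,\tau_j^0)$ by its analogue indexed by $({\hat\tau}_{j-1},{\hat\tau}_j)$, paying for each substitution with one of Propositions~\ref{anm4}--\ref{anm6}. Concretely, I would start from $\hat{\boldsymbol{\mathrm \uptheta}}^{(j)}=\boldsymbol{\mathrm Q}_{({\hat\tau}_{j-1},{\hat\tau}_{j})}^{-1}\boldsymbol{\mathrm{\tilde r}}_{({\hat\tau}_{j-1},{\hat\tau}_{j})}$, expand $dX_t$ via the SDE (\ref{ou1}), and split the driving integral according to which true regime the integration variable actually belongs to. This produces the identity
\begin{equation*}
\sqrt{T}\bigl(\hat{\boldsymbol{\mathrm \uptheta}}^{(j)}-\boldsymbol{\mathrm \uptheta}^{(j)}\bigr) = T\boldsymbol{\mathrm Q}_{({\hat\tau}_{j-1},{\hat\tau}_{j})}^{-1}\,\frac{1}{\sqrt T}\bigl(\boldsymbol{\mathrm{\tilde r}}_{({\hat\tau}_{j-1},{\hat\tau}_{j})}-\boldsymbol{\mathrm Q}_{({\hat\tau}_{j-1},{\hat\tau}_{j})}\boldsymbol{\mathrm \uptheta}^{(j)}\bigr),
\end{equation*}
whose second factor I would rewrite as $\sigma\,\frac{1}{\sqrt T}\boldsymbol{\mathrm r}_{({\hat\tau}_{j-1},{\hat\tau}_{j})}+R_{T,j}$, with $R_{T,j}$ collecting the bias contributed by integrating over the (small) symmetric difference between the estimated segment and the true segment $(\tau_{j-1}^0,\tau_j^0)$.

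Next I would dispose of the boundary remainder $R_{T,j}$. By Proposition~\ref{prnm14} the set on which $({\hat\tau}_{j-1},{\hat\tau}_{j})$ and $(\tau_{j-1}^0,\tau_j^0)$ disagree has total Lebesgue length $O_p(1)$, while on this boundary region the integrand $V(t)^{\top}V(t)(\boldsymbol{\mathrm \uptheta}_t-\boldsymbol{\mathrm \uptheta}^{(j)})$ is controlled in expectation via the boundedness of the $\varphi_k$'s (Assumption~\ref{asm3}) and the uniform second-moment bound (\ref{sol2}) for $X_t$. So the raw integral is $O_p(1)$ and $R_{T,j}=O_p(T^{-1/2})=o_p(1)$. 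Proposition~\ref{anm5} then lets me replace $\frac{1}{\sqrt T}\boldsymbol{\mathrm r}_{({\hat\tau}_{j-1},{\hat\tau}_{j})}$ by $\frac{1}{\sqrt T}\boldsymbol{\mathrm r}_{(\tau_{j-1}^0,\tau_j^0)}$ modulo $o_p(1)$, yielding the linearisation
\begin{equation*}
\sqrt T\bigl(\hat{\boldsymbol{\mathrm \uptheta}}^{(j)}-\boldsymbol{\mathrm \uptheta}^{(j)}\bigr)=\sigma\, T\boldsymbol{\mathrm Q}_{({\hat\tau}_{j-1},{\hat\tau}_{j})}^{-1}\,\frac{1}{\sqrt T}\boldsymbol{\mathrm r}_{(\tau_{j-1}^0,\tau_j^0)}+o_p(1).
\end{equation*}

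Stacking these identities across $j=1,\ldots,m+1$ gives a block matrix-vector expression to which Slutsky's theorem applies directly. Proposition~\ref{anm4} supplies the probability limit $T\boldsymbol{\mathrm Q}_{({\hat\tau}_{j-1},{\hat\tau}_{j})}^{-1}\to(s_j^0-s_{j-1}^0)^{-1}\boldsymbol{\mathrm \Sigma}_j^{-1}$ for each $j$, hence the block-diagonal probability limit $\mathrm{diag}\bigl((s_j^0-s_{j-1}^0)^{-1}\boldsymbol{\mathrm \Sigma}_j^{-1}\bigr)$ for the matrix factor. Proposition~\ref{anm6} supplies the joint weak limit of $\bigl(\frac{1}{\sqrt T}\boldsymbol{\mathrm r}_{(\tau_{j-1}^0,\tau_j^0)}\bigr)_{j}$ to $N(\mathbf{0},\boldsymbol{\tilde{\mathrm \Sigma}})$ with the announced block-diagonal variance. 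Computing the covariance of the resulting affine image, the $j$-th diagonal block of the limit variance is $\sigma^2\,(s_j^0-s_{j-1}^0)^{-1}\boldsymbol{\mathrm \Sigma}_j^{-1}\,(s_j^0-s_{j-1}^0)\boldsymbol{\mathrm \Sigma}_j\,(s_j^0-s_{j-1}^0)^{-1}\boldsymbol{\mathrm \Sigma}_j^{-1}=\sigma^{2}(s_j^0-s_{j-1}^0)^{-1}\boldsymbol{\mathrm \Sigma}_j^{-1}$; these blocks assemble into $\sigma^{2}\boldsymbol{\tilde{\mathrm \Sigma}}^{-1}$, completing the proof.

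The main technical obstacle will be the rigorous handling of $R_{T,j}$ in the second paragraph, because the boundary interval is itself random. I would resolve this by the standard conditioning device: for any $\varepsilon>0$, use Proposition~\ref{prnm14} to pick $C$ so large that $\bigl\{T|\hat s_j-s_j^0|\leq C\ \forall j\bigr\}$ has probability exceeding $1-\varepsilon$, then bound the integral on any deterministic window of length $C$ in expectation by combining the sup-norm bound on $\varphi_k$ (Assumption~\ref{asm3}) with $\sup_t\mathrm E[X_t^2]\leq K_1$ from~(\ref{sol2}). Everything after this boundary estimate is essentially bookkeeping built on Propositions~\ref{anm4}--\ref{anm6} and Slutsky's theorem.
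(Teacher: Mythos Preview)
Your proposal is correct and follows essentially the same route as the paper: expand $\hat{\boldsymbol{\mathrm \uptheta}}^{(j)}$ via the SDE, isolate the boundary bias coming from the symmetric difference between $(\hat\tau_{j-1},\hat\tau_j)$ and $(\tau_{j-1}^0,\tau_j^0)$, kill that bias using the rate-$T$ consistency of Proposition~\ref{prnm14} together with the moment bounds from Assumption~\ref{asm3} and~(\ref{sol2}), and then invoke Propositions~\ref{anm4}--\ref{anm6} with Slutsky's theorem on the stacked system. The only cosmetic difference is that the paper keeps $\frac{1}{\sqrt T}\boldsymbol{\mathrm r}_{(\hat\tau_{j-1},\hat\tau_j)}$ in the linearisation and applies the joint limit of Proposition~\ref{anm6} to the estimated-index vector directly, whereas you first swap to $\frac{1}{\sqrt T}\boldsymbol{\mathrm r}_{(\tau_{j-1}^0,\tau_j^0)}$ via Proposition~\ref{anm5} before stacking; the content is the same.
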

\noindent \textit{Proof:} See \ref{appendixanormality}.

\section{Estimating the number of change points}
\noindent In the last section, we developed two consistent estimation methods for the case when
the number of change points is known. {\color{black}In this section,} we extend our examination of the change-point
problem when the number of change points is also unknown. Hence,
we are interested in knowing the number of change points as well as their exact locations. \\
\ \\
\noindent One popular methodology {\color{black}for} detecting the
unknown number of change points is {\color{black} to treat} this issue as a model-selection problem.
For instance, adding one change point into (\ref{ou1}) brings $p+1$ extra drift parameters into the model. Thus, detecting the number of change points can be considered as selecting the most suitable statistical model from a series of candidate models with different number of change points, and this can be solved using an informational approach. Such approach deems the most appropriate model as the one {\color{black}which} minimises the log-likelihood-based information criterion
\begin{equation}\label{ic0} \mathcal{IC}(m)=-2\log\ell(\boldsymbol{\mathrm \tau}, \hat{\uptheta})
+(m+1)h(p)\phi(T).\end{equation}
In \eqref{ic0}, $\log\ell(\boldsymbol{\mathrm \tau}, \hat{\uptheta})$ is defined in (\ref{loglm}); $\boldsymbol{\hat{\mathrm \tau}}$
is obtained via (\ref{mlem1}) corresponding to each $m$; $h(p)=p+1$ if there is no
change in $\sigma$ (or $p+2$ if there is a change in $\sigma$); $\phi(T)$ is a non-decreasing
function of $T$, the length of the data set; and $m$ is the potential number of change points to
be determined. \\
\ \\
\noindent  Based on the asymptotic results for the Riemann sum approximation of the log-likelihood function $\log\ell(\boldsymbol{\mathrm \tau}, \hat{\uptheta})$, we use the criterion
\begin{equation}\label{ic} \mathcal{IC}(m)=-2\log\ell^*([0,T],\boldsymbol{\mathrm \tau},\hat{\boldsymbol{\mathrm \uptheta}}(\boldsymbol{\mathrm \tau}))
+(m+1)(p+1)\phi(T),\end{equation}
where $\log\ell^*([0,T],\boldsymbol{\mathrm \tau},\hat{\boldsymbol{\mathrm \uptheta}}(\boldsymbol{\mathrm \tau}))$ is given in (\ref{rmloglm}).\\
\ \\
\noindent  Note that, if the number of change points is known, the term
$(m+1)h(p)\phi(T)$ is fixed and the approach covering (\ref{ic}) is equivalent to the maximum
log-likelihood method introduced in the previous section. The efficiency of
information criterion depends on the choice of the penalty criterion $\phi(T)$. For example, if $\phi(T)=2$,
then (\ref{ic}) reduces to the well-known Akaike information criterion (AIC)~\cite{Ak73}. However, in practice, a model selected by minimising the AIC may not
be asymptotically consistent in terms of the model order; see Schwarz~(1978).
Modified information criteria were, thus, proposed to overcome this problem. One example is the Schwarz information criterion (SIC) \cite{Sc78}, which
sets $\phi(T)$ as the {\color{black}logarithm} of the sample size.
{\color{black}The} SIC has been
successfully applied to the change-point analysis in the
literature. As it gives an asymptotically consistent estimate
of the order of the true model, we also adapt {\color{black}the} SIC for our theoretical development.  \\
\ \\
{\color{black}Note that the penalty term $(m+1)(p+1)\phi(T)$ in SIC increases as the sample size increases. Hence, for large sample size SIC tends to ignore the relatively small changes in the process. This feature makes it useful for those who are mainly interested in studying only the major changes within certain time period.} Further, based on the SIC, we have the following asymptotic results for (\ref{ic}).
\begin{prn}\label{asic1}
Under Assumptions 1--4, we have that for large $T$,
(i) $\mathcal{IC}(m=m^0)< \mathcal{IC}(m<m^0)$
with probability 1 and (ii) $\mathcal{IC}(m=m^0)< \mathcal{IC}(m>m^0)$ with probability 1.
\end{prn}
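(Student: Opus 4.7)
\noindent \textit{Proof (proposal):} The plan is to split the analysis into the under-specified case $m<m^0$ and the over-specified case $m>m^0$, since the two regimes require different asymptotic tools. In both cases I start from the decomposition
\begin{equation*}
\mathcal{IC}(m)-\mathcal{IC}(m^0)=-2\bigl[\log\ell^*([0,T],\boldsymbol{\hat{\mathrm \tau}}(m),\boldsymbol{\mathrm{\hat{\uptheta}}})-\log\ell^*([0,T],\boldsymbol{\hat{\mathrm \tau}}(m^0),\boldsymbol{\mathrm{\hat{\uptheta}}})\bigr]+(m-m^0)(p+1)\phi(T),
\end{equation*}
and use $\phi(T)=\log T$ for SIC. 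The task is then to control the likelihood difference on the right-hand side against the logarithmic penalty gap.

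\textbf{Case (i), $m<m^0$.} Here any candidate partition of $[0,T]$ into $m+1$ pieces must merge at least two adjacent true regimes $(\tau_{j-1}^0,\tau_j^0)$ and $(\tau_j^0,\tau_{j+1}^0)$ across which $\boldsymbol{\mathrm \uptheta}^{(j)}\neq \boldsymbol{\mathrm \uptheta}^{(j+1)}$. On the merged segment, the MLE $\boldsymbol{\mathrm{\hat{\uptheta}}}$ converges (via the continuous-time analogue of the arguments behind equations \eqref{convQ-m1}--\eqref{convQ-m2}) to a weighted average of $\boldsymbol{\mathrm \uptheta}^{(j)}$ and $\boldsymbol{\mathrm \uptheta}^{(j+1)}$, which differs from the true drift on a set of positive Lebesgue measure. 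Consequently the per-unit-time Kullback-Leibler divergence between the misspecified and true drift is strictly positive on that segment. Using the ergodic limits \eqref{cxm2}--\eqref{cxm3} one then obtains
\begin{equation*}
\tfrac{1}{T}\bigl(-2\log\ell^*([0,T],\boldsymbol{\hat{\mathrm \tau}}(m),\boldsymbol{\mathrm{\hat{\uptheta}}})+2\log\ell^*([0,T],\boldsymbol{\hat{\mathrm \tau}}(m^0),\boldsymbol{\mathrm{\hat{\uptheta}}})\bigr)\xrightarrow[T\rightarrow \infty]{p}c>0,
\end{equation*}
so the likelihood gap is of exact order $T$, which absorbs the $(m^0-m)(p+1)\log T$ penalty gain, yielding $\mathcal{IC}(m)>\mathcal{IC}(m^0)$ with probability tending to 1.

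\textbf{Case (ii), $m>m^0$.} Any $(m+1)$-piece partition can be viewed as a refinement of some close approximation to the true $(m^0+1)$-piece partition by inserting $m-m^0$ spurious change points. Conditional on the true partition being (asymptotically) selected among the optimal $m^0$ cuts, inserting additional change points into a correctly specified segment is equivalent to testing ``no change'' inside that segment. By Corollary \ref{anc1} and Proposition \ref{anm6}, each extra insertion adds a generalized likelihood-ratio statistic whose supremum over candidate locations is $O_p(\log\log T)$, by a Darling-Erd\H{o}s type bound applied to the limiting Gaussian score process. Hence
\begin{equation*}
-2\bigl[\log\ell^*([0,T],\boldsymbol{\hat{\mathrm \tau}}(m),\boldsymbol{\mathrm{\hat{\uptheta}}})-\log\ell^*([0,T],\boldsymbol{\hat{\mathrm \tau}}(m^0),\boldsymbol{\mathrm{\hat{\uptheta}}})\bigr]=O_p(\log\log T),
\end{equation*}
which is dominated by the positive penalty term $(m-m^0)(p+1)\log T$. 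This again gives $\mathcal{IC}(m)>\mathcal{IC}(m^0)$ with probability tending to 1.

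\textbf{Main obstacle.} The under-specified case is conceptually routine once the ergodic limits in Section \ref{section23} are invoked, so the real work lies in the over-specified case: one must uniformly control the likelihood-ratio surface over all admissible placements of the $m-m^0$ spurious change points in $[0,T]$ and show that this supremum grows strictly slower than $\log T$. I would obtain this by localizing around the consistent estimators $\boldsymbol{\hat{\mathrm \tau}}(m^0)$ (using the $T$-rate bound from Proposition \ref{prnm14}), decoupling the score contributions across true regimes using Proposition \ref{anm6}, and then applying a maximal inequality of Darling-Erd\H{o}s / law-of-iterated-logarithm type to the limiting Brownian functional. Combining the two cases establishes both (i) and (ii).
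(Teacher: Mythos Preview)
Your treatment of Case~(i) is essentially the paper's argument: at least one true change point is missed, Lemma~C.1 (the eigenvalue lower bound from Assumption~\ref{asm4}) forces a positive per-unit-time cost, and the $O(T)$ likelihood gap swamps the $O(\log T)$ penalty. No issue there.

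Case~(ii) is where your proposal and the paper diverge, and where your argument has a real gap. You assert that the optimal $(m+1)$-piece partition ``can be viewed as a refinement of some close approximation to the true $(m^0+1)$-piece partition,'' and then plan to localise using Proposition~\ref{prnm14}. But Proposition~\ref{prnm14} gives the $T$-rate only for the estimator with the \emph{correct} number $m^0$; it says nothing about where the $m>m^0$ estimated cuts fall. Proving that $m^0$ of the $m$ estimated change points are $O_p(1)$-close to the true ones is itself a nontrivial consistency statement for the over-specified model, and you have not supplied it. Without it, the ``refinement plus spurious insertions'' picture is unjustified, and the Darling--Erd\H{o}s step has no anchor.

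The paper sidesteps this entirely. Instead of comparing $\mathcal{IC}(m)$ to $\mathcal{IC}(m^0)$ directly, it compares to $\mathcal{IC}^0(m^0)$, the criterion evaluated at the \emph{true} change points (which dominates $\mathcal{IC}(m^0)$ by optimality). Then, for the $m^*>m^0$ estimated cuts $\hat\tau_1^*<\cdots<\hat\tau_{m^*}^*$, a pigeonhole argument picks out $m^*-m^0$ of them, call them $\tilde\tau_1^*,\ldots,\tilde\tau_{m^*-m^0}^*$, so that on each sub-block $(\tilde\tau_{j-1}^*,\tilde\tau_j^*]$ the number of estimated and true change points coincide. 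On each block the SSE difference is then bounded termwise, and the crucial stochastic terms are controlled by the elementary Markov/It\^o bound \eqref{convm_R2}, which already gives $\norm{\boldsymbol{\mathrm r}_{(\tau_1^*,\tau_2^*)}}/\sqrt{\tau_2^*-\tau_1^*}=O_p\bigl((\log T)^{a^*}\bigr)$ for any $a^*\in(0,1/2)$. Squaring yields $O_p\bigl((\log T)^{2a^*}\bigr)=o(\log T)$, which is all that is needed against the SIC penalty. So the paper never needs a refinement property, never needs consistency of the over-fitted estimator, and never needs Darling--Erd\H{o}s; the simple $(\log T)^{2a^*}$ bound already in hand from the proof of Proposition~\ref{prnm13} suffices. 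Your $O_p(\log\log T)$ target is sharper but unnecessary, and reaching it in this OU setting would require checking the Darling--Erd\H{o}s conditions for the non-i.i.d.\ score process, which is additional work with no payoff here.
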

\noindent \textit{Proof:} See \ref{appendixsic}.\\
\ \\
\noindent  Proposition~\ref{asic1} tells us that, for large T, $\mathcal{IC}(m)$ reaches its minimum value when $m=m^0$ and this allows us to detect the exact value of $m^0$.

\section{Computing algorithms}
\noindent In this section, we put forward algorithms for computing (\ref{cpm1}) and (\ref{mlem2}) when $m^0$ is known. {\color{black}We also provide algorithms for computing} (\ref{ic}) when $m^0$ is unknown. {\color{black}Based on these algorithms, a simulation study to examine the efficiency of the proposed methods for different time periods $T$ is presented in Section~\ref{simulation}.} Our numerical results will show that, {\color{black}with our sample parameter set,} the proposed methods perform well {\color{black}for values of $T$ as small as $T=5$}.

\subsection{Algorithm for (\ref{cpm1}) and (\ref{mlem2}) (with known $m^0$) }\label{alg-1}
\noindent In estimating the unknown locations of change points, a standard searching method is to compute the criteria, i.e., least squared errors for (\ref{cpm1}) or maximum log likelihoods for (\ref{mlem2}), through all possible locations of change points and search for the one that returns the optimal value. However, for $m$ change points the associated costs for the above searching procedure are of order $O((T/\Delta_t)^m)$. Thus, for large $T$ and small $\Delta_t$, the computations can be time consuming. To overcome this problem, we adopt to our two proposed LSSE and MLE methods
a dynamic programming algorithm due to Bai and Perron~(2003), see also Perron and Qu~(2006),
to reduce the computational cost to $O(m(T/\Delta_t)^2)$ for $m$ change points. {\color{black}This algorithm} is very efficient when $m\geq 2$. \\
\ \\
\noindent {\bf Algorithm 1}\\
\noindent Let $H_1(r, T_r)$ be either $H_1(r, T_r)=\min_{\tau}SSE([0,T_r],\boldsymbol{\mathrm \tau},\boldsymbol{\mathrm{\hat{\uptheta}}}(\boldsymbol{\mathrm \tau}))$, the least sum squared error for (\ref{cpm1}) or $H_1(r, T_r)=\max_{\tau}\log\ell^*([0,T_r],\boldsymbol{\mathrm \tau},\boldsymbol{\mathrm{\hat{\uptheta}}}(\boldsymbol{\mathrm \tau}))$, the maximum Riemann sum approximation of log likelihood for (\ref{mlem2})) computed based on the optimal partition of time interval $[0,T_r]$ that contains $r$ change points. Also, let $H_2(a, b)$ be the SSE for (\ref{cpm1}) or Riemann sum approximation of log likelihood for (\ref{mlem2})) computed based on a time regime $(a,b]$. Further, we assume that Assumption~\ref{asm1} holds and let $h=\epsilon T$ be the minimal permissible length of a time regime. Then (\ref{cpm1}) or (\ref{mlem2}) with known $m=m^0$ can be computed as follows.  \\
\begin{enumerate}[Step 1:]
\item Compute and save $H_2(a,b)$ for all time periods $(a, b]$ that satisfy $b-a\geq h$.
\item Compute and save $H_1(1,T_1)$ for all $T_1 \in[2h, T-(m-1)h]$ by solving the optimisation problem
\begin{equation*}H_1(1,T_1)=
\begin{cases}
\min_{a\in[h,T_1-h]}[H_2(0,a)+H_2(a,T_1)]& \mbox{for (\ref{cpm1})}\\
\max_{a\in[h,T_1-h]}[H_2(0,a)+H_2(a,T_1)]& \mbox{for (\ref{mlem2})}.
\end{cases}
\end{equation*}
\item Sequentially compute and save
\begin{equation*}H_1(r,T_r)=
\begin{cases}
\min_{a\in[rh,T_r-h]}[H_1(r-1,a)+H_2(a,T_r)]& \mbox{for (\ref{cpm1})}\\
\max_{a\in[rh,T_r-h]}[H_1(r-1,a)+H_2(a,T_r)]& \mbox{for (\ref{mlem2})}
\end{cases}
\end{equation*}
for $r=2,~\ldots~m-1$, and $T_r \in [(r+1)h,T-(m-r)h]$.
\item Finally, the estimated change points are obtained by solving
\begin{equation*}H_1(m,T)=
\begin{cases}
\min_{a\in[mh,T-h]}[H_1(m-1,a)+H_2(a,T)]& \mbox{for (\ref{cpm1})}\\
\max_{a\in[mh,T-h]}[H_1(m-1,a)+H_2(a,T)]& \mbox{for (\ref{mlem2})}
\end{cases}
\end{equation*}
and $H_1(m-1,a)=H_2(0,a)$ if $m=1$.
\end{enumerate}
\noindent The steps in Algorithm 1 can be viewed as a combination of two components. Step 1 computes all possible choices of $H_2$, and the computations in this step are at most of order $O((T/\Delta_t)^2)$ as there are at most $(T/\Delta_t)^2$ different time periods $(a,b]$ in the dataset. This step is useful, {\color{black}since} in the succeeding steps some pairs of $(a,b]$ will be visited more than once during the optimisation process, {\color{black}so} using previously saved results {\color{black}for} $H_2(a,b)$ will be helpful to reduce computations. Steps 2-4 can be treated as an application of {\color{black}the} Segment Neighbourhood Search (SNS) method introduced by Auger and Lawrence (1989). The goal of Steps 2--4 is to search for the global optimal locations of the $m$ change points and the total computation costs in these steps are also of $O(m(T/\Delta_t)^2)$. Note that when $m=m^0=1$ ({\color{black}a} single change point), only the last step is needed to search for the optimal location of a change point, and the related computations costs are of $O(T/\Delta_t)$.   \\
\ \\
\noindent When $m^0>1$ and $T$ is large, Algorithm 1 can be extremely time-consuming because of the $O(m(T/\Delta_t)^2)$ computations, we aim to decrease the computational costs in this case. Apparently, some computations in Step 1 may be redundant. For example, in Step 2, the domain for the optimisation problem is $a\in[h,T_1-h]$ for each
 $T_1 \in[2h, T-(m^0-1)h]$; so the calculations of $H_2(0,b^*)$ for all $b^*>T-m^0h$ in Step 1 become unnecessary as these results will not be used. Thus, the computations in Step 1 could be moved into Steps 2- 4 so that only the necessary $H_2$ are computed and stored. This means one could begin the algorithm from Step 2 by computing and storing $H_2(0,b)$ for $b\in [h,T-mh]$ and $H_2(a,b)$ for $a\in [h,T-mh]$, $b\in [2h,T-(m-1)h]$, then solve for $H_1(1,T_1)$. In Step 3, for $r=2,~\ldots~,m-1$, one only needs to compute and save $H_2(a,b)$ for ($a\in [T-(m-r+2)h,T-(m-r+1)h], b\in [(r+1)h,T-(m-r)h])\bigcup (a\in [rh,T-(m-r+2)h], b\in [T-(m-r+1)h,T-(m-r)h])$ before solving for $H_1(1,T_1)$. Finally, in Step 4, we compute and store $H_2(a,T)$ for $a\in [mh,T-h]$ before solving for $H_1(m,T)$.

\subsection{Algorithm for (\ref{ic}) (with unknown $m^0$) }\label{alg-2}
\noindent When $m^0$ is unknown, one may compute and compare the $m$ values of (\ref{ic}) and $m$ varies from $0$ up to $m_{\max}$ for some $0\leq m_{\max}\leq
\left \lceil [T/h] \right \rceil.$  The upper bound $m_{\max}$ can also be predetermined from the descriptive analysis of the observed processes. For each $m$, one can first apply Algorithm 1 to obtain the estimated change points and compute (\ref{ic}) accordingly. After the $m_{\max}$ computations, the desired $\hat{m}$ is the one that returns the minimum value of (\ref{ic}). By Proposition~\ref{asic1}, $\hat{m}$ is consistent when $T$ is large, provided $m^0\in[0,m_{\max}]$. \\
\ \\
\noindent If we directly apply Algorithm 1 to $m=1,~\ldots~,m_{\max}$, the total computations will be of order $O((1+2+...+m_{\max})(T/\Delta_t)^2)$. To further simplify the computations, we study the behaviour of Step 2--3 in Algorithm 1 when $m$ increases from $m^*$ to $m^*+1$. In this case, the ranges of $T_r$ reduces from $T_r \in[(r+1)h, T-(m^*-1)h]$ to $T_r \in[(r+1)h, T-(m^*)h]$ for each $r=1,~\ldots,~m-1$. This implies that the stored optimisation results of Steps 2 and 3 in Algorithm 1 at the previous step ($m=m^*$) can also be used in the current step ($m=m^*+1$). Therefore, with the previously stored results, the only step that needs to be updated for each $m$ is when $r=m-1$ and $r=m$, and the associated computations are of order $O((T/\Delta_t)^2)$ at $r=m-1$ and $O(T/\Delta_t)$ at $r=m$. Based on these considerations, we tailor the SNS algorithm for (\ref{ic}) and the related computations for $m=0~\ldots,~m_{\max}$ are of order $O(m_{\max}(T/\Delta_t)^2)$.\\
\ \\
\newpage
 \noindent {\bf Algorithm 2 (SNS method)}
\begin{enumerate}[Step 1:]
\item Follows all steps in Algorithm 1 to search for the optimal locations of the $m$ estimated change points then store the computed value of $(\ref{ic})$ for $m=0,~1,~2$. Note that the results of $H_2 (a,b)$ for all $(a,b]$ such that $a-b\geq h$ as well as the optimisation results of $H_1(r,T_r)$ for all $r=1,~\ldots,~m$ and $T_r \in [(r+1)h,T-(m-r)h]$ need to be stored for future use.
\item For $m=3,~\ldots,~m_{\max}$, first let $r=m-1$ and $T_r \in [(r+1)h,T-(m-r)h]$ then compute and store $H_1(r,T_r)$. Next let $r=m$ and the estimated change points are obtained by solving $H_1(m,T)$, where $H_1(r,T_r)$ and $H_1(m,T)$ are defined in Algorithm 1. Finally, based on the estimated $m$ change points, compute and store $\mathcal{IC}(m)$.
\item $\hat{m}$ is obtained from $m=1,~\ldots~,m_{\max}$ that returns the smallest value of (\ref{ic}).
\end{enumerate}
\noindent The advantage of SNS method is that it returns the optimal locations of change points for every $m=1,~\ldots,~m_{\max}$. Hence, it is useful if one interested in investigating the relationships between the locations of change points and $m$. However, for large $T$ and $m_{\max}$, {\color{black}the $O(m_{\max}(T/\Delta_t)^2)$ computational costs in the SNS method may be high}.\\
\ \\
\noindent In addition to SNS method, another dynamic programming algorithm for finding the unknown number of change points is called the Optimal Partitioning (OP) algorithm introduced by Jackson, et al. (2005). The related computational costs are of order $O(T/\Delta_t)^2)$ for any $m$; hence, it is more efficient than SNS when $m_{\max}$ is large. Based on the OP algorithm, Killick, et al.~(2012) introduced the Pruned Exact Linear Time (PELT) method. Although the maximum computational costs for the PELT algorithm is still up to $O(n^2)$ for a data set with size $n$, the computations in the PELT method, which involved pruning of the solution space {\color{black}under some conditions} can be much less than those {\color{black}required} in the OP. \\
\ \\
\noindent {\color{black}However}, the PELT method introduced in Killick, et al.~(2012) may not satisfy Assumption~\ref{asm1}, which is essential for most of the theoretical properties developed in this paper. In fact, under Assumption~\ref{asm1}: (i) there is no change point in time period $[0,h)$, where $h$ is defined in Algorithm 1; (ii) if there is a change point $\tau^*\in [h,2h)$, then there is no non-zero change point prior to $\tau^*$; and (iii) for any potential change point $\tau^*\in [2h,T]$, the minimal distance between it and the most recent change point prior to this change point is at least $h$. Based on these considerations, we use the following modified version of the PELT algorithm. \\
\ \\
\noindent {\bf Algorithm 3 ({\color{black}Modified} PELT method)}\\
Let $n=T/\Delta_t$ be the length of the data set based on the partition $0=t_0<t_1<\ldots<t_n=T$ of time period $[0,T]$ with increment $\Delta_t$, and let $t_h=h/\Delta_t$. Set $SS_{2t_h}=\{0\}$ and $F(t_i)=0$ for $i=1,~\ldots,~t_{h}-1$. Then, for $i=t_h,~\ldots,~n$, compute and store the values obtained from {\color{black}the} following steps.
\begin{enumerate}[Step 1:]
\item For $i=t_h,~\ldots,~2t_h-1$, compute and store $F(t_i)=-2*\log\ell^*([0,t_i],\boldsymbol{\mathrm{\hat{\uptheta}}})+(p+1)\log(n)$, where $\log\ell^*([a,b],\hat{\boldsymbol{\mathrm \uptheta}})$ is the Riemann sum approximation {\color{black}(midpoints)} of the log likelihood (no change point) evaluated at time period $[a,b]$ with $\boldsymbol{\mathrm \uptheta}=\hat{\boldsymbol{\mathrm \uptheta}}$.
\item For $i=2t_h,~\ldots,~n$, compute and store: (i) $F(t_i)=\min_{t\in SS_{t_i}}F(t)-2*\log\ell^*((t,t_i],\hat{\boldsymbol{\mathrm \uptheta}})+(p+1)\log(n)$; (2). $\tau_{t_i}^*=\arg\min_{t\in SS_{t_i}}F(t)-2*\log\ell^*((t,t_i],\hat{\uptheta})+(p+1)\log(n)$; (3). $SS_{t_{i+1}}=\{0\} \cup \{t\in SS_{t_i}\cup \{t_i-t_h+1\}: F(t)-2*\log\ell^*((t,t_i],\hat{\boldsymbol{\mathrm \uptheta}})\leq F(t_i)\}$.
\item Denote $cp(0)=\emptyset$. Then, the optimal change points can be obtained by solving $cp(t^*)=(cp(t^*=\tau_{t^*}^*),\tau_{t}^*)$ with $t^*$ starts from $T$ and iterates recursively until $\tau_{t^*}^*=0$.
\end{enumerate}

\section{Numerical demonstrations}\label{nd}
\noindent The Monte-Carlo simulation technique will be used in Subsection~\ref{simulation}
(i) {\color{black} to evaluate the comparative} performance
of the two estimation methods, viz. LSSE in (\ref{cpm1}) and MLL in (\ref{mlem2}) to determine the unknown
location of $m^0$ change points assumed to already exist; and (ii)  {\color{black}to test the} method in (\ref{ic}) for detecting the unknown number of change points. In Subsection \ref{realdata}, we illustrate the various implementation details of our proposed methods on some observed financial market data.
\ \\
\subsection{Monte-Carlo simulation study}\label{simulation}
\noindent Our simulation considers two different scenarios (or cases). In the first case, we study
the performance of the proposed methods under the classical OU process defined by
\begin{equation}\label{sim1-1}
dX_t=
(\mu^{(j)}-\alpha^{(j)}X_t)dt+\sigma dW_t, \quad \mbox{if }s_{j-1}^0 T<t<s_j^0 T, \quad j=1,~\ldots,~m.
\end{equation}
In the second case, the performance
evaluation of the proposed methods is tested assuming a periodic mean-reverting OU process, with 2-dimensional periodic incomplete set of functions
$\left\{1, \sqrt{2}\cos\left( \frac{\pi t}{2\Delta_t} \right)\right\}$ {\color{black}(which are orthogonal on [0,$T$] with weight fixed to 1}),
given by
\begin{equation}\label{sim2-1}
dX_t=
\left[ \mu_1^{(j)}+\mu_2^{(j)} \sqrt{2}\cos\left( \frac{\pi t}{2\Delta_t}\right)-\alpha^{(j)}X_t \right]dt+\sigma dW_t,\quad \mbox{if }s_{j-1}^0 T<t<s_j^0 T, \quad j=1,~\ldots,~m.
\end{equation}
where $\Delta_t=t_{i+1}-t_i$ is the increment for $[0,T]$. \\
\ \\
\noindent Each case consists of 500 iterations. {\color{black}Although an exact solution is available, we choose to use the Euler-Maruyama discretisation scheme to be consistent with {\color{black}the results of} Zhang~(2015)}. In each iteration,
we first generate a desired simulated process based on a given period $T$  with pre-assigned ``true" parameters such as the number and location of change points and the model coefficients. To evaluate the performance of (\ref{cpm1}) and (\ref{mlem2}), we {\color{black}specify} the number of change points to be known but the rate is unknown. Then, we estimate and record the {\color{black}change points' arrival rates} by applying (\ref{cpm1}) and (\ref{mlem2}) on the simulated process. The detailed simulation setup and results are reported in Subsection~\ref{simulation-1}.  \\
\ \\
\noindent In Subsection~\ref{simulation-2}, we also use the Monte-Carlo simulation method to  investigate the performance of (\ref{ic}). That is, we assume that $m^0$ is unknown and apply (\ref{ic}) with $m$ ranging from $0$ to $m^0+3$. Then, the $m$ that returns the minimum value of (\ref{ic}) is chosen as the estimated value for the number of change points. After 500 iterations, we analyse the performance of the proposed methods based on the recorded results.

\subsubsection{Estimating the rate $s_j$ of change points}\label{simulation-1}
\noindent We first study the performance of (\ref{cpm1}) and (\ref{mlem2}) in estimating the rates of the change points with known $m^0$. For the simulation setup, we consider the case where $m^0=2$ and $3$ with different time periods $T=5,~10,~20$, respectively.
{\color{black}The pre-assigned values of the coefficients are provided in Table~\ref{cp-setup}.}

\ \\
%

\begin{table}[!htbp]
\small \caption{Pre-assigned coefficients (with known number of change points)} \centering
\scriptsize
\begin{tabular}{|c|c|c|c|c|c|c|c|c|}
\hline
\multicolumn{2}{|c|}{} &\multicolumn{3}{c|}{2 change points ($0.35T$, $0.7T$)}& \multicolumn{4}{c|}{ 3 change points ($0.25T$, $0.5T$, $0.75T$)}\\
\hline
\small Case& Coefficient & $j=1$ & $j=2$ & $j=3$ & $j=1$ &  $j=2$  &$j=3$ &$j=4$  \\
\hline
\multirow{2}{*}{1} & $\mu^{(j)}$ & 0.08 & 2.50 & 0.08 & 0.08 & 2.50 & 0.08 & 2.50\\
\cline{2-9}
 & $\alpha^{(j)}$ & 0.10 & 1.00 & 0.50 &0.10 & 1.00 & 0.50 & 1.00\\
 \hline
\multirow{3}{*}{2} & $\mu_1^{(j)}$ & 0.08 & 2.50 & 0.08 & 0.08 & 2.50 & 0.08 & 2.50\\
\cline{2-9}
 & $\mu_2^{(j)}$ & 0.02 & 1.20 & 0.02 & 0.02 & 1.20 & 0.02 & 1.20\\
\cline{2-9}
 & $\alpha^{(j)}$ & 0.10 & 1.00 & 0.50 &0.10 & 1.00 & 0.50 & 1.00\\
 \hline
\end{tabular}
\label{cp-setup}
\end{table}

\noindent  For each case, after 500 iterations we record the mean of the estimates based on (\ref{cpm1}) and (\ref{mlem2}), {\color{black} together with the $95\%$ empirical confidence interval (i.e., locating the $2.5$ and $97.5$ percentiles) and also the mean-squared error.} The results are reported in Tables~\ref{m2-case1}--\ref{m3-case2}. 
{\color{black}
In this section, we also report the following histograms: Figure~\ref{fig:figure1} presents the histograms of the estimated rates based on MLL method for Case 1 with $m^0=3$ when $T$ increases from 5 to 20, which shows the behavior of the estimated rates as $T$ increases. Figure~\ref{fig:figure2} shows the histograms of the estimated rates for the two cases under the same conditions (i.e. $m^0=3$, $T=10$ and the rates are estimated by LSSE method). Moreover, for the convenience of the reader, all histograms of the estimated rates for $m^0=3$ are provided in \ref{appendixhistogram} (Figures~\ref{figure4}--\ref{figure6}) for reference.} \\
\ \\
\noindent From Tables~\ref{m2-case1}--\ref{m3-case2} (along with {\color{black}Figures~\ref{fig:figure1}, \ref{fig:figure2} and \ref{figure4}--\ref{figure6} in \ref{appendixhistogram}), we see that in Cases 1 and 2, both proposed methods (\ref{cpm1}) and (\ref{mlem2}) estimate very accurately the exact rates of change points. In particular,
the sample means of the estimated change points' arrival rates are close to the exact values, and the results obtained by the 2 proposed methods are very close, which confirms Proposition~\ref{prn-mlem1}. We also clearly observe that as $T$ increases from 5 to 20, the lengths of the $95\%$ empirical confidence intervals and MSEs of the two estimators all decrease. This is well substantiated for example by the pertinent histograms in Figure~\ref{fig:figure1}, which shows that when MLL method is employed to estimate the change points' arrival rates in Case 1 with $m^0=3$, the central tendencies of the estimated rates are all close to their exact values, and the sample variances decrease as $T$ becomes larger. Similar evidences are shown for other choices of scenarios (different combinations of cases, methods and time period $T$) as illustrated in Figures~\ref{figure4}-- \ref{figure6}. Although not shown in this paper, the histograms for the case $m^0=2$ exhibit similar features}. These outcomes confirm the theoretical findings regarding the asymptotic consistency of our two proposed methods. \\
\ \\
\noindent Also, from Tables~\ref{m2-case1}--\ref{m3-case2} and Figure~\ref{fig:figure1} (see \ref{appendixhistogram} as well), the lengths of the $95\%$ confidence intervals (C.I.'s), with corresponding MSEs, of the estimated rates for the first and the last unknown change points ($\hat{s}_1$ and $\hat{s}_3$) are more accurate than those of the middle change point ($\hat{s}_2$). Moreover, the improved accuracy of $\hat{s}_1$ and $\hat{s}_3$ is more sensitive to the increase of $T$ as compared to that of $\hat{s}_2$.
This is because the unknown change point's arrival rate $s_j$ satisfies
$s_{j-1}<s_j<s_{j+1}$ for $j=1,\ldots,m$.
For $s_1$ and $s_m$, one of their boundaries is known (which are $s_0=0$ and $s_{m+1}=1$, respectively),
whilst for the intermediate rates both the upper and lower bounds $s_{j-1}$ and $s_{j+1}$ are unknown.
Therefore, under the same condition, the uncertainties of the first and the last change points' arrival rates would be lower than that in the middle. \\
\ \\
\noindent For the comparisons between Case 1 and 2, we can see from the selected scenario ($m^0=3$, $T=10$, LSSE method) shown in Figure~\ref{fig:figure2}, along with the results in Tables~\ref{m2-case1}--\ref{m3-case2} that under the same conditions, the lengths of the 95\% empirical C.I. and MSEs in Case 2 are all smaller than those in Case 1. However, since the simulated processes in the 2 cases are generated from different SDEs, it may be inappropriate to make conclusion based only on the results shown in the provided tables and figures. In fact, note that the main difference between (\ref{sim1-1}) and (\ref{sim2-1}) is the number of coefficients in the models. Therefore, the comparison between the 2 models when fitting them to the same process can be considered as a model selection problem that has been well studied in the literature.

\begin{table}[!htbp]
\small \caption{Simulation results of $\hat{s}_j$, $j=1,~\ldots,~m^0$ for case 1,~(\ref{sim1-1}) with $m^0=2$, {\color{black}true values of parameters as in Table~\ref{cp-setup}}} \centering
\scriptsize
\begin{tabular}{|c|c|c|c|c|c|c|c|c|c|c|}
\hline
 &\multicolumn{3}{c|}{ $T$=5}& \multicolumn{3}{c|}{ $T$=10}&\multicolumn{3}{c|}{ $T$=20}\\
\hline
$\hat{s}_j$& Mean & $95\%$ C.I. & MSE & Mean & $95\%$ C.I.& MSE & Mean & $95\%$ C.I. & MSE \\
\hline
$\hat{s}_1$, (LSSE)  & 0.348 & (0.313, 0.371) & 1.75$\times 10^{-4}$ &0.349 & (0.333, 0.363)& 9.35$\times 10^{-5}$ &0.350 &(0.341, 0.356) &3.47$\times 10^{-5}$ \\
\hline
$\hat{s}_1$, (MLL) &0.348 &  (0.313, 0.371) & 1.76$\times 10^{-4}$ &0.349 & (0.333, 0.363)& 9.35$\times 10^{-5}$ &0.350&(0.341, 0.356) &3.75$\times 10^{-5}$\\
\hline
$\hat{s}_2$, (LSSE) & 0.701 &(0.638, 0.742)& 4.61$\times 10^{-4}$ &0.702 & (0.676, 0.736) & 1.76$\times 10^{-4}$ &0.700 &(0.682, 0.716)&5.62$\times 10^{-5}$\\
\hline
$\hat{s}_2$, (MLL)& 0.701 & (0.638,0.742) & 4.61$\times 10^{-4}$  &0.702 &(0.676, 0.736)& 1.76$\times 10^{-4}$ &0.700 &(0.682, 0.716)  &5.19$\times 10^{-5}$\\
\hline
\end{tabular}
\label{m2-case1}
\end{table}
\begin{table}[!htbp]
\small \caption{Simulation results of $\hat{s}_j$, $j=1,~\ldots,~m^0$ for case 2,~(\ref{sim2-1}) with $m^0=2$, {\color{black}true values of parameters as in Table~\ref{cp-setup}}} \centering
\scriptsize
\begin{tabular}{|c|c|c|c|c|c|c|c|c|c|c|}
\hline
 &\multicolumn{3}{c|}{ $T$=5}& \multicolumn{3}{c|}{ $T$=10}&\multicolumn{3}{c|}{ $T$=20}\\
\hline
$\hat{s}_j$& Mean & $95\%$ C.I. & MSE & Mean & $95\%$ C.I.& MSE & Mean & $95\%$ C.I. & MSE \\
\hline
$\hat{s}_1$, (LSSE) & 0.349 &(0.333, 0.362)& 6.97$\times 10^{-5}$ &0.350 &(0.341, 0.358) & 1.51$\times 10^{-5}$ &0.350 &(0.344, 0.355)&7.63$\times 10^{-6}$\\
\hline
$\hat{s}_1$, (MLL) &0.349 &(0.332, 0.362)& 0.70$\times 10^{-5}$ &0.350 &(0.342, 0.358) & 1.53$\times 10^{-5}$ &0.350 &(0.344, 0.355)&7.60$\times 10^{-6}$\\
\hline
$\hat{s}_2$, (LSSE) & 0.701 &(0.667, 0.734) & 1.97$\times 10^{-4}$ &0.700 &(0.684, 0.718)& 6.38$\times 10^{-5}$ &0.70 &(0.690, 0.707)&1.76$\times 10^{-5}$\\
\hline
$\hat{s}_2$, (MLL)& 0.701&(0.667, 0.733) & 2.01$\times 10^{-4}$ &0.700 &(0.684,0.718)& 6.40$\times 10^{-5}$ &0.70 &(0.690, 0.707) &1.80$\times 10^{-5}$\\
\hline
\end{tabular}
\label{m2-case2}
\end{table}
\begin{table}[!htbp]
\small \caption{Simulation results of $\hat{s}_j$, $j=1,~\ldots,~m^0$ for case 1,~(\ref{sim1-1}) with $m^0=3$, {\color{black}true values of parameters as in Table~\ref{cp-setup}}} \centering
\scriptsize
\begin{tabular}{|c|c|c|c|c|c|c|c|c|c|}
\hline
 &\multicolumn{3}{c|}{ $T$=5}& \multicolumn{3}{c|}{ $T$=10}&\multicolumn{3}{c|}{ $T$=20}\\
\hline
$\hat{s}_j$& Mean & $95\%$ C.I. & MSE & Mean &$95\%$ C.I. & MSE &Mean &$95\%$ C.I.  & MSE \\
\hline
$\hat{s}_1$, (LSSE)  & 0.248 & (0.217, 0.263) & 1.23 $\times 10^{-4}$ &0.249 &(0.231, 0.258) & 3.95$\times 10^{-5}$ &0.250 & (0.242, 0.256) &2.66 $\times 10^{-5}$ \\
\hline
$\hat{s}_1$, (MLL)  &0.248 &(0.217, 0.264) & 1.32 $\times 10^{-4}$ &0.249 &(0.231, 0.258) & 4.14$\times 10^{-5}$ &0.250 &(0.242, 0.256) &2.18$\times 10^{-5}$ \\
\hline
$\hat{s}_2$, (LSSE) & 0.502 &  (0.468, 0.549) & 3.38$\times 10^{-4}$ &0.502 & (0.477, 0.532) & 1.58$\times 10^{-4}$ &0.501 &(0.485, 0.522) &6.44$\times 10^{-5}$ \\
\hline
$\hat{s}_2$, (MLL)&0.502 & (0.468, 0.549)  & 3.39$\times 10^{-4}$ &0.502&  (0.477, 0.532)& 1.58$\times 10^{-4}$  &0.501 & (0.486, 0.522)& 6.23$\times 10^{-5}$ \\
\hline
$\hat{s}_3$, (LSSE)& 0.752 & (0.725, 0.785) &1.74 $\times 10^{-4}$&0.751 & (0.740, 0.770)& 4.78$\times 10^{-5}$  &0.750 & (0.746, 0.755) & 5.65$\times 10^{-6}$ \\
\hline
$\hat{s}_3$, (MLL) &0.752 &  (0.725, 0.785)  & 1.73$\times 10^{-4}$ &0.751 & (0.740, 0.770) & 4.78$\times 10^{-5}$  &0.750 &(0.746, 0.755)  &5.65$\times 10^{-6}$ \\
\hline
\end{tabular}
\label{m3-case1}
\end{table}
\begin{table}[!htbp]
\small \caption{Simulation results of $\hat{s}_j$, $j=1,~\ldots,~m^0$ for case 2,~(\ref{sim2-1}) with $m^0=3$, {\color{black}true values of parameters as in Table~\ref{cp-setup}}} \centering
\scriptsize
\begin{tabular}{|c|c|c|c|c|c|c|c|c|c|c|}
\hline
 &\multicolumn{3}{c|}{ $T$=5}& \multicolumn{3}{c|}{ $T$=10}&\multicolumn{3}{c|}{ $T$=20}\\
\hline
$\hat{s}_j$ & Mean & $95\%$ C.I. & MSE & Mean & $95\%$ C.I. & MSE &Mean& $95\%$ C.I. & MSE \\
\hline
$\hat{s}_1$, (LSSE) & 0.248 & (0.231, 0.262) & 6.48 $\times 10^{-5}$ &0.250& (0.240, 0.259) &1.93$\times 10^{-5}$ &0.250&(0.244, 0.253)  &5.29$\times 10^{-6}$ \\
\hline
$\hat{s}_1$, (MLL) &0.248 & (0.233, 0.263) & 6.37$\times 10^{-5}$ &0.249 & (0.242, 0.259)  &1.83$\times 10^{-5}$ &0.250& (0.244, 0.254) &5.39$\times 10^{-5}$ \\
\hline
$\hat{s}_2$, (LSSE) & 0.502 &(0.475, 0.539) & 1.94$\times 10^{-4}$ &0.500 &(0.483, 0.515) &5.09$\times 10^{-5}$ &0.499 &(0.489, 0.506)&1.72$\times 10^{-5}$ \\
\hline
$\hat{s}_2$, (MLL)&0.502&(0.475, 0.539) &1.94$\times 10^{-4}$ &0.500&(0.484, 0.517) &5.03$\times 10^{-5}$  &0.500 &(0.489, 0.507) &1.73$\times 10^{-5}$ \\
\hline
$\hat{s}_3$, (LSSE) & 0.750 &(0.727, 0.774)&1.20 $\times 10^{-4}$&0.750 & (0.743, 0.760) & 1.39$\times 10^{-5}$  &0.750 &(0.746, 0.753) &3.20$\times 10^{-6}$ \\
\hline
$\hat{s}_3$, (MLL) &0.751& (0.727, 0.774)&1.17$\times 10^{-4}$ &0.750 & (0.743, 0.760) &1.42$\times 10^{-5}$  &0.750 &(0.746, 0.754) &3.07$\times 10^{-6}$ \\
\hline
\end{tabular}
\label{m3-case2}
\end{table}

\begin{figure}[htbp]
\includegraphics[height=1.6in,width=2.04in]{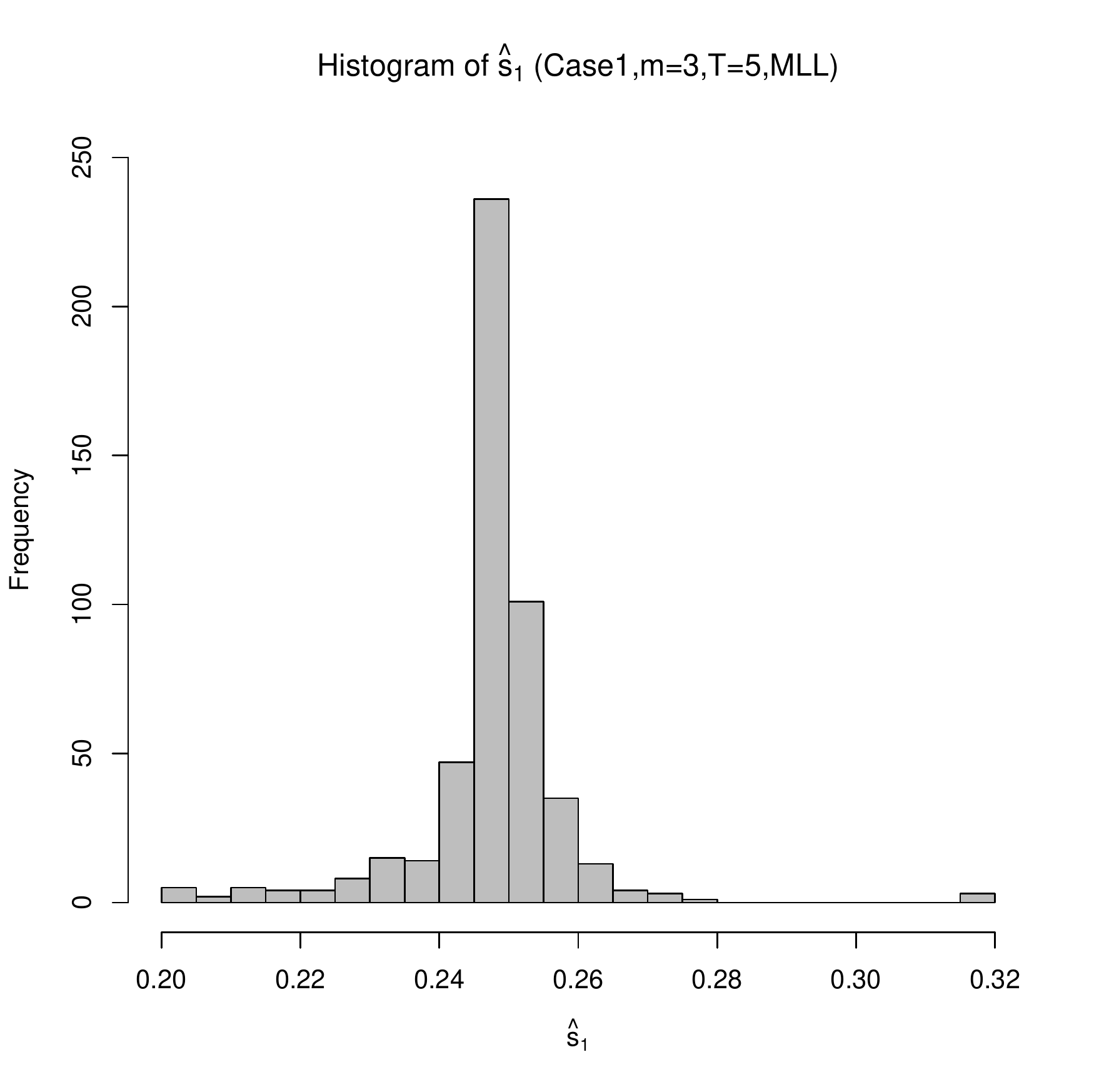}
\includegraphics[height=1.6in,width=2.04in]{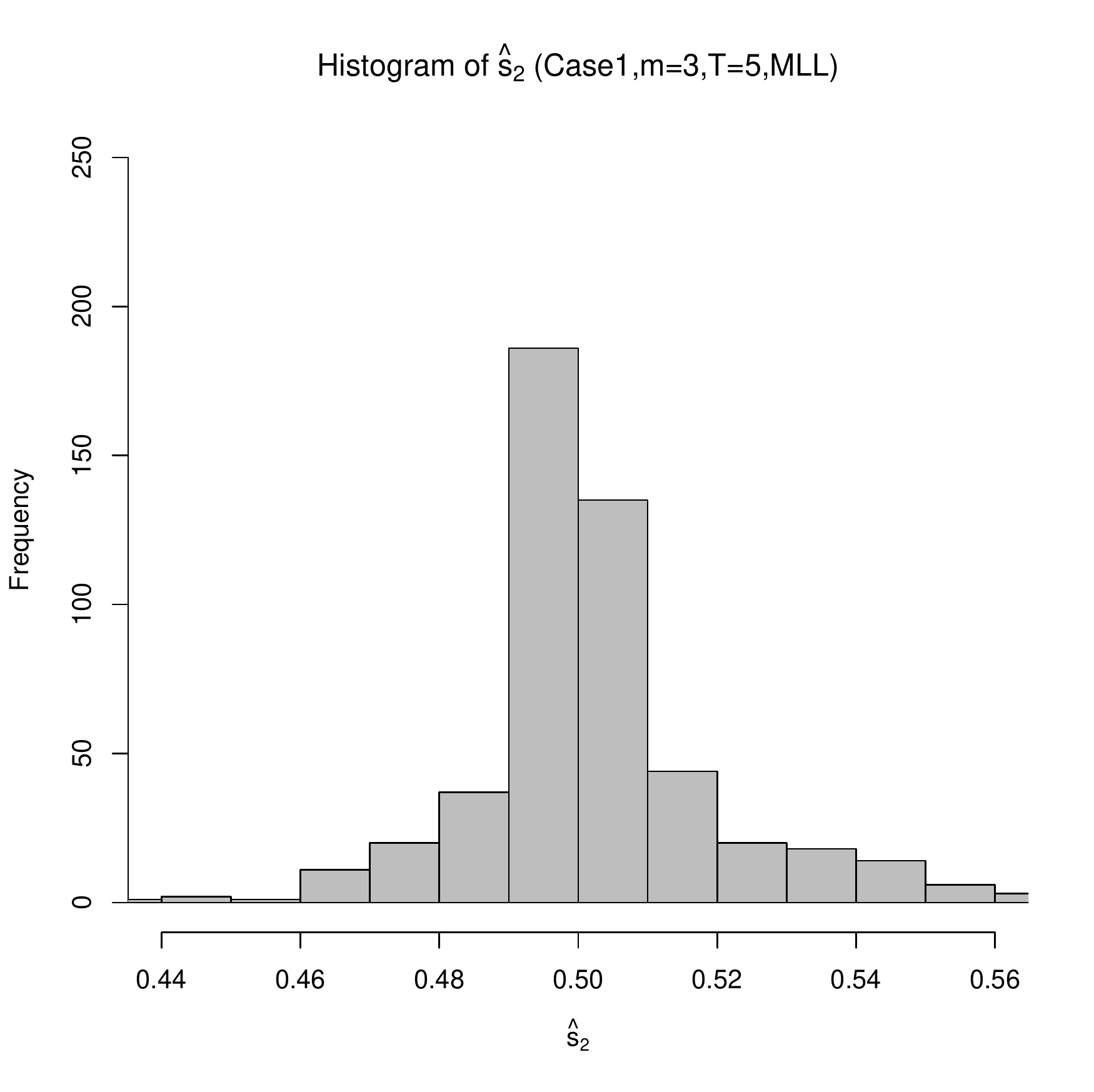}
\includegraphics[height=1.6in,width=2.04in]{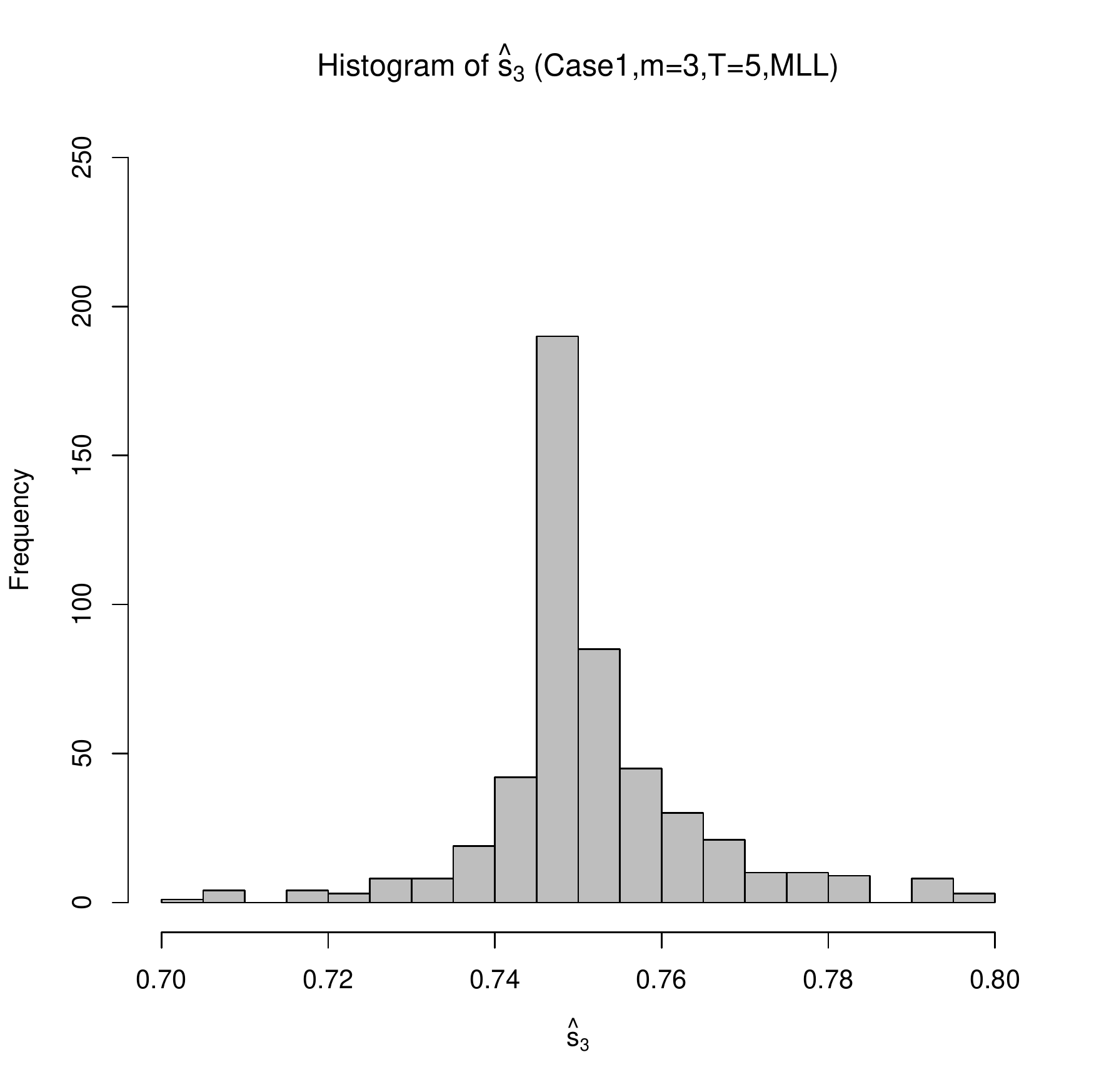}\\
\includegraphics[height=1.6in,width=2.04in]{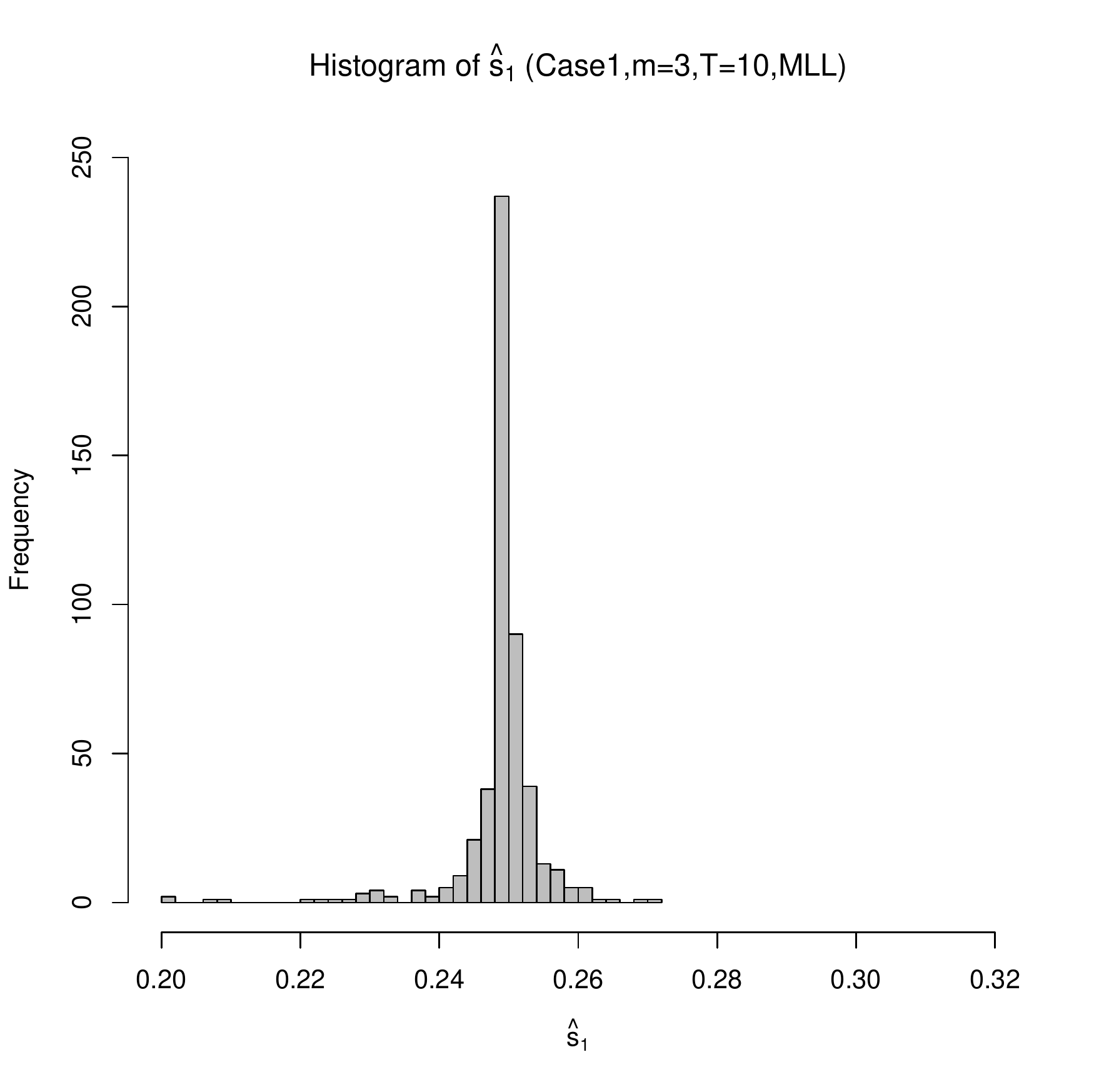}
\includegraphics[height=1.6in,width=2.04in]{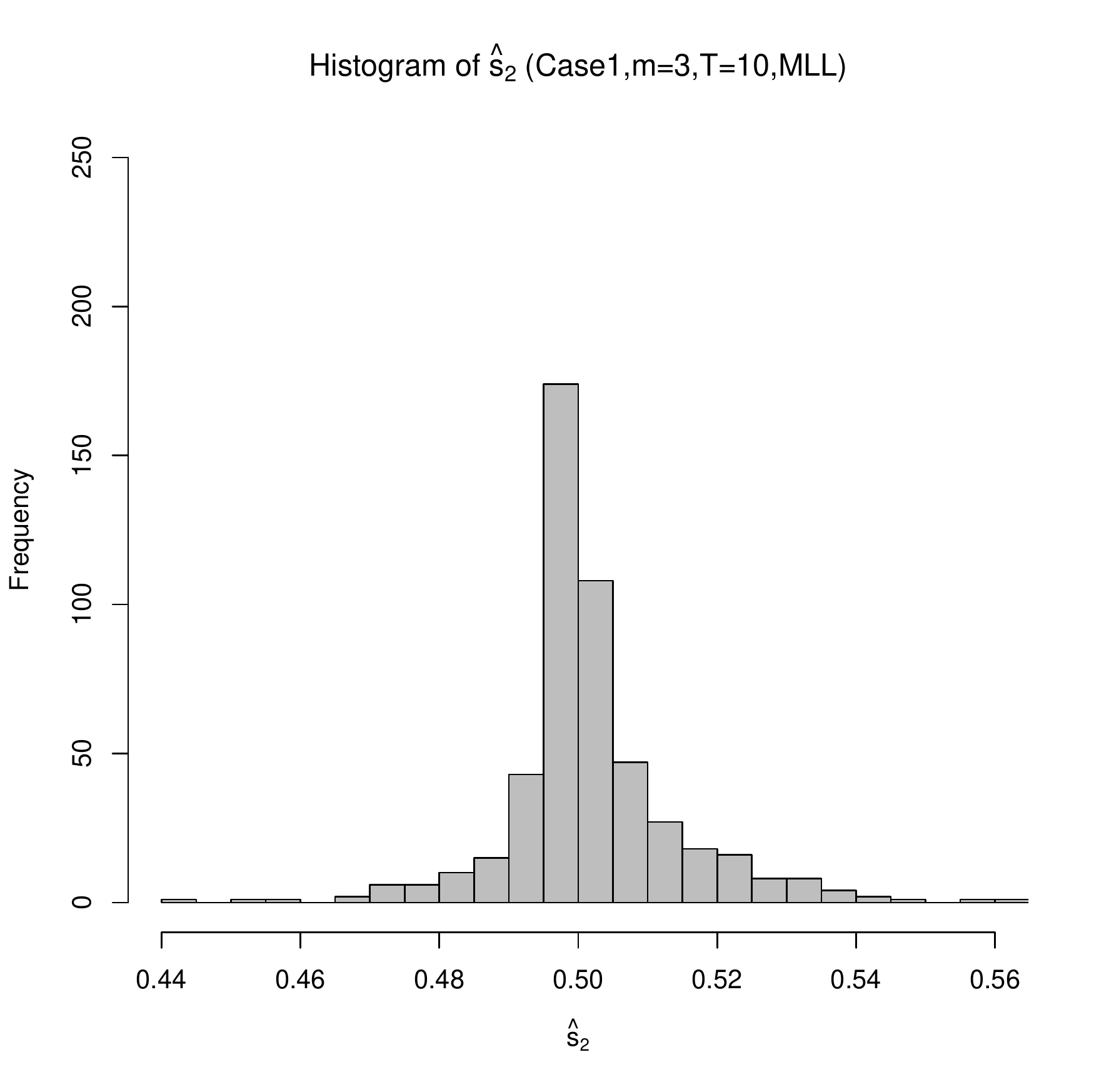}
\includegraphics[height=1.6in,width=2.04in]{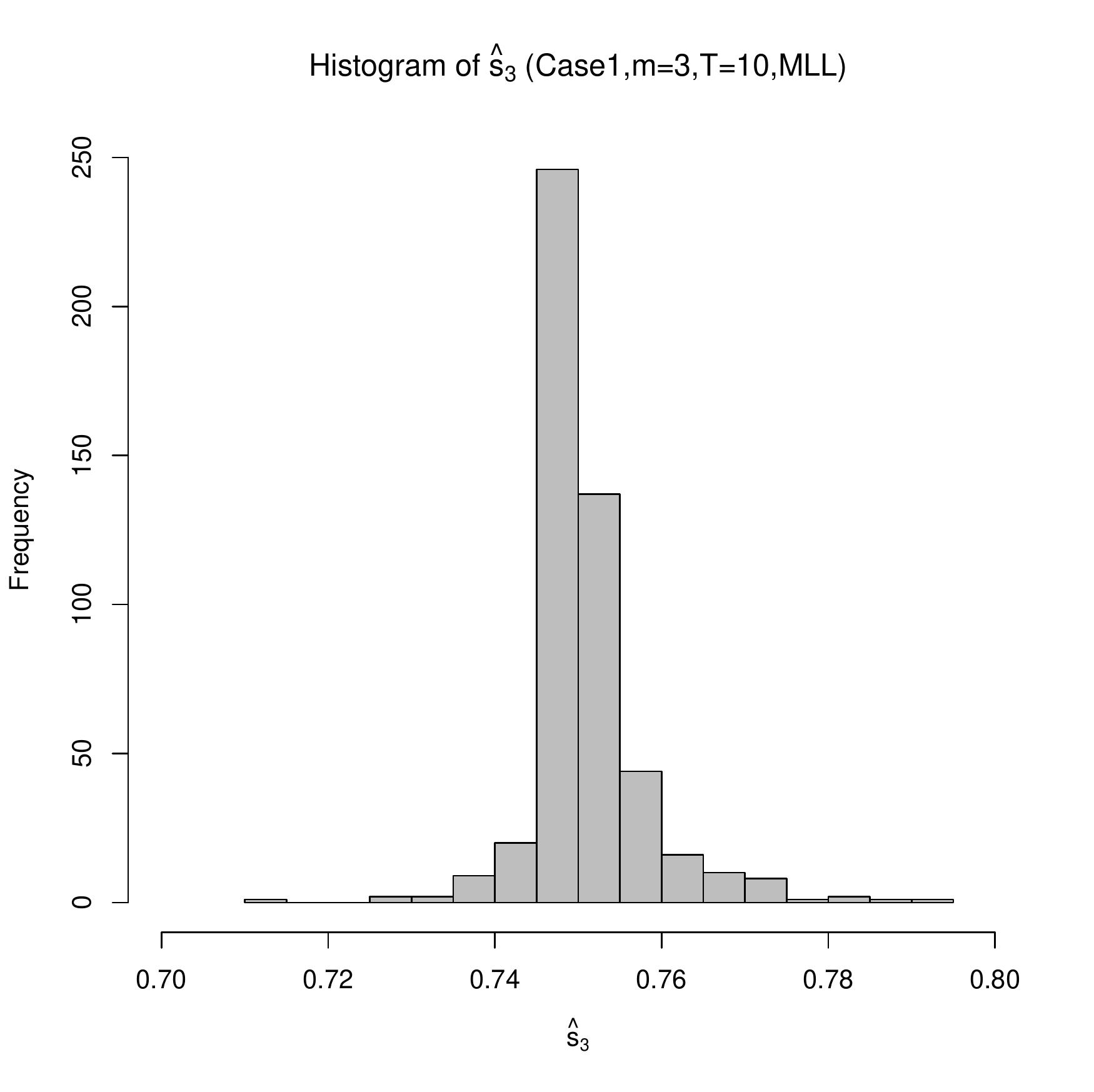}\\
\includegraphics[height=1.6in,width=2.04in]{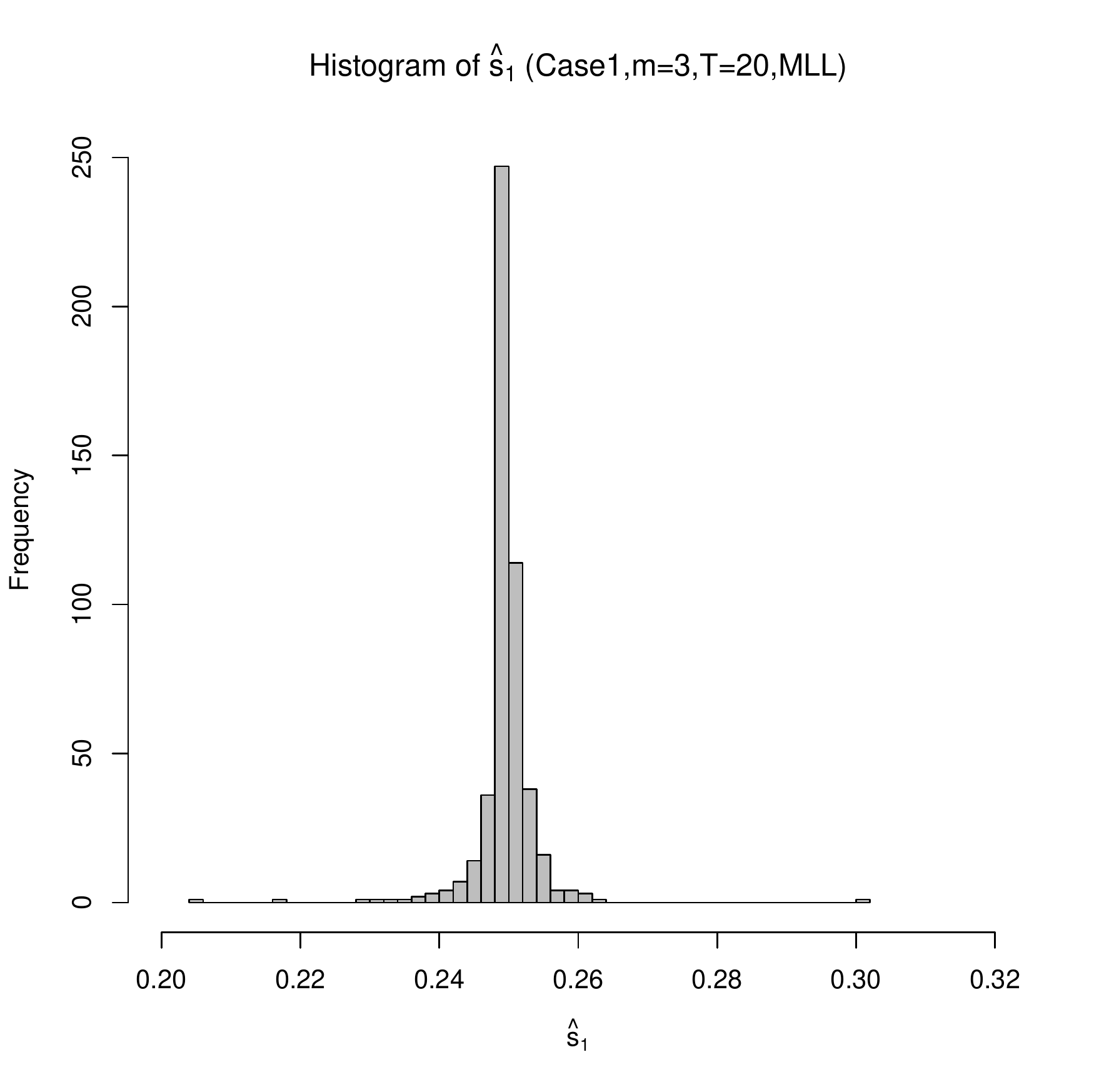}
\includegraphics[height=1.6in,width=2.04in]{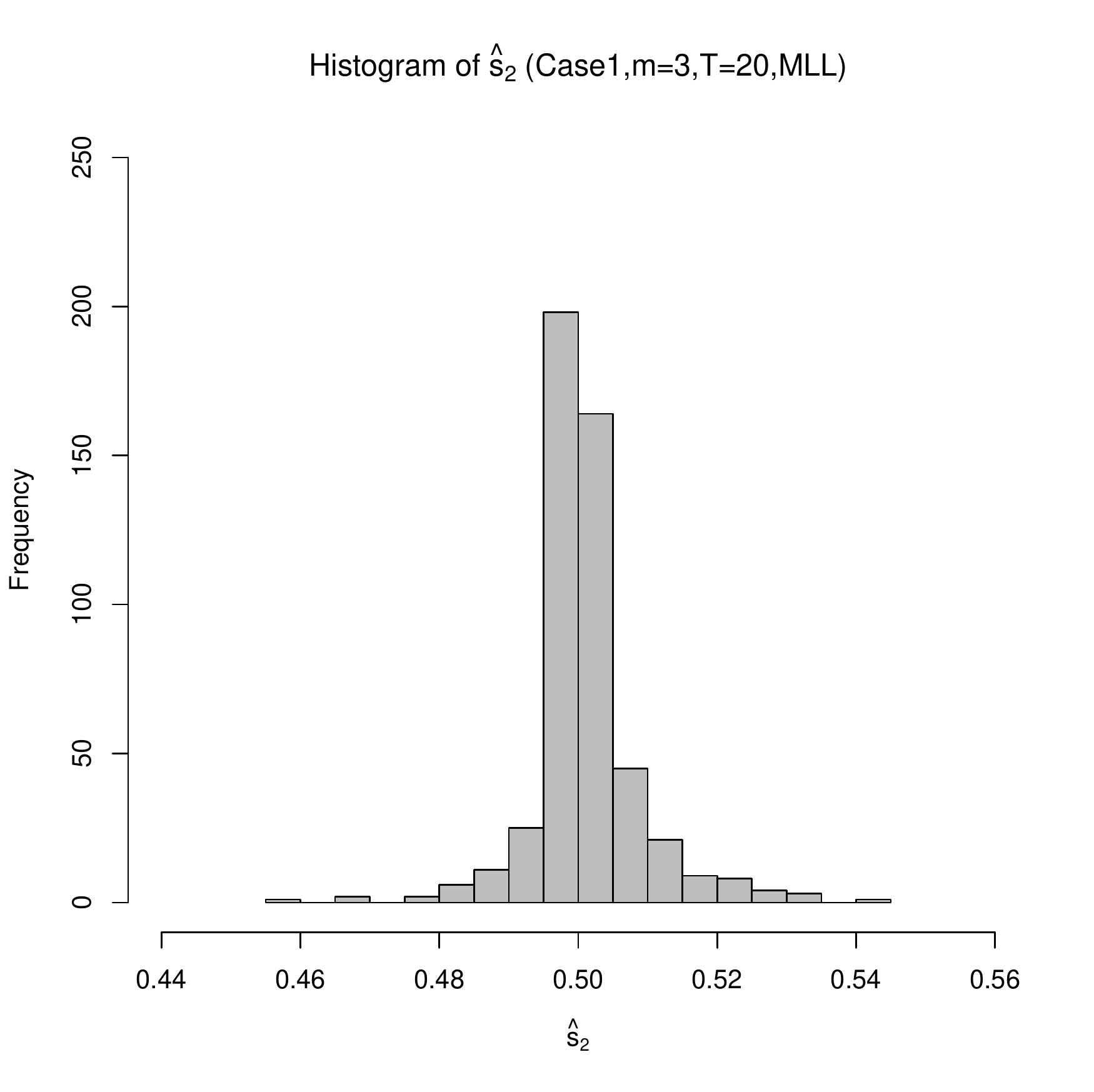}
\includegraphics[height=1.6in,width=2.04in]{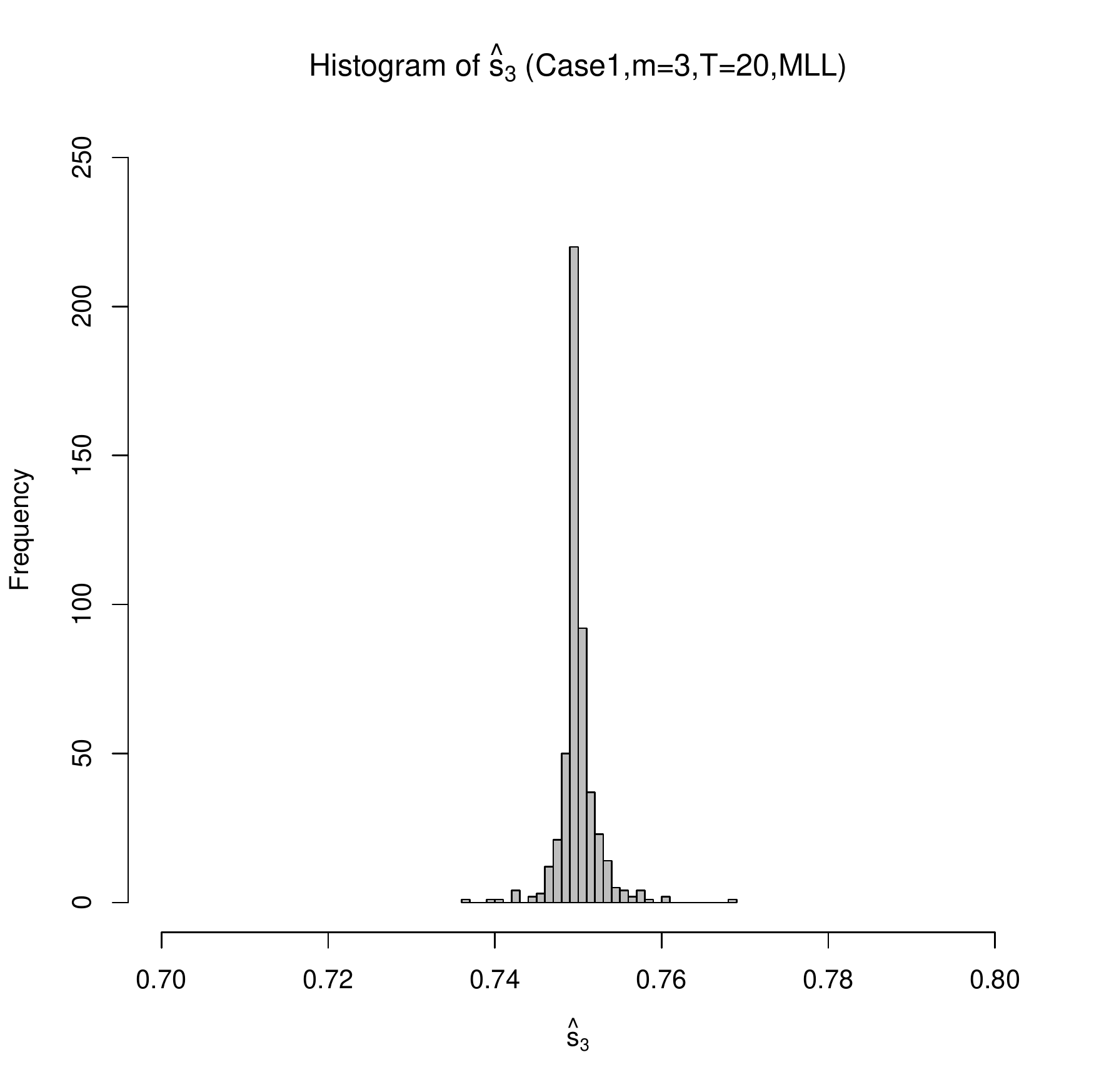}\\

\caption{\small Histogram of $\hat{s}$ based on MLL method for Case 1 with $m^0=3$, $T=(5, 10, 20)$ and exact value $s^0=(0.25, 0.50, 0.75)$}
\label{fig:figure1}
\end{figure}

\begin{figure}[htbp]
\includegraphics[height=1.6in,width=2.04in]{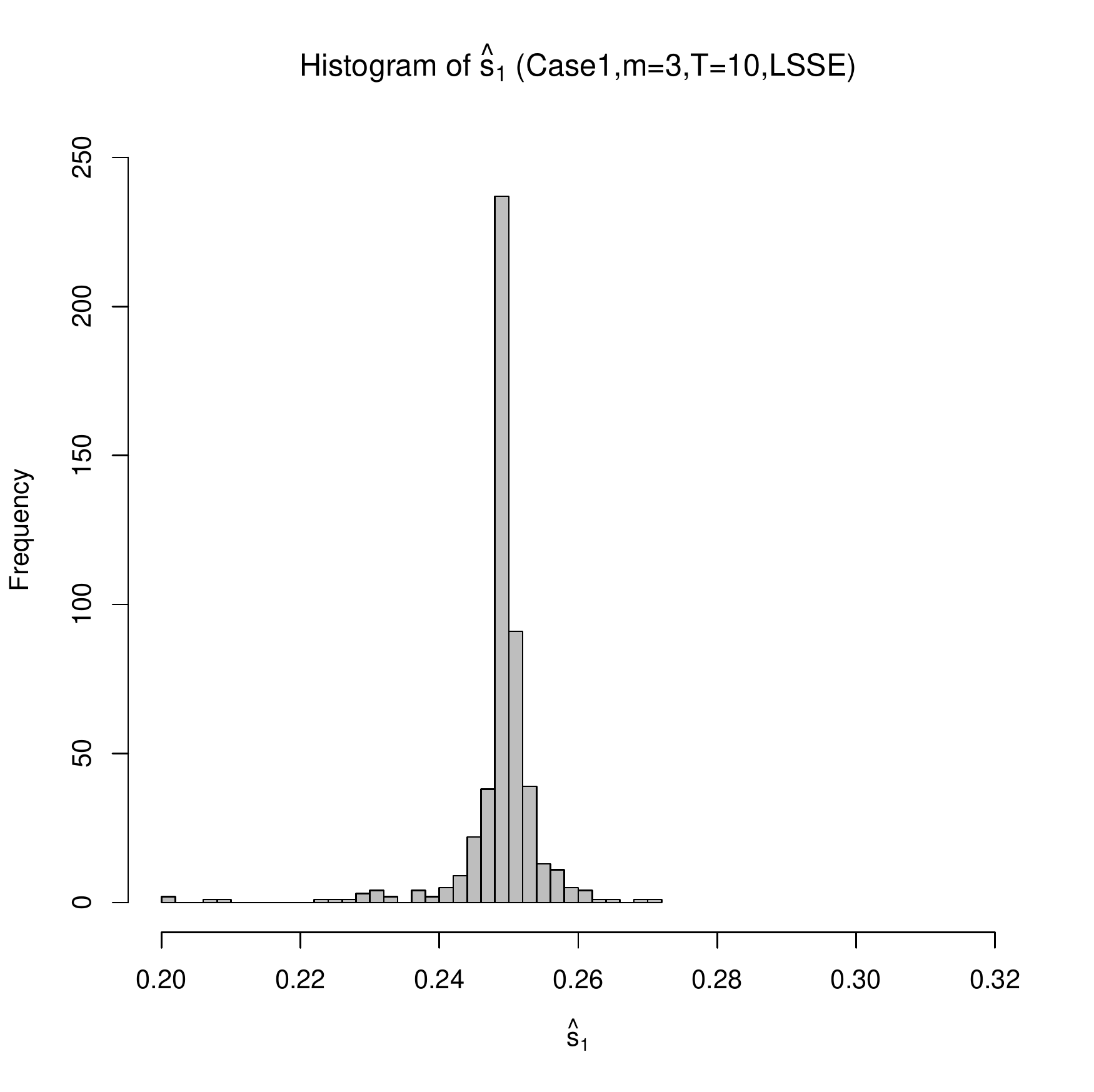}
\includegraphics[height=1.6in,width=2.04in]{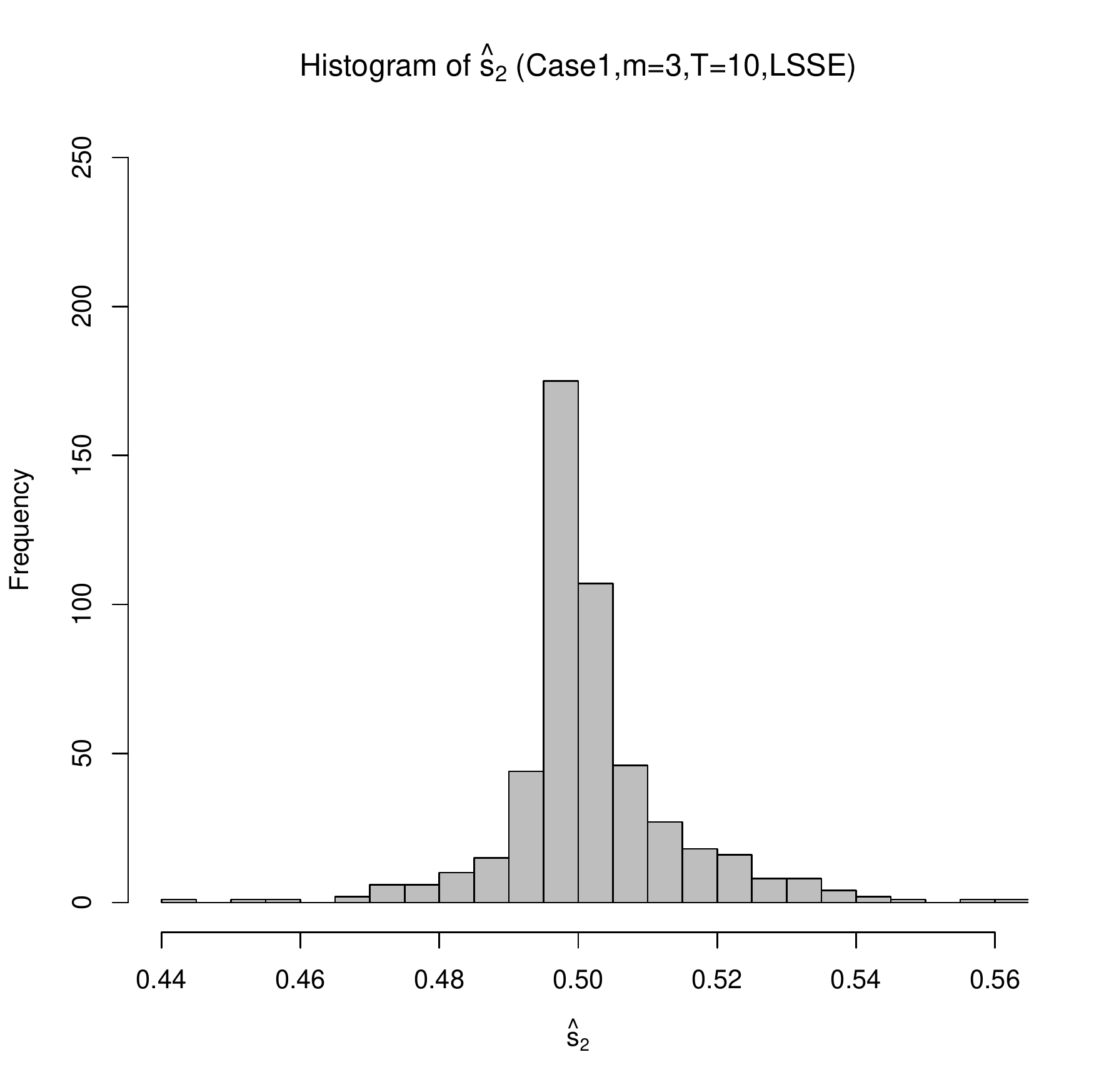}
\includegraphics[height=1.6in,width=2.04in]{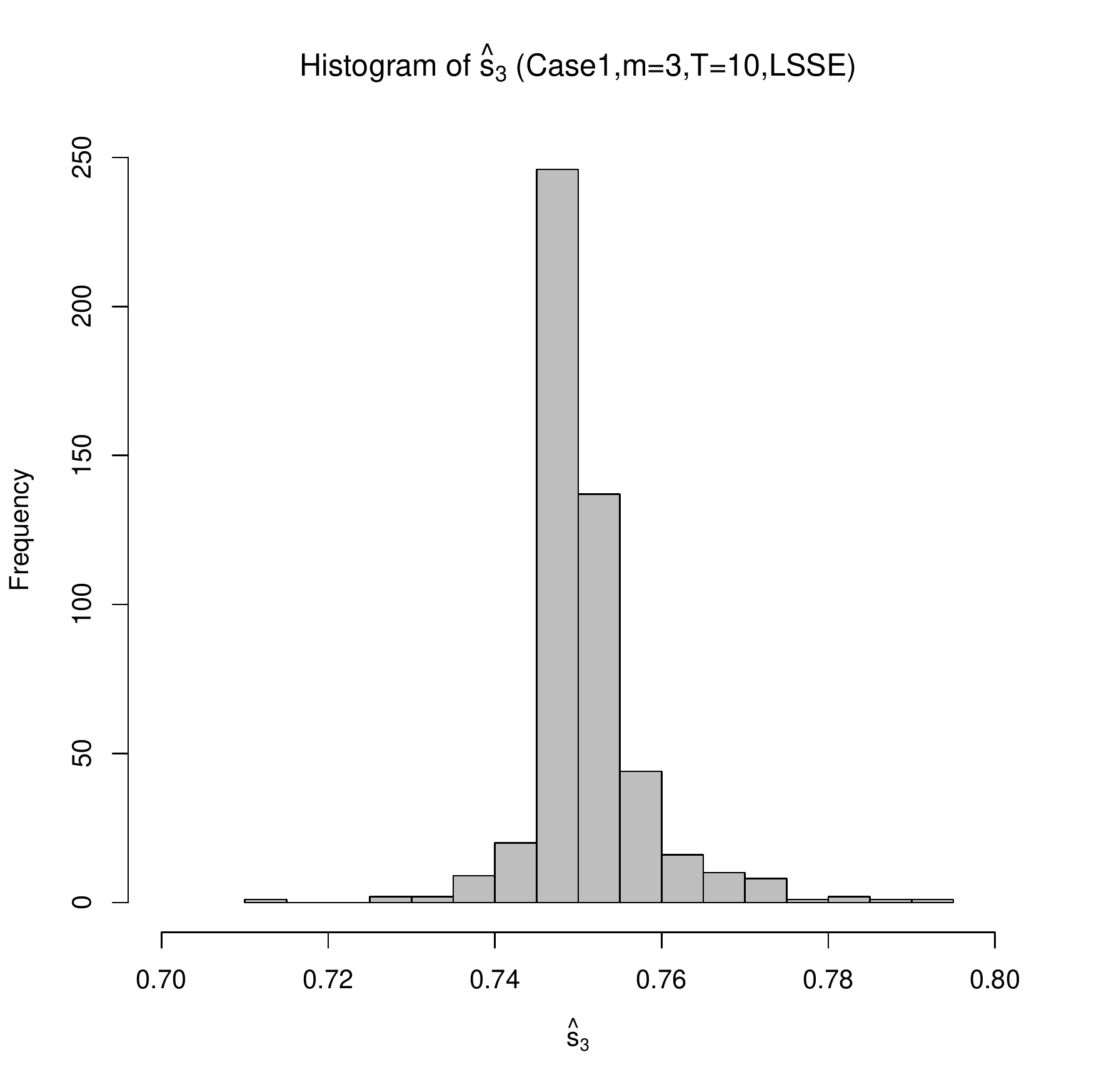}\\
\includegraphics[height=1.6in,width=2.04in]{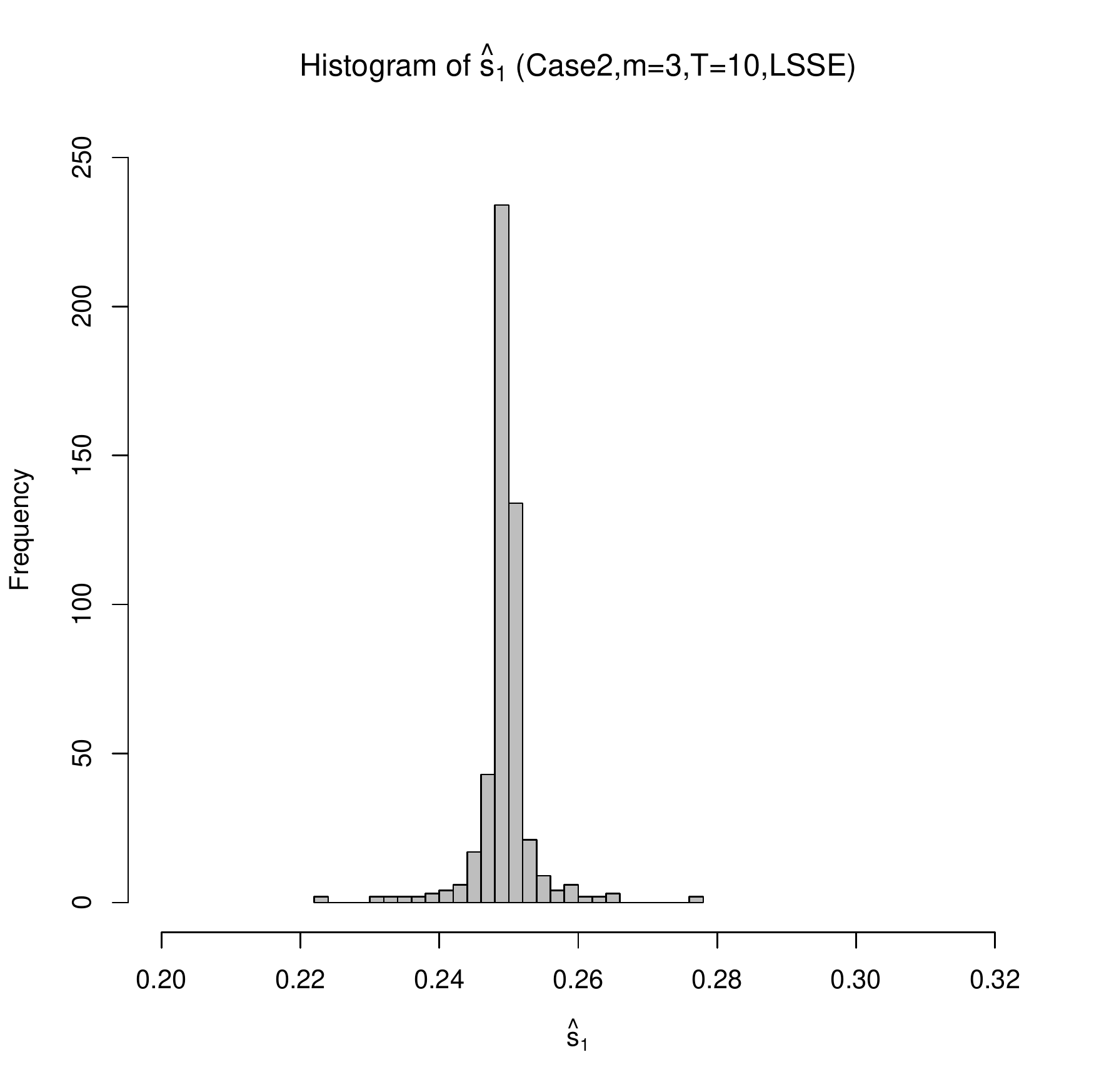}
\includegraphics[height=1.6in,width=2.04in]{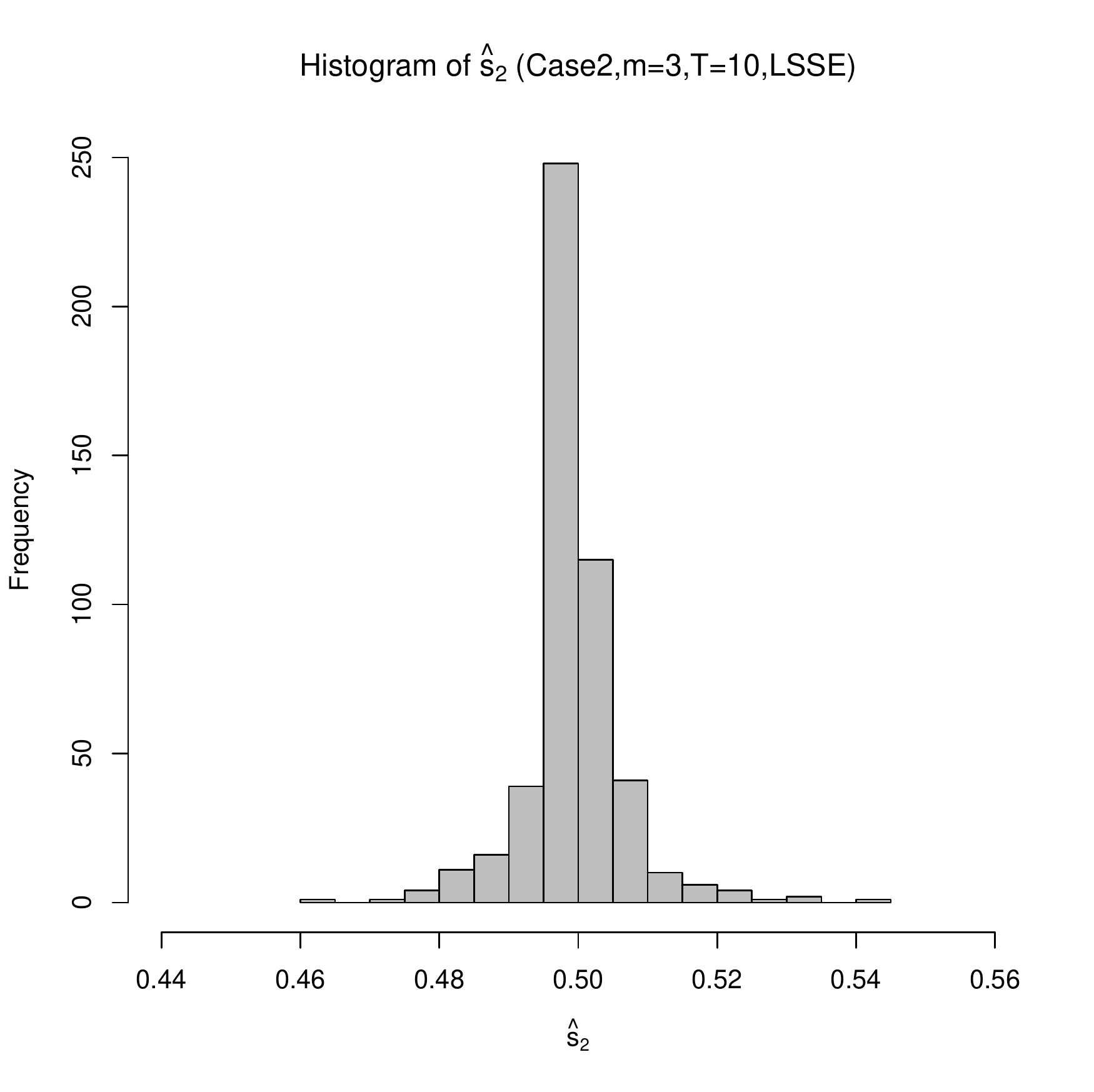}
\includegraphics[height=1.6in,width=2.04in]{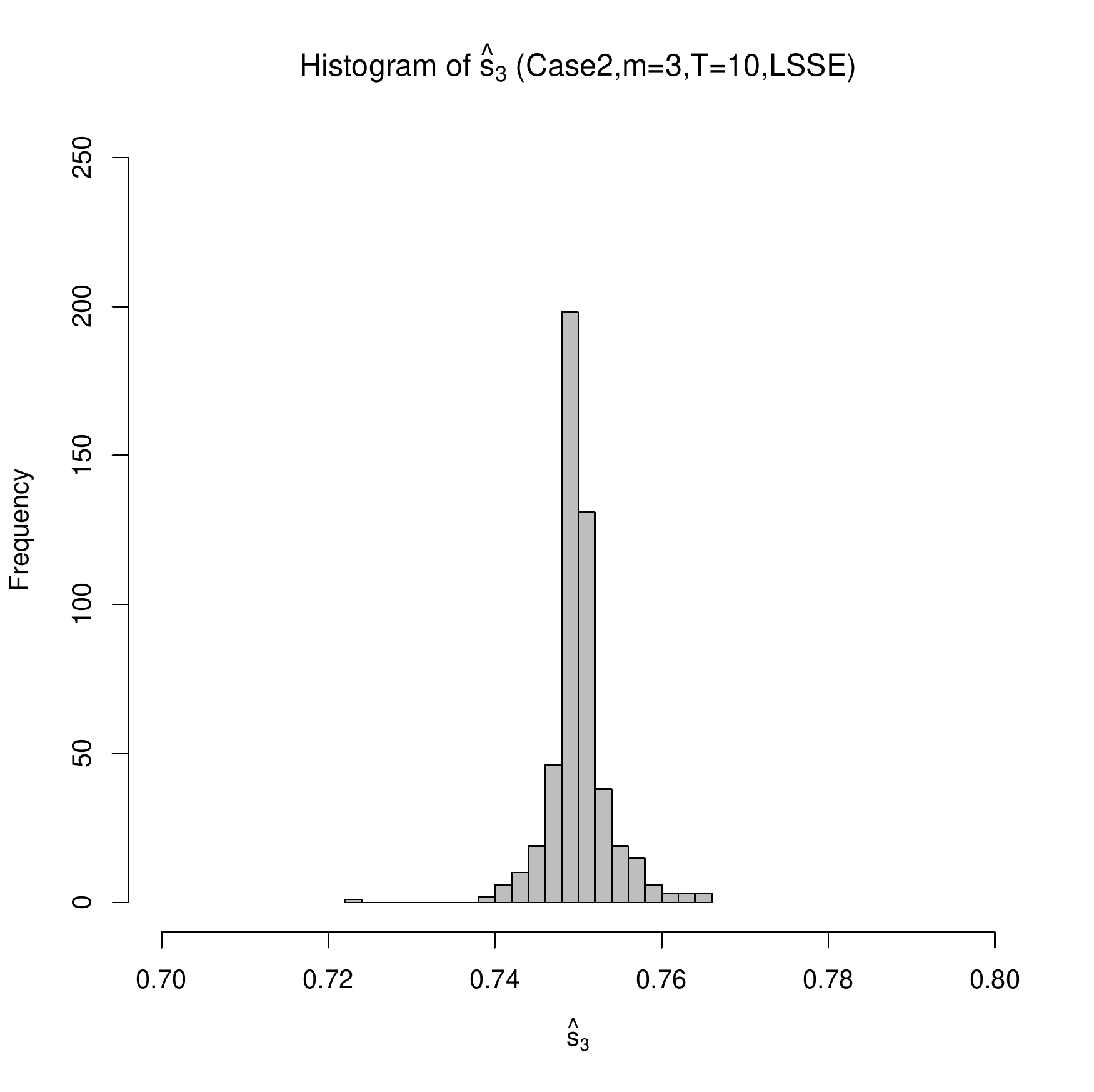}\\
\caption{\small Histogram of $\hat{s}$ based on LSSE method for Case 1 and 2, when $m^0=3$, $T=10$ and exact value $s^0=(0.25, 0.50, 0.75)$}
\label{fig:figure2}
\end{figure}

\subsubsection{Estimating the number of change points}\label{simulation-2}
\noindent In this subsection, we study the performance of (\ref{ic}) in estimating the unknown number of the change points based on Algorithms 2 (SNS) and 3 (Modified PELT). For the simulation {\color{black}setup}, we assume the exact value of  $m^0=2$, with different time periods $T=5, 10, 15, 20$. {\color{black}The pre-assigned coefficients are provided in Table~\ref{mcp-setup}}.\\

\noindent Based on the simulated process, we apply Algorithms 2 and 3 with $m$ ranging from $0$ to $5$ to estimate the unknown number of change points. In Tables~\ref{m-3}, we count and report the cumulative frequency (CF) of 500 iterations that return the correct estimates $\left( \sum_{i=1}^{500} \mathbbm{1}(\hat{m}_i=m^0)\right)$ and the relative frequency (RF) $\left(\frac{1}{500}\sum_{i=1}^{500} \mathbbm{1}(\hat{m}_i=m^0)\times 100 \% \right)$.\\
\begin{table}[!htbp]
\small \caption{Pre-assigned coefficients (with unknown number of change points)} \centering
\scriptsize
\begin{tabular}{|c|c|c|c|c|c|c|c|c|}
\hline
\small Case& Coefficient & $j=1$ & $j=2$ & $j=3$   \\
\hline
\multirow{2}{*}{1} & $\mu^{(j)}$ & 0.08 & 2.50 & 0.08 \\
\cline{2-5}
 & $\alpha^{(j)}$ & 0.10 & 1.00 & 0.50 \\
 \hline
\multirow{3}{*}{2} & $\mu_1^{(j)}$ & 0.08 & 2.50 & 0.08\\
\cline{2-5}
 & $\mu_2^{(j)}$ & 0.02 & 1.20 & 0.02 \\
\cline{2-5}
 & $\alpha^{(j)}$ & 0.10 & 1.00 & 0.50 \\
 \hline
\end{tabular}
\label{mcp-setup}
\end{table}

\begin{table}[!htbp]
\small \caption{Cumulative frequency and relative frequency in 500 iterations that return the correct estimates} \centering
\scriptsize
\begin{tabular}{|c|c|c|c|c|c|c|c|c|c|}
\hline
& &\multicolumn{2}{c|}{ $T$=5 }&\multicolumn{2}{c|}{$T$=10 } &\multicolumn{2}{c|}{$T$=15 } &\multicolumn{2}{c|}{ $T$=20 }\\
\hline
Case& Algorithm  &  CF& RF &CF&RF &CF&RF & CF&RF\\
\hline
1&2 (SNS) & 492 &$98.4 \%$ &498 & $99.7\%$ &500 &$100.0 \%$ &500 & $100.0 \%$\\
\hline
2&2 (SNS) &500 & $100.0 \%$ &500 &$100.0 \%$ &500 &$100.0 \%$ &500 & $100.0 \%$\\
\hline
1&3 (PELT)&494 &98.8\% &499 &99.8\% &500 & 100.0\%&500 & 100.0\%\\
\hline
2&3 (PELT)&497 &99.4\% &500 &100.0\% &500 &100.0\% &500 &100.0\% \\
\hline
\end{tabular}
\label{m-3}
\end{table}

\noindent For the estimated number $\hat{m}$ of change points,
one could see from Table~\ref{m-3} that, when $m^0=2$, the proposed methods perform very well in both cases
with different time periods. Furthermore, the accuracy of the estimating results in different cases all increase as $T$ increases. These results suggest that our proposed method is asymptotically consistent, which confirms the theoretical finding in Proposition~\ref{asic1}.

\subsection{Implementation on observed financial market data with discussion}\label{realdata}
\noindent We apply the estimation methods to the Brent oil one-month futures settlement daily price data for the period 18 March 1993 to 25 September 2015. The data set is available
at www.quandl.com. \\
\ \\
\noindent The empirical studies of Schwartz (1997) and Chen (2010) showed that mean-reversion features hold for prices of several commodities including oil. Hence, we use OU-types processes to model such price behaviour. We first fit the classical OU process without {\color{black}any} change point, i.e., using the dynamics $dX_t=
(\mu-\alpha X_t)dt+\sigma dW_t$), to the log-transformed data series with $\sigma$'s estimate as the data's realised volatility $\left(\hat{\sigma}=\sqrt{\sum_{t_i\in[0,T]}(X_{t_{i+1}}-X_{t_i})^2/T}\right)$. The MLE of the drift parameters are given by $\hat{\mu}=0.48$ and $\hat{\alpha}=0.12$. Based on these MLE values, the log likelihood (via the Riemann-sum approximation) is about 1.02 and $\mathcal{IC}(m=0)=15.27$.  \\
\ \\
\noindent However, from the plot of the price series in Figure~\ref{fig:figure7}, there are several changes in the shapes of the price evolution. This observed feature suggests that it may be more appropriate to use (\ref{sim1-1}) with unknown ($m>0$) change points.\\
\begin{figure}[htbp]
\includegraphics[height=2.1in,width=6.5in]{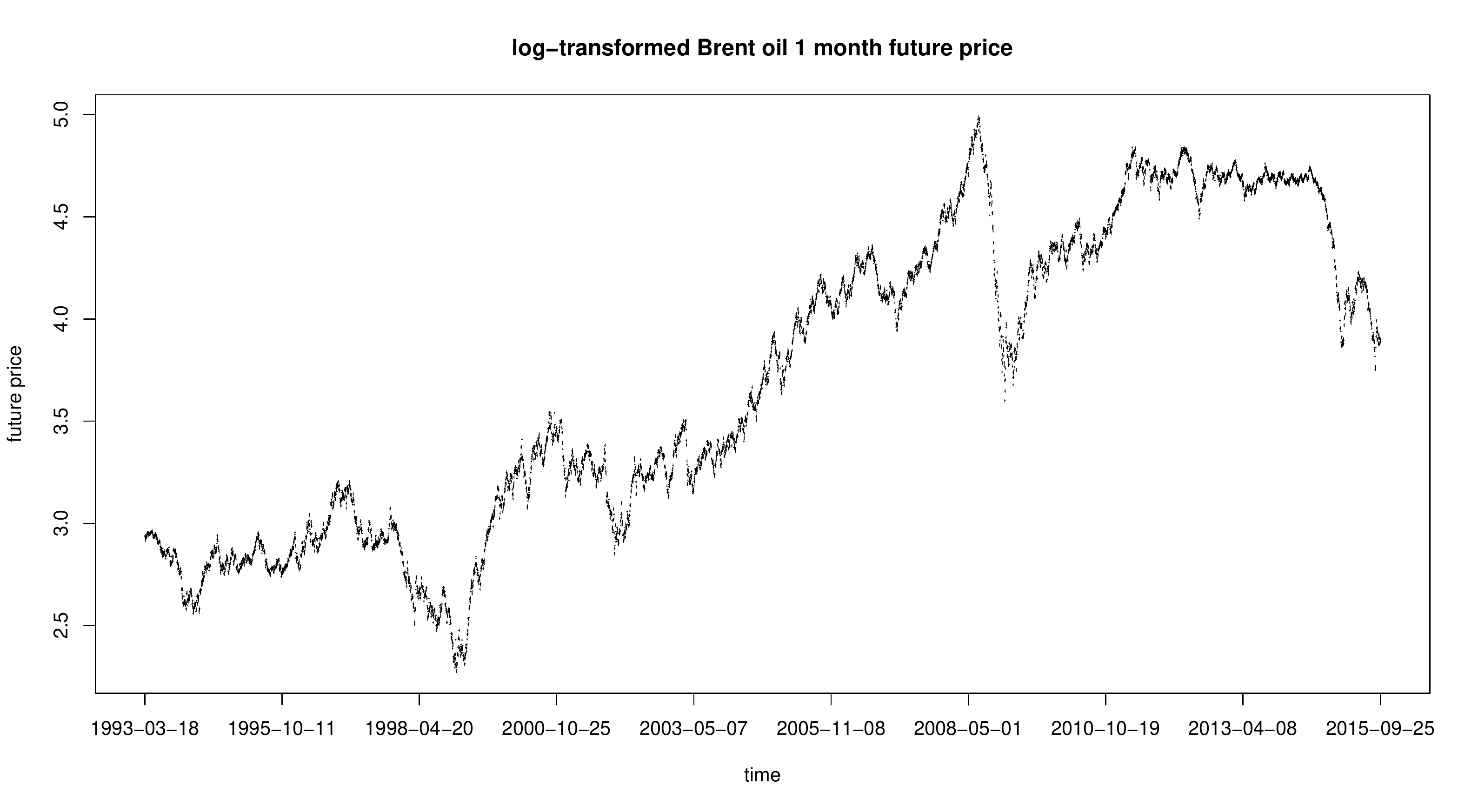}
\caption{\small Log-transformed Brent oil 1 month futures settlement prices: 18 March 1993 -- 25 September 2015}
\label{fig:figure7}
\end{figure}
\ \\
\noindent {\color{black} The data set covers approximately 22.5 years giving a sample size of 5735 trading days. The recorded yearly number of trading days varies from year to year; so, for convenience, we let $\Delta_t=22.5/5735$. Since the size of the data set is large, we apply Algorithm 3 using a minimum permissible regime-time length $h$ of $0.25/\Delta_t\approx 63$ trading days (quarterly)}. The algorithm detects $\hat{m}=2$ change points, which occurs on 24 September 2008 and 23 December 2008, with a corresponding log likelihood increase (via Riemann-sum approximation) of 26.81 and $\mathcal{IC}(m=2)=-1.69$ lower than $\mathcal{IC}(m=0)$. To confirm the results, we also apply Algorithm 2 with LSSE and MLL methods respectively and $m_{\max}$ set to be 10. The results are the same as that obtained from Algorithm 3. \\
\ \\
\noindent The plot of the price series, with change points indicated, is depicted in Figure~\ref{fig:figure8}. It shows that, from September 2008 to March 2009, there is a decreasing trend in the log-transformed futures prices and then the trend is slightly increasing after this period. Further, based on the estimated change points, the MLE of the drift parameters, {\color{black}and the associated statistics such as (long-term) means (i.e., $\hat{\mu}^{(j)}/\hat{\alpha}^{(j)}$) and variances (i.e., $\hat{\sigma}^2/(2\hat{\alpha}^{(j)})$)} are given in Table~\ref{m-4}. From Table~\ref{m-4}, there are huge changes in the MLEs of the drift parameters under different $T$'s. Based on the MLEs, we also plot two simulated series based on the OU process with and without change points; see Figures~\ref{fig:figure9} and
\ref{fig:figure10}, respectively. As most notably expected, the simulated series based on the OU process with two change points is closer to the original series, especially during the period spanning 25 September--23 December 2008, than the simulated series based on the OU process without change point. Judging from these observed characteristics and taking into account the above-mentioned substantial improvements in the log likelihood and SIC values, we conclude that the OU-process with two change points occurring on 25 September 2008 and 23 December 2008 is the appropriate model for the data set that we analysed.
\begin{figure}[htbp]
\includegraphics[height=2.1in,width=6.5in]{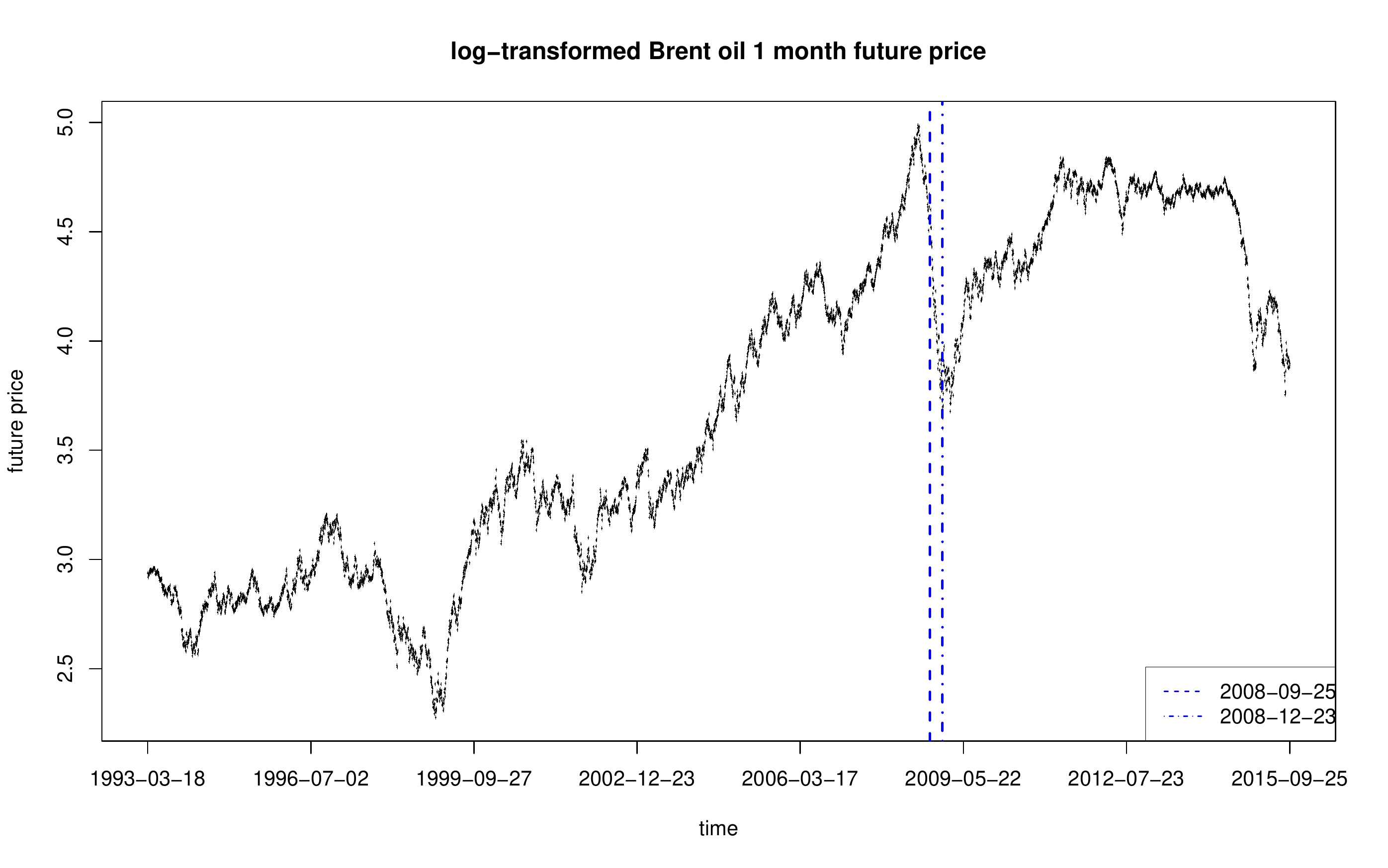}
\caption{\small Highlighting the two detected change points in the log-transformed one-month futures settlement prices on Brent oil: 18 March 1993--25 September 2015}
\label{fig:figure8}
\end{figure}
\begin{figure}[htbp]
\includegraphics[height=2.2in,width=6.5in]{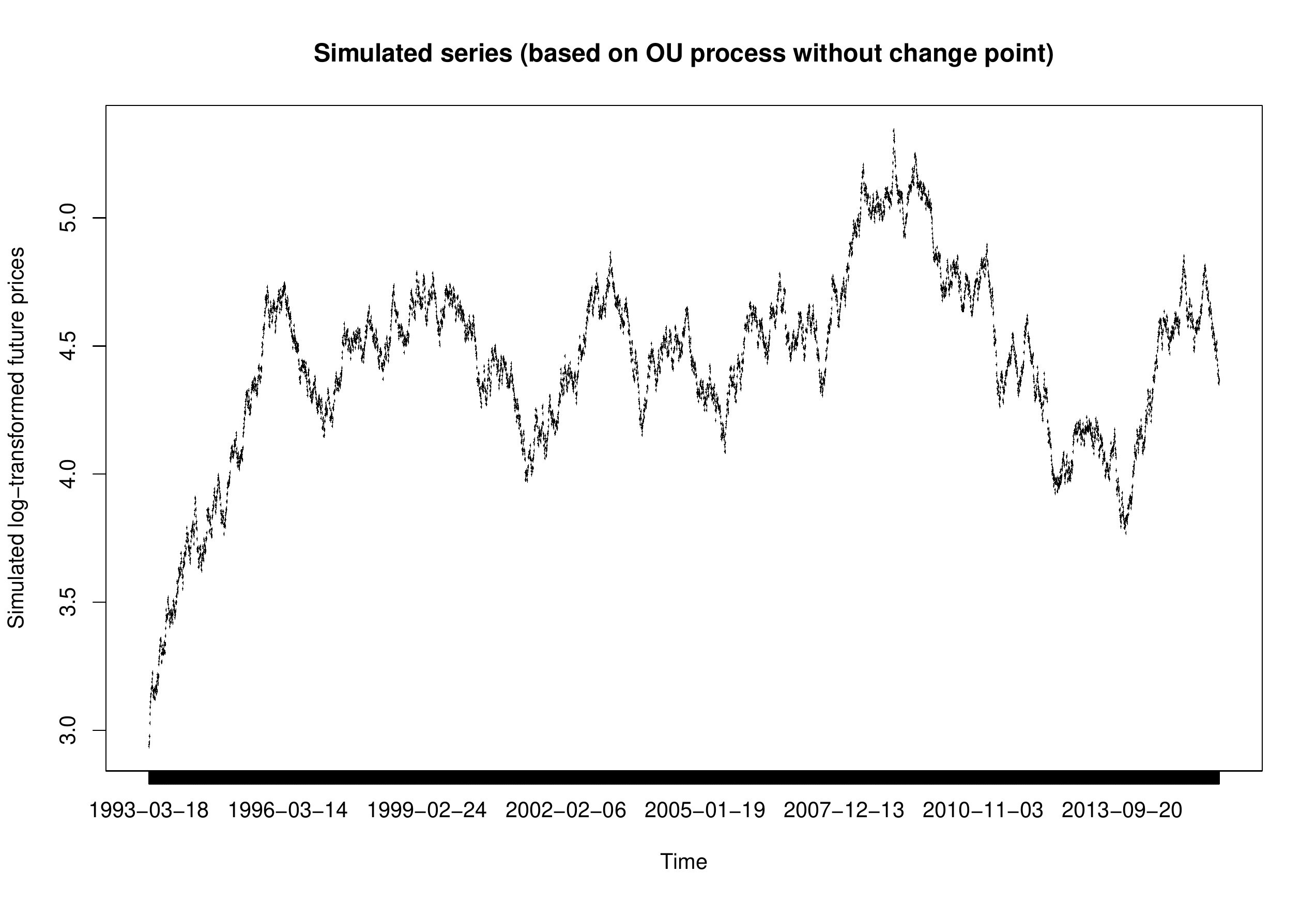}
\caption{\small Simulated log-transformed one-month futures settlement prices on Brent oil based on OU process without change point: 18 March 1993--25 September 2015}
\label{fig:figure9}
\end{figure}
\begin{figure}[htbp]
\includegraphics[height=2.2in,width=6.5in]{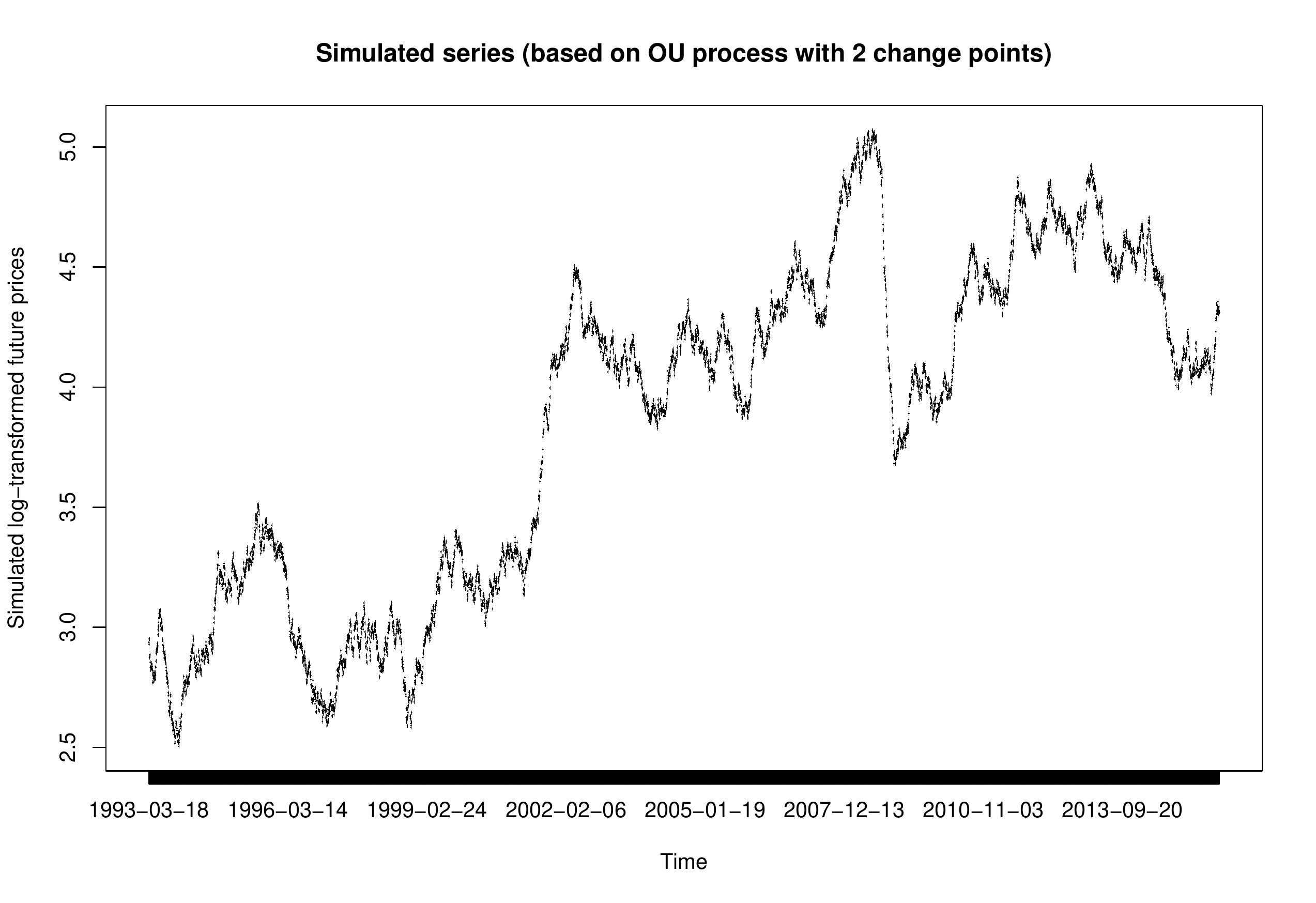}
\caption{\small Simulated log-transformed one-month futures settlement prices on Brent oil based on OU process with two change points: 18 March 1993--25 September 2015}
\label{fig:figure10}
\end{figure}
\begin{table}[!htbp]
\small \caption{MLEs of the drift parameters, (long term) mean, variances and approximate log likelihoods for the OU process with two change points in modelling the Brent oil's one-month futures settlement prices from 18 March 1993 to 25 September 2015} \centering
\begin{tabular}{|c|c|c|c|c|c|c|}
\hline
Time period & $\hat{\mu}$ & $\hat{\alpha}$ &  $\hat{\mu}/\hat{\alpha}$& $\hat{\sigma}$ & $\hat{\sigma}^2/(2\hat{\alpha})$ & $\log\ell$ \\
\hline
18 March 1993 to 25 September 2008 & 0.128 & 0.005 & 25.794& 0.328 &  10.889 & 0.884\\
\hline
26 September 2008 to 23 December 2008 & 5.501 & 2.418 & 2.275& 0.328 & 0.022 & 23.367 \\
\hline
24 December 2008 to 25 September 2015 & 3.977 & 0.879 &4.524 &0.328 & 0.061 & 2.557\\
\hline
\end{tabular}
\label{m-4}
\end{table}

{\color{black}
\noindent Moreover, we see from Figure~\ref{fig:figure8} that from 18 March 1993 to 25 September 2008 (the first estimated change point), there are several noticeable changes in the series. For example, from 1996 to 1998 there was a fall in the futures price. However, based on the SIC, these changes are not significant enough to {\color{black}warrant the inference of a regime change} and they are therefore ignored by the proposed methods. We take a closer look concentrating only on the 18-Mar-1993-to-25-Sep-2008 data set to see if there is any change point at all. This analysis is equivalent to reducing the sample size and the penalty term of the SIC accordingly. Algorithm 3 is re-applied to the reduced data set with a sample size 3928 and maintaining the same $\Delta_t$ and $\sigma$ as those in our other experiments. The resulting SIC indicates still no change point during the shortened period. Furthermore, we employ the estimated parameters and sample size in our reduced data set to run a simulation similar to that in Subsection~\ref{simulation-2} in assessing the performance of the proposed methods, and we obtained an RF of $86.2\%$ in producing the correct estimates.  \\
\ \\
\noindent The above result tells us that an OU process without a change point would be appropriate to model the data series from 18 March 1993 to 25 September 2008. However, from Table~\ref{m-4} the respective long-term mean and variance $\displaystyle \frac{\hat \mu}{\hat \alpha}$ and $\displaystyle \frac{\sigma^2}{2 \hat \alpha}$ are
$25.794$ and $10.889$, which are both higher than those in the two other time periods. Such high statistics may be less preferable in practice, although they reasonably explain the increasing trend in the investigated time period. On the other hand, we know that imposing more change points, which is equivalent to increasing the number of coefficients in the model, can reduce the variance.
Hence, we examine the potential reduction in the variance by imposing a change point into this period. To this end, we fit an OU process with one change point to the series and use Algorithm 1 to estimate the location in the OU process. The estimated change point is at 15 February 1999, which is near the bottom of the series; see  Figure~\ref{fig:figure11}.
Based on Table~\ref{m-5}, by imposing a change point at 15 February 1999, the means $\displaystyle \frac{\hat \mu}{\hat \alpha}$
before and after the change point both strikingly decrease to 2.646 and 4.407, respectively, in comparison to the previous result of $25.794$. Additionally, the variances $\displaystyle \frac{\sigma^2}{2 \hat \alpha}$ for the time periods before and after the change point also markedly go down to respective values of 0.085 and 0.163.  The approximate log likelihood increases correspondingly to 4.199 for the period 16 February 1999--25 September 2008.
This implies that imposing a change point (15 February 1999) into the model does improve the accuracy. Nevertheless, recall that our SIC-based method shows no change point at this time period. Therefore, we demonstrated
a strong potential for an over-fitting problem to arise when a change-point assumption is unnecessarily introduced into the model. This also reminds us of a trade-off between accuracy improvement and issue of over fitting that must be
avoided whenever possible.}   \ \\
\ \\
\begin{table}[!htbp]
\small \caption{MLEs of the drift parameters, (long-term) mean, variances and approximate log likelihoods for for the OU process with one change point in modeling Brent oil's one-month futures settlement prices from 18 March, 1993 to 25 September, 2008} \centering
\begin{tabular}{|c|c|c|c|c|c|c|}
\hline
Time period & $\hat{\mu}$ & $\hat{\alpha}$&  $\hat{\mu}/\hat{\alpha}$ & $\hat{\sigma}$ & $\hat{\sigma}^2/(2\hat{\alpha})$ & $\log\ell$ \\
\hline
18 March 1993 to 15 February 1999 & 1.688 & 0.638 & 2.646 & 0.328 &  0.085 & 0.762\\
\hline
16 February 1999 to 25 September 2008 & 1.450 & 0.329 & 4.407 & 0.328 & 0.163  & 4.199\\
\hline
\end{tabular}
\label{m-5}
\end{table}
\begin{figure}[htbp]
\includegraphics[height=2.1in,width=6.5in]{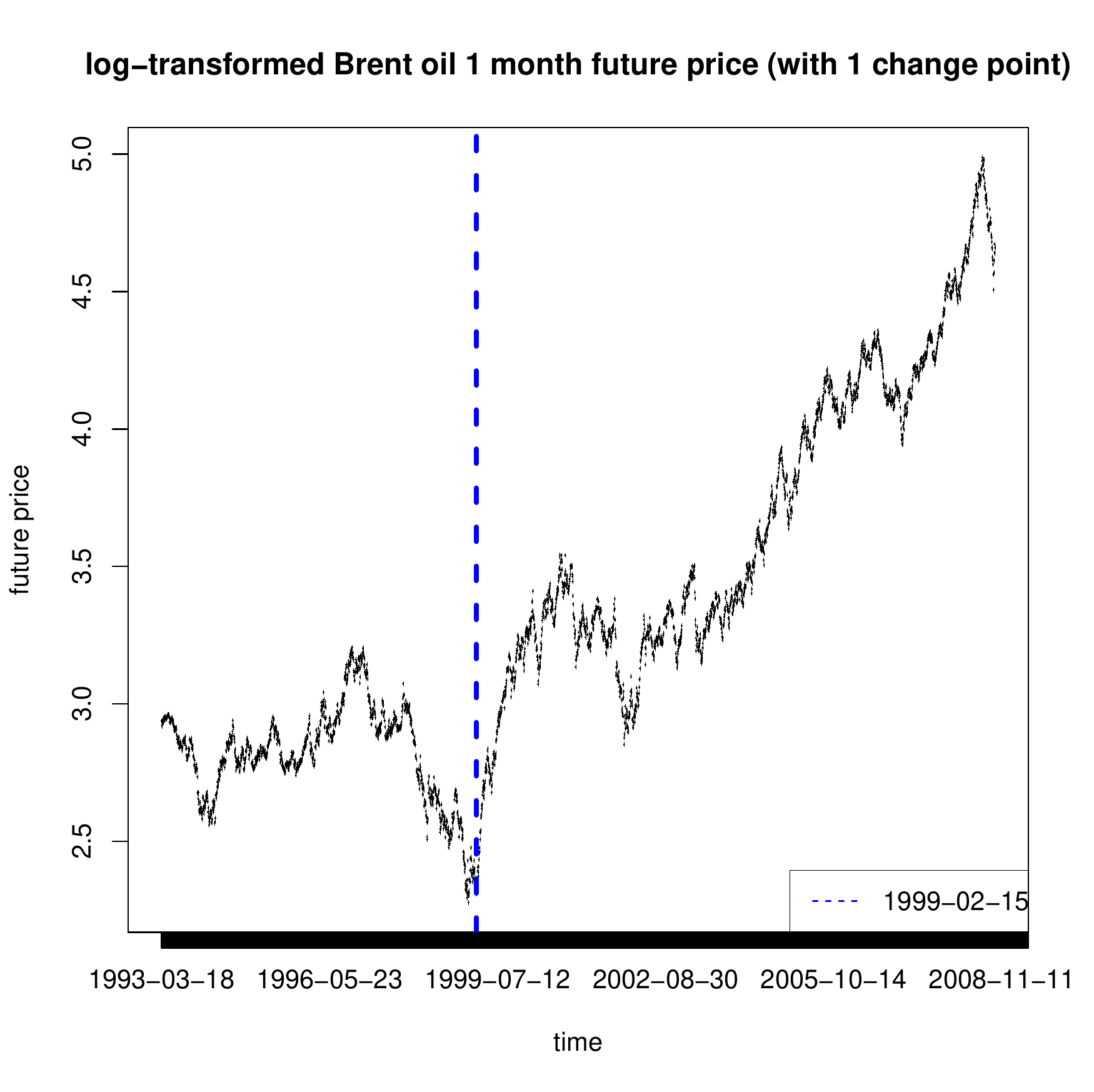}
\caption{\small log-transformed one-month futures settlement prices on Brent oil based on OU process with one change point: 18 March 1993--25 September 2008}
\label{fig:figure11}
\end{figure}

\section{Conclusion}
\noindent The main contribution of this paper is
the development of MLL- and LSSE-based methods
in detecting the unknown number of multiple change points along with the
identification of their locations in a generalised univariate
OU process. Additionally, we showed that our proposed
estimators for the change points' locations satisfy the
asymptotically consistent and normality properties under
certain suitable conditions that we painstakingly imposed. These results
guided the design of three computing algorithms customised for the efficient
implementation of our proposed methods. The numerical applications we showcased
covering both simulated and observed data illustrated the excellent performance
and accuracy of the estimation approaches that we created to handle change points detection.
The usefulness of our results have relevance to regulatory authorities'
policy-making, trading strategy's construction by investors and provider's of financial
products and services, and other scientific endeavours in the natural and social
sciences impacted by sudden and significant changes (e.g., break, jumps, shifts, etc)
in the time-series data.
This work provides impetus for the investigation and
development of methodology suited in tackling further the
multiple-change point problem for a multivariate OU process
and other closely related modelling challenges in the research literature and practice
that entail the statistical inference of stochastic processes.
\ \\
\appendix
\renewcommand{\thesection}{Appendix \Alph{section}}
\section{Proof of Proposittion~\ref{prncoef}}\label{appendixsection2}
\begin{proof}[Proof of Proposition~\ref{prncoef}]
We first need to prove that the coefficients of (\ref{ou1}) satisfy the space-variable Lipschitz condition. That is,
$$|\mu(t,x)-\mu(t,y)|^2+|\sigma(t,x)-\sigma(t,y)|^2\leq K_a |x-y|^2, \quad \mbox{for some}\quad K_a> 0.$$
Given that $\sigma(t,x)$ is constant in our modelling framework, the second term above is 0. Hence,
\begin{eqnarray*}
&&|\mu(t,x)-\mu(t,y)|^2+|\sigma(t,x)-\sigma(t,y)|^2=\left(\sum_{j=1}^{m+1}\left(S\left(\boldsymbol{\mathrm \uptheta}^{(j)},t,x\right)-S\left(\boldsymbol{\mathrm \uptheta}^{(j)},t,y\right)\right)\mathbbm{1}_{\{\tau_{j-1}^0< t\leq \tau_j^0\}}\right)^2\\
&&=\sum_{j=1}^{m+1}\left(S\left(\boldsymbol{\mathrm \uptheta}^{(j)},t,x\right)-S\left(\boldsymbol{\mathrm \uptheta}^{(j)},t,y\right)\right)^2\mathbbm{1}_{\{\tau_{j-1}^0< t\leq \tau_j^0\}}
=\sum_{j=1}^{m+1}(L^{(j)}(t)-a^{(j)}x-L^{(j)}(t)+a^{(j)}y)^2\mathbbm{1}_{\{\tau_{j-1}^0< t\leq \tau_j^0\}}\\
&&=\sum_{j=1}^{m+1}(a^{(j)}(x-y))^2\mathbbm{1}_{\{\tau_{j-1}^0< t\leq \tau_j^0\}}
\leq\sum_{j=1}^{m+1}(a^{(j)}(x-y))^2=\left(\sum_{j=1}^{m+1}(a^{(j)})^2\right)(x-y)^2.
\end{eqnarray*}
Since $a^{(j)}<\infty$ for $j=1,~\ldots,~p$, there exists a $K_a>0$ such that $\left(\sum_{j=1}^{m+1}(a^{(j)})^2\right)\leq K_a$. Then,
$$|\mu(t,x)-\mu(t,y)|^2+|\sigma(t,x)-\sigma(t,y)|^2\leq K_a |x-y|^2.$$
\ \\
\noindent Next, we prove the spatial growth condition. That is,
$$|\mu(t,x)|^2+|\sigma(t,x)|^2\leq K_b (1+x^2), \quad \mbox{for some}\quad K_b> 0.$$
Note that
\begin{eqnarray*}
&&|\mu(t,x)|^2+|\sigma(t,x)|^2
=\left[\sum_{j=1}^{m+1}S\left(\boldsymbol{\mathrm \uptheta}^{(j)},t,x\right)\mathbbm{1}_{\{\tau_{j-1}^0< t\leq \tau_j^0\}}\right]^2+\sigma^2 \nonumber \\
&&=\sum_{j=1}^{m+1}\left(L^{(j)}(t)-a^{(j)}x\right)^2\mathbbm{1}_{\{\tau_{j-1}^0< t\leq \tau_j^0\}}+\sigma^2
\leq \sum_{j=1}^{m+1}\left(L^{(j)}(t)-a^{(j)}x\right)^2+\sigma^2.
\end{eqnarray*}
Using the identity $(a+b)^2\leq 2a^2+2b^2$, we have
$$|\mu(t,x)|^2+|\sigma(t,x)|^2\leq 2\sum_{j=1}^{m+1}\left(L^{(j)}(t)\right)^2+\sigma^2+2\sum_{j=1}^{m+1}\left(a^{(j)}\right)^2x^2.$$
Since $\{\varphi_k(t),~i=1,~\ldots,~p\}$ are bounded, we can find a constant $K_b>0$ such that
$$\max\left(2\sum_{j=1}^{m+1}\left(L^{(j)}(t)\right)^2+\sigma^2,2\sum_{j=1}^{m+1}\left(a^{(j)}\right)^2\right) \leq K_b.$$
This gives
$$|\mu(t,x)|^2+|\sigma(t,x)|^2\leq K_b (1+x^2).$$
Moreover, let $K_c=\max(K_a,K_b)$. Then
$$|\mu(t,x)-\mu(t,y)|^2+|\sigma(t,x)-\sigma(t,y)|^2\leq K_c |x-y|^2$$
and
$$|\mu(t,x)|^2+|\sigma(t,x)|^2\leq K_c (1+x^2).$$
\end{proof}

\section{Proof of the propositions in section~\ref{mcpe}} \label{appendixb}
\begin{proof}[Proof of Proposition~\ref{prnm13}]
Let $\hat{u}_i$ be the residual of the $i$th element based on the estimated change points $\{{\hat{\tau}}_j\}$, $j=1,~\ldots,~m+1$, i.e., $\hat{u}_i=Y_i-\boldsymbol{\mathrm z}_i\boldsymbol{\mathrm{\hat{\uptheta}}}_i=\boldsymbol{\mathrm z}_i{\uptheta}_i-\boldsymbol{\mathrm z}_i\boldsymbol{\mathrm{\hat{\uptheta}}}_i+u_i$, for $t_i\in [0,T]$, where $\boldsymbol{\mathrm \uptheta}_i$ and $\boldsymbol{\mathrm{\hat{\uptheta}}}_i$ are defined in Section~\ref{mcpe} as the true value and MLE of the parameters
based on the assigned estimated change points of the coefficients associated with the $i$th element. Also, let $\hat{u}_i^0$ be the residual of the $i$th element based on the exact change points $\{\tau_j^0, j=1,~\ldots~,m+1\}$ and
$\boldsymbol{\mathrm{\hat{\uptheta}}}_i^{0}=\sum_{j=1}^{m+1}\boldsymbol{\mathrm{\hat{\uptheta}}}^{(j,0)}\mathbbm{1}(\tau_{j-1}^0\leq t_i\leq \tau_j^0)$ with $\boldsymbol{\mathrm{\hat{\uptheta}}}^{(j,0)}=\boldsymbol{\mathrm Q}_{(\tau_j^0,\tau_{j-1}^0)}^{-1}\boldsymbol{\mathrm{\tilde{r}}}_{(\tau_j^0,\tau_{j-1}^0)}$.\\
\ \\
The proof relies on investigating the behaviour of
\begin{equation}\label{prnm1-eq0}
\frac{1}{\phi}\left(\sum_{t_i\in[0,T]}\hat{u}_i'\hat{u}_i-\sum_{t_i\in [0,T]}\hat{u}_i^{0\prime}\hat{u}_i^0\right),
\end{equation}
where $\phi=T$. By (\ref{cpm1}), $(\ref{prnm1-eq0})\leq 0$ with probability 1. Hence, it remains to show that if one of the change points is not consistently estimated, $(\ref{prnm1-eq0})> 0$ with positive probability yielding a contradiction. \\
\ \\
\noindent Using the quadratic expansion $(a+b)^2=a^2+2ab+b^2$ and some algebraic computations, (\ref{prnm1-eq0}) can be expressed as
\begin{eqnarray}
(\ref{prnm1-eq0})&=&\frac{1}{T}\sum_{t_i\in[0,T]}\left(\boldsymbol{\mathrm z}_i\left(\boldsymbol{\mathrm \uptheta}_i-\boldsymbol{\mathrm{\hat{\uptheta}}}_i\right)
\right)^2-\frac{1}{T}\sum_{t_i\in[0,T]}\left(\boldsymbol{\mathrm z}_i\left(\boldsymbol{\mathrm \uptheta}_i-\boldsymbol{\mathrm{\hat{\uptheta}}}_i^0\right)\right)^2 \nonumber \\
&&~~+\frac{2}{T}\sum_{t_i\in[0,T]}\left(u_i'\boldsymbol{\mathrm z}_i\left(\boldsymbol{\mathrm \uptheta}_i-\boldsymbol{\mathrm{\hat{\uptheta}}}_i^0\right)\right)
-\frac{2}{T}\sum_{t_i\in[0,T]}\left(u_i'\boldsymbol{\mathrm z}_i\left(\boldsymbol{\mathrm \uptheta}_i-\boldsymbol{\mathrm{\hat{\uptheta}}}_i\right)\right).\label{prnm1-eq01}\end{eqnarray}
\ \\
\noindent We aim to show that if there exists a change point $\tau_j^0$, $j=1~\ldots,~m$, and {\color{black}it is not} consistently estimated, the first term $\frac{1}{T}\sum_{t_i\in[0,T]}\left(\boldsymbol{\mathrm z}_i\left(\boldsymbol{\mathrm \uptheta}_i-\boldsymbol{\mathrm{\hat{\uptheta}}}_i\right)\right)^2$ is larger than a positive constant with positive probability, whilst the rest of terms is of $o_p(1)$ and hence, $(\ref{prnm1-eq0})>0$ with positive probability. To this end, we first provide a lemma, which will be useful in deriving the asymptotic consistency of the estimated change points.\\
\ \\
{\bf Lemma C.1} If at least one of the change points, say  $\tau_{j}^0$, can not be consistently estimated, then
for large $T$,
$$\frac{1}{T}\sum_{t_i\in[0,T]}\left(\boldsymbol{\mathrm z}_i\left(\boldsymbol{\mathrm \uptheta}_i-\boldsymbol{\mathrm{\hat{\uptheta}}}_i\right)\right)^2\geq C_0\norm{\boldsymbol{\mathrm \uptheta}^{(j)}-\boldsymbol{\mathrm \uptheta}^{(j+1)}}^2
~~\mbox{with positive probability}.$$
\begin{proof}
If the change point $\tau_{j}^0$ {\color{black}is not} consistently estimated, then with some positive probability there exists an $\eta>0$ such that there is no estimated change point in $[\tau_{j}^0-\eta T,\tau_{j}^0+\eta T]$ for some $\eta>0$. Without loss of generality, let ${\hat{\tau}}_{k-1}\leq \tau_{j}^0-\eta T\leq\tau_{j}^0+\eta T\leq {\hat{\tau}}_k$. Then, since $(\boldsymbol{\mathrm z}_i({\uptheta}_i-\boldsymbol{\mathrm{\hat{\uptheta}}}_i
 ))^2\geq 0$ for each $i$,
\begin{eqnarray}
\frac{1}{T}\sum_{t_i\in[0,T]}\left(\boldsymbol{\mathrm z}_i\left(\boldsymbol{\mathrm \uptheta}_i-\boldsymbol{\mathrm{\hat{\uptheta}}}_i\right)\right)^2&\geq&\eta \left(\left(\boldsymbol{\mathrm \uptheta}^{(j)}-\boldsymbol{\mathrm{\hat{\uptheta}}}^{(k)}\right)^{\top}\frac{1}{\eta T}\sum_{t_i\in(\tau_{(j)}^0-\eta T,\tau_{j}^0]} \boldsymbol{\mathrm z}_i^{\top}\boldsymbol{\mathrm z}_i\left(\boldsymbol{\mathrm \uptheta}^{(j)}-\boldsymbol{\mathrm{\hat{\uptheta}}}^{(k)}\right)\right)+\eta \left(\left(\boldsymbol{\mathrm \uptheta}^{(j+1)}-\boldsymbol{\mathrm{\hat{\uptheta}}}^{(k)}\right)^{\top} \right. \nonumber\\&&\left. \times \frac{1}{\eta T}\sum_{t_i\in(\tau_{j}^0,\tau_{j}^0+\eta T]}\boldsymbol{\mathrm z}_i'\boldsymbol{\mathrm z}_i\left(\boldsymbol{\mathrm \uptheta}^{(j+1)}-\boldsymbol{\mathrm{\hat{\uptheta}}}^{(k)}\right)\right).\label{prnm1-eq02}\end{eqnarray}
\noindent Let $\gamma_1$ and $\gamma_2$ be the smallest eigenvalues of $\frac{1}{\eta T}\sum_{t_i\in(\tau_{j}^0-\eta T,\tau_{j}^0]}\boldsymbol{\mathrm z}_i^{\top}\boldsymbol{\mathrm z}_i$ and $\frac{1}{\eta T}\sum_{t_i\in(\tau_{j}^0,\tau_{j}^0+\eta T]} \boldsymbol{\mathrm z}_i^{\top}\boldsymbol{\mathrm z}_i$, respectively. Then,
\begin{eqnarray*}
(\ref{prnm1-eq02})&\geq&\eta \gamma_1\norm{\boldsymbol{\mathrm \uptheta}^{(j)}-\boldsymbol{\mathrm{\hat{\uptheta}}}^{(k)}}^2+\eta \gamma_2\norm{\boldsymbol{\mathrm \uptheta}^{(j+1)}-\boldsymbol{\mathrm{\hat{\uptheta}}}^{(k)}}^2
\geq\eta \min (\gamma_1, \gamma_2)(\norm{\boldsymbol{\mathrm \uptheta}^{(j)}-\boldsymbol{\mathrm{\hat{\uptheta}}}^{(k)}}^2+\norm{\boldsymbol{\mathrm \uptheta}^{(j+1)}-\boldsymbol{\mathrm{\hat{\uptheta}}}^{(k)}}^2).\end{eqnarray*}
Using the convexity of a quadratic function, we have
$$\norm{\boldsymbol{\mathrm \uptheta}^{(j)}-\boldsymbol{\mathrm{\hat{\uptheta}}}^{(k)}}^2+\norm{\boldsymbol{\mathrm \uptheta}^{(j+1)}-\boldsymbol{\mathrm{\hat{\uptheta}}}^{(k)}}^2\geq\frac{1}{2}\norm{\boldsymbol{\mathrm \uptheta}^{(j)}-\boldsymbol{\mathrm \uptheta}^{(j+1)}}^2.$$
Hence,
$$\frac{1}{T}\sum_{t_i\in[0,T]}\left(\boldsymbol{\mathrm z}_i\left(\boldsymbol{\mathrm \uptheta}_i-\boldsymbol{\mathrm{\hat{\uptheta}}}_i\right)\right)^2\geq\displaystyle \eta \frac{\min (\gamma_1, \gamma_2)}{2}\norm{\boldsymbol{\mathrm \uptheta}^{(j)}-\boldsymbol{\mathrm \uptheta}^{(j+1)}}^2.$$
Under Assumption~\ref{asm4}, $\gamma_1$ and $\gamma_2$ are both bounded away from 0 and $\min (\gamma_1, \gamma_2)$ is also bounded away from 0. Therefore, the right-hand side of the above inequality is positive. Then, the proof is complete by letting $\displaystyle C_0=\eta \frac{\min (\gamma_1, \gamma_2)}{2}$.
\end{proof}
\noindent {\bf Lemma C.2} Under Assumptions \ref{asm1}--\ref{asm3}, $\frac{1}{T}\boldsymbol{\mathrm Q}_{(\hat{s}_{j-1}T,\hat{s}_j T)}
\xrightarrow[T\rightarrow \infty]{a.s.}(\hat{s}_j-\hat{s}_{j-1})\boldsymbol{\mathrm \Sigma}_j$ for $s_{j-1}^0\leq \hat{s}_{j-1}<\hat{s}_j\leq s_{j}^0$. \\

\noindent Lemma C.2. directly follows from Proposition 2.2.6 in Zhang~(2015) with $s_{j-1}^0$ and $s_j^0$ replaced by $\hat{s}_{j-1}$ and $\hat{s}_j$, respectively. To emphasise again, both lemmas~C.1 and C.2 are key in proving the asymptotic properties of the proposed estimators. \\
\\
\noindent Next, note that for $t_i\in (\tau_{j-1}^0,\tau_j^0]$, $j=1,~\ldots,~m$, we have
$\boldsymbol{\mathrm \uptheta}_i=\boldsymbol{\mathrm \uptheta}^{(j)}$ and $\boldsymbol{\mathrm{\hat{\uptheta}}}_i^{0}=\boldsymbol{\mathrm \uptheta}^{(j)}+\boldsymbol{\mathrm Q}_{({\tau}_{j-1}^0,\tau_j^0)}^{-1}\sigma \boldsymbol{\mathrm r}_{({\tau}_{j-1}^0,\tau_j^0)}$. Substituting these expressions into $\frac{1}{T}\sum_{t_i\in[0,T]}\left(\boldsymbol{\mathrm z}_i\left(\boldsymbol{\mathrm \uptheta}_i-\boldsymbol{\mathrm{\hat{\uptheta}}}_i^{0}\right)\right)^2$, we get
\begin{eqnarray}
\sum_{j=1}^{m}\frac{1}{T}\boldsymbol{\mathrm r}_{(\tau_{j-1}^0,\tau_j^0)}^{\top}
\boldsymbol{\mathrm Q}_{(\tau_{j-1}^0,\tau_j^0)}^{-1}\sum_{t_i\in(\tau_{j-1}^0,\tau_j^0]}\boldsymbol{\mathrm z}_i^{\top}\boldsymbol{\mathrm z}_i\boldsymbol{\mathrm Q}_{(\tau_{j-1}^0,\tau_j^0)}^{-1} \boldsymbol{\mathrm r}_{(\tau_{j-1}^0,\tau_j^0)}.\label{prnm1-eq03}
\end{eqnarray}
\ \\
To proceed further, we first prove the following inequality.
Suppose $0<\tau_1^*<\tau_2^*\leq T$. By the Markov inequality, It\^o's isometry and (\ref{sol2}), we have
\begin{eqnarray}\label{convm-R1}
\textrm{P}\left(\frac{|\int_{\tau_1^*}^{{\tau_2}^*}X_td W_t|}{\sqrt{\tau_2^*-\tau_1^*}}>K^*\right)
\leq \frac{\mathrm{E}\left(|\int_{\tau_1^*}^{\tau_2^*}X_td W_t|^2\right)}{(\tau_2^*-\tau_1^*)(K^*)^2}
=\frac{\int_{\tau_1^*}^{{\tau}_2^*}E(X_t^2)dt}{(\tau_2^*-\tau_1^*)(K^*)^2}
\leq \frac{K_1(\tau_2^*-\tau_1^*)}{(\tau_2^*-\tau_1^*)(K^*)^2}=\frac{K_1}{(K^*)^2}.
\end{eqnarray}
\noindent Therefore, by letting $K^*=(\log T)^{a^*}$, for some $0<a^*<1/2$, the above probability tends to 0 as $T$ tends to infinity. This implies that for some $0<a^*<1/2$,
\begin{equation}\label{convm_R2}\frac{1}{\sqrt{\tau_2^*-\tau_1^*}}\norm{\boldsymbol{\mathrm r}_{(\tau_1^*,\tau_2^*)}}
=O_p(\log^{a^*} T)\quad \textrm{for any}
\quad 0<\tau_1^*<\tau_2^*\leq T.\end{equation}
\ \\
Now, continuing the proof of Proposition 4.1, we note that under Assumption \ref{asm3},
$\displaystyle{\sup_{0\leq t\leq T}|\varphi_k(t)|\leq K_{\varphi_k}<\infty}$ for $i=1,~\ldots,~p$.
By similar argument used to obtain (\ref{convm_R2}), we have $\frac{1}{\sqrt{\tau_2^*-\tau_1^*}}\int_{{\hat{\tau}}_{j-1}}^{{\hat{\tau}}_j}\varphi_k(t)d W_t
=O_p\left( (\log T)^{a^*}\right)$ for $i=1,~\ldots~,p$ so that $\norm{\frac{1}{\sqrt{\tau_2^*-\tau_1^*}}\boldsymbol{\mathrm r}_{({\hat{\tau}}_{j-1},{\hat{\tau}}_{j})}}=O_p\left((\log T)^{a^*}\right)$.
Moreover, since $\sum_{t_i\in({\tau}_{1}^*,{\tau}_2^*]}u_i^{\top}\boldsymbol{\mathrm z}_i$  is the discretised versions of $\boldsymbol{\mathrm r}_{({\tau}_{1}^*,{\tau}_2^*)}$, it similarly follows that $\norm{\frac{1}{\sqrt{\tau_2^*-\tau_1^*}}\sum_{t_i\in({\tau}_{1}^*,{\tau}_2^*]}u_i^{\top}\boldsymbol{\mathrm z}_i}=O_p\left(
(\log T)^{a^*} \right)$. Also, $\sum_{t_i\in(\tau_{j-1}^0,\tau_j^0]}\boldsymbol{\mathrm z}_i^{\top}\boldsymbol{\mathrm z}_i$ is the discretised version of $\boldsymbol{\mathrm Q}_{(\tau_{j-1}^0,\tau_j^0)}$, thus the asymptotic results in (\ref{convQ-m1}) and  (\ref{convQ-m2}) also hold for $\sum_{t_i\in(\tau_{j-1}^0,\tau_j^0]}\boldsymbol{\mathrm z}_i^{\top}\boldsymbol{\mathrm z}_i$. Therefore, by the Cauchy-Schwarz inequality, (\ref{convQ-m1}) and (\ref{convQ-m2}) we have, after some algebraic manipulations,
\begin{eqnarray}
(\ref{prnm1-eq03})\leq \sum_{j=1}^m\frac{1}{T} \norm{\frac{1}{\sqrt{(s_j^0-s_{j-1}^0)T}}\boldsymbol{\mathrm r}_{(\tau_{j-1}^0,\tau_j^0)}}^2
\norm{(s_{j}^0-s_{j-1}^0)T\boldsymbol{\mathrm Q}_{(\tau_{j-1}^0,\tau_j^0)}^{-1}}^2
\norm{\frac{1}{(s_{j}^0-s_{j-1}^0)T}\sum_{t_i\in(\tau_{j-1}^0,\tau_j^0]}\boldsymbol{\mathrm z}_i^{\top}\boldsymbol{\mathrm z}_i}=o_p(1).\quad \label{prnm1-eq3-1}
\end{eqnarray}
It remains to investigate the quantity
\begin{equation}\frac{2}{T}\sum_{t_i\in[0,T]}(u_i^{\top}\boldsymbol{\mathrm z}_i({\uptheta}_i-\boldsymbol{\mathrm{\hat{\uptheta}}}_i^0))
-\frac{2}{T}\sum_{t_i\in[0,T]}(u_i^{\top}\boldsymbol{\mathrm z}_i({\uptheta}_i-\boldsymbol{\mathrm{\hat{\uptheta}}}_i))
=\frac{2}{T}\sum_{t_i\in[0,T]}(u_i^{\top}\boldsymbol{\mathrm z}_i(\boldsymbol{\mathrm{\hat{\uptheta}}}_i-\boldsymbol{\mathrm{\hat{\uptheta}}}_i^0))\label{prnm1-eq04}.\end{equation}
\ \\
Note that the structure of $\boldsymbol{\mathrm{\hat{\uptheta}}}_i$ is affected by the location of the estimated change points. It is, therefore, difficult to substitute the expressions for $\boldsymbol{\mathrm{\hat{\uptheta}}}_i$ into (\ref{prnm1-eq04}) directly. Without loss of generality, we consider $m=2$ (but the procedure can be extended to the general case $m>0$), and assume that $0=\tau_0<{\hat{\tau}}_1<{\tau}_1^0<{\tau}_2^0<{\hat{\tau}}_2<\tau_3^0=T$. Other cases can be analysed in a similar manner. With $m=2$, (\ref{prnm1-eq04}) reduces to
\begin{eqnarray}
(\ref{prnm1-eq04})&=&\frac{2}{T}\sum_{t_i\in (0,{\hat{\tau}}_1]}\left(u_i\boldsymbol{\mathrm z}_i(\boldsymbol{\mathrm{\hat{\uptheta}}}^{(1,0)}-\boldsymbol{\mathrm{\hat{\uptheta}}}^{(1)})\right)
+\frac{2}{T}\sum_{t_i\in ({\hat{\tau}}_1,\tau_1^0]}\left(u_i\boldsymbol{\mathrm z}_i\left(\boldsymbol{\mathrm{\hat{\uptheta}}}^{(1,0)}-\boldsymbol{\mathrm{\hat{\uptheta}}}^{(2)}\right)\right)
+\frac{2}{T}\sum_{t_i\in (\tau_1^0,\tau_2^0]}\left(u_i\boldsymbol{\mathrm z}_i\left(\boldsymbol{\mathrm{\hat{\uptheta}}}^{(2,0)}-\boldsymbol{\mathrm{\hat{\uptheta}}}^{(2)}\right)\right)\nonumber\\
&&~~+\frac{2}{T}\sum_{t_i\in ({\tau}_2^0,{\hat{\tau}}_2]}\left(u_i\boldsymbol{\mathrm z}_i\left(\boldsymbol{\mathrm{\hat{\uptheta}}}^{(3,0)}-\boldsymbol{\mathrm{\hat{\uptheta}}}^{(2)}\right)\right)+\frac{2}{T}\sum_{t_i\in ({\hat{\tau}}_2,T]}\left( u_i\boldsymbol{\mathrm z}_i\left(\boldsymbol{\mathrm{\hat{\uptheta}}}^{(3,0)}-\boldsymbol{\mathrm{\hat{\uptheta}}}^{(3)}\right)\right),
\end{eqnarray}
where $\boldsymbol{\mathrm{\hat{\uptheta}}}^{(1)}=\boldsymbol{\mathrm \uptheta}^{(1)}+\sigma \boldsymbol{\mathrm Q}_{(0,{\hat{\tau}}_1)}^{-1}  \boldsymbol{\mathrm r}_{(0,{\hat{\tau}}_1)}$,
$\boldsymbol{\mathrm{\hat{\uptheta}}}^{(2)}=\boldsymbol{\mathrm Q}_{({\hat{\tau}}_1,{\hat{\tau}}_2)}^{-1}(\boldsymbol{\mathrm Q}_{({\hat{\tau}}_1,{\tau}_1^0)}\boldsymbol{\mathrm \uptheta}^{(1)}+\boldsymbol{\mathrm Q}_{({\tau}_1^0,{\tau}_2^0)}\boldsymbol{\mathrm \uptheta}^{(2)}+\boldsymbol{\mathrm Q}_{({\tau}_2^0,{\hat{\tau}}_2)}\boldsymbol{\mathrm \uptheta}^{(3)}+\sigma \boldsymbol{\mathrm r}_{({\hat{\tau}}_1,{\hat{\tau}}_2)})$,
$\boldsymbol{\mathrm{\hat{\uptheta}}}^{(3)}=\boldsymbol{\mathrm Q}_{({\hat{\tau}}_2,T)}^{-1}(\boldsymbol{\mathrm Q}_{({\hat{\tau}}_2,T)}\boldsymbol{\mathrm \uptheta}^{(3)}+\sigma \boldsymbol{\mathrm r}_{({\hat{\tau}}_2,T)})$,  $\boldsymbol{\mathrm{\hat{\uptheta}}}^{(1,0)}=\boldsymbol{\mathrm Q}_{(0,{\tau}_1^0)}^{-1}(\boldsymbol{\mathrm Q}_{(0,{\tau_1^0})}\boldsymbol{\mathrm \uptheta}^{(1)}+\sigma \boldsymbol{\mathrm r}_{(0,{\tau}_1^0)})$, $\boldsymbol{\mathrm{\hat{\uptheta}}}^{(2,0)}=\boldsymbol{\mathrm Q}_{({\tau}_1^0,{\tau}_2^0)}^{-1}(\boldsymbol{\mathrm Q}_{({\tau}_1^0,\tau_2^0)}\boldsymbol{\mathrm \uptheta}^{(2)}+\sigma \boldsymbol{\mathrm r}_{({\tau}_1^0,{\tau}_2^0)})$, and $\boldsymbol{\mathrm{\hat{\uptheta}}}^{(3,0)}=\boldsymbol{\mathrm Q}_{({\tau}_2^0,T)}^{-1}(\boldsymbol{\mathrm Q}_{({\tau}_2^0,T)}\boldsymbol{\mathrm \uptheta}^{(2)}+\sigma \boldsymbol{\mathrm r}_{({\tau}_2^0,T)})$.\\
\ \\
Using the Cauchy-Schwarz inequality, (\ref{convm_R2}), Lemma~C.2 along with the Continuous Mapping Theorem, the first term in (\ref{prnm1-eq04}) is bounded above, i.e.,
\begin{eqnarray*}
\frac{2}{T}\sum_{t_i\in [0,{\hat{\tau}}_1]}(u_i'\boldsymbol{\mathrm z}_i(\boldsymbol{\mathrm{\hat{\uptheta}}}^{(1,0)}-\boldsymbol{\mathrm{\hat{\uptheta}}}^{(1)}))
\leq\frac{2\sqrt{\hat{s}_1}}{\sqrt{s_1^0}T}\norm{\frac{1}{\sqrt{\hat{s}_1T}}\sum_{t_i\in [0,{\hat{\tau}}_1]}u_i^{\top}\boldsymbol{\mathrm z}_i} \norm{s_1^0T\boldsymbol{\mathrm Q}_{(0,\tau_1^0)}^{-1}}
\norm{\frac{1}{\sqrt{s_1^0T}}\boldsymbol{\mathrm r}_{(0,\tau_1^0)}}
\\-\frac{2}{T}
\norm{\frac{1}{\sqrt{\hat{s}_1T}}\sum_{t_i\in [0,{\hat{\tau}}_1]}u_i^{\top}\boldsymbol{\mathrm z}_i||\hat{s}_1T\boldsymbol{\mathrm Q}_{(0,{\hat{\tau}}_1)}^{-1}}
\norm{\frac{1}{\sqrt{\hat{s}_1T}}\boldsymbol{\mathrm r}_{(0,{\hat{\tau}}_1)}}=o_p(1).\end{eqnarray*}
Similarly, one can show that the rest of the terms in (\ref{prnm1-eq04}) are all of $o_p(1)$. Following these arguments, it can be shown that in {\color{black}the} general case when $m>0$, the terms in (\ref{prnm1-eq04}) are all of order $o_p(1)$. So, if one {\color{black}of the change points' arrival rates}, say $s_j^0$, is not consistently estimated, (\ref{prnm1-eq0}) is dominated by the first term, which is larger than 0 with positive probability. This gives a contradiction. Therefore, $\hat{s}_j-s_j^0\xrightarrow[T\rightarrow \infty]{p}0$ for every $j=1,~\ldots,~m$.
\end{proof}

\begin{proof}[Proof of Proposition~\ref{prnm14}]
\noindent Without loss of generality, assume that $m=3$. We provide an explicit proof dealing with the rate $T$-consistency for ${\hat{\tau}}_2$ only. The consistency analysis for ${\hat{\tau}}_1$ and ${\hat{\tau}}_3$ can be similarly completed. By Proposition~\ref{prnm13}, ${\hat{\tau}}_j \in \{\tau: |\tau_j-\tau_j^0|\leq \eta T, 1\leq j\leq m\}$ for each $\eta>0$. For $C>0$, define the set $V_{\eta}(C):=\{{\tau}: |{\tau_j}-{\tau_j^0}|\leq \eta T, 1\leq j\leq m;~ |\tau_2-\tau_2^0|>C \}$. \\
\ \\
Let $SSE_1=SSE(\tau_1,\tau_2,\tau_3)=\sum_{t_i\in[0,T]}(Y_i-\boldsymbol{\mathrm z}_i\boldsymbol{\mathrm{\hat{\uptheta}}}_i)^2$, where $\boldsymbol{\mathrm{\hat{\uptheta}}}_i$ is the MLE of $\uptheta$ associated with the $i$th element under the change point $(\tau_1,\tau_2,\tau_3)$, and let $SSE_2=SSE(\tau_1,\tau_2^0,\tau_3)$. So, we have $\min_{(\tau_1,\tau_2,\tau_3)}(SSE_1-SSE_2)\leq 0$ with probability 1. If we can show that  for each $\epsilon>0$, there exists a $C>0$, $0<\eta<1$  such that for large $T$ and any ${\tau}\in V_{\eta}(C)$, $P(\min_{{\tau}\in V_\eta(C)}(SSE_1-SSE_2)>0)<\epsilon$, this would imply that for some $C>0$, the global optimisation can not be achieved on the set $V_{\eta}(C)$. Thus with large probability, $|\boldsymbol{\hat{\mathrm \tau}}-\tau^0|\leq C$. \\
\ \\
Let $\boldsymbol{\hat{\mathrm \tau}}=\arg \min_{\tau\in V_{\eta}(C)}(SSE_1-SSE_2)$.
Assume, without loss of generality, that ${\hat{\tau}}_1\leq\tau_1^0<\tau_2^0<{\hat{\tau}}_2<{\hat{\tau}}_3\leq \tau_3^0$. We now focus on the behaviour of
\begin{equation}\label{prnm2-eq0}
\left(SSE_1-SSE_2\right)/({\hat{\tau}}_2-\tau_2^0).
\end{equation}
Applying the identity $(a+b)^2=a^2+2ab+b^2$ into (\ref{prnm2-eq0}) and then breaking the time period $[0,T]$ into different intervals, we have
\begin{eqnarray}
&&SSE_1-SSE_2\nonumber\\&=&-\sum_{t_i\in({\hat{\tau}}_1,{\tau}_2^0]}\left(\boldsymbol{\mathrm z}_i\left(\boldsymbol{\mathrm \uptheta}^{(1)}-\boldsymbol{\mathrm{\hat{\uptheta}}}^{(2,0)}\right)\right)^2
-\sum_{t_i\in (\tau_1^0,\tau_2^0]}\left(\boldsymbol{\mathrm z}_i\left(\boldsymbol{\mathrm \uptheta}^{(2)}-\boldsymbol{\mathrm{\hat{\uptheta}}}^{(2,0)}\right)\right)^2
 -\sum_{t_i\in ({\tau}_2^0,\hat{\tau}_3]}\left(\boldsymbol{\mathrm z}_i\left(\boldsymbol{\mathrm \uptheta}^{(3)}-\boldsymbol{\mathrm{\hat{\uptheta}}}^{(3,0)}\right)\right)^2\label{prnm14-1-2}\\
&&+\sum_{t_i\in (\hat{\tau}_1,\tau_2^0]}2u_i^{\top}\boldsymbol{\mathrm z}_i\left(\boldsymbol{\mathrm{\hat{\uptheta}}}^{(2,0)}-\boldsymbol{\mathrm{\hat{\uptheta}}}^{(2)}\right)
+\sum_{t_i\in({\tau}_2^0,{\hat{\tau}}_2]}2u_i^{\top}\boldsymbol{\mathrm z}_i\left(\boldsymbol{\mathrm{\hat{\uptheta}}}^{(3,0)}-\boldsymbol{\mathrm{\hat{\uptheta}}}^{(2)}\right)
+\sum_{t_i\in ({\hat{\tau}}_2,{\hat{\tau}}_3]}2u_i^{\top}\boldsymbol{\mathrm z}_i\left(\boldsymbol{\mathrm{\hat{\uptheta}}}^{(3,0)}-\boldsymbol{\mathrm{\hat{\uptheta}}}^{(3)}\right)\label{prnm14-1-3}\\
&&+\sum_{t_i\in (\hat{\tau}_1,\tau_1^0]}\left(\boldsymbol{\mathrm z}_i\left(\boldsymbol{\mathrm \uptheta}^{(1)}-\boldsymbol{\mathrm{\hat{\uptheta}}}^{(2)}\right)\right)^2
+\sum_{t_i\in(\tau_1^0,\tau_2^0]}\left(\boldsymbol{\mathrm z}_i\left(\boldsymbol{\mathrm \uptheta}^{(2)}-\boldsymbol{\mathrm{\hat{\uptheta}}}^{(2)}\right)\right)^2+\sum_{t_i\in(\tau_2^0,{\hat{\tau}}_2]}\left(\boldsymbol{\mathrm z}_i\left(\boldsymbol{\mathrm \uptheta}^{(3)}-\boldsymbol{\mathrm{\hat{\uptheta}}}^{(2)}\right)\right)^2\nonumber\\&&+\sum_{t_i\in({\hat{\tau}}_2,{\hat{\tau}}_3]}\left(\boldsymbol{\mathrm z}_i\left(\boldsymbol{\mathrm \uptheta}^{(3)}-\boldsymbol{\mathrm{\hat{\uptheta}}}^{(3)}\right)\right)^2\qquad\quad \label{prnm14-1-1}
\end{eqnarray}
where
$\boldsymbol{\mathrm{\hat{\uptheta}}}^{(2)}=\boldsymbol{\mathrm Q}_{({\hat{\tau}}_1,{\hat{\tau}}_2)}^{-1}\left(\boldsymbol{\mathrm Q}_{({\hat{\tau}}_1,\tau_1^0)}\boldsymbol{\mathrm \uptheta}^{(1)}+\boldsymbol{\mathrm Q}_{(\tau_1^0,{\tau}_2^0)}\boldsymbol{\mathrm \uptheta}^{(2)}+\boldsymbol{\mathrm Q}_{(\tau_2^0,{\hat{\tau}}_2)}\boldsymbol{\mathrm \uptheta}^{(3)}+\sigma \boldsymbol{\mathrm r}_{({\hat{\tau}}_1,{\hat{\tau}}_2)}\right)$,
$\boldsymbol{\mathrm{\hat{\uptheta}}}^{(3)}=\boldsymbol{\mathrm Q}_{({\hat{\tau}}_2,{\hat{\tau}}_3)}^{-1}\left(\boldsymbol{\mathrm Q}_{({\hat{\tau}}_2,\hat{\tau_3})}\boldsymbol{\mathrm \uptheta}^{(3)}+\sigma \boldsymbol{\mathrm r}_{({\hat{\tau}}_2,{\hat{\tau}}_3)}\right)$,
$\boldsymbol{\mathrm{\hat{\uptheta}}}^{(2,0)}=\boldsymbol{\mathrm Q}_{({\hat{\tau}}_1,{\tau}_2^0)}^{-1}\left(\boldsymbol{\mathrm Q}_{({\hat{\tau}}_1,{\tau_1^0})}\boldsymbol{\mathrm \uptheta}^{(1)}+\boldsymbol{\mathrm Q}_{({\tau}_1^0,\tau_2^0)}\boldsymbol{\mathrm \uptheta}^{(2)}+\sigma \boldsymbol{\mathrm r}_{({\hat{\tau}}_1,{\tau}_2^0)}\right),$
and
$\boldsymbol{\mathrm{\hat{\uptheta}}}^{(3,0)}=\boldsymbol{\mathrm Q}_{(\tau_2^0, {\hat{\tau}}_3)}^{-1}\left(\boldsymbol{\mathrm Q}_{({\tau}_2^0,{\hat{\tau}}_3)}\boldsymbol{\mathrm \uptheta}^{(3)}+\sigma \boldsymbol{\mathrm r}_{({\tau}_2^0, {\hat{\tau}}_3)}\right).$\\
\ \\
It remains to show that for large $T$,  $\displaystyle \frac{(\ref{prnm14-1-1})+(\ref{prnm14-1-2})+(\ref{prnm14-1-3})}{{\hat{\tau}}_2-\tau_2^0}$ is positive with probability 1. Note that this depends on the choice of ${\hat{\tau}}_2\in V_\eta(C)$; $(\ref{prnm14-1-2})/({\hat{\tau}}_2-\tau_2^0)$ can be $O_p(1)$ instead of $o_p(1)$, and thus the arguments we used in the proof of Proposition~\ref{prnm13} cannot be applied here directly. To overcome this problem, we need to expand each term in (\ref{prnm14-1-1}). We note that
\begin{eqnarray*}
\boldsymbol{\mathrm \uptheta}^{(1)}-\boldsymbol{\mathrm{\hat{\uptheta}}}^{(2)}&=&\boldsymbol{\mathrm Q}_{({\hat{\tau}}_1,{\hat{\tau}}_2)}^{-1}
\left(\boldsymbol{\mathrm Q}_{(\tau_1^0,{\tau}_2^0)}\left(\boldsymbol{\mathrm \uptheta}^{(1)}-\boldsymbol{\mathrm \uptheta}^{(2)}\right)+\boldsymbol{\mathrm Q}_{(\tau_2^0,{\hat{\tau}}_2)}\left(\boldsymbol{\mathrm \uptheta}^{(1)}-\boldsymbol{\mathrm \uptheta}^{(3)}\right)\right)+\sigma \boldsymbol{\mathrm Q}_{({\hat{\tau}}_1,{\hat{\tau}}_2)}^{-1} \boldsymbol{\mathrm r}_{({\hat{\tau}}_1,{\hat{\tau}}_2)}.
\end{eqnarray*}
So,
\begin{eqnarray*}
\sum_{t_i\in({\hat{\tau}}_1,{\tau}_2^0]}\left( \boldsymbol{\mathrm z}_i(\boldsymbol{\mathrm \uptheta}^{(1)}-\boldsymbol{\mathrm{\hat{\uptheta}}}^{(2)})\right)^2
=\left( \boldsymbol{\mathrm \uptheta}^{(1)}-\boldsymbol{\mathrm \uptheta}^{(2)} \right)^{\top}\boldsymbol{\mathrm Q}_{(\tau_1^0,{\tau}_2^0)}\boldsymbol{\mathrm Q}_{({\hat{\tau}}_1,{\hat{\tau}}_2)}^{-1}
\sum_{t_i\in({\hat{\tau}}_1,{\tau}_2^0]}\boldsymbol{\mathrm z}_i^{\top}\boldsymbol{\mathrm z}_i\boldsymbol{\mathrm Q}_{({\hat{\tau}}_1,{\hat{\tau}}_2)}^{-1}\boldsymbol{\mathrm Q}_{(\tau_1^0,{\tau}_2^0)}
\left(\boldsymbol{\mathrm \uptheta}^{(1)}-\boldsymbol{\mathrm \uptheta}^{(2)}\right)
\\+2\left(\boldsymbol{\mathrm \uptheta}^{(1)}-\boldsymbol{\mathrm \uptheta}^{(2)}\right)^{\top}\boldsymbol{\mathrm Q}_{(\tau_1^0,{\tau}_2^0)}\boldsymbol{\mathrm Q}_{({\hat{\tau}}_1,{\hat{\tau}}_2)}^{-1}\sum_{t_i\in({\hat{\tau}}_1,{\tau}_2^0]}
\boldsymbol{\mathrm z}_i^{\top}\boldsymbol{\mathrm z}_i\boldsymbol{\mathrm Q}_{({\hat{\tau}}_1,{\hat{\tau}}_2)}^{-1}\boldsymbol{\mathrm Q}_{(\tau_2^0,{\hat{\tau}}_2)}\left(\boldsymbol{\mathrm \uptheta}^{(1)}-\boldsymbol{\mathrm \uptheta}^{(3)}\right)
+2\left(\boldsymbol{\mathrm \uptheta}^{(1)}-\boldsymbol{\mathrm \uptheta}^{(3)}\right)^{\top}\boldsymbol{\mathrm Q}_{(\tau_2^0,{\hat{\tau}}_2)}\boldsymbol{\mathrm Q}_{({\hat{\tau}}_1,{\hat{\tau}}_2)}^{-1}\\ \times
\sum_{t_i\in({\hat{\tau}}_1,{\tau}_2^0]}\boldsymbol{\mathrm z}_i^{\top}\boldsymbol{\mathrm z}_i\boldsymbol{\mathrm Q}_{({\hat{\tau}}_1,{\hat{\tau}}_2)}^{-1}
\boldsymbol{\mathrm Q}_{(\tau_2^0,{\hat{\tau}}_2)}
\left(\boldsymbol{\mathrm \uptheta}^{(1)}-\boldsymbol{\mathrm \uptheta}^{(3)}\right)
+2\sigma \left(\boldsymbol{\mathrm \uptheta}^{(1)}-\boldsymbol{\mathrm \uptheta}^{(2)}\right)^{\top}\boldsymbol{\mathrm Q}_{(\tau_1^0,{\tau}_2^0)}\boldsymbol{\mathrm Q}_{({\hat{\tau}}_1,{\hat{\tau}}_2)}^{-1}
\sum_{t_i\in({\hat{\tau}}_1,{\tau}_2^0]}\boldsymbol{\mathrm z}_i^{\top}\boldsymbol{\mathrm z}_i\boldsymbol{\mathrm Q}_{({\hat{\tau}}_1,{\hat{\tau}}_2)}^{-1}\boldsymbol{\mathrm r}_{({\hat{\tau}}_1,{\hat{\tau}}_2)}
\\+2\sigma \left(\boldsymbol{\mathrm \uptheta}^{(1)}-\boldsymbol{\mathrm \uptheta}^{(3)}\right)\boldsymbol{\mathrm Q}_{(\tau_2^0,{\hat{\tau}}_2)}\boldsymbol{\mathrm Q}_{({\hat{\tau}}_1,{\hat{\tau}}_2)}^{-1}
 \sum_{t_i\in({\hat{\tau}}_1,{\tau}_2^0]}\boldsymbol{\mathrm z}_i^{\top}\boldsymbol{\mathrm z}_i \boldsymbol{\mathrm Q}_{({\hat{\tau}}_1,{\hat{\tau}}_2)}^{-1}\boldsymbol{\mathrm r}_{({\hat{\tau}}_1,{\hat{\tau}}_2)}
+\boldsymbol{\mathrm r}_{({\hat{\tau}}_1,{\hat{\tau}}_2)}^{\top}\boldsymbol{\mathrm Q}_{({\hat{\tau}}_1,{\hat{\tau}}_2)}^{-1}\sum_{t_i\in({\hat{\tau}}_1,{\tau}_2^0]}
\boldsymbol{\mathrm z}_i^{\top}\boldsymbol{\mathrm z}_i\boldsymbol{\mathrm Q}_{({\hat{\tau}}_1,{\hat{\tau}}_2)}^{-1}\boldsymbol{\mathrm r}_{({\hat{\tau}}_1,{\hat{\tau}}_2)},
\end{eqnarray*}
and
\begin{eqnarray*}
\sum_{t_i\in({\hat{\tau}}_1,{\tau}_2^0]}\left(\boldsymbol{\mathrm z}_i\left(\boldsymbol{\mathrm \uptheta}^{(1)}-\boldsymbol{\mathrm z}_i\boldsymbol{\mathrm{\hat{\uptheta}}}^{(2,0)}\right)\right)^2
=\left(\boldsymbol{\mathrm \uptheta}^{(1)}-\boldsymbol{\mathrm \uptheta}^{(2)}\right)^{\top}\boldsymbol{\mathrm Q}_{(\tau_1^0,{\tau}_2^0)}\boldsymbol{\mathrm Q}_{({\hat{\tau}}_1,\tau_2^0)}^{-1}
\sum_{t_i\in ({\hat{\tau}}_1,{\tau}_2^0]}\boldsymbol{\mathrm z}_i^{\top}\boldsymbol{\mathrm z}_i\boldsymbol{\mathrm Q}_{({\hat{\tau}}_1,\tau_2^0)}^{-1}
\boldsymbol{\mathrm Q}_{(\tau_1^0,{\tau}_2^0)}\left(\boldsymbol{\mathrm \uptheta}^{(1)}-\boldsymbol{\mathrm \uptheta}^{(2)}\right)\\
+2\sigma \left(\boldsymbol{\mathrm \uptheta}^{(1)}-\boldsymbol{\mathrm \uptheta}^{(2)}\right)^{\top} \boldsymbol{\mathrm Q}_{(\tau_1^0,{\tau}_2^0)}\boldsymbol{\mathrm Q}_{({\hat{\tau}}_1,\tau_2^0)}^{-1}\sum_{t_i\in({\hat{\tau}}_1,{\tau}_2^0]}
\boldsymbol{\mathrm z}_i^{\top}\boldsymbol{\mathrm z}_i\boldsymbol{\mathrm Q}_{({\hat{\tau}}_1,\tau_2^0)}^{-1}\boldsymbol{\mathrm r}_{({\hat{\tau}}_1,\tau_2^0)}
+\boldsymbol{\mathrm r}_{({\hat{\tau}}_1,\tau_2^0)}^{\top}\boldsymbol{\mathrm Q}_{({\hat{\tau}}_1,\tau_2^0)}^{-1}\sum_{t_i\in({\hat{\tau}}_1,{\tau}_2^0]}
\boldsymbol{\mathrm z}_i^{\top}\boldsymbol{\mathrm z}_i\boldsymbol{\mathrm Q}_{({\hat{\tau}}_1,\tau_2^0)}^{-1}\boldsymbol{\mathrm r}_{({\hat{\tau}}_1,\tau_2^0)}.
\end{eqnarray*}
By Tobing and McGlichrist (1992), we have \begin{equation}\label{TM-1}\boldsymbol{\mathrm Q}_{({\hat{\tau}}_1,{\hat{\tau}}_2)}^{-1}=\boldsymbol{\mathrm Q}_{({\hat{\tau}}_1,{\tau_2^0})}^{-1}+O_p\left(\frac{{\hat{\tau}}_2-\tau_2^0}{T^2}\right).\end{equation} Hence,
\begin{eqnarray}
&&\left(\boldsymbol{\mathrm \uptheta}^{(1)}-\boldsymbol{\mathrm \uptheta}^{(2)}\right)^{\top}\boldsymbol{\mathrm Q}_{(\tau_1^0,{\tau}_2^0)}
\boldsymbol{\mathrm Q}_{({\hat{\tau}}_1,{\hat{\tau}}_2)}^{-1}\sum_{t_i\in({\hat{\tau}}_1,{\tau}_2^0]}
\boldsymbol{\mathrm z}_i^{\top}\boldsymbol{\mathrm z}_i\boldsymbol{\mathrm Q}_{({\hat{\tau}}_1,{\hat{\tau}}_2)}^{-1}\boldsymbol{\mathrm Q}_{(\tau_1^0,{\tau}_2^0)}\left(\boldsymbol{\mathrm \uptheta}^{(1)}-\boldsymbol{\mathrm \uptheta}^{(2)}\right)
-\left(\boldsymbol{\mathrm \uptheta}^{(1)}-\boldsymbol{\mathrm \uptheta}^{(2)}\right)^{\top}\boldsymbol{\mathrm Q}_{(\tau_1^0,{\tau}_2^0)}\boldsymbol{\mathrm Q}_{({\hat{\tau}}_1,\tau_2^0)}^{-1}\nonumber\\ &&\times \sum_{t_i\in({\hat{\tau}}_1,{\tau}_2^0]}\boldsymbol{\mathrm z}_i'\boldsymbol{\mathrm z}_i\boldsymbol{\mathrm Q}_{({\hat{\tau}}_1,\tau_2^0)}^{-1}\boldsymbol{\mathrm Q}_{(\tau_1^0,{\tau}_2^0)}\left(\boldsymbol{\mathrm \uptheta}^{(1)}-\boldsymbol{\mathrm \uptheta}^{(2)}\right)
=2O_p\left(\frac{{\hat{\tau}}_2-\tau_2^0}{T^2}\right)\left(\boldsymbol{\mathrm \uptheta}^{(1)}-\boldsymbol{\mathrm \uptheta}^{(2)}\right)^{\top}
\boldsymbol{\mathrm Q}_{(\tau_1^0,{\tau}_2^0)}\boldsymbol{\mathrm Q}_{({\hat{\tau}}_1,\tau_2^0)}^{-1}
\sum_{t_i\in({\hat{\tau}}_1,{\tau}_2^0]}\boldsymbol{\mathrm z}_i^{\top}\boldsymbol{\mathrm z}_i\nonumber\\
&&\times\boldsymbol{\mathrm Q}_{(\tau_1^0,{\tau}_2^0)}\left(\boldsymbol{\mathrm \uptheta}^{(1)}-\boldsymbol{\mathrm \uptheta}^{(2)}\right)+O_p\left(\frac{({\hat{\tau}}_2-\tau_2^0)^2}{T^4}\right)\left(\boldsymbol{\mathrm \uptheta}^{(1)}-\boldsymbol{\mathrm \uptheta}^{(2)}\right)^{\top}
\boldsymbol{\mathrm Q}_{(\tau_1^0,{\tau}_2^0)}\sum_{t_i\in ({\hat{\tau}}_1,{\tau}_2^0]}\boldsymbol{\mathrm z}_i^{\top}\boldsymbol{\mathrm z}_i\boldsymbol{\mathrm Q}_{(\tau_1^0,{\tau}_2^0)}\left(\boldsymbol{\mathrm \uptheta}^{(1)}-\boldsymbol{\mathrm \uptheta}^{(2)}\right).\quad  \label{prnm14-2-1}
\end{eqnarray}
Using the asymptotic results in Proposition~\ref{prnm13}, we have that $\frac{\tau_1^0-{\hat{\tau}}_1}{T}\xrightarrow[T\rightarrow \infty]{p}0$. Therefore,
\begin{eqnarray*}
&&\frac{2}{{\hat{\tau}}_2-\tau_2^0}O_p\left(\frac{{\hat{\tau}}_2-\tau_2^0}{T^2}\right)\left(\boldsymbol{\mathrm \uptheta}^{(1)}-\boldsymbol{\mathrm \uptheta}^{(2)}\right)^{\top}
\boldsymbol{\mathrm Q}_{(\tau_1^0,{\tau}_2^0)}\boldsymbol{\mathrm Q}_{({\hat{\tau}}_1,\tau_2^0)}^{-1}\sum_{t_i\in({\hat{\tau}}_1,{\tau}_2^0]}\boldsymbol{\mathrm z}_i'\boldsymbol{\mathrm z}_i\boldsymbol{\mathrm Q}_{(\tau_1^0,{\tau}_2^0)}\left(\boldsymbol{\mathrm \uptheta}^{(1)}-\boldsymbol{\mathrm \uptheta}^{(2)}\right)
\leq 2O_p\left(\frac{{\hat{\tau}}_2-\tau_2^0}{T^2}\right)\\ &&\times\frac{(\tau_2^0-{\tau}_1^0)^2(\tau_1^0-{\hat{\tau}}_1)}{\left(\tau_2^0-{\hat{\tau}}_1\right)} \norm{\boldsymbol{\mathrm \uptheta}^{(1)}-\boldsymbol{\mathrm \uptheta}^{(2)}}^2
\norm{\frac{1}{(\tau_2^0-{\tau}_1^0)}\boldsymbol{\mathrm Q}_{(\tau_1^0,{\tau}_2^0)}}^2
\norm{{(\tau_2^0-{\hat{\tau}}_1)}\boldsymbol{\mathrm Q}_{({\hat{\tau}}_1,\tau_2^0)}^{-1}} \frac{1}{(\tau_1^0-{\hat{\tau}}_1)}\norm{\sum_{t_i\in({\hat{\tau}}_1,{\tau}_2^0]}\boldsymbol{\mathrm z}_i'\boldsymbol{\mathrm z}_i}=o_p(1).
\end{eqnarray*}
\ \\
\noindent Similarly,
\begin{eqnarray*}
&&\frac{1}{{\hat{\tau}}_2-\tau_2^0}O_p\left(\frac{({\hat{\tau}}_2-\tau_2^0)^2}{T^4}\right)
\left(\boldsymbol{\mathrm \uptheta}^{(1)}-\boldsymbol{\mathrm \uptheta}^{(2)}\right)^{\top}\boldsymbol{\mathrm Q}_{(\tau_1^0,{\tau}_2^0)}
\sum_{t_i\in({\hat{\tau}}_1,{\tau}_2^0]}\boldsymbol{\mathrm z}_i^{\top}\boldsymbol{\mathrm z}_i\boldsymbol{\mathrm Q}_{(\tau_1^0,{\tau}_2^0)}\left(\boldsymbol{\mathrm \uptheta}^{(1)}-\boldsymbol{\mathrm \uptheta}^{(2)}\right)=o_p(1).
\end{eqnarray*}
Thus, $(\ref{prnm14-2-1})/({\hat{\tau}}_2-\tau_2^0)=o_p(1)$. Furthermore, using the same argument above together with (\ref{convm_R2}), we have
\begin{eqnarray*}
&&\frac{2\sigma}{{\hat{\tau}}_2-\tau_2^0}\left(\boldsymbol{\mathrm \uptheta}^{(1)}-\boldsymbol{\mathrm \uptheta}^{(2)}\right)^{\top}\boldsymbol{\mathrm Q}_{(\tau_1^0,{\tau}_2^0)}\boldsymbol{\mathrm Q}_{({\hat{\tau}}_1,{\hat{\tau}}_2)}^{-1}
\sum_{t_i\in({\hat{\tau}}_1,{\tau}_2^0]}\boldsymbol{\mathrm z}_i^{\top}\boldsymbol{\mathrm z}_i\boldsymbol{\mathrm Q}_{({\hat{\tau}}_1,{\hat{\tau}}_2)}^{-1}
\boldsymbol{\mathrm r}_{({\hat{\tau}}_1,{\hat{\tau}}_2)}- \frac{2\sigma}{{\hat{\tau}}_2-\tau_2^0} \left(\boldsymbol{\mathrm \uptheta}^{(1)}-\boldsymbol{\mathrm \uptheta}^{(2)}\right)^{\top}\boldsymbol{\mathrm Q}_{(\tau_1^0,{\tau}_2^0)}\boldsymbol{\mathrm Q}_{({\hat{\tau}}_1,\tau_2^0)}^{-1}\\ &&\times \sum_{t_i\in({\hat{\tau}}_1,{\tau}_2^0]}\boldsymbol{\mathrm z}_i^{\top}\boldsymbol{\mathrm z}_i  \boldsymbol{\mathrm Q}_{({\hat{\tau}}_1,\tau_2^0)}^{-1}\boldsymbol{\mathrm r}_{({\hat{\tau}}_1,\tau_2^0)}
=O_p\left(\frac{1}{T^2}\right) 4\sigma \left(\boldsymbol{\mathrm \uptheta}^{(1)}-\boldsymbol{\mathrm \uptheta}^{(2)}\right)^{\top}\boldsymbol{\mathrm Q}_{(\tau_1^0,{\tau}_2^0)}\sum_{t_i\in({\hat{\tau}}_1,{\tau}_2^0]}\boldsymbol{\mathrm z}_i^{\top}\boldsymbol{\mathrm z}_i \boldsymbol{\mathrm Q}_{({\hat{\tau}}_1,\tau_2^0)}^{-1}\boldsymbol{\mathrm r}_{({\hat{\tau}}_1,\tau_2^0)}\\ &&
+O_p\left(\frac{({\hat{\tau}}_2-\tau_2^0)}{T^4}\right) 2\sigma \left(\boldsymbol{\mathrm \uptheta}^{(1)}-\boldsymbol{\mathrm \uptheta}^{(2)}\right)^{\top} \boldsymbol{\mathrm Q}_{(\tau_1^0,{\tau}_2^0)}\sum_{t_i\in({\hat{\tau}}_1,{\tau}_2^0]}\boldsymbol{\mathrm z}_i'\boldsymbol{\mathrm z}_i \boldsymbol{\mathrm r}_{({\hat{\tau}}_1,\tau_2^0)}=o_p(1).
\end{eqnarray*}
\ \\
\noindent With the aid of the identity $\boldsymbol{\mathrm r}_{({\hat{\tau}}_1,{\hat{\tau}}_2)}=\boldsymbol{\mathrm r}_{({\hat{\tau}}_1,{\tau}_2^0)}+\boldsymbol{\mathrm r}_{({\tau}_2^0,{\hat{\tau}}_2)}$,  (\ref{TM-1}) and  the Cauchy-Schwarz inequality, we get
\begin{eqnarray*}
\frac{\sigma^2}{{\hat{\tau}}_2-\tau_2^0}\boldsymbol{\mathrm r}_{({\hat{\tau}}_1,{\hat{\tau}}_2)}^{\top}
\boldsymbol{\mathrm Q}_{({\hat{\tau}}_1,{\hat{\tau}}_2)}^{-1}\sum_{t_i\in({\hat{\tau}}_1,{\tau}_2^0]}
\boldsymbol{\mathrm z}_i^{\top}\boldsymbol{\mathrm z}_i\boldsymbol{\mathrm Q}_{({\hat{\tau}}_1,{\hat{\tau}}_2)}^{-1}\boldsymbol{\mathrm r}_{({\hat{\tau}}_1,{\hat{\tau}}_2)}
-\frac{\sigma^2}{{\hat{\tau}}_2-\tau_2^0} \boldsymbol{\mathrm r}_{({\hat{\tau}}_1,\tau_2^0)}^{\top}\boldsymbol{\mathrm Q}_{({\hat{\tau}}_1,\tau_2^0)}^{-1}
\sum_{t_i\in({\hat{\tau}}_1,{\tau}_2^0]}\boldsymbol{\mathrm z}_i^{\top}\boldsymbol{\mathrm z}_i\boldsymbol{\mathrm Q}_{({\hat{\tau}}_1,\tau_2^0)}^{-1}
\boldsymbol{\mathrm r}_{({\hat{\tau}}_1,\tau_2^0)}=o_p(1)\end{eqnarray*}
and
\begin{eqnarray*}
&&\frac{1}{{\hat{\tau}}_2-\tau_2^0}2\left(\boldsymbol{\mathrm \uptheta}^{(1)}-\boldsymbol{\mathrm \uptheta}^{(3)}\right)^{\top}
\boldsymbol{\mathrm Q}_{(\tau_2^0,{\hat{\tau}}_2)}\boldsymbol{\mathrm Q}_{({\hat{\tau}}_1,{\hat{\tau}}_2)}^{-1}
\sum_{t_i\in({\hat{\tau}}_1,{\tau}_2^0]}\boldsymbol{\mathrm z}_i^{\top}\boldsymbol{\mathrm z}_i\boldsymbol{\mathrm Q}_{({\hat{\tau}}_1,{\hat{\tau}}_2)}^{-1}
\boldsymbol{\mathrm Q}_{(\tau_2^0,{\hat{\tau}}_2)}\left(\boldsymbol{\mathrm \uptheta}^{(1)}-\boldsymbol{\mathrm \uptheta}^{(3)}\right)=o_p(1)\\
&&\frac{2\sigma}{{\hat{\tau}}_2-\tau_2^0}\left(\boldsymbol{\mathrm \uptheta}^{(1)}-\boldsymbol{\mathrm \uptheta}^{(3)}\right)
\boldsymbol{\mathrm Q}_{(\tau_2^0,{\hat{\tau}}_2)}\boldsymbol{\mathrm Q}_{({\hat{\tau}}_1,{\hat{\tau}}_2)}^{-1}
\sum_{t_i\in({\hat{\tau}}_1,{\tau}_2^0]}\boldsymbol{\mathrm z}_i^{\top}\boldsymbol{\mathrm z}_i\boldsymbol{\mathrm Q}_{({\hat{\tau}}_1,{\hat{\tau}}_2)}^{-1}
\boldsymbol{\mathrm r}_{({\hat{\tau}}_1,{\hat{\tau}}_2)}=o_p(1),
\end{eqnarray*}
Combining the above results, we have
$$\frac{\sum_{t_i\in({\hat{\tau}}_1,{\tau}_2^0]}\left(\boldsymbol{\mathrm z}_i\left(\boldsymbol{\mathrm \uptheta}^{(1)}-\boldsymbol{\mathrm{\hat{\uptheta}}}^{(2)}\right)\right)^2
-\sum_{t_i\in({\hat{\tau}}_1,{\tau}_2^0]}\left(\boldsymbol{\mathrm z}_i\left(\boldsymbol{\mathrm \uptheta}^{(1)}-\boldsymbol{\mathrm{\hat{\uptheta}}}^{(2,0)}\right)\right)^2}{{\hat{\tau}}_2
-\tau_2^0}=o_p(1).$$
Similarly,
\begin{eqnarray}
&&\sum_{t_i\in ({\hat{\tau}}_2,{\hat{\tau}}_3]}(\boldsymbol{\mathrm z}_i(\boldsymbol{\mathrm \uptheta}^{(3)}-\boldsymbol{\mathrm{\hat{\uptheta}}}^{(3)}))^2
-\sum_{t_i\in({\tau}_2^0,{\hat{\tau}}_3]}(\boldsymbol{\mathrm z}_i(\boldsymbol{\mathrm \uptheta}^{(3)}-\boldsymbol{\mathrm{\hat{\uptheta}}}^{(3,0)}))^2
=\sigma^2 \boldsymbol{\mathrm r}_{({\hat{\tau}}_2,{\hat{\tau}}_3)}^{\top}\boldsymbol{\mathrm Q}_{({\hat{\tau}}_2,{\hat{\tau}}_3)}^{-1}\sum_{t_i\in ({\hat{\tau}}_2,{\hat{\tau}}_3]}\boldsymbol{\mathrm z}_i^{\top}\boldsymbol{\mathrm z}_i\boldsymbol{\mathrm Q}_{({\hat{\tau}}_2,{\hat{\tau}}_3)}^{-1} \boldsymbol{\mathrm r}_{({\hat{\tau}}_2,{\hat{\tau}}_3)}\nonumber\\ &&
-\sigma^2 \boldsymbol{\mathrm r}_{({\tau}_2^0,{\hat{\tau}}_3)}^{\top} \boldsymbol{\mathrm Q}_{(\tau_2^0,{\hat{\tau}}_3)}^{-1}\sum_{t_i\in ({\tau}_2^0,{\hat{\tau}}_3]}\boldsymbol{\mathrm z}_i^{\top}\boldsymbol{\mathrm z}_i\boldsymbol{\mathrm Q}_{(\tau_2^0,{\hat{\tau}}_3)}^{-1} \boldsymbol{\mathrm r}_{(\tau_2^0,{\hat{\tau}}_3)}.\label{prnm14-3-1}
\end{eqnarray}
\ \\
\noindent Note that $\boldsymbol{\mathrm r}_{({\tau}_2^0,{\hat{\tau}}_3)}=\boldsymbol{\mathrm r}_{({\tau}_2^0,{\hat{\tau}}_2)}+\boldsymbol{\mathrm r}_{({\hat{\tau}}_2,{\hat{\tau}}_3)}.$ So, from (\ref{convm-R1}),  $\boldsymbol{\mathrm r}_{({\tau}_2^0,{\hat{\tau}}_2)}/\sqrt{{\hat{\tau}}_2-\tau_2^0}=O_p\left( ({\hat{\tau}}_2-\tau_2^0)^{a^*} \right)$ for some $0<a^*<1/2$. Moreover, $\sum_{t_i\in({\tau}_2^0,{\hat{\tau}}_3]}\boldsymbol{\mathrm z}_i^{\top}\boldsymbol{\mathrm z}_i=\sum_{t_i\in(\tau_2^0,{\hat{\tau}}_2]}\boldsymbol{\mathrm z}_i^{\top}\boldsymbol{\mathrm z}_i
+\sum_{t_i\in({\hat{\tau}}_2,{\hat{\tau}}_3]}\boldsymbol{\mathrm z}_i^{\top}\boldsymbol{\mathrm z}_i$, and by Tobing and McGlichrist (1992), $\boldsymbol{\mathrm Q}_{(\tau_2^0,{\hat{\tau}}_3)}^{-1}=\boldsymbol{\mathrm Q}_{({\hat{\tau}}_2,{\hat{\tau}}_3)}^{-1}
+O_p\left(\frac{{\hat{\tau}}_2-\tau_2^0}{T^2}\right)$. Again, when these results are combined,
one may verify that, with a suitable choice of $C$, for large $T$ the order of $\displaystyle \frac{(\ref{prnm14-3-1})}{({\hat{\tau}}_2-\tau_2^0)}$ is $o_p(1)$. \\
\ \\
\noindent The above procedure can also be applied to investigate the behaviour of (\ref{prnm14-1-3}). For example,
\begin{eqnarray*}
&&\sum_{t_i\in({\hat{\tau}}_1,{\tau}_2^0]}2u_i^{\top}\boldsymbol{\mathrm z}_i\left(\boldsymbol{\mathrm \uptheta}^{(2,0)}-\boldsymbol{\mathrm z}_i\boldsymbol{\mathrm{\hat{\uptheta}}}^{(2)}\right)
=\sum_{t_i\in({\hat{\tau}}_1,{\tau}_2^0]}2u_i^{\top}\boldsymbol{\mathrm z}_i\boldsymbol{\mathrm Q}_{({\hat{\tau}}_1,{\hat{\tau}}_2)}^{-1}
\boldsymbol{\mathrm Q}_{(\tau_1^0,{\tau}_2^0)}\left(\boldsymbol{\mathrm \uptheta}^{(1)}-\boldsymbol{\mathrm \uptheta}^{(2)}\right)\\
&&-\sum_{t_i\in({\hat{\tau}}_1,{\tau}_2^0]}2u_i^{\top}\boldsymbol{\mathrm z}_i\boldsymbol{\mathrm Q}_{({\hat{\tau}}_1,{\tau}_2^0)}^{-1}
\boldsymbol{\mathrm Q}_{(\tau_1^0,{\tau}_2^0)}\left(\boldsymbol{\mathrm \uptheta}^{(1)}-\boldsymbol{\mathrm \uptheta}^{(2)}\right)
+\sum_{t_i\in({\hat{\tau}}_1,{\tau}_2^0]}2u_i^{\top}\boldsymbol{\mathrm z}_i\boldsymbol{\mathrm Q}_{({\hat{\tau}}_1,{\hat{\tau}}_2)}^{-1}
\boldsymbol{\mathrm Q}_{(\tau_2^0,{\hat{\tau}}_2)}\left(\boldsymbol{\mathrm \uptheta}^{(1)}-\boldsymbol{\mathrm \uptheta}^{(3)}\right)\\&&
+2\sigma\sum_{t_i\in({\hat{\tau}}_1,{\tau}_2^0]}u_i^{\top}\boldsymbol{\mathrm z}_i \boldsymbol{\mathrm Q}_{({\hat{\tau}}_1,{\hat{\tau}}_2)}^{-1} \boldsymbol{\mathrm r}_{({\hat{\tau}}_1,{\hat{\tau}}_2)}
-2\sigma\sum_{t_i\in({\hat{\tau}}_1,{\tau}_2^0]}u_i^{\top}\boldsymbol{\mathrm z}_i \boldsymbol{\mathrm Q}_{({\hat{\tau}}_1,{\tau}_2^0)}^{-1} \boldsymbol{\mathrm r}_{({\hat{\tau}}_1,{\tau}_2^0)},
\end{eqnarray*}
with
\begin{eqnarray*}
\frac{2}{{\hat{\tau}}_2-\tau_2^0}\left(\sum_{t_i\in({\hat{\tau}}_1,{\tau}_2^0]}u_i^{\top}
\boldsymbol{\mathrm z}_i\boldsymbol{\mathrm Q}_{({\hat{\tau}}_1,{\hat{\tau}}_2)}^{-1}\boldsymbol{\mathrm Q}_{(\tau_1^0,{\tau}_2^0)}
\left(\boldsymbol{\mathrm \uptheta}^{(1)}-\boldsymbol{\mathrm \uptheta}^{(2)}\right)-\sum_{t_i\in({\hat{\tau}}_1,{\tau}_2^0]}u_i^{\top}\boldsymbol{\mathrm z}_i
\boldsymbol{\mathrm Q}_{({\hat{\tau}}_1,{\tau}_2^0)}^{-1}\boldsymbol{\mathrm Q}_{(\tau_1^0,{\tau}_2^0)}\left(\boldsymbol{\mathrm \uptheta}^{(1)}-\boldsymbol{\mathrm \uptheta}^{(2)}\right)\right)\\=
O_p\left( \frac{1}{T^2}\right) \sum_{t_i\in({\hat{\tau}}_1,{\tau}_2^0]}u_i'\boldsymbol{\mathrm z}_i \boldsymbol{\mathrm Q}_{(\tau_1^0,{\tau}_2^0)}\left(\boldsymbol{\mathrm \uptheta}^{(1)}-\boldsymbol{\mathrm \uptheta}^{(2)}\right)
=o_p(1),\end{eqnarray*}
\begin{eqnarray*}
\frac{2}{{\hat{\tau}}_2-\tau_2^0}\sum_{t_i\in({\hat{\tau}}_1,{\tau}_2^0]}
u_i^{\top}\boldsymbol{\mathrm z}_i\boldsymbol{\mathrm Q}_{({\hat{\tau}}_1,{\hat{\tau}}_2)}^{-1}\boldsymbol{\mathrm Q}_{(\tau_2^0,{\hat{\tau}}_2)}\left(\boldsymbol{\mathrm \uptheta}^{(1)}-\boldsymbol{\mathrm \uptheta}^{(3)}\right)\leq
\norm{\frac{1}{T}\sum_{t_i\in({\hat{\tau}}_1,{\tau}_2^0]}u_i^{\top}\boldsymbol{\mathrm z}_i }
\norm{T\boldsymbol{\mathrm Q}_{({\hat{\tau}}_1,{\hat{\tau}}_2)}^{-1} }
\norm{\frac{2}{{\hat{\tau}}_2-\tau_2^0}\boldsymbol{\mathrm Q}_{(\tau_2^0,{\hat{\tau}}_2)} }\\ \times
\norm{ \left(\boldsymbol{\mathrm \uptheta}^{(1)}-\boldsymbol{\mathrm \uptheta}^{(3)}\right) }=o_p(1)\end{eqnarray*}
\noindent and
\begin{eqnarray*}
\frac{2\sigma}{{\hat{\tau}}_2-\tau_2^0}\left(\sum_{t_i\in({\hat{\tau}}_1,{\tau}_2^0]}u_i^{\top}\boldsymbol{\mathrm z}_i \boldsymbol{\mathrm Q}_{({\hat{\tau}}_1,{\hat{\tau}}_2)}^{-1} \boldsymbol{\mathrm r}_{({\hat{\tau}}_1,{\hat{\tau}}_2)}
-\sum_{t_i\in({\hat{\tau}}_1,{\tau}_2^0]}u_i^{\top}\boldsymbol{\mathrm z}_i \boldsymbol{\mathrm Q}_{({\hat{\tau}}_1,{\tau}_2^0)}^{-1} \boldsymbol{\mathrm r}_{({\hat{\tau}}_1,{\tau}_2^0)}\right)
=\frac{2\sigma}{{\hat{\tau}}_2-\tau_2^0}\sum_{t_i\in({\hat{\tau}}_1,{\tau}_2^0]}u_i^{\top}\boldsymbol{\mathrm z}_i \boldsymbol{\mathrm Q}_{({\hat{\tau}}_1,{\tau}_2^0)}^{-1} \boldsymbol{\mathrm r}_{({\tau}_2^0,{\hat{\tau}}_2)}\\+
2\sigma O_p(\frac{1}{T^2})\sum_{t_i\in({\hat{\tau}}_1,{\tau}_2^0]}u_i^{\top}\boldsymbol{\mathrm z}_i  \boldsymbol{\mathrm r}_{({\tau}_2^0,{\hat{\tau}}_2)}=o_p(1).
\ \\
\noindent \end{eqnarray*}
\ \\
\noindent Hence, for large $T$,
\begin{eqnarray*}
&&\sum_{t_i\in({\hat{\tau}}_1,{\tau}_2^0]}2u_i^{\top}
\boldsymbol{\mathrm z}_i\left(\boldsymbol{\mathrm{\hat{\uptheta}}}^{(2,0)}-\boldsymbol{\mathrm z}_i\boldsymbol{\mathrm{\hat{\uptheta}}}^{(2)}\right)/({{\hat{\tau}}_2-\tau_2^0})=o_p(1).
\end{eqnarray*}
\ \\
\noindent
In an analogous manner, one can show that, with a suitable choice of $C$, for large $T$,
\begin{eqnarray*}
&&\frac{\sum_{t_i\in({\tau}_2^0,{\hat{\tau}}_2]}2u_i^{\top}\boldsymbol{\mathrm z}_i\left(\boldsymbol{\mathrm{\hat{\uptheta}}}^{(3,0)}-\boldsymbol{\mathrm z}_i\boldsymbol{\mathrm{\hat{\uptheta}}}^{(2)}\right)
{{\hat{\tau}}_2-\tau_2^0}
+\sum_{t_i\in({\hat{\tau}}_2,{\hat{\tau}}_3]}2u_i^{\top}\boldsymbol{\mathrm z}_i\left(\boldsymbol{\mathrm{\hat{\uptheta}}}^{(3,0)}-\boldsymbol{\mathrm{\hat{\uptheta}}}^{(3)}\right)}{
{{\hat{\tau}}_2-\tau_2^0}}=o_p(1).\end{eqnarray*}
\ \\
\noindent Finally, we investigate the behaviour of the remaining term.

\begin{eqnarray*}
\frac{\sum_{t_i\in(\tau_2^0,{\hat{\tau}}_2]}\left(\boldsymbol{\mathrm z}_i\left(\boldsymbol{\mathrm \uptheta}^{(3)}-\boldsymbol{\mathrm{\hat{\uptheta}}}^{(2)}\right)\right)^2}{{\hat{\tau}}_2-\tau_2^0}
=\left(\boldsymbol{\mathrm \uptheta}^{(2)}-\boldsymbol{\mathrm \uptheta}^{(3)}\right)^{\top}\boldsymbol{\mathrm Q}_{({\tau}_1^0,\tau_2^0)}
\boldsymbol{\mathrm Q}_{({\hat{\tau}}_1,{\hat{\tau}}_2)}^{-1}\frac{1}{{\hat{\tau}}_2
-\tau_2^0}\sum_{t_i\in({\tau}_2^0,{\hat{\tau}}_2]}\boldsymbol{\mathrm z}_i^{\top}\boldsymbol{\mathrm z}_i\boldsymbol{\mathrm Q}_{({\hat{\tau}}_1,{\hat{\tau}}_2)}^{-1}
\boldsymbol{\mathrm Q}_{({\tau}_1^0,{\tau}_2^0)}
\left(\boldsymbol{\mathrm \uptheta}^{(2)}-\boldsymbol{\mathrm \uptheta}^{(3)}\right)\\
+\left(\boldsymbol{\mathrm \uptheta}^{(1)}-\boldsymbol{\mathrm \uptheta}^{(3)}\right)^{\top}\boldsymbol{\mathrm Q}_{({\hat{\tau}}_1,{\tau}_1^0)}\boldsymbol{\mathrm Q}_{({\hat{\tau}}_1,{\hat{\tau}}_2)}^{-1}
\frac{1}{{\hat{\tau}}_2-\tau_2^0}\sum_{t_i\in({\tau}_2^0,{\hat{\tau}}_2]}\boldsymbol{\mathrm z}_i^{\top}\boldsymbol{\mathrm z}_i
\boldsymbol{\mathrm Q}_{({\hat{\tau}}_1,{\hat{\tau}}_2)}^{-1}\boldsymbol{\mathrm Q}_{({\hat{\tau}}_1,{\tau}_1^0)}\left(\boldsymbol{\mathrm \uptheta}^{(1)}-\boldsymbol{\mathrm \uptheta}^{(3)}\right)
+2\left(\boldsymbol{\mathrm \uptheta}^{(1)}-\boldsymbol{\mathrm \uptheta}^{(3)}\right)^{\top}\boldsymbol{\mathrm Q}_{({\hat{\tau}}_1,{\tau}_1^0)}\\
\times\boldsymbol{\mathrm Q}_{({\hat{\tau}}_1,{\hat{\tau}}_2)}^{-1}\frac{1}{{\hat{\tau}}_2-\tau_2^0}\sum_{t_i\in(\tau_2^0,{\hat{\tau}}_2]}\boldsymbol{\mathrm z}_i^{\top}\boldsymbol{\mathrm z}_i
\boldsymbol{\mathrm Q}_{({\hat{\tau}}_1,{\hat{\tau}}_2)}^{-1}\boldsymbol{\mathrm Q}_{({\tau}_1^0,{\tau}_2^0)}\left(\boldsymbol{\mathrm \uptheta}^{(2)}-\boldsymbol{\mathrm \uptheta}^{(3)}\right)
+2\left(\boldsymbol{\mathrm \uptheta}^{(1)}-\boldsymbol{\mathrm \uptheta}^{(3)}\right)^{\top}\boldsymbol{\mathrm Q}_{({\hat{\tau}}_1,{\tau}_1^0)}
\boldsymbol{\mathrm Q}_{({\hat{\tau}}_1,{\hat{\tau}}_2)}^{-1}\frac{1}{{\hat{\tau}}_2-\tau_2^0}
\sum_{t_i\in(\tau_2^0,{\hat{\tau}}_2]}\boldsymbol{\mathrm z}_i^{\top}\boldsymbol{\mathrm z}_i\\ \times\boldsymbol{\mathrm Q}_{({\hat{\tau}}_1,{\hat{\tau}}_2)}^{-1}
\boldsymbol{\mathrm r}_{({\hat{\tau}}_1,{\hat{\tau}}_2)}
+2\left(\boldsymbol{\mathrm \uptheta}^{(2)}-\boldsymbol{\mathrm \uptheta}^{(3)}\right)^{\top}\boldsymbol{\mathrm Q}_{({\tau}_1^0,{\tau}_2^0)}
\boldsymbol{\mathrm Q}_{({\hat{\tau}}_1,{\hat{\tau}}_2)}^{-1}\frac{1}{{\hat{\tau}}_2-\tau_2^0}
\sum_{t_i\in(\tau_2^0,{\hat{\tau}}_2]}\boldsymbol{\mathrm z}_i^{\top}\boldsymbol{\mathrm z}_i\boldsymbol{\mathrm Q}_{({\hat{\tau}}_1,{\hat{\tau}}_2)}^{-1}\boldsymbol{\mathrm r}_{({\hat{\tau}}_1,{\hat{\tau}}_2)}
+\boldsymbol{\mathrm r}_{({\hat{\tau}}_1,{\hat{\tau}}_2)}^{\top}\boldsymbol{\mathrm Q}_{({\hat{\tau}}_1,{\hat{\tau}}_2)}^{-1}
\sum_{t_i\in(\tau_2^0,{\hat{\tau}}_2]}\boldsymbol{\mathrm z}_i'\boldsymbol{\mathrm z}_i\\ \times\boldsymbol{\mathrm Q}_{({\hat{\tau}}_1,{\hat{\tau}}_2)}^{-1}\boldsymbol{\mathrm r}_{({\hat{\tau}}_1,{\hat{\tau}}_2)}.
\end{eqnarray*}
\ \\
\noindent By Proposition~\ref{prnm13}, $s_1^0-\hat{s}_1\xrightarrow[T \rightarrow\infty]{p} 0$. Hence, applying again the Cauchy-Schwarz inequality and similar reasoning as before, we have
\begin{eqnarray*}
&&\left(\boldsymbol{\mathrm \uptheta}^{(1)}-\boldsymbol{\mathrm \uptheta}^{(3)}\right)^{\top}\boldsymbol{\mathrm Q}_{({\hat{\tau}}_1,{\tau}_1^0)}
\boldsymbol{\mathrm Q}_{({\hat{\tau}}_1,{\hat{\tau}}_2)}^{-1}\frac{1}{{\hat{\tau}}_2-\tau_2^0}
\sum_{t_i\in(\tau_2^0,{\hat{\tau}}_2]}\boldsymbol{\mathrm z}_i^{\top}\boldsymbol{\mathrm z}_i\boldsymbol{\mathrm Q}_{({\hat{\tau}}_1,{\hat{\tau}}_2)}^{-1}
\boldsymbol{\mathrm Q}_{({\hat{\tau}}_1,{\tau}_1^0)}\left(\boldsymbol{\mathrm \uptheta}^{(1)}-\boldsymbol{\mathrm \uptheta}^{(3)}\right)
\leq (s_1^0-\hat{s}_1)^2
\norm{ \boldsymbol{\mathrm \uptheta}^{(1)}-\boldsymbol{\mathrm \uptheta}^{(3)}} \\ &&\times
 \norm{T\boldsymbol{\mathrm Q}_{({\hat{\tau}}_1,{\hat{\tau}}_2)}^{-1}}^2
 \norm{ \frac{1}{(s_1^0-\hat{s}_1)T}\boldsymbol{\mathrm Q}_{({\hat{\tau}}_1,{\tau}_1^0)}}^2
 \norm{ \frac{1}{{\hat{\tau}}_2-\tau_2^0}\sum_{t_i\in(\tau_2^0,{\hat{\tau}}_2]}\boldsymbol{\mathrm z}_i^{\top}\boldsymbol{\mathrm z}_i } \norm{\boldsymbol{\mathrm \uptheta}^{(1)}-\boldsymbol{\mathrm \uptheta}^{(3)}}=o_p(1).
\end{eqnarray*}
\ \\
Also,
\begin{eqnarray*}
&&2\left(\boldsymbol{\mathrm \uptheta}^{(1)}-\boldsymbol{\mathrm \uptheta}^{(3)}\right)^{\top}\boldsymbol{\mathrm Q}_{({\hat{\tau}}_1,{\tau}_1^0)}
\boldsymbol{\mathrm Q}_{({\hat{\tau}}_1,{\hat{\tau}}_2)}^{-1}\frac{1}{{\hat{\tau}}_2-\tau_2^0}
\sum_{t_i\in(\tau_2^0,{\hat{\tau}}_2]}\boldsymbol{\mathrm z}_i^{\top}\boldsymbol{\mathrm z}_i\boldsymbol{\mathrm Q}_{({\hat{\tau}}_1,{\hat{\tau}}_2)}^{-1}
\boldsymbol{\mathrm Q}_{({\tau}_1^0,{\tau}_2^0)}\left(\boldsymbol{\mathrm \uptheta}^{(2)}-\boldsymbol{\mathrm \uptheta}^{(3)}\right)=o_p(1),\\
&&2\left(\boldsymbol{\mathrm \uptheta}^{(1)}-\boldsymbol{\mathrm \uptheta}^{(3)}\right)^{\top}\boldsymbol{\mathrm Q}_{({\hat{\tau}}_1,{\tau}_1^0)}
\boldsymbol{\mathrm Q}_{({\hat{\tau}}_1,{\hat{\tau}}_2)}^{-1}\frac{1}{{\hat{\tau}}_2-\tau_2^0}
\sum_{t_i\in(\tau_2^0,{\hat{\tau}}_2]}\boldsymbol{\mathrm z}_i^{\top}\boldsymbol{\mathrm z}_i\boldsymbol{\mathrm Q}_{({\hat{\tau}}_1,{\hat{\tau}}_2)}^{-1}
\boldsymbol{\mathrm r}_{({\hat{\tau}}_1,{\hat{\tau}}_2)}=o_p(1),\\
&&2\left(\boldsymbol{\mathrm \uptheta}^{(2)}-\boldsymbol{\mathrm \uptheta}^{(3)}\right)^{\top}\boldsymbol{\mathrm Q}_{({\tau}_1^0,{\tau}_2^0)}
\boldsymbol{\mathrm Q}_{({\hat{\tau}}_1,{\hat{\tau}}_2)}^{-1}\frac{1}{{\hat{\tau}}_2-\tau_2^0}
\sum_{t_i\in(\tau_2^0,{\hat{\tau}}_2]}\boldsymbol{\mathrm z}_i^{\top}\boldsymbol{\mathrm z}_i\boldsymbol{\mathrm Q}_{({\hat{\tau}}_1,{\hat{\tau}}_2)}^{-1}
\boldsymbol{\mathrm r}_{({\hat{\tau}}_1,{\hat{\tau}}_2)}=o_p(1),
\end{eqnarray*}
and
\begin{eqnarray*}
&&\boldsymbol{\mathrm r}_{({\hat{\tau}}_1,{\hat{\tau}}_2)}^{\top}\boldsymbol{\mathrm Q}_{({\hat{\tau}}_1,{\hat{\tau}}_2)}^{-1}
\frac{1}{{\hat{\tau}}_2-\tau_2^0}\sum_{t_i\in(\tau_2^0,{\hat{\tau}}_2]}
\boldsymbol{\mathrm z}_i^{\top}\boldsymbol{\mathrm z}_i\boldsymbol{\mathrm Q}_{({\hat{\tau}}_1,{\hat{\tau}}_2)}^{-1}\boldsymbol{\mathrm r}_{({\hat{\tau}}_1,{\hat{\tau}}_2)}=o_p(1).
\end{eqnarray*}
\ \\
\noindent Moreover, it follows from the identity $\boldsymbol{\mathrm Q}_{({\tau}_1^0,{\tau}_2^0)}=\boldsymbol{\mathrm Q}_{({\hat{\tau}}_1,{\hat{\tau}}_2)}-\boldsymbol{\mathrm Q}_{({\hat{\tau}}_1,{\tau}_1^0)}-\boldsymbol{\mathrm Q}_{({\tau}_2^0,{\hat{\tau}}_2)}$  that
\begin{eqnarray*}
\left(\boldsymbol{\mathrm \uptheta}^{(2)}-\boldsymbol{\mathrm \uptheta}^{(3)}\right)^{\top}\boldsymbol{\mathrm Q}_{({\tau}_1^0,{\tau}_2^0)}
\boldsymbol{\mathrm Q}_{({\hat{\tau}}_1,{\hat{\tau}}_2)}^{-1}\frac{1}{{\hat{\tau}}_2-\tau_2^0}
\sum_{t_i\in(\tau_2^0,{\hat{\tau}}_2]}\boldsymbol{\mathrm z}_i^{\top}\boldsymbol{\mathrm z}_i\boldsymbol{\mathrm Q}_{({\hat{\tau}}_1,{\hat{\tau}}_2)}^{-1}
\boldsymbol{\mathrm Q}_{({\tau}_1^0,{\tau}_2^0)}\left(\boldsymbol{\mathrm \uptheta}^{(2)}-\boldsymbol{\mathrm \uptheta}^{(3)}\right)
=\left(\boldsymbol{\mathrm \uptheta}^{(2)}-\boldsymbol{\mathrm \uptheta}^{(3)}\right)^{\top} \frac{1}{{\hat{\tau}}_2-\tau_2^0}
\\
 \times\sum_{t_i\in(\tau_2^0,{\hat{\tau}}_2]}\boldsymbol{\mathrm z}_i^{\top}\boldsymbol{\mathrm z}_i\left(\boldsymbol{\mathrm \uptheta}^{(2)}-\boldsymbol{\mathrm \uptheta}^{(3)}\right)+\left(\boldsymbol{\mathrm \uptheta}^{(2)}-\boldsymbol{\mathrm \uptheta}^{(3)}\right)^{\top}
 \boldsymbol{\mathrm Q}_{({\hat{\tau}}_1,{\tau}_1^0)}\boldsymbol{\mathrm Q}_{({\hat{\tau}}_1,{\hat{\tau}}_2)}^{-1}
 \frac{1}{{\hat{\tau}}_2-\tau_2^0}\sum_{t_i\in(\tau_2^0,{\hat{\tau}}_2]}
 \boldsymbol{\mathrm z}_i^{\top}\boldsymbol{\mathrm z}_i\boldsymbol{\mathrm Q}_{({\hat{\tau}}_1,{\hat{\tau}}_2)}^{-1}\boldsymbol{\mathrm Q}_{({\hat{\tau}}_1,{\tau}_1^0)}\left(\boldsymbol{\mathrm \uptheta}^{(2)}-\boldsymbol{\mathrm \uptheta}^{(3)}\right)\\
+\left(\boldsymbol{\mathrm \uptheta}^{(2)}-\boldsymbol{\mathrm \uptheta}^{(3)}\right)^{\top}
\boldsymbol{\mathrm Q}_{({\tau}_2^0,{\hat{\tau}}_2)} \boldsymbol{\mathrm Q}_{({\hat{\tau}}_1,{\hat{\tau}}_2)}^{-1} \frac{1}{{\hat{\tau}}_2-\tau_2^0}\sum_{t_i\in(\tau_2^0,{\hat{\tau}}_2]}
\boldsymbol{\mathrm z}_i^{\top}\boldsymbol{\mathrm z}_i\boldsymbol{\mathrm Q}_{({\hat{\tau}}_1,{\hat{\tau}}_2)}^{-1}\boldsymbol{\mathrm Q}_{({\tau}_2^0,{\hat{\tau}}_2)}
\left(\boldsymbol{\mathrm \uptheta}^{(2)}-\boldsymbol{\mathrm \uptheta}^{(3)}\right)
+2\left(\boldsymbol{\mathrm \uptheta}^{(2)}-\boldsymbol{\mathrm \uptheta}^{(3)}\right)^{\top}\boldsymbol{\mathrm Q}_{({\hat{\tau}}_1,{\tau}_1^0)}
\end{eqnarray*}
\begin{eqnarray*}
 \times\boldsymbol{\mathrm Q}_{({\hat{\tau}}_1,{\hat{\tau}}_2)}^{-1} \frac{1}{{\hat{\tau}}_2-\tau_2^0}\sum_{t_i\in(\tau_2^0,{\hat{\tau}}_2]}
\boldsymbol{\mathrm z}_i^{\top}\boldsymbol{\mathrm z}_i\boldsymbol{\mathrm Q}_{({\hat{\tau}}_1,{\hat{\tau}}_2)}^{-1}\boldsymbol{\mathrm Q}_{({\tau}_2^0,{\hat{\tau}}_2)}\left(\boldsymbol{\mathrm \uptheta}^{(2)}-\boldsymbol{\mathrm \uptheta}^{(3)}\right)
-2\left(\boldsymbol{\mathrm \uptheta}^{(2)}-\boldsymbol{\mathrm \uptheta}^{(3)}\right)^{\top}\boldsymbol{\mathrm Q}_{({\tau}_2^0,{\hat{\tau}}_2)}
\boldsymbol{\mathrm Q}_{({\hat{\tau}}_1,{\hat{\tau}}_2)}^{-1}\frac{1}{{\hat{\tau}}_2-\tau_2^0}
\sum_{t_i\in(\tau_2^0,{\hat{\tau}}_2]}\boldsymbol{\mathrm z}_i^{\top}\boldsymbol{\mathrm z}_i\\ \times \left(\boldsymbol{\mathrm \uptheta}^{(2)}-\boldsymbol{\mathrm \uptheta}^{(3)}\right)
-2\left(\boldsymbol{\mathrm \uptheta}^{(2)}-\boldsymbol{\mathrm \uptheta}^{(3)}\right)^{\top}\boldsymbol{\mathrm Q}_{({\hat{\tau}}_1,{\tau}_1^0)}
\boldsymbol{\mathrm Q}_{({\hat{\tau}}_1,{\hat{\tau}}_2)}^{-1}\frac{1}{{\hat{\tau}}_2-\tau_2^0}
\sum_{t_i\in(\tau_2^0,{\hat{\tau}}_2]}\boldsymbol{\mathrm z}_i^{\top}\boldsymbol{\mathrm z}_i\left(\boldsymbol{\mathrm \uptheta}^{(2)}-\boldsymbol{\mathrm \uptheta}^{(3)}\right),
\end{eqnarray*}
with
\begin{eqnarray*}
&&\left(\boldsymbol{\mathrm \uptheta}^{(2)}-\boldsymbol{\mathrm \uptheta}^{(3)}\right)^{\top}\boldsymbol{\mathrm Q}_{({\hat{\tau}}_1,{\tau}_1^0)}
\boldsymbol{\mathrm Q}_{({\hat{\tau}}_1,{\hat{\tau}}_2)}^{-1}\frac{1}{{\hat{\tau}}_2
-\tau_2^0}\sum_{t_i\in(\tau_2^0,{\hat{\tau}}_2]}
\boldsymbol{\mathrm z}_i^{\top}\boldsymbol{\mathrm z}_i\boldsymbol{\mathrm Q}_{({\hat{\tau}}_1,{\hat{\tau}}_2)}^{-1}
\boldsymbol{\mathrm Q}_{({\hat{\tau}}_1,{\tau}_1^0)}\left(\boldsymbol{\mathrm \uptheta}^{(2)}
-\boldsymbol{\mathrm \uptheta}^{(3)}\right)=o_p(1),\\
&&\left(\boldsymbol{\mathrm \uptheta}^{(2)}-\boldsymbol{\mathrm \uptheta}^{(3)}\right)^{\top}\boldsymbol{\mathrm Q}_{({\tau}_2^0,{\hat{\tau}}_2)}
\boldsymbol{\mathrm Q}_{({\hat{\tau}}_1,{\hat{\tau}}_2)}^{-1}\frac{1}{{\hat{\tau}}_2
-\tau_2^0}\sum_{t_i\in(\tau_2^0,{\hat{\tau}}_2]}\boldsymbol{\mathrm z}_i^{\top}\boldsymbol{\mathrm z}_i
\boldsymbol{\mathrm Q}_{({\hat{\tau}}_1,{\hat{\tau}}_2)}^{-1}\boldsymbol{\mathrm Q}_{({\tau}_2^0,{\hat{\tau}}_2)}
\left(\boldsymbol{\mathrm \uptheta}^{(2)}-\boldsymbol{\mathrm \uptheta}^{(3)}\right)=o_p(1),\\
&&2\left(\boldsymbol{\mathrm \uptheta}^{(2)}-\boldsymbol{\mathrm \uptheta}^{(3)}\right)^{\top}\boldsymbol{\mathrm Q}_{({\hat{\tau}}_1,{\tau}_1^0)}
\boldsymbol{\mathrm Q}_{({\hat{\tau}}_1,{\hat{\tau}}_2)}^{-1}\frac{1}{{\hat{\tau}}_2-\tau_2^0}
\sum_{t_i\in(\tau_2^0,{\hat{\tau}}_2]}\boldsymbol{\mathrm z}_i^{\top}\boldsymbol{\mathrm z}_i\boldsymbol{\mathrm Q}_{({\hat{\tau}}_1,{\hat{\tau}}_2)}^{-1}
\boldsymbol{\mathrm Q}_{({\tau}_2^0,{\hat{\tau}}_2)}\left(\boldsymbol{\mathrm \uptheta}^{(2)}-\boldsymbol{\mathrm \uptheta}^{(3)}\right)=o_p(1),\\
&&2\left(\boldsymbol{\mathrm \uptheta}^{(2)}-\boldsymbol{\mathrm \uptheta}^{(3)}\right)^{\top}\boldsymbol{\mathrm Q}_{({\tau}_2^0,{\hat{\tau}}_2)}
\boldsymbol{\mathrm Q}_{({\hat{\tau}}_1,{\hat{\tau}}_2)}^{-1}\frac{1}{{\hat{\tau}}_2-\tau_2^0}
\sum_{t_i\in(\tau_2^0,{\hat{\tau}}_2]}\boldsymbol{\mathrm z}_i^{\top}\boldsymbol{\mathrm z}_i\left(\boldsymbol{\mathrm \uptheta}^{(2)}-\boldsymbol{\mathrm \uptheta}^{(3)}\right)=
o_p(1),\\
&&2\left(\boldsymbol{\mathrm \uptheta}^{(2)}-\boldsymbol{\mathrm \uptheta}^{(3)}\right)^{\top}\boldsymbol{\mathrm Q}_{({\hat{\tau}}_1,{\tau}_1^0)}
\boldsymbol{\mathrm Q}_{({\hat{\tau}}_1,{\hat{\tau}}_2)}^{-1}\frac{1}{{\hat{\tau}}_2-\tau_2^0}
\sum_{t_i\in(\tau_2^0,{\hat{\tau}}_2]}\boldsymbol{\mathrm z}_i^{\top}\boldsymbol{\mathrm z}_i\left(\boldsymbol{\mathrm \uptheta}^{(2)}-\boldsymbol{\mathrm \uptheta}^{(3)}\right)
=o_p(1).
\end{eqnarray*}
\ \\
\noindent Now, suppose $\gamma_4$ be the smallest eigenvalue of $\frac{1}{{\hat{\tau}}_2-\tau_2^0}\sum_{t_i\in(\tau_2^0,{\hat{\tau}}_2]}\boldsymbol{\mathrm z}_i^{\top}\boldsymbol{\mathrm z}_i$.
With a suitable $C$, $\gamma_4$ is bounded away from 0. Hence,
 \begin{eqnarray*}
\left(\boldsymbol{\mathrm \uptheta}^{(2)}-\boldsymbol{\mathrm \uptheta}^{(3)}\right)'\frac{1}{{\hat{\tau}}_2-\tau_2^0}
\sum_{t_i\in(\tau_2^0,{\hat{\tau}}_2]}\boldsymbol{\mathrm z}_i^{\top}\boldsymbol{\mathrm z}_i\left(\boldsymbol{\mathrm \uptheta}^{(2)}-\boldsymbol{\mathrm \uptheta}^{(3)}\right) &\geq& \gamma4 ||\boldsymbol{\mathrm \uptheta}^{(2)}-\boldsymbol{\mathrm \uptheta}^{(3)}||^2>0
\end{eqnarray*}
with probability 1 and this dominates the rest of the terms in (\ref{prnm2-eq0}) when $T$ is large. This implies that (\ref{prnm2-eq0}) is positive with probability 1, which gives a contradiction and it indicates that with large probability ${\hat{\tau}}_2$ cannot be in the set $V_{\eta}(C)$.
\end{proof}

\begin{proof}[Proof of Corollary~\ref{corm-1}]
Part (i) of Corollary~\ref{corm-1} follows from the same arguments utilised in the proof of Proposition~\ref{prnm13} with $\phi$ in (\ref{prnm1-eq0}) set to $\displaystyle \frac{1}{T^{2r^*}}$, together with the fact that $T\phi||\boldsymbol{\mathrm \uptheta}^{(1)}-\boldsymbol{\mathrm \uptheta}^{(2)}||^2=(T^{1-2r^*}v_T^2||\mathbf{M}||^2)$ $ \xrightarrow[T\rightarrow \infty]{}\infty$ and $\log T/T^{2r^*} \xrightarrow[T\rightarrow \infty]{}0$. On the other hand, part (ii) may be verified by employing similar arguments as in the proof of Proposition~\ref{prnm14} in investigating the set $V_{\eta}(C, v_T)=\{{\tau}: C/v_T^2<|{\tau}-{\tau^0}|<\eta T\}$ instead of $V_{\eta}(C)$.
\end{proof}
\ \\
\begin{proof}[Proof of Proposition~\ref{prn-mlem1}]
To examine the behaviour of (\ref{mlem2}), we first define $Y_i^*=X_{t_{i+1}^*}-X_{t_i^*}$ and $\boldsymbol{\mathrm z}_i^*=V(t=t_i^*)\Delta_t^*$. Then,
\begin{eqnarray*} \log \ell^*(\boldsymbol{\mathrm \tau},\boldsymbol{\mathrm{\hat{\uptheta}}}(\boldsymbol{\mathrm \tau}))&=&\frac{1}{\Delta_t^*\sigma^2}
\sum_{j=1}^{m+1}\sum_{t_i^*\in(\tau_{j-1},\tau_{j}]}\boldsymbol{\mathrm{\hat{\uptheta}}}^{(j){\top}}\boldsymbol{\mathrm z}_i^{*^{\top}}
Y_i^*-\frac{1}{2\Delta_t^*\sigma^2}\sum_{j=1}^{m+1}
\sum_{t_i^*\in(\tau_{j-1},\tau_{j}]}\left(\boldsymbol{\mathrm{\hat{\uptheta}}}^{(j)\prime}\boldsymbol{\mathrm z}_i^{*{\top}}\right)^2
-\frac{1}{2\Delta_t^*\sigma^2}\sum_{t_i^*\in[0,T]}Y_i^{*{\top}}Y_i^*
\\&&+\frac{1}{2\Delta_t^*\sigma^2}\sum_{t_i^*\in[0,T]}
Y_i^{*\top}Y_i^*=\frac{1}{2\Delta_t^*\sigma^2}\left(\sum_{t_i^*\in[0,T]}Y_i^{*\top}Y_i^*-
\sum_{j=1}^{m+1}\sum_{t_i^*\in(\tau_{j-1},\tau_{j}]}\left(Y_i^*-\boldsymbol{\mathrm z}_i^*\boldsymbol{\mathrm{\hat{\uptheta}}}^{(j)}\right)^2\right).\end{eqnarray*}
\ \\
\noindent The term $\sum_{j=1}^{m+1}\sum_{t_i^*\in(\tau_{j-1},\tau_{j}]}\left(Y_i^*-\boldsymbol{\mathrm z}_i^*\boldsymbol{\mathrm{\hat{\uptheta}}}^{(j)}\right)^2$ is non-negative. For an observed process $X_t=x_t$ with constant $\Delta_t^*$ and known $\sigma$, $\sum_{t_i^*\in[0,T]}Y_i^{*\top}Y_i^*$ is fixed and does not depend on the change points $\tau$. Hence, finding the change points $\tau=(\tau_1,~\ldots~,\tau_m)$ that maximise (\ref{mlem2}) is equivalent to
the minimisation of the term \begin{equation}\label{prnmlem1-eq1}\sum_{j=1}^{m+1}
\sum_{t_i^*\in(\tau_{j-1},\tau_{j}]}\left(Y_i^*-\boldsymbol{\mathrm z}_i^*\boldsymbol{\mathrm{\hat{\uptheta}}}_i\right)^2.\end{equation}
\ \\
If $\Delta_t^*=\Delta_t$, the structure of (\ref{prnmlem1-eq1}) is the same as $SSE(T,\boldsymbol{\mathrm \tau},\boldsymbol{\mathrm{\hat{\uptheta}}}(\boldsymbol{\mathrm \tau}))$ in (\ref{ssr1}). The rest of the proof for Proposition~\ref{prn-mlem1} follows directly via the same arguments
used in establishing Proposition~\ref{prnm13} and \ref{prnm14}.
\end{proof}
\ \\
\section{Proof of the properties in section~\ref{sectionanormality}}\label{appendixanormality}
\begin{proof}[Proof of Proposition~\ref{anm1}]
Note that by (\ref{cxm2}) and (\ref{cxm3}) ,
\begin{equation*}\frac{1}{T}\int_{{\tau}_{j-1}^0}^{{\tau}_j^0}\tilde{X}_t\varphi_k(t)dt\xrightarrow[T\rightarrow \infty]{a.s.}(s_{j}-s_{j-1})\int_0^1\tilde{h}(t)\varphi_k(t)dt \end{equation*}
and
\begin{equation}\label{cxm3}\frac{1}{T}\int_{\tau_{j-1}^0}^{\tau_j^0}\tilde{X}_t^2dt\xrightarrow[T\rightarrow \infty]{a.s.}(s_{j}^0-s_{j-1}^0)
\left(\int_0^1(\tilde{h}^{(j)}(t))^2dt+\frac{\sigma^2}{2a^{(j)}}\right).
\end{equation}
\ \\
In addition, invoking similar arguments found in the proofs of (B. 47) and (B. 67) in Zhang~(2015), we have
\begin{equation}\frac{1}{T}\int_{{\hat{\tau}}_{j-1}}^{{\hat{\tau}}_j}
X_t\varphi_k(t)dt-\frac{1}{T}\int_{{\hat{\tau}}_{j-1}}^{{\hat{\tau}}_j}
\tilde{X}_t\varphi_k(t)dt\xrightarrow[T\rightarrow \infty]{P}0\end{equation}
and
\begin{equation}\frac{1}{T}\int_{{\hat{\tau}}_{j-1}}^{{\hat{\tau}}_j}X_t^2dt-\frac{1}{T}\int_{{\hat{\tau}}_{j-1}}^{{\hat{\tau}}_j}\tilde{X}_t^2dt\xrightarrow[T\rightarrow \infty]{P}0.\end{equation}
\ \\
Therefore, it suffices to prove that
  \begin{equation}\label{anm1-2a}\frac{1}{T}\int_{{\hat{\tau}}_{j-1}}^{{\hat{\tau}}_j}\tilde{X}_t\varphi_k(t)dt- \frac{1}{T}\int_{{\tau}_{j-1}^0}^{{\tau}_j^0}\tilde{X}_t\varphi_k(t)dt\xrightarrow[T\rightarrow \infty]{p} 0.\end{equation}
\ \\
 Similarly,
 \begin{equation}\label{anm1-3a}\frac{1}{T}\int_{{\hat{\tau}}_{j-1}}^{{\hat{\tau}}_j}\tilde{X}_t^2dt- \frac{1}{T}\int_{{\tau}_{j-1}^0}^{{\tau}_j^0}\tilde{X}_t^2dt\xrightarrow[T\rightarrow \infty]{p}0.\end{equation}
\ \\
\noindent Let $0<\delta_j<\min((s_{j+1}^0-s_{j}^0),(s_j^0-s_{j-1}^0))/2$. Then, it follows form Proposition~\ref{prnm14} that $$\lim_{T\rightarrow \infty}P(|\hat{s}_j-s_j^0|>\delta_j)=0, \quad j=1,~\ldots,~m.$$
Therefore, we have
\begin{eqnarray*}
&&P\left(\frac{1}{T}|\int_{{\hat{\tau}}_{j-1}}^{{\hat{\tau}}_j}\tilde{X}_t\varphi_k(t)dt- \frac{1}{T}\int_{{\tau}_{j-1}^0}^{{\tau}_j^0}\tilde{X}_t\varphi_k(t)dt|> \epsilon \right)\\
&=&P\left( \frac{1}{T}|\int_{{\hat{\tau}}_{j-1}}^{{\hat{\tau}}_j}\tilde{X}_t\varphi_k(t)dt- \frac{1}{T}\int_{{\tau}_{j-1}^0}^{{\tau}_j^0}\tilde{X}_t\varphi_k(t)dt|> \epsilon, |\hat{s}_j-s_j^0|\leq\delta_j, |\hat{s}_{j-1}-s_{j-1}^0|\leq\delta_{j-1}\right).
\end{eqnarray*}

Since  $|\hat{s}_j-s_j^0|\leq\delta_j$ is equivalent to $s_j^0-\delta_j\leq \hat{s}_j\leq s_j^0+\delta_j$,
and it follows that, for every $j$,
\begin{eqnarray*}
\norm{ \int_{{\hat{\tau}}_{j-1}}^{{\hat{\tau}}_j}\tilde{X}_t\varphi_k(t)dt- \int_{{\tau}_{j-1}^0}^{{\tau}_j^0}\tilde{X}_t\varphi_k(t)dt}
&\leq& \int_{{\tau}_{j-1}^0-\delta_{j-1}T}^{\tau_{j-1}^0+\delta_{j-1}T}|\tilde{X}_t\varphi_k(t)|dt+\int_{{\tau}_{j}^0-\delta_{j}T}^{\tau_{j}^0+\delta_{j}T}|\tilde{X}_t\varphi_k(t)|dt.
\end{eqnarray*}
Hence,
\begin{eqnarray}
&&P\left( \frac{1}{T} \left| \int_{{\hat{\tau}}_{j-1}}^{{\hat{\tau}}_j}\tilde{X}_t\varphi_k(t)dt- \int_{{\tau}_{j-1}^0}^{{\tau}_j^0}\tilde{X}_t\varphi_k(t)dt \right|> \epsilon, |\hat{s}_j-s_j^0|\leq\delta_j, |\hat{s}_{j-1}-s_{j-1}^0|\leq\delta_{j-1} \right)\nonumber\\
&\leq&P\left( \frac{1}{T} \left( \int_{{\tau}_{j-1}^0-\delta_{j-1}T}^{\tau_{j-1}^0+\delta_{j-1}T}|\tilde{X}_t\varphi_k(t)|dt+\int_{{\tau}_{j}^0-\delta_{j}T}^{\tau_{j}^0+\delta_{j}T}|\tilde{X}_t\varphi_k(t)|dt\right)> \epsilon \right)\nonumber\\
&\leq&P\left( \frac{1}{T} \int_{{\tau}_{j-1}^0-\delta_{j-1}T}^{\tau_{j-1}^0+\delta_{j-1}T}|\tilde{X}_t\varphi_k(t)|dt>\epsilon/2\right)
+P\left( \frac{1}{T}\int_{{\tau}_{j}^0-\delta_{j}T}^{\tau_{j}^0+\delta_{j}T}|\tilde{X}_t\varphi_k(t)|dt> \epsilon/2 \right).\label{prn34-eq1}
\end{eqnarray}
\ \\
It then follows from the Markov's inequality and Jensen's inequality that
\begin{eqnarray*}
P\left(\frac{1}{T} \int_{{\tau}_{j-1}^0-\delta_{j-1}T}^{\tau_{j-1}^0+\delta_{j-1}T}|\tilde{X}_t\varphi_k(t)|dt>\epsilon/2\right)&\leq&\frac{4\mathrm{E}\left[\left( \int_{{\tau}_{j-1}^0-\delta_{j-1}T}^{\tau_{j-1}^0+\delta_{j-1}T}|\tilde{X}_t||\varphi_k(t)|dt
\right)^2\right]}{\epsilon^2T^2}
\leq \frac{4\mathrm{E}\left( \int_{{\tau}_{j-1}^0-\delta_{j-1}T}^{\tau_{j-1}^0+\delta_{j-1}T}|\tilde{X}_t|^2|\varphi_k(t)|^2dt\right)}{\epsilon^2T^2}.
\end{eqnarray*}
\noindent In the same vein,
\begin{eqnarray*}
P\left(\frac{1}{T}\int_{{\tau}_{j}^0-\delta_{j}T}^{\tau_{j}^0+\delta_{j}T}|\tilde{X}_t\varphi_k(t)|dt> \epsilon/2\right)&\leq&\frac{4\mathrm{E}\left[\left(
\int_{{\tau}_{j}^0-\delta_{j}T}^{\tau_{j}^0+\delta_{j}T}|\tilde{X}_t||\varphi_k(t)|dt\right)^2\right]}{\epsilon^2T^2}
\leq\frac{4\mathrm{E}\left( \int_{{\tau}_{j}^0-\delta_{j}T}^{\tau_{j}^0+\delta_{j}T}|\tilde{X}_t|^2|\varphi_k(t)|^2dt\right)}{\epsilon^2T^2}.
\end{eqnarray*}
\ \\
\noindent Consequently,
\begin{equation*}(\ref{prn34-eq1})\leq \frac{4\mathrm{E}\left( \int_{{\tau}_{j-1}^0-\delta_{j-1}T}^{\tau_{j-1}^0+\delta_{j-1}T}|\tilde{X}_t|^2|\varphi_k(t)|^2dt+ \int_{{\tau}_{j}^0-\delta_{j}T}^{\tau_{j}^0+\delta_{j}T}|\tilde{X}_t|^2|\varphi_k(t)|^2dt\right)}{\epsilon^2T^2}.\end{equation*}
\noindent Then, it follows from (\ref{sol2}) that $E(|\tilde{X}_t|^2)<K_1<\infty$, for all $t\geq 0$. Also, under Assumption~\ref{asm3}, $|\varphi_k(t)|\leq K_\varphi$. Therefore,
$$\frac{4\mathrm{E}\left( \int_{{\tau}_{j-1}^0-\delta_{j-1}T}^{\tau_{j-1}^0+\delta_{j-1}T}|\tilde{X}_t|^2|\varphi_k(t)|^2dt+\int_{{\tau}_{j}^0-\delta_{j}T}^{\tau_{j}^0+\delta_{j}T}|\tilde{X}_t|^2|\varphi_k(t)|^2dt\right)}{\epsilon^2T^2}
\leq \frac{8\left(\delta_{j-1}+\delta_j\right)T K_1^{2}K_\varphi^2 }{\epsilon^2 T^2}.$$
Hence,
\begin{eqnarray*}
&&P\left( \frac{1}{T}\left| \int_{{\hat{\tau}}_{j-1}}^{{\hat{\tau}}_j}\tilde{X}_t\varphi_k(t)dt- \frac{1}{T}\int_{{\tau}_{j-1}^0}^{{\tau}_j^0}\tilde{X}_t\varphi_k(t)dt\right|> \epsilon \right) \\
&\leq&\lim_{T\rightarrow \infty}P\left(\frac{1}{T}\left( \int_{{\tau}_{j-1}^0-\delta_{j-1}T}^{\tau_{j-1}^0+\delta_{j-1}T}|\tilde{X}_t\varphi_k(t)|dt+\int_{{\tau}_{j}^0-\delta_{j}T}^{\tau_{j}^0+\delta_{j}T}|\tilde{X}_t\varphi_k(t)|dt\right)> \epsilon/2\right)\\
&\leq& \lim_{T\rightarrow \infty} \frac{8(\delta_{j-1}+\delta_j)T K_1^2K_\varphi^2 }{\epsilon^2 T^2}=0.
\end{eqnarray*}
Similarly, we have
\begin{eqnarray}
&&P\left( \frac{1}{T}\left|\int_{{\hat{\tau}}_{j-1}}^{{\hat{\tau}}_j}\tilde{X}_t^2dt- \int_{{\tau}_{j-1}^0}^{{\tau}_j^0}\tilde{X}_t^2dt\right|> \epsilon, |\hat{s}_j-s_j^0|\leq\delta_j, |\hat{s}_{j-1}-s_{j-1}^0|\leq\delta_{j-1} \right)\nonumber\\
&\leq&P\left( \frac{1}{T} \int_{{\tau}_{j-1}^0-\delta_{j-1}T}^{\tau_{j-1}^0+\delta_{j-1}T}\tilde{X}_t^2dt>\epsilon/2\right)
+P\left( \frac{1}{T}\int_{{\tau}_{j}^0-\delta_{j}T}^{\tau_{j}^0+\delta_{j}T}\tilde{X}_t^2dt> \epsilon/2 \right).\label{prn34-eq2}
\end{eqnarray}
Again, using the Markov's inequality,
\begin{equation}(\ref{prn34-eq2})\leq \frac{4K_1(\delta_{j-1}+\delta_j)T}{\epsilon T}=\frac{4K_1(\delta_{j-1}+\delta_j)}{\epsilon}.\end{equation}
Using the consistency properties of the estimators $\hat{s}_j$ provided in Section~\ref{mcpe},
i.e., $\hat{s}_j-s_j^0\xrightarrow[T\rightarrow \infty]{P}0$,
we can choose arbitrarily small $\delta_{j-1}$ and $\delta_j$ such that
$$\lim_{T\rightarrow \infty}(\ref{prn34-eq2})\leq \lim_{T\rightarrow \infty}\frac{4K_1(\delta_{j-1}+\delta_j)}{\epsilon}=0.$$
   \end{proof}

\begin{proof}[Proof of Proposition~\ref{anm4}]
By Propositions \ref{anm1}--\ref{anm3}, it may be shown that
\begin{equation}\frac{1}{T}\boldsymbol{\mathrm Q}_{({\hat{\tau}}_{j-1}, {\hat{\tau}}_{j})} \xrightarrow[T\rightarrow \infty]{P} (s_j^0-s_{j-1}^0)\boldsymbol{\mathrm \Sigma}_j, j=1,~\ldots,~m. \end{equation}
Under Assumptions 2--4, $\displaystyle
\frac{1}{T}\boldsymbol{\mathrm Q}_{({\hat{\tau}}_{j-1}, {\hat{\tau}}_{j})}$ is positive definite provided that the base functions $\{\varphi_k(t):i=1~\ldots,~p\}$ are incomplete.
Then, by the Continuous Mapping Theorem,
\begin{equation}T\boldsymbol{\mathrm Q}_{({\hat{\tau}}_{j-1}, {\hat{\tau}}_{j})}^{-1}=g\left(
\frac{1}{T}\boldsymbol{\mathrm Q}_{({\hat{\tau}}_{j-1}, {\hat{\tau}}_{j})} \right) \xrightarrow[T\rightarrow \infty]{P} \frac{1}{s_j^0-s_{j-1}^0}\boldsymbol{\mathrm \Sigma}_j^{-1}=g\left(\left(s_j^0-s_{j-1}^0\right)\boldsymbol{\mathrm \mathrm{\Sigma}}_j\right), j=1,~\ldots,~m,\end{equation}
where $g(X)=X^{-1}$ for any positive definite matrix $X$.
\end{proof}

\begin{proof}[Proof of Proposition~\ref{anm5}]
Here we only prove that
\begin{equation}\frac{1}{\sqrt{T}}\int_{{\hat{\tau}}_{j-1}}^{{\hat{\tau}}_j}X_tdW_t
-\frac{1}{\sqrt{T}}\int_{{\tau}_{j-1}^0}^{{\tau}_j^0}X_tdW_t \xrightarrow[T\rightarrow \infty]{P} 0.\end{equation}
The convergence of the remaining components may be proved analogously by the same approach.\\
Since by  Proposition~\ref{prnm14}, $$\lim_{T\rightarrow \infty}P\left(|\hat{s}_j-s_j^0|>\delta_j\right)=0, \quad j=1,~\ldots,~m,$$
\begin{eqnarray*}
&&P\left( \frac{1}{T} \left| \int_{{\hat{\tau}}_{j-1}}^{{\hat{\tau}}_j}X_tdW_t- \frac{1}{T}\int_{{\tau}_{j-1}^0}^{{\tau}_j^0}X_tdW_t \right|> \epsilon \right)\\
&=&P\left( \frac{1}{T} \left| \int_{{\hat{\tau}}_{j-1}}^{{\hat{\tau}}_j}X_tdW_t- \int_{{\tau}_{j-1}^0}^{{\tau}_j^0}X_tdW_t \right|> \epsilon, |\hat{s}_j-s_j^0|\leq\delta_j, |\hat{s}_{j-1}-s_{j-1}^0|\leq\delta_{j-1}\right)
\end{eqnarray*}
Without loss of generality, we assume that ${\hat{\tau}}_{j-1}<\tau_{j-1}^0<\tau_j^0<{\hat{\tau}}_{j}$. Then,
\begin{eqnarray}
\left|\int_{{\hat{\tau}}_{j-1}}^{{\hat{\tau}}_j}X_tdW_t- \int_{{\tau}_{j-1}^0}^{{\tau}_j^0}X_tdW_t\right|
=\left| \int_{{\hat{\tau}}_{j-1}}^{\tau_{j-1}^0}X_tdW_t+ \int_{{\tau}_{j}^0}^{{\hat{\tau}}_{j}}X_tdW_t\right|
\leq \left|\int_{\tau_{j-1}^0-\delta_{j-1}T}^{{\hat{\tau}}_{j-1}}X_tdW_t\right|
+ \left|\int_{\tau_{j-1}^0-\delta_{j-1}T}^{{\tau}_{j-1}^0}X_tdW_t \right|\nonumber\\
+ \left|\int_{{\tau}_{j}^0}^{\tau_{j}^0+\delta_{j}T}X_tdW_t \right|
+ \left|\int_{{\hat{\tau}}_{j}}^{{\tau}_{j}+\delta_{j}T}X_tdW_t \right|.\qquad\quad
\end{eqnarray}
With the same arguments as in the proof of (\ref{prn34-eq1}), together with above inequality,
\begin{eqnarray}
&&P\left( \frac{1}{\sqrt{T}} \left| \int_{{\hat{\tau}}_{j-1}}^{{\hat{\tau}}_j}X_tdW_t- \frac{1}{T}\int_{{\tau}_{j-1}^0}^{{\tau}_j^0}X_tdW_t \right|> \epsilon, |\hat{s}_j-s_j^0|\leq\delta_j, |\hat{s}_{j-1}-s_{j-1}^0|\leq\delta_{j-1} \right)\\
&\leq&P\left( \frac{1}{\sqrt{T}} \left|\int_{\tau_{j-1}^0-\delta_{j-1}T}^{{\hat{\tau}}_{j-1}}X_tdW_t\right|> \epsilon/4\right)
+P\left(\frac{1}{\sqrt{T}} \left|\int_{\tau_{j-1}^0-\delta_{j-1}T}^{{\tau}_{j-1}^0}X_tdW_t\right|> \epsilon/4 \right)\\
&&+P\left( \frac{1}{\sqrt{T}} \left|\int_{{\hat{\tau}}_{j}}^{\tau_{j}^0+\delta_{j}T}X_tdW_t\right|> \epsilon/4\right)
+P\left(\frac{1}{\sqrt{T}} \left|\int_{{\tau}_{j}^0}^{\tau_{j}^0+\delta_{j}T}X_tdW_t\right|> \epsilon/4\right).\label{anm5-eq1}
\end{eqnarray}
\ \\
Then, by Markov inequality and It\^o's isometry,
\begin{eqnarray*}
P\left(\frac{1}{\sqrt{T}} \left|\int_{\tau_{j-1}^0-\delta_{j-1}T}^{{\tau}_{j-1}^0}X_tdW_t \right|> \epsilon/4 \right)
\leq\frac{16\mathrm{E}\left[ \left(\int_{\tau_{j-1}^0-\delta_{j-1}T}^{{\tau}_{j-1}^0}X_tdW_t \right)^2\right] }{\epsilon^2T}
=\frac{16\int_{\tau_{j-1}^0-\delta_{j-1}T}^{{\tau}_{j-1}^0}\mathrm{E}\left[X_t^2\right] dt}{T\epsilon^2}
\leq \frac{16K_1\delta_{j-1}T}{\epsilon^2T}\\=\frac{16K_1\delta_{j-1}}{\epsilon^2},
\end{eqnarray*}
which tends to 0 for an infinitesimal $\delta_{j-1}$. Similarly,
$\displaystyle P\left(\frac{1}{\sqrt{T}} \left|\int_{{\tau}_{j}^0}^{\tau_{j}^0+\delta_{j}T}X_tdW_t\right|> \epsilon/4 \right)$ also tends to 0
by choosing an infinitesimal $\delta_j$. Further, note that $\displaystyle \int_{\tau_{j-1}^0-\delta_{j-1}T}^{{\hat{\tau}}_{j-1}}X_tdW_t =\int_0^T X_t \mathbbm{1}(\tau_{j-1}^0-\delta_{j-1}T\leq t\leq {\hat{\tau}}_{j-1})dW_t$. So, again, by the Markov inequality and It\^o's isometry,
\begin{eqnarray*}
P\left( \frac{1}{\sqrt{T}} \left| \int_{\tau_{j-1}^0-\delta_{j-1}T}^{{\hat{\tau}}_{j-1}}X_tdW_t \right|> \epsilon/4 \right)
= P\left( \frac{1}{\sqrt{T}} \left|\int_0^T X_t \mathbbm{1}(\tau_{j-1}^0-\delta_{j-1}T\leq t\leq {\hat{\tau}}_{j-1})dW_t \right|> \epsilon/4\right)\\
\leq\frac{16\mathrm{E}\left[ \left( \int_0^T X_t \mathbbm{1}(\tau_{j-1}^0-\delta_{j-1}T\leq t\leq {\hat{\tau}}_{j-1})dW_t \right)^2 \right]}{\epsilon^2T}
=\frac{16\int_{\tau_{j-1}^0-\delta_{j-1}T}^{\tau_{j-1}^0+\delta_{j-1}T}\mathrm{E}\left[X_t^2\right]dt}{T\epsilon^2}
\leq \frac{32K_1\delta_{j-1}T}{\epsilon^2T}\\=\frac{32K_1\delta_{j-1}}{\epsilon^2},
\end{eqnarray*}
which goes to 0 for an infinitesimal $\delta_{j-1}$. Similarly, $P\left( \frac{1}{\sqrt{T}} \left| \int_{{\hat{\tau}}_{j}}^{\tau_{j}^0+\delta_{j}T}X_tdW_t \right|> \epsilon/4 \right)$ also approaches 0
by choosing an infinitesimal $\delta_j$. This implies that (\ref{anm5-eq1}) tends to 0 for an infinitesimal $\delta_{j-1}$ and $\delta_j$.
\end{proof}

\begin{proof}[Proof of Proposition~\ref{anm6}]
Replacing the interval $\left(0, T\right]$ by $\left(\tau_{j-1}^0,\tau_j^0\right]$ and using similar argument as in the proof of Proposition 2.1.6 in Zhang~(2015), we get
\begin{equation}\frac{1}{\sqrt{T}}\boldsymbol{\mathrm r}_{({\tau}_{j-1}^0, {\tau}_{j}^0)} \xrightarrow[T\rightarrow \infty]{D} \boldsymbol{\mathrm r}_j\sim {\cal N}_{(p+1)}\left(0,\boldsymbol{\mathrm \Sigma}_j\right).\end{equation}
By Proposition~\ref{anm5},
\begin{equation}\frac{1}{\sqrt{T}}\boldsymbol{\mathrm r}_{({\hat{\tau}}_{j-1}, {\hat{\tau}}_{j})}-\frac{1}{\sqrt{T}}\boldsymbol{\mathrm r}_{({\tau}_{j-1}^0, {\tau}_{j}^0)} \xrightarrow[T\rightarrow \infty]{P} 0.\end{equation}
Hence by Slutsky's Theorem,
\begin{equation}\frac{1}{\sqrt{T}}\boldsymbol{\mathrm r}_{({\hat{\tau}}_{j-1}, {\hat{\tau}}_{j})}=\frac{1}{\sqrt{T}}\boldsymbol{\mathrm r}_{({\hat{\tau}}_{j-1}, {\hat{\tau}}_{j})}-\frac{1}{\sqrt{T}}\boldsymbol{\mathrm r}_{({\tau}_{j-1}^0, {\tau}_{j}^0)}+\frac{1}{\sqrt{T}}\boldsymbol{\mathrm r}_{({\tau}_{j-1}^0, {\tau}_{j}^0)}
 \xrightarrow[T\rightarrow \infty]{D} \boldsymbol{\mathrm r}_j\sim {\cal N}_{(p+1)}\left(0,\boldsymbol{\mathrm \Sigma}_j\right).\end{equation}
\end{proof}
\ \\
\begin{proof}[Proof of Corollary~\ref{anc1}]
Let $\boldsymbol{\mathrm{\tilde{r}}}_T(\boldsymbol{\hat{\mathrm \tau}})=\left(\boldsymbol{\mathrm{\tilde{r}}}_{(0,{\hat{\tau}}_1)}^{\top},...,\boldsymbol{\mathrm{\tilde{r}}}_{({\hat{\tau}}_m,T)}^{\top}\right)^{\top}$,  $\boldsymbol{\mathrm r}_T(\boldsymbol{\hat{\mathrm \tau}})=\left(\boldsymbol{\mathrm r}_{(0,{\hat{\tau}}_1)}^{\top},~\ldots,~\boldsymbol{\mathrm r}_{({\hat{\tau}}_m,T)}^{\top}\right)^{\top}$ and let $\boldsymbol{\tilde{\mathrm Q}}(\boldsymbol{\hat{\mathrm \tau}})^{-1}=\mathrm{diag}\left(\boldsymbol{\mathrm Q}_{(0,{\hat{\tau}}_1)}^{-1},~\ldots~,\boldsymbol{\mathrm Q}_{({\hat{\tau}}_m,T)}^{-1}\right)$.
Then, we have $\boldsymbol{\mathrm{\hat{\uptheta}}}=\boldsymbol{\tilde{\mathrm Q}}(\boldsymbol{\hat{\mathrm \tau}})^{-1}\boldsymbol{\mathrm{\tilde{r}}}_T(\boldsymbol{\hat{\mathrm \tau}})$. By Propositions~\ref{anm4}-- \ref{anm6} and application of Slutsky's Theorem,
$$\sqrt{T}\boldsymbol{\tilde{\mathrm Q}}(\boldsymbol{\hat{\mathrm \tau}})^{-1}\boldsymbol{\mathrm{\tilde{r}}}_T(\boldsymbol{\hat{\mathrm \tau}})=T\boldsymbol{\tilde{\mathrm Q}}(\boldsymbol{\hat{\mathrm \tau}})^{-1}\frac{1}{\sqrt{T}}\boldsymbol{\mathrm r}_T(\boldsymbol{\hat{\mathrm \tau}})\xrightarrow[T\rightarrow \infty]{D} \rho \sim {\cal N}_{((m+1)(p+1))}\left(0, \boldsymbol{\tilde{\mathrm \Sigma}^{-1}}\right).$$
\ \\
\noindent Next, we investigate the asymptotic behaviour of $\boldsymbol{\mathrm{\hat{\uptheta}}}$ based on $\boldsymbol{\hat{\mathrm \tau}}$. Without loss of generality, we assume that for the $j$th block, $1\leq j \leq m$, we have ${\hat{\tau}}_{j-1}<\tau_{j-1}^0<{\hat{\tau}}_{j}<{\tau}_j^0$. In this case, we have
$$\boldsymbol{\mathrm{\hat{\uptheta}}}^{(j)}=\boldsymbol{\mathrm Q}_{({\hat{\tau}}_{j-1},{\hat{\tau}}_{j})}^{-1}\left(\boldsymbol{\mathrm Q}_{({\hat{\tau}}_{j-1},{\tau}_{j-1}^0)}\boldsymbol{\mathrm \uptheta}^{(j-1)}+\boldsymbol{\mathrm Q}_{({\tau}_{j-1}^0,{\hat{\tau}}_{j})}\boldsymbol{\mathrm \uptheta}^{(j)}+\sigma \boldsymbol{\mathrm r}_{({\hat{\tau}}_{j-1},{\hat{\tau}}_{j})}\right).$$
\ \\
\noindent Hence,
$$\sqrt{T}\left(\boldsymbol{\mathrm{\hat{\uptheta}}}^{(j)}-\boldsymbol{\mathrm \uptheta}^{(j)}\right)=T\boldsymbol{\mathrm Q}_{({\hat{\tau}}_{j-1},{\hat{\tau}}_{j})}^{-1}\left(\frac{1}{\sqrt{T}}\boldsymbol{\mathrm Q}_{({\hat{\tau}}_{j-1},{\tau}_{j-1}^0)}\left(\boldsymbol{\mathrm \uptheta}^{(j-1)}-\boldsymbol{\mathrm \uptheta}^{(j)}\right)+\sigma \frac{1}{\sqrt{T}} \boldsymbol{\mathrm r}_{({\hat{\tau}}_{j-1},{\hat{\tau}}_{j})}\right).$$
\ \\
By Proposition~\ref{prnm14}, $|{\hat{\tau}}_{j-1}-\tau_{j-1}^0|\leq C$ for some $C>0$ with probability 1. Invoking the Markov inequality,
$$P\left( \frac{1}{\sqrt{T}} \int_{{\tau}_{j-1}^0}^{{\hat{\tau}}_{j-1}}X_t^2dt> \epsilon \right)\leq\frac{2\mathrm{E}\left[\int_{{\tau}_{j-1}^0}^{{\hat{\tau}}_{j-1}}X_t^2dt\right]}{\epsilon \sqrt{T}}\leq \frac{2K_1C}{\epsilon \sqrt{T}}$$
and
$$P\left( \frac{1}{\sqrt{T}} \int_{{\tau}_{j-1}^0}^{{\hat{\tau}}_{j-1}}X_t\varphi_k(t)dt> \epsilon\right)\leq\frac{4\mathrm{E}\left[ \int_{{\tau}_{j-1}^0}^{{\hat{\tau}}_{j-1}}X_t\varphi_k(t)dt\right]^2}{\epsilon^2 {T}}\leq \frac{4K_1K_{\varphi}C}{\epsilon^2 {T}}.$$
Therefore, $\norm{\frac{1}{\sqrt{T}}\boldsymbol{\mathrm Q}_{({\hat{\tau}}_{j-1},{\tau}_{j-1}^0)}}
\xrightarrow[T\rightarrow \infty]{p} 0$, which means
$$\sqrt{T}(\boldsymbol{\mathrm{\hat{\uptheta}}}^{(j)}-\boldsymbol{\mathrm \uptheta}^{(j)})-\sigma T\boldsymbol{\mathrm Q}_{({\hat{\tau}}_{j-1},{\hat{\tau}}_{j})}^{-1} \frac{1}{\sqrt{T}}\boldsymbol{\mathrm r}_{({\hat{\tau}}_{j-1},{\hat{\tau}}_{j})}\xrightarrow[T\rightarrow \infty]{p}0.$$
So,
$$\sqrt{T}(\boldsymbol{\mathrm{\hat{\uptheta}}}-\boldsymbol{\mathrm \uptheta})-\sigma T\boldsymbol{\tilde{\mathrm Q}}(\boldsymbol{\hat{\mathrm \tau}})^{-1} \frac{1}{\sqrt{T}}\boldsymbol{\mathrm r}_T(\boldsymbol{\hat{\mathrm \tau}})\xrightarrow[T\rightarrow \infty]{p}0$$
and
\begin{eqnarray*}
\sqrt{T}(\boldsymbol{\mathrm{\hat{\uptheta}}}-\boldsymbol{\mathrm \uptheta})=\sqrt{T}(\boldsymbol{\mathrm{\hat{\uptheta}}}-\boldsymbol{\mathrm \uptheta})- \sigma T\tilde{Q}(\boldsymbol{\hat{\mathrm \tau}})^{-1} \frac{1}{\sqrt{T}}\boldsymbol{\mathrm r}_T(\boldsymbol{\hat{\mathrm \tau}})+\sigma T\tilde{Q}(\boldsymbol{\hat{\mathrm \tau}})^{-1} \frac{1}{\sqrt{T}}\boldsymbol{\mathrm r}_T(\boldsymbol{\hat{\mathrm \tau}}) \xrightarrow[T\rightarrow \infty]{D}\rho\\ \sim {\cal N}_{((m+1)(p+1))}\left(0,\sigma^2\boldsymbol{\tilde{\mathrm \Sigma}^{-1}}\right).\end{eqnarray*}
\end{proof}
\ \\
\section{Proof of Proposition ~\ref{asic1} }\label{appendixsic}
\noindent The proof of Proposition consists of two parts. In Part~(i), we prove that $\mathcal{IC}(m=m^0)<\mathcal{IC}(m<m^0)$, whilst in Part~(ii) we prove that $\mathcal{IC}(m=m^0)>\mathcal{IC}(m<m^0)$.
\begin{proof}[Part~(i): $\mathcal{IC}(m=m^0)<\mathcal{IC}(m<m^0)$ ]
From Proposition~\ref{prn-mlem1},
\begin{equation}\mathcal{IC}(m=m^0)=-2\frac{1}{2\Delta_t^*\sigma^2}\left(\sum_{t_i\in[0,T]}Y_i^{\prime}Y_i-\sum_{j=1}^{m^0+1}\sum_{t_i\in({\hat{\tau}}_{j-1},{\hat{\tau}}_{j}]}\left(Y_i-\boldsymbol{\mathrm z}_i\boldsymbol{\mathrm{\hat{\uptheta}}}^{(j)}\right)^2\right)+(m^0+1)(p+1)\log(T/\Delta_t),\end{equation}
where ${\hat{\tau}}_j$, $j=1~\ldots,~m^0$ are obtained via (\ref{mlem1}). Next, we define
\begin{equation}\mathcal{IC}^0(m=m^0)=-2\frac{1}{2\Delta_t\sigma^2}\left(\sum_{t_i\in[0,T]}Y_i^{\prime}Y_i-\sum_{j=1}^{m^0+1}\sum_{t_i\in({\tau}_{j-1}^0,{\tau}_{j}^0]}\left(Y_i-\boldsymbol{\mathrm z}_i\boldsymbol{\mathrm{\hat{\uptheta}}}^{(j,0)}\right)^2\right)+(m^0+1)(p+1)\log(T/\Delta_t),\end{equation}
where $\boldsymbol{\mathrm{\hat{\uptheta}}}^{(j,0)}$ was given in \ref{appendixb}.
Since ${\hat{\tau}}_j$, $j=1,~\ldots~,m^0$, are obtained by maximising $\log \ell^*(\boldsymbol{\mathrm \tau},\hat{\uptheta})$, or equivalently minimising $-2\log \ell^*(\boldsymbol{\mathrm \tau},\hat{\uptheta})$, we have that $\mathcal{IC}(m=m^0)\leq \mathcal{IC}^0(m=m^0)$ with probability 1. Hence, we must show that
\begin{equation}\mathcal{IC}(m<m^0)>\mathcal{IC}^0(m=m^0)\end{equation}
with probability 1.\\
\ \\
\noindent For any positive integer $m^*$ such that $0<m^*<m^0$, suppose that the estimated locations of these $m^*$ change points are ${\hat{\tau}}_1^*,~\ldots~,{\hat{\tau}}_{m^*}^*$, and the MLE of drift parameters associated with the $i$th observation is $\boldsymbol{\mathrm{\hat{\uptheta}}}_i^*=\sum_{j=1}^{m^*+1}\boldsymbol{\mathrm{\hat{\uptheta}}}^{(j,*)}\mathbbm{1}({\hat{\tau}}_{j-1}^*\leq t_i\leq {\hat{\tau}}_j^*)$, where $\boldsymbol{\mathrm{\hat{\uptheta}}}^{(j,*)}=\boldsymbol{\mathrm Q}_{({\hat{\tau}}_{j-1}^*,{\hat{\tau}}_{j}^*)}^{-1}\boldsymbol{\mathrm{\tilde{r}}}_{({\hat{\tau}}_{j-1}^*,{\hat{\tau}}_{j}^*)}$. Furthermore,
\begin{eqnarray}
\frac{1}{T}(\mathcal{IC}(m=m^*)- \mathcal{IC}^0(m=m^0))&=&\frac{1}{T \Delta_t \sigma^2}\left(\sum_{j=1}^{m^*+1}\sum_{t_i^*\in({\hat{\tau}}_{j-1},{\hat{\tau}}_{j}]}\left(Y_i-\boldsymbol{\mathrm z}_i\boldsymbol{\mathrm{\hat{\uptheta}}}_i^*\right)^2-\sum_{j=1}^{m^0+1}\sum_{t_i\in({\tau}_{j-1}^0,{\tau}_{j}^0]}\left(Y_i-\boldsymbol{\mathrm z}_i\boldsymbol{\mathrm{\hat{\uptheta}}}^{(j,0)}\right)^2\right)\nonumber\\&&-\frac{(m^0-m^*)(p+1)\log(T/\Delta_t^*) }{T},
\end{eqnarray}
Since $m^*<m^0$, there exists at least one change point that cannot be consistently estimated. Without loss of generality, let $\tau_j^0$ be the change point. With similar arguments utilised in the proof of Lemma~C.1, we get
\begin{eqnarray}
\frac{1}{T}(\mathcal{IC}(m<m^0)- \mathcal{IC}^0(m=m^0))&\geq& C^* ||\boldsymbol{\mathrm \uptheta}^{(j)}-\boldsymbol{\mathrm \uptheta}^{(j+1)}||^2 +o_p(1)
\end{eqnarray}
for some $C^{*}>0$ with probability 1. Therefore, $\mathcal{IC}(m<m^0)> \mathcal{IC}^0(m=m^0)$ with probability 1. This completes the proof of part (i).\end{proof}
\ \\
\begin{proof}[Part~(ii): $\mathcal{IC}(m=m^0)<\mathcal{IC}(m>m^0)$]
\noindent Since $\mathcal{IC}(m=m^0)\leq \mathcal{IC}^0(m=m^0)$, where $\mathcal{IC}^0(m=m^0)$ is defined in Part~(i), it remains to show that the difference $\mathcal{IC}(m=m^*>m^0)- \mathcal{IC}^0(m=m^0)$ is positive with probability 1.\\
\ \\
\noindent Note that for the case where $m=m^*>m^0$ and the estimated locations of the $m^*$ change points are given by ${\hat{\tau}}_1<{\hat{\tau}}_2<,...,<{\hat{\tau}}_{m^*}$, we have
\begin{eqnarray}\label{part2-eq1}
\mathcal{IC}(m=m^*)- \mathcal{IC}^0(m=m^0)&=&\frac{1}{ \Delta_t \sigma^2}\left(\sum_{t_i\in [0,T]}\left(Y_i-\boldsymbol{\mathrm z}_i\boldsymbol{\mathrm{\hat{\uptheta}}}_i^*\right)^2-\sum_{t_i\in [0,T]}\left(Y_i-\boldsymbol{\mathrm z}_i\boldsymbol{\mathrm{\hat{\uptheta}}}_i^0\right)^2\right)\nonumber\\&&+(m^*-m^0)(p+1)\log(T/\Delta_t^*),
\end{eqnarray}
where $\boldsymbol{\mathrm{\hat{\uptheta}}}_i^{*}=\sum_{j=1}^{m^*+1}\boldsymbol{\mathrm \uptheta}^{(j,*)}\mathbbm{1}(\tau_{j-1}^*\leq t_i\leq \tau_j^*)$ with $\boldsymbol{\mathrm{\hat{\uptheta}}}^{(j,*)}=\boldsymbol{\mathrm Q}_{({\hat{\tau}}_j^*,{\hat{\tau}}_{j-1}^*)}^{-1}\boldsymbol{\mathrm{\tilde{r}}}_{({\hat{\tau}}_j^*,{\hat{\tau}}_{j-1}^*)}$. We note that $m^*>m^0$, and from $m^*$ of these estimated change points, there are $m^*-m^0$ estimated change points that divide the time interval $[0,T]$ into $m^*-m^0+1$ regimes such that within each regime, the number of estimated change points is equal to the number of exact change points. For example, suppose that $m^0=2$ and $m^*=3$ with $0<{\hat{\tau}}_1^*<\tau_1^0<{\hat{\tau}}_2^*<\tau_2^0<{\hat{\tau}}_3^*<T$. Then, if we divide the given time interval into $[0, {\hat{\tau}}_2^*]$ and $({\hat{\tau}}_2^*, T]$, we can see that within these two intervals, the number of estimated change points is equal to the number of exact change points. \\
\ \\
\noindent Denote the particular $m^*-m^0$ estimated change points by $\{\tilde{\tau}_{j}^*, j=1,~\ldots~,m^*-m^0\}$. Also, let $\tilde{\tau}_0^*=0$ and
$\tilde{\tau}_{m^*-m^0+1}^*=T$. Then,
\begin{eqnarray*}
(\ref{part2-eq1})&=&\frac{1}{ \Delta_t \sigma^2}\sum_{j=1}^{m^*-m^0+1}\sum_{t_i\in(\tilde{\tau}_{j-1}^*,\tilde{\tau}_{j}^*]}\left[\left(Y_i-\boldsymbol{\mathrm z}_i\boldsymbol{\mathrm{\hat{\uptheta}}}_i^*\right)^2-\left(Y_i-\boldsymbol{\mathrm z}_i\boldsymbol{\mathrm{\hat{\uptheta}}}_i^0\right)^2+\frac{(m^*-m^0)(p+1)\log(T/\Delta_t^*)}{m^*-m^0+1}\right].
\end{eqnarray*}
Thus, it remains to show that in each regime $(\tilde{\tau}_{j-1}^*,\tilde{\tau}_{j}^*]$, $j=1,~\ldots,~m^*-m^0+1$,

\begin{eqnarray}
&&\frac{1}{ \Delta_t \sigma^2}\sum_{t_i\in(\tilde{\tau}_{j-1}^*,\tilde{\tau}_{j}^*]}\left[\left(Y_i-\boldsymbol{\mathrm z}_i\boldsymbol{\mathrm{\hat{\uptheta}}}_i^*\right)^2-\left(Y_i-\boldsymbol{\mathrm z}_i\boldsymbol{\mathrm{\hat{\uptheta}}}_i^0\right)^2+\frac{(m^*-m^0)(p+1)\log(T/\Delta_t^*)}{m^*-m^0+1}\right]\nonumber\\
&=&\frac{1}{ \Delta_t \sigma^2} \sum_{t_i\in(\tilde{\tau}_{j-1}^*,\tilde{\tau}_{j}^*]}\left[
\left(\boldsymbol{\mathrm z}_i\left({\uptheta}_i-\boldsymbol{\mathrm{\hat{\uptheta}}}_i^*\right)
\right)^2-\left(\boldsymbol{\mathrm z}_i\left(\boldsymbol{\mathrm \uptheta}_i-\boldsymbol{\mathrm{\hat{\uptheta}}}_i^0\right)\right)^2+2u_i'\boldsymbol{\mathrm z}_i\left(\boldsymbol{\mathrm{\hat{\uptheta}}}_i^*-\boldsymbol{\mathrm{\hat{\uptheta}}}_i^0\right)
+\frac{(m^*-m^0)(p+1)\log(\frac{T}{\Delta_t^*})}{m^*-m^0+1}\right]\qquad \label{part2-eq2}
\end{eqnarray}
is positive with probability 1.\\
\ \\
\noindent Since within $(\tilde{\tau}_{j-1}^*,\tilde{\tau}_{j}^*]$, the number of estimated change points and the number of exact change points are the same, we first consider the case where there is no any change points within $(\tilde{\tau}_{j-1}^*,\tilde{\tau}_{j}^*]$. In this case, we have $\tau_{k^*-1}^0<\tilde{\tau}_{j-1}^*<\tilde{\tau}_{j}^*<\tau_{k^*}^0$ for some $j$ and $k^*$. Then, $\boldsymbol{\mathrm{\hat{\uptheta}}}_i^*=\boldsymbol{\mathrm Q}_{(\tilde{\tau}_{j-1}^*,\tilde{\tau}_{j}^*)}^{-1}
\left(\boldsymbol{\mathrm Q}_{(\tilde{\tau}_{j-1}^*,\tilde{\tau}_{j}^*)}\boldsymbol{\mathrm \uptheta}_i+\sigma \boldsymbol{\mathrm r}_{(\tilde{\tau}_{j-1}^*,\tilde{\tau}_{j}^*)}\right)=\boldsymbol{\mathrm \uptheta}_i+\sigma \boldsymbol{\mathrm Q}_{(\tilde{\tau}_{j-1}^*,\tilde{\tau}_{j}^*)}^{-1}\boldsymbol{\mathrm r}_{(\tilde{\tau}_{j-1}^*,\tilde{\tau}_{j}^*)}$ and
$\boldsymbol{\mathrm{\hat{\uptheta}}}_i^0=\boldsymbol{\mathrm \uptheta}_i+\sigma \boldsymbol{\mathrm Q}_{({\tau}_{k^*-1}^0,{\tau}_{k^*}^0)}^{-1}
\boldsymbol{\mathrm r}_{({\tau}_{k^*-1}^0,{\tau}_{k^*}^0)}$.
Substituting the above expressions into (\ref{part2-eq2}), we have
\begin{eqnarray}
 (\ref{part2-eq2})&=&\frac{1}{ \Delta_t \sigma^2}\left[\sigma^2 \boldsymbol{\mathrm r}_{(\tilde{\tau}_{j-1}^*,\tilde{\tau}_{j}^*)}^{\top}\boldsymbol{\mathrm Q}_{(\tilde{\tau}_{j-1}^*,\tilde{\tau}_{j}^*)}^{-1}
 \sum_{t_i\in(\tilde{\tau}_{j-1}^*,\tilde{\tau}_{j}^*]}\boldsymbol{\mathrm z}_i^{\top}\boldsymbol{\mathrm z}_i\boldsymbol{\mathrm Q}_{(\tilde{\tau}_{j-1}^*,
 \tilde{\tau}_{j}^*)}^{-1}
 \boldsymbol{\mathrm r}_{(\tilde{\tau}_{j-1}^*,\tilde{\tau}_{j}^*)}-\sigma^2  \boldsymbol{\mathrm r}_{({\tau}_{k^*-1}^0,{\tau}_{k^*}^0)}^{\top}\boldsymbol{\mathrm Q}_{({\tau}_{k^*-1}^0,{\tau}_{k^*}^0)}^{-1}
 \right.
 \nonumber\\&&\times \sum_{t_i\in(\tilde{\tau}_{j-1}^*,\tilde{\tau}_{j}^*]}\boldsymbol{\mathrm z}_i^{\top}\boldsymbol{\mathrm z}_i\boldsymbol{\mathrm Q}_{({\tau}_{k^*-1}^0,{\tau}_{k^*}^0)}^{-1}
 \boldsymbol{\mathrm r}_{({\tau}_{k^*-1}^0,{\tau}_{k^*}^0)}+2\sum_{t_i\in(\tilde{\tau}_{j-1}^*,\tilde{\tau}_{j}^*]}u_i^{\top}\boldsymbol{\mathrm z}_i \sigma (\boldsymbol{\mathrm Q}_{(\tilde{\tau}_{j-1}^*,\tilde{\tau}_{j}^*)}^{-1}\boldsymbol{\mathrm r}_{(\tilde{\tau}_{j-1}^*,\tilde{\tau}_{j}^*)}
 -\boldsymbol{\mathrm Q}_{({\tau}_{k^*-1}^0,{\tau}_{k^*}^0)}^{-1}\boldsymbol{\mathrm r}_{({\tau}_{k^*-1}^0,{\tau}_{k^*}^0)})\nonumber\\
&&\left. +\frac{(m^*-m^0)(p+1)\log(T/\Delta_t^*)}{m^*-m^0+1}
\right]. \label{part2-eq3}
\end{eqnarray}
\ \\
\noindent From the approach used in the proof of Propositions~\ref{prnm13} and \ref{prnm14}, we have
\begin{eqnarray*} \norm{\boldsymbol{\mathrm r}_{(\tilde{\tau}_{j-1}^*,{\hat{\tau}}_{2}^*)}^{\top}
\boldsymbol{\mathrm Q}_{(\tilde{\tau}_{j-1}^*,{\hat{\tau}}_{2}^*)}^{-1}\sum_{t_i\in(\tilde{\tau}_{j-1}^*,{\hat{\tau}}_{2}^*]}
\boldsymbol{\mathrm z}_i^{\top}\boldsymbol{\mathrm z}_i\boldsymbol{\mathrm Q}_{(\tilde{\tau}_{j-1}^*,{\hat{\tau}}_{2}^*)}^{-1}\boldsymbol{\mathrm r}_{(\tilde{\tau}_{j-1}^*,{\hat{\tau}}_{2}^*)}}\leq
\norm{ \frac{1}{\sqrt{T}}\boldsymbol{\mathrm r}_{(\tilde{\tau}_{j-1}^*,{\hat{\tau}}_{2}^*)}}^2
\norm{T\boldsymbol{\mathrm Q}_{(\tilde{\tau}_{j-1}^*,{\hat{\tau}}_{2}^*)}^{-1}}^2
\norm{\frac{1}{T}\sum_{t_i\in(\tilde{\tau}_{j-1}^*,{\hat{\tau}}_{2}^*]}\boldsymbol{\mathrm z}_i^{\top}\boldsymbol{\mathrm z}_i}\\
=O_p\left(\log^{2a^*} (T)\right)\end{eqnarray*}
for some $0<a^*<1/2$. Similar results also hold for the second and the third terms of (\ref{part2-eq3}). \\
\ \\
Therefore, for large $T$, (\ref{part2-eq3}) is dominated by $\displaystyle \frac{(m^*-m^0)(p+1)\log(T/\Delta_t^*)}{m^*-m^0+1}$, which is positive. This implies that for large $T$, (\ref{part2-eq1}) is positive with probability 1.\\
\ \\
\noindent Now, consider the case where there exist $m_{j}$ ($0<m_j\leq m^0$) exact change points (so are the estimated change points) in $(\tilde{\tau}_{j-1}^*,\tilde{\tau}_{j}^*]$. We label these $m_{j}$ exact change points by
$\tilde{\tau}_{j-1}^*<{\tau}_{(m_j,1)}^0<,...,<{\tau}_{(m_j,m_j)}^0<\tilde{\tau}_{j}^*$ and similarly for the estimated change points, $\tilde{\tau}_{j-1}^*<{\hat{\tau}}_{(m_j,1)}^*<,~\cdots,~<{\hat{\tau}}_{(m_j,m_j)}^*<\tilde{\tau}_{j}^*$. \\
\ \\
\noindent By the quadratic structure,
$$\sum_{t_i\in(\tilde{\tau}_{j-1}^*,\tilde{\tau}_{j}^*]}(Y_i-\boldsymbol{\mathrm z}_i\boldsymbol{\mathrm{\hat{\uptheta}}}_i^0)^2\leq \sum_{t_i\in [0,T]}(Y_i-\boldsymbol{\mathrm z}_i\boldsymbol{\mathrm{\hat{\uptheta}}}_i^0)^2=\sum_{j=1}^{m}\boldsymbol{\mathrm r}_{(\tau_{j-1}^0,\tau_j^0)}^{\top}
\boldsymbol{\mathrm Q}_{(\tau_{j-1}^0,\tau_j^0)}^{-1}\sum_{t_i\in(\tau_{j-1}^0,\tau_j^0]}\boldsymbol{\mathrm z}_i'\boldsymbol{\mathrm z}_i\boldsymbol{\mathrm Q}_{(\tau_{j-1}^0,\tau_j^0)}^{-1} \boldsymbol{\mathrm r}_{(\tau_{j-1}^0,\tau_j^0)}.$$
\ \\
\noindent It follows from (\ref{prnm1-eq3-1}) that $\sum_{j=1}^{m}\boldsymbol{\mathrm r}_{(\tau_{j-1}^0,\tau_j^0)}^{\top}\boldsymbol{\mathrm Q}_{(\tau_{j-1}^0,\tau_j^0)}^{-1}
\sum_{t_i\in(\tau_{j-1}^0,\tau_j^0]}\boldsymbol{\mathrm z}_i^{\top}\boldsymbol{\mathrm z}_i\boldsymbol{\mathrm Q}_{(\tau_{j-1}^0,\tau_j^0)}^{-1} \boldsymbol{\mathrm r}_{(\tau_{j-1}^0,\tau_j^0)}=O_p\left( (\log T)^{2a^*}\right)$ for some $0<a^*<1/2$. By
 similar methods employed in the proof of Proposition~\ref{prnm13}, we have $\sum_{t_i\in(\tilde{\tau}_{j-1}^*,\tilde{\tau}_{j}^*]}2u_i^{\top}\boldsymbol{\mathrm z}_i(\boldsymbol{\mathrm{\hat{\uptheta}}}_i^*
 -\boldsymbol{\mathrm{\hat{\uptheta}}}_i^0)=O_p\left( (\log T)^{2a^*} \right)$. Since for large $T$, $(\log T)^{2a^*}<\displaystyle \frac{(m^*-m^0)(p+1)\log(T/\Delta_t^*)}{m^*-m^0+1}$, we have that for large $T$, $(\ref{part2-eq2})$ is dominated by either $\displaystyle \frac{(m^*-m^0)(p+1)\log(T/\Delta_t^*)}{(m^*-m^0+1)\Delta_t \sigma^2}$ or $\displaystyle \frac{1}{ \Delta_t \sigma^2} \sum_{t_i\in(\tilde{\tau}_{j-1}^*,\tilde{\tau}_{j}^*]}(\boldsymbol{\mathrm z}_i({\uptheta}_i-\boldsymbol{\mathrm{\hat{\uptheta}}}_i^*))^2$ and they are both positive. This implies that for large $T$, (\ref{part2-eq1}) is positive with probability 1.
\end{proof}
\section{Histograms of the estimated change points' arrival rates (Section~\ref{simulation-1})}\label{appendixhistogram}
\begin{figure}[htbp]
\includegraphics[height=1.85in,width=2.04in]{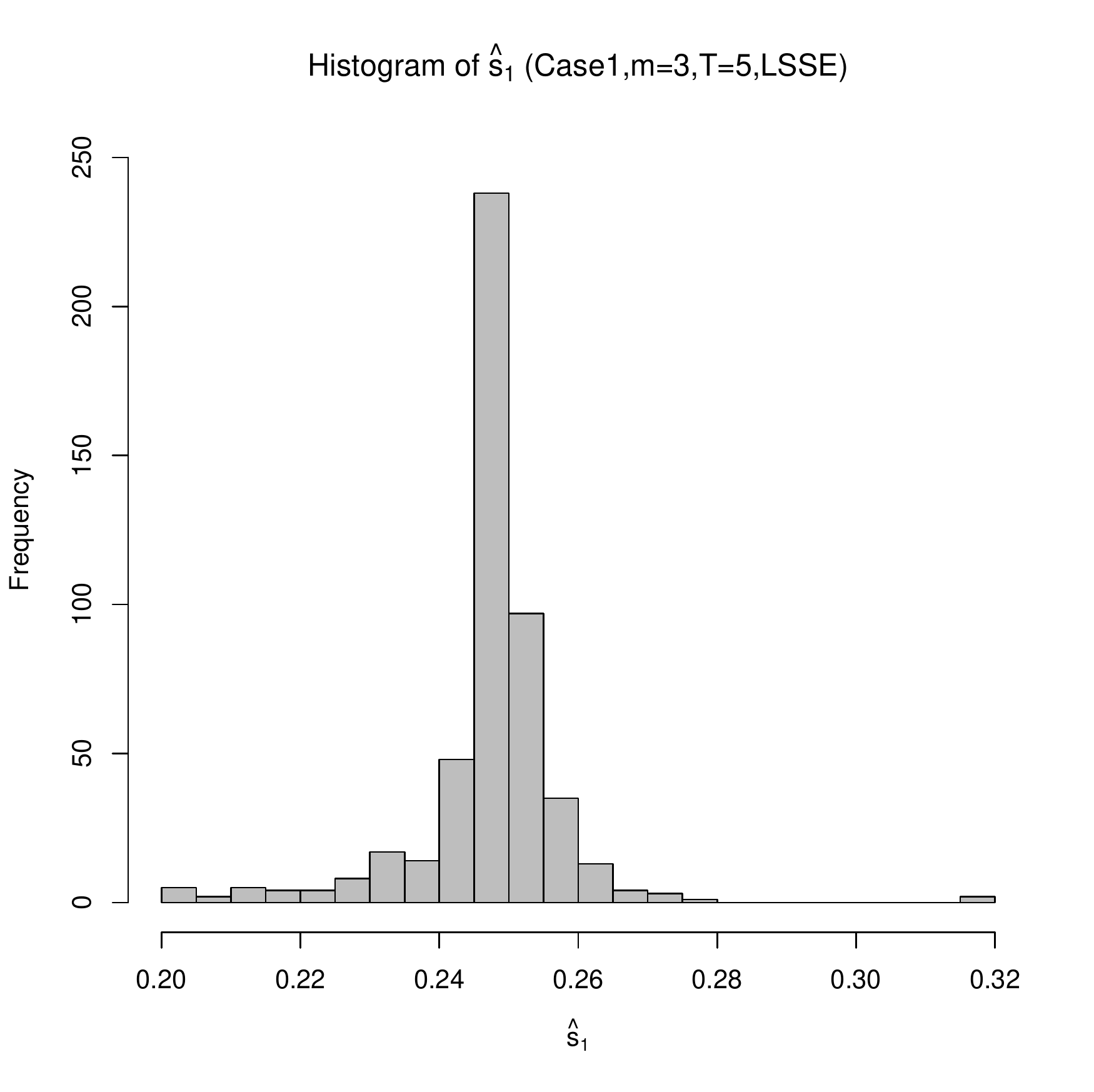}
\includegraphics[height=1.85in,width=2.04in]{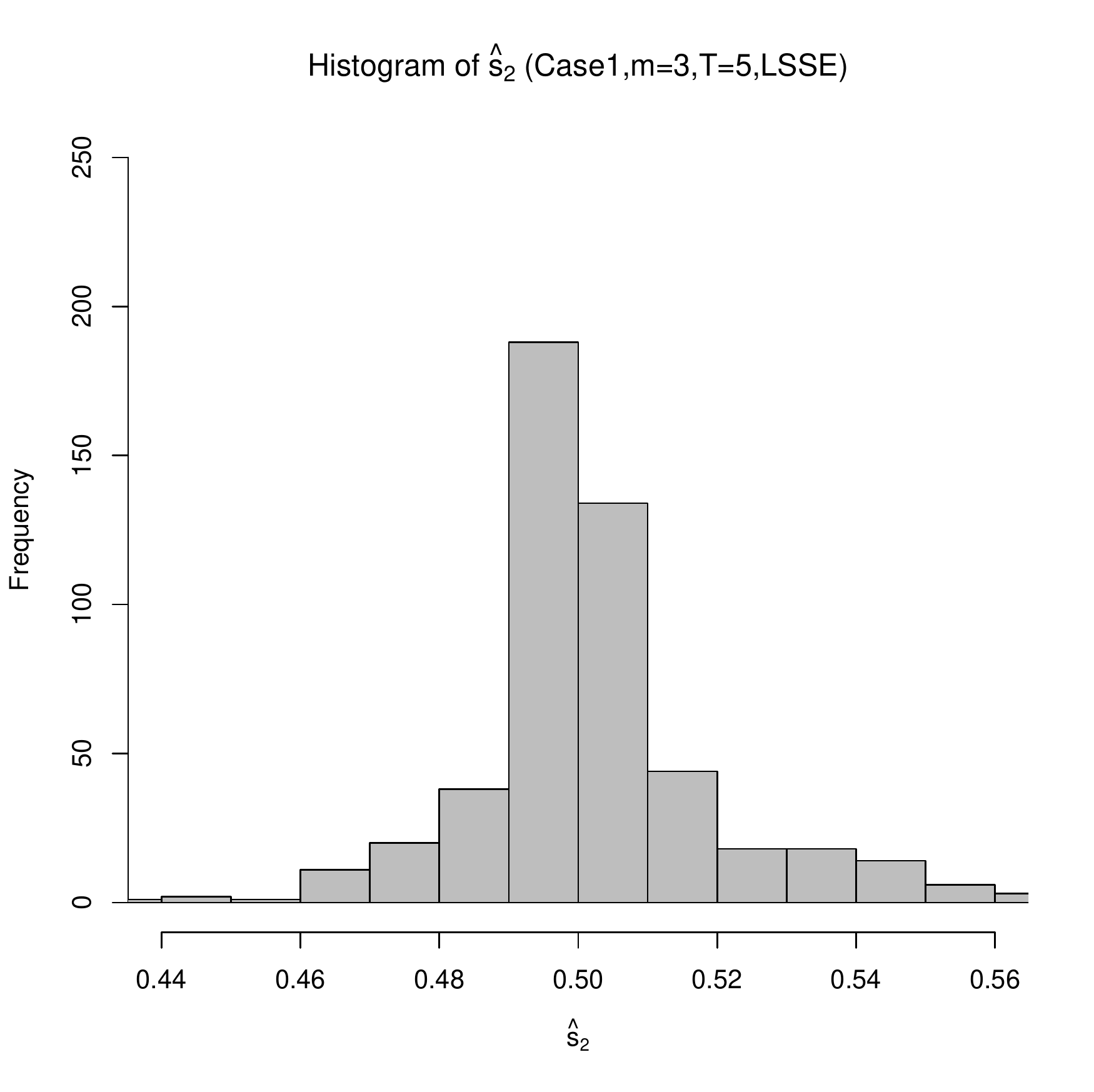}
\includegraphics[height=1.85in,width=2.04in]{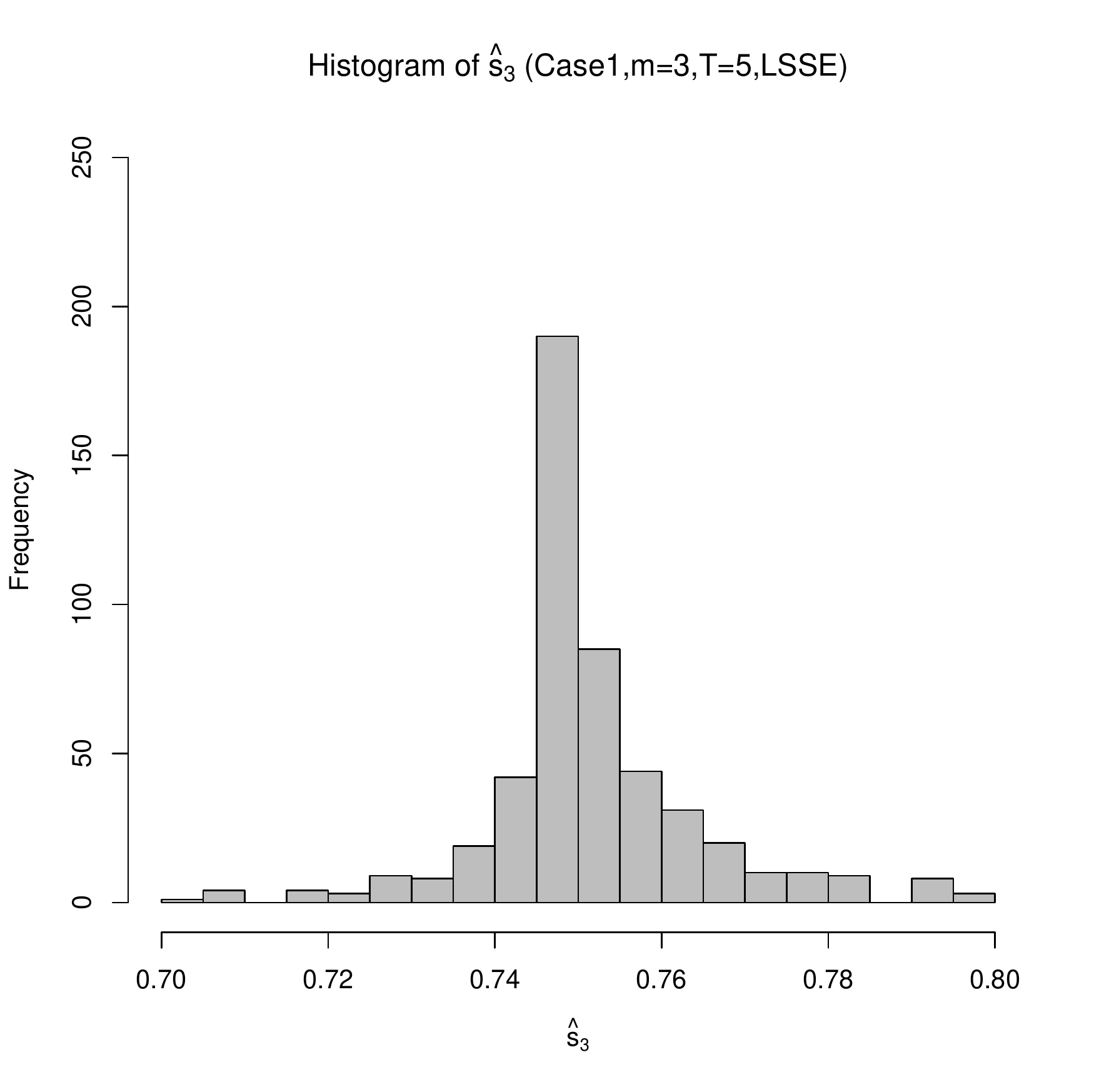}\\
\includegraphics[height=1.85in,width=2.04in]{histMLLCase1m3t5s1.pdf}
\includegraphics[height=1.85in,width=2.04in]{histMLLCase1m3t5s2.pdf}
\includegraphics[height=1.85in,width=2.04in]{histMLLCase1m3t5s3.pdf}\\
\includegraphics[height=1.85in,width=2.04in]{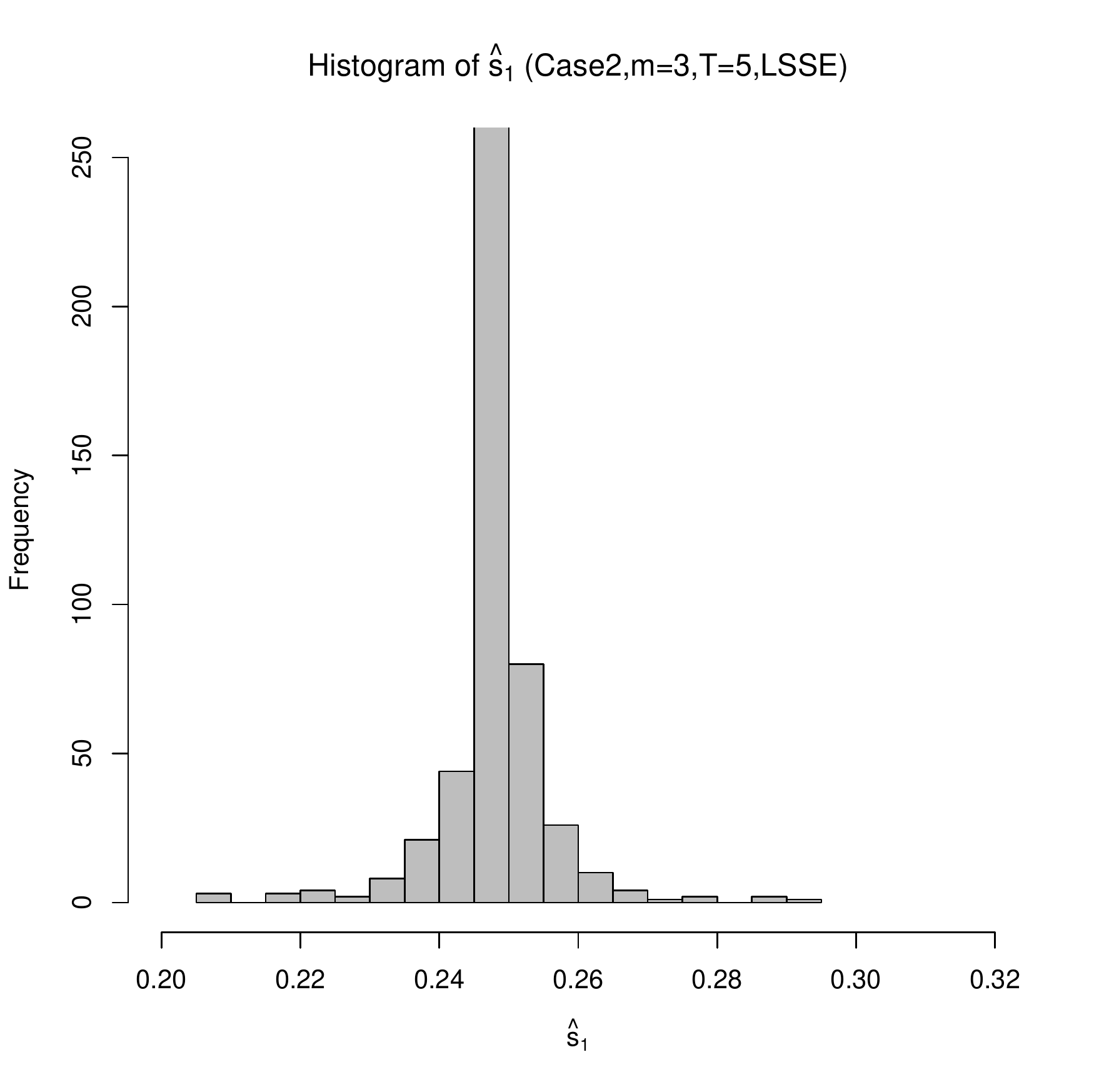}
\includegraphics[height=1.85in,width=2.04in]{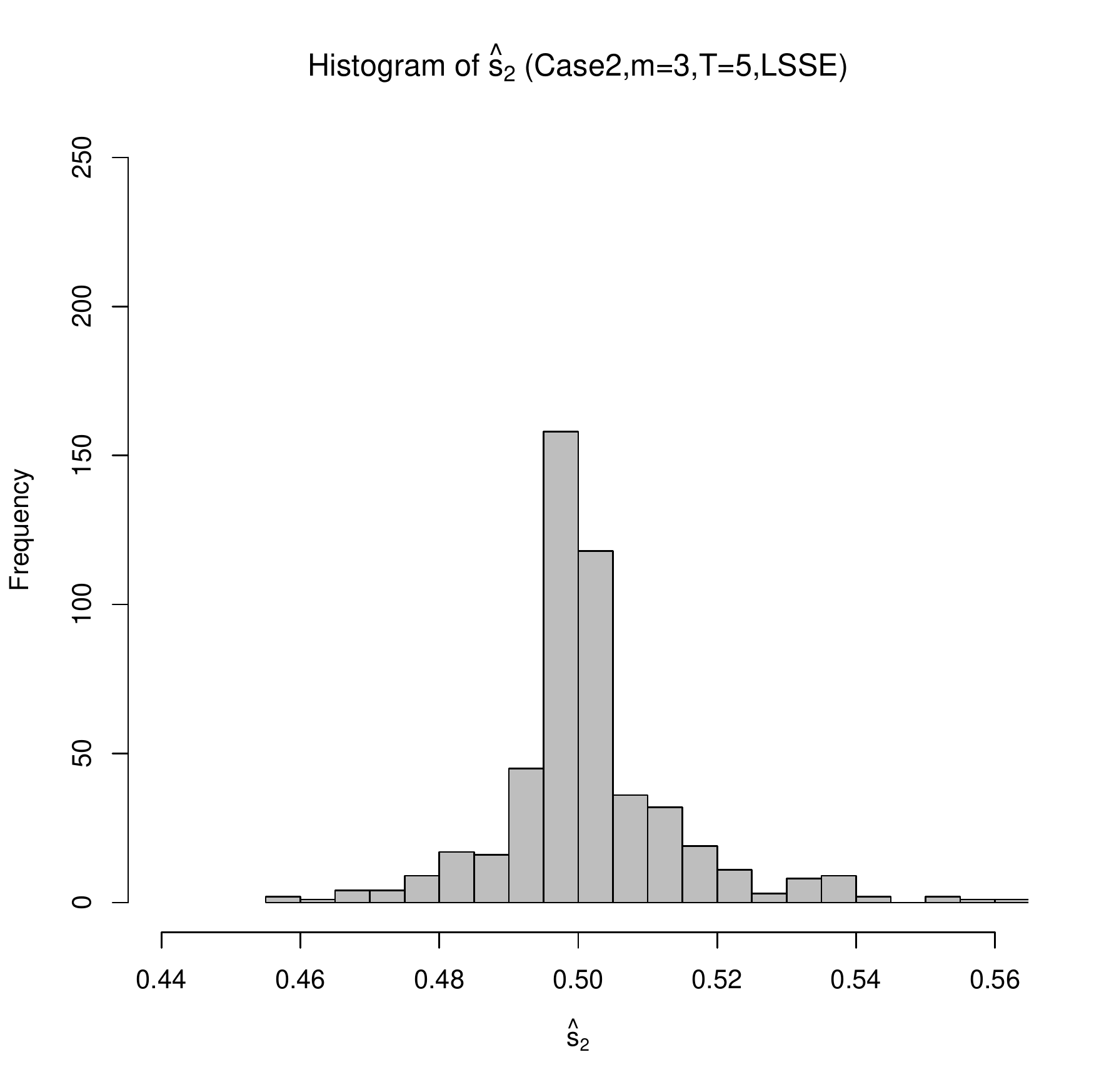}
\includegraphics[height=1.85in,width=2.04in]{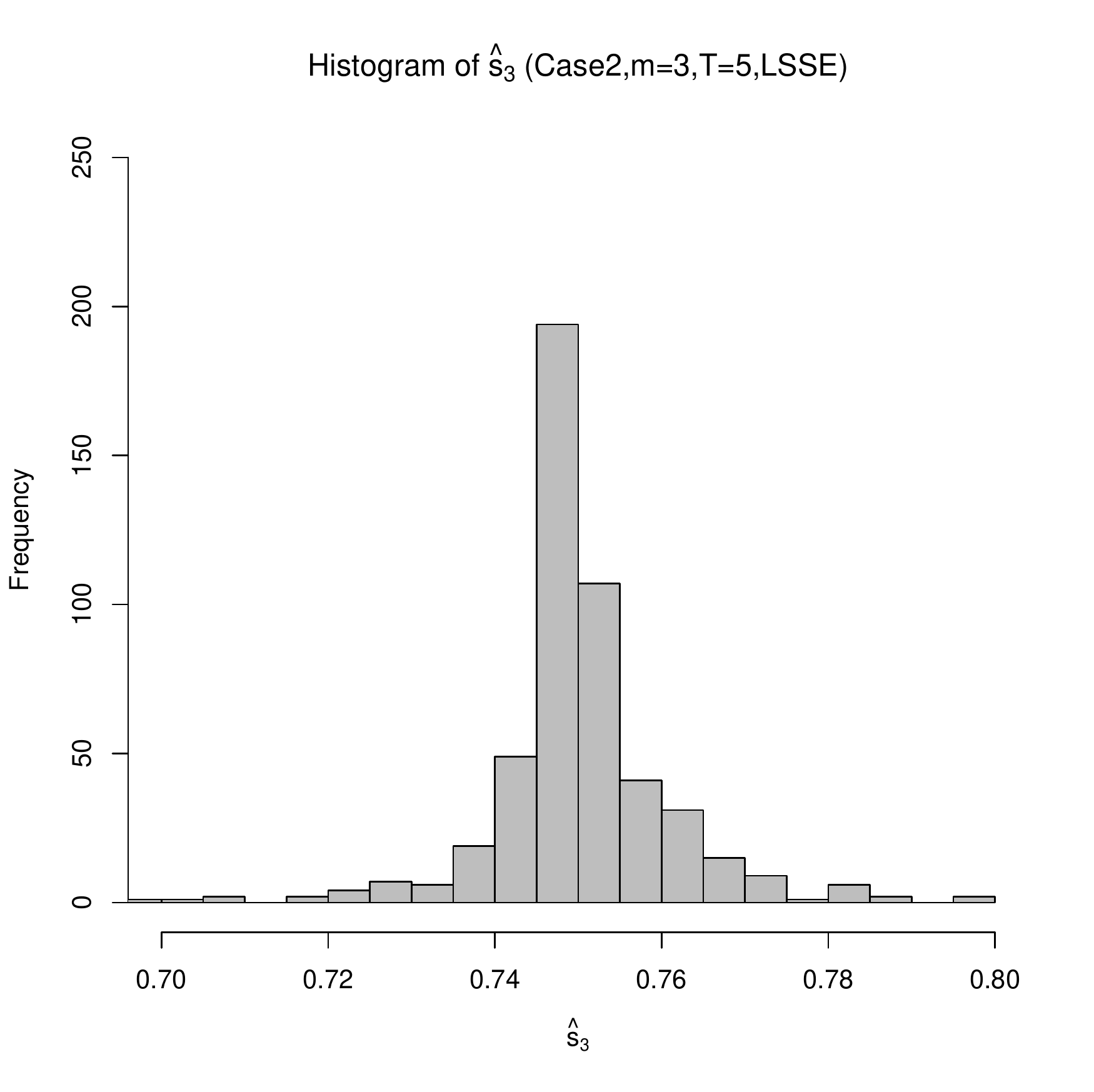}\\
\includegraphics[height=1.85in,width=2.04in]{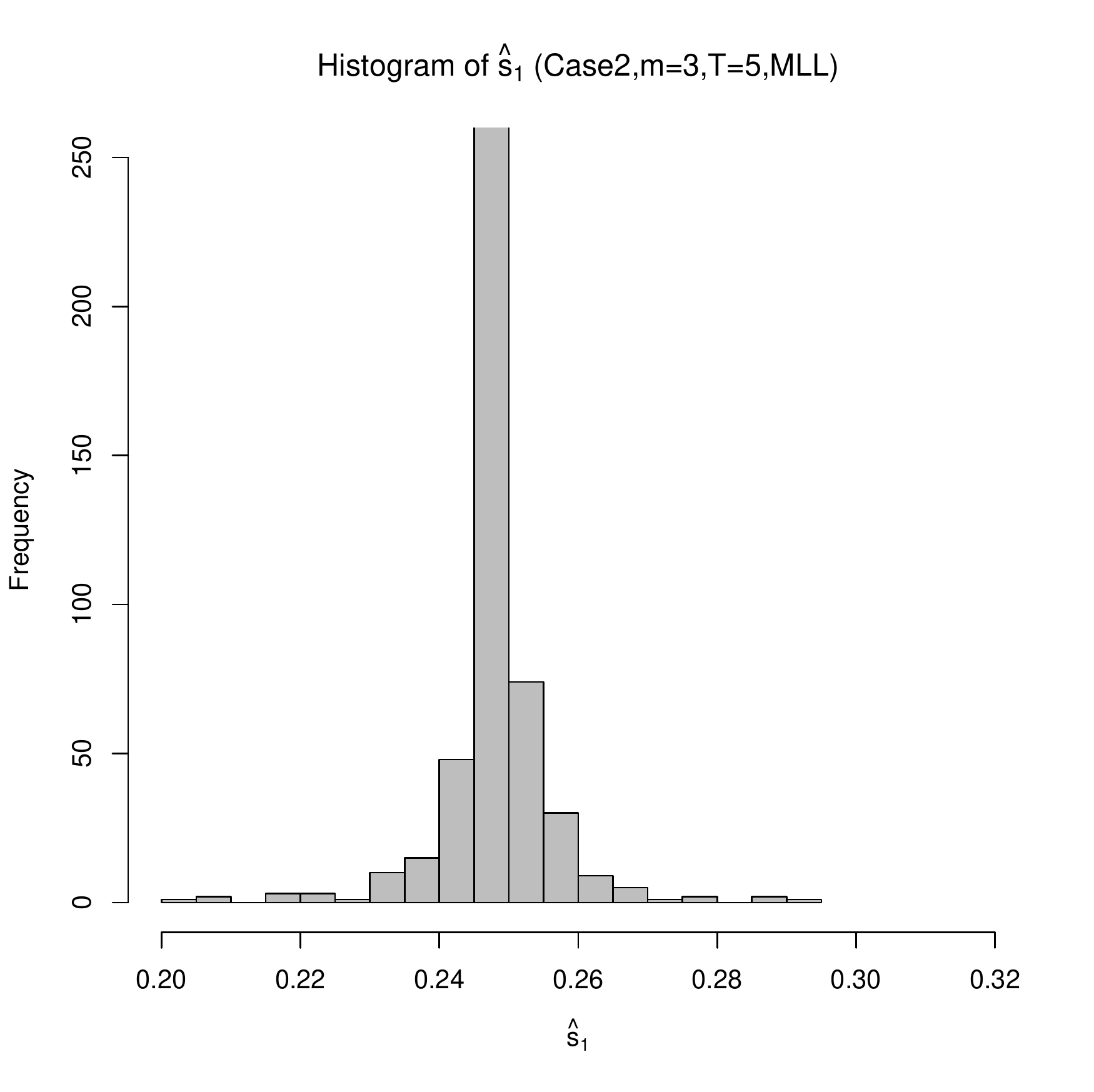}
\includegraphics[height=1.85in,width=2.04in]{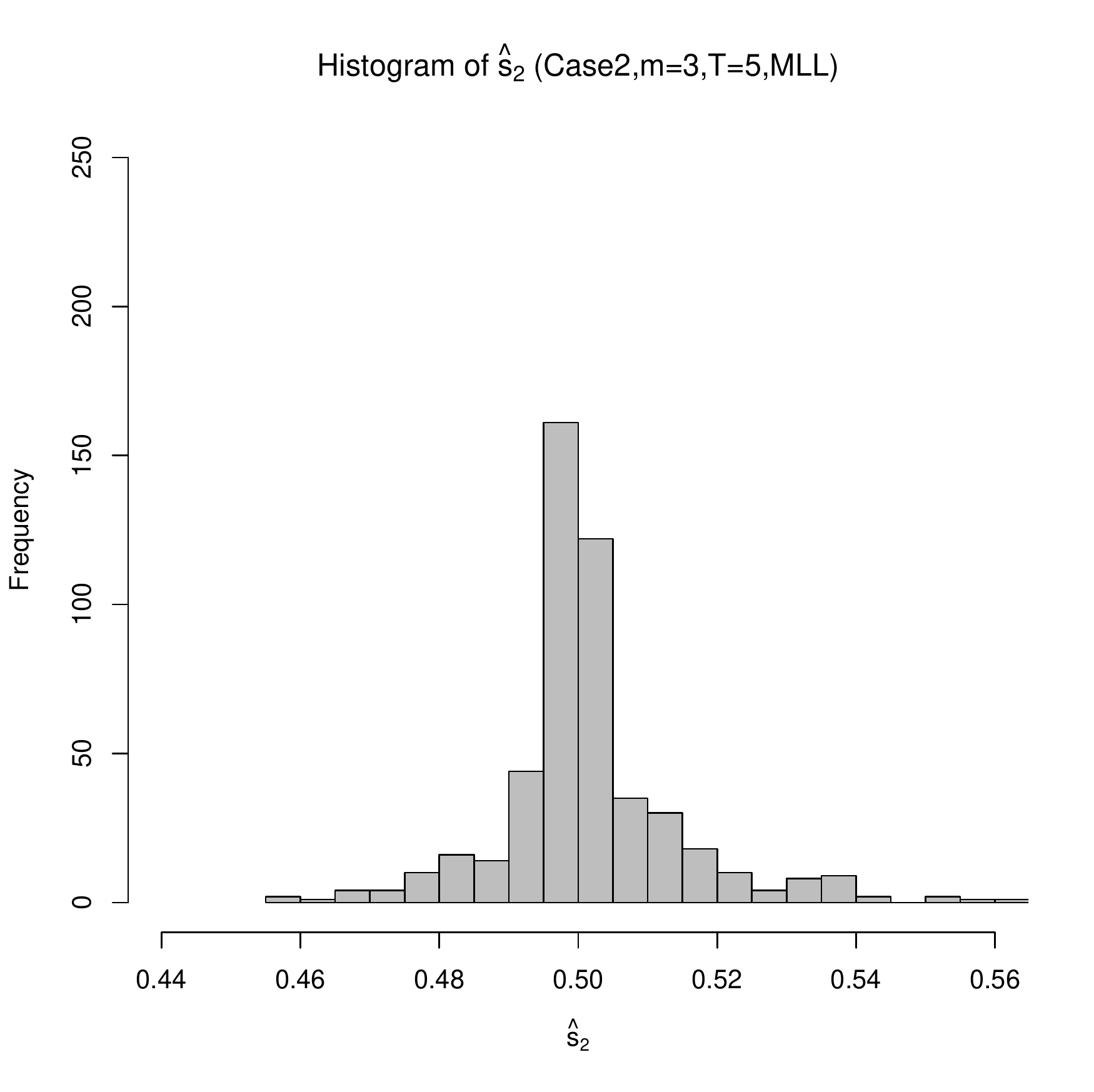}
\includegraphics[height=1.85in,width=2.04in]{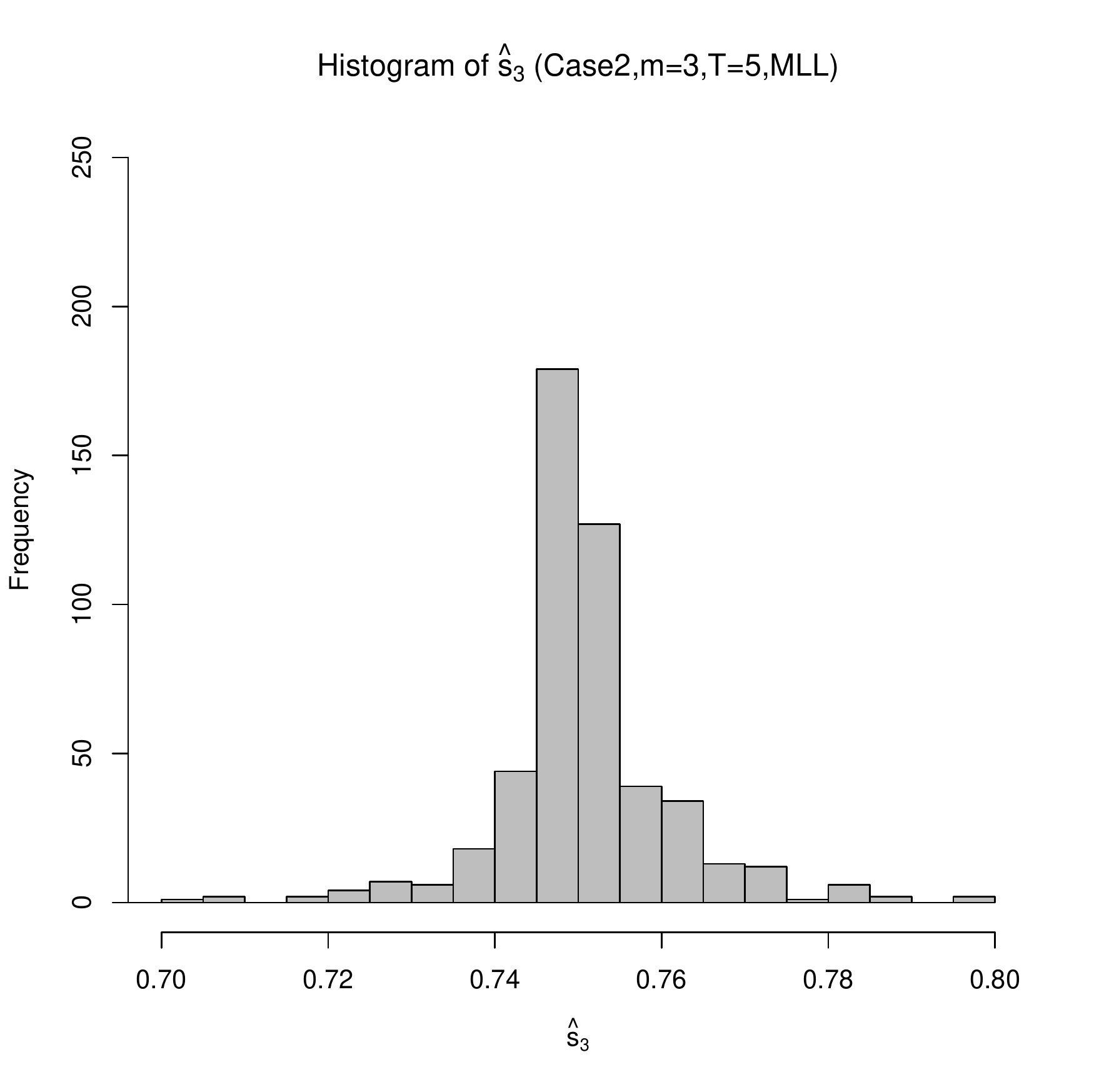}
\caption{\small Histogram of $\hat{s}$ for $m^0=3$ and $T=5$ with exact value $s^0=(0.25, 0.50, 0.75)$}
\label{figure4}
\end{figure}

\begin{figure}[htbp]
\includegraphics[height=2.35in,width=2.04in]{histLSSECase1m3t10s1.pdf}
\includegraphics[height=2.35in,width=2.04in]{histLSSECase1m3t10s2.pdf}
\includegraphics[height=2.35in,width=2.04in]{histLSSECase1m3t10s3.pdf}\\
\includegraphics[height=2.35in,width=2.04in]{histMLLCase1m3t10s1.pdf}
\includegraphics[height=2.35in,width=2.04in]{histMLLCase1m3t10s2.pdf}
\includegraphics[height=2.35in,width=2.04in]{histMLLCase1m3t10s3.pdf}\\
\includegraphics[height=2.35in,width=2.04in]{histLSSECase2m3t10s1.pdf}
\includegraphics[height=2.35in,width=2.04in]{histLSSECase2m3t10s2.pdf}
\includegraphics[height=2.35in,width=2.04in]{histLSSECase2m3t10s3.pdf}\\
\includegraphics[height=2.35in,width=2.04in]{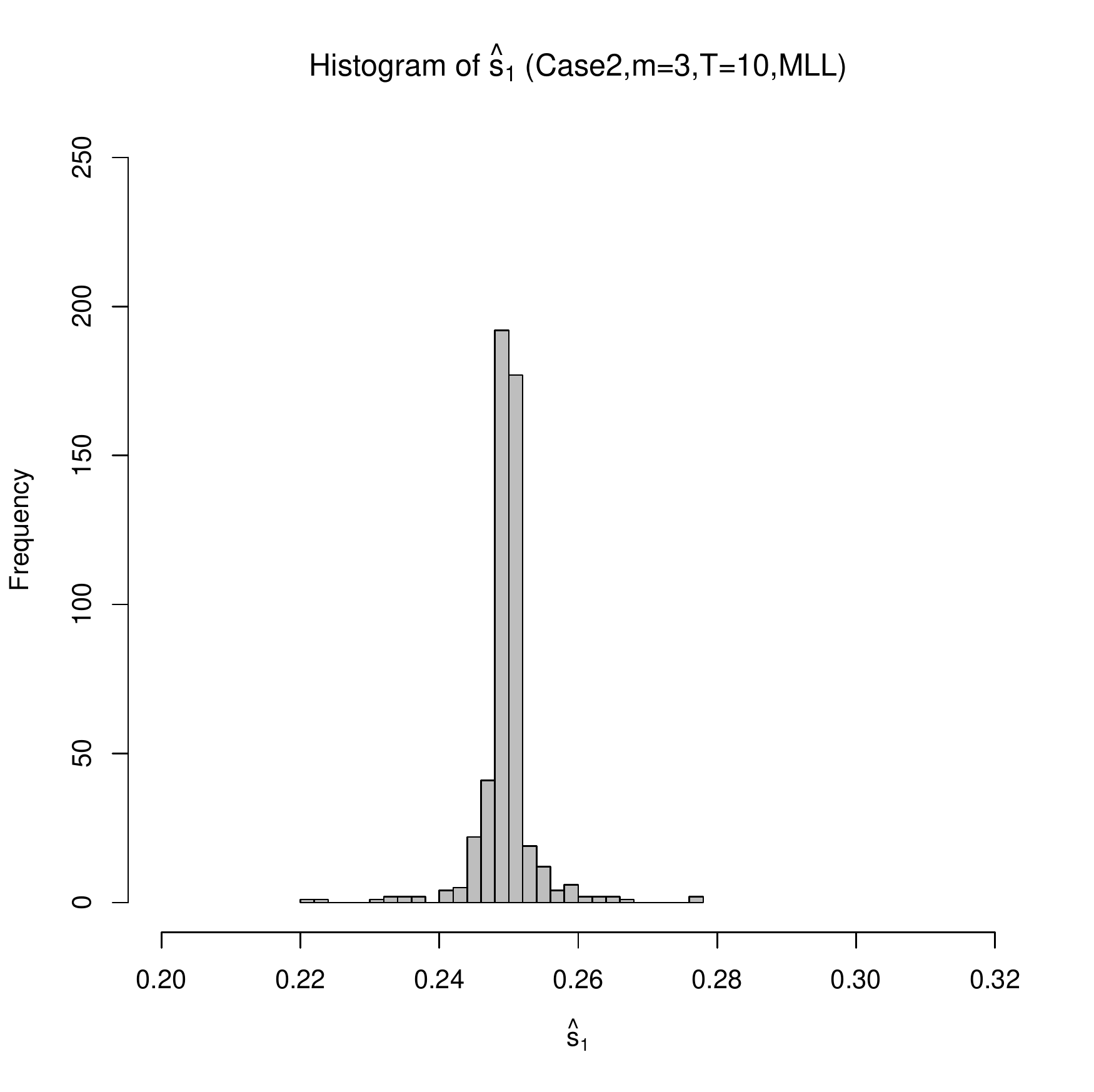}
\includegraphics[height=2.35in,width=2.04in]{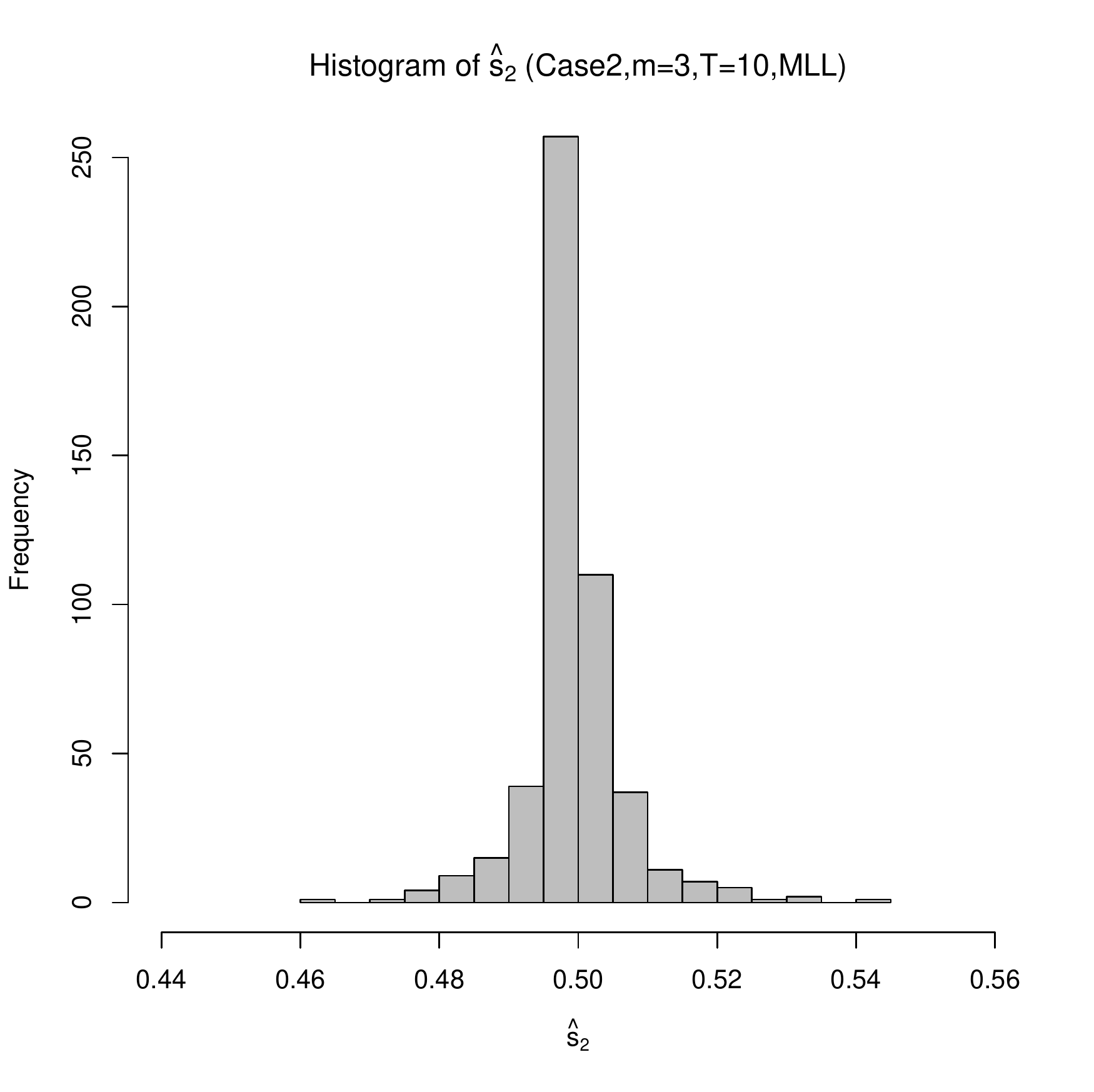}
\includegraphics[height=2.35in,width=2.04in]{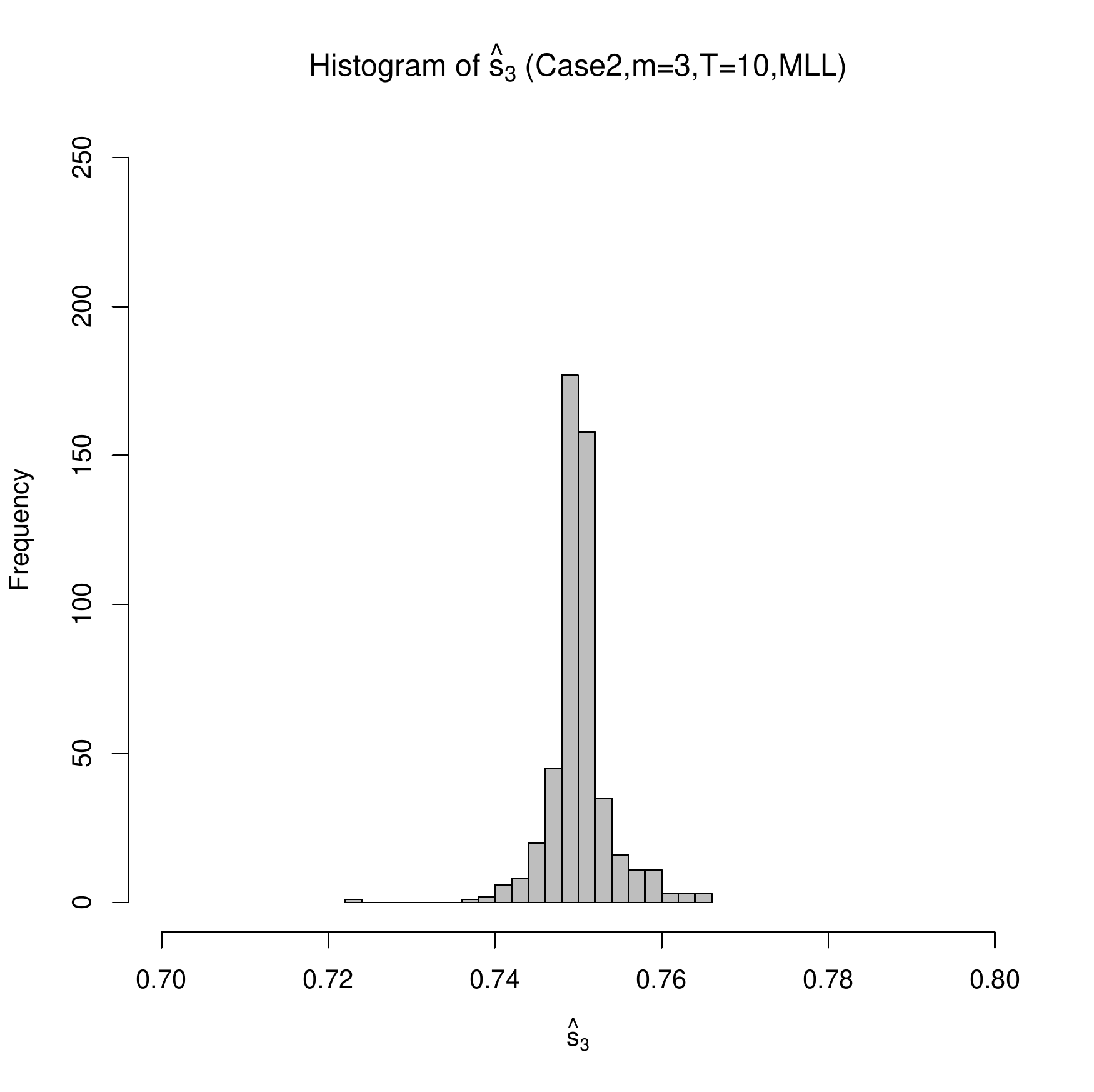}
\caption{\small Histogram of $\hat{s}$ for $m^0=3$ and $T=10$ with exact value $s^0=(0.25, 0.50, 0.75)$}
\label{figure5}
\end{figure}
\begin{figure}[htbp]
\includegraphics[height=2.35in,width=2.04in]{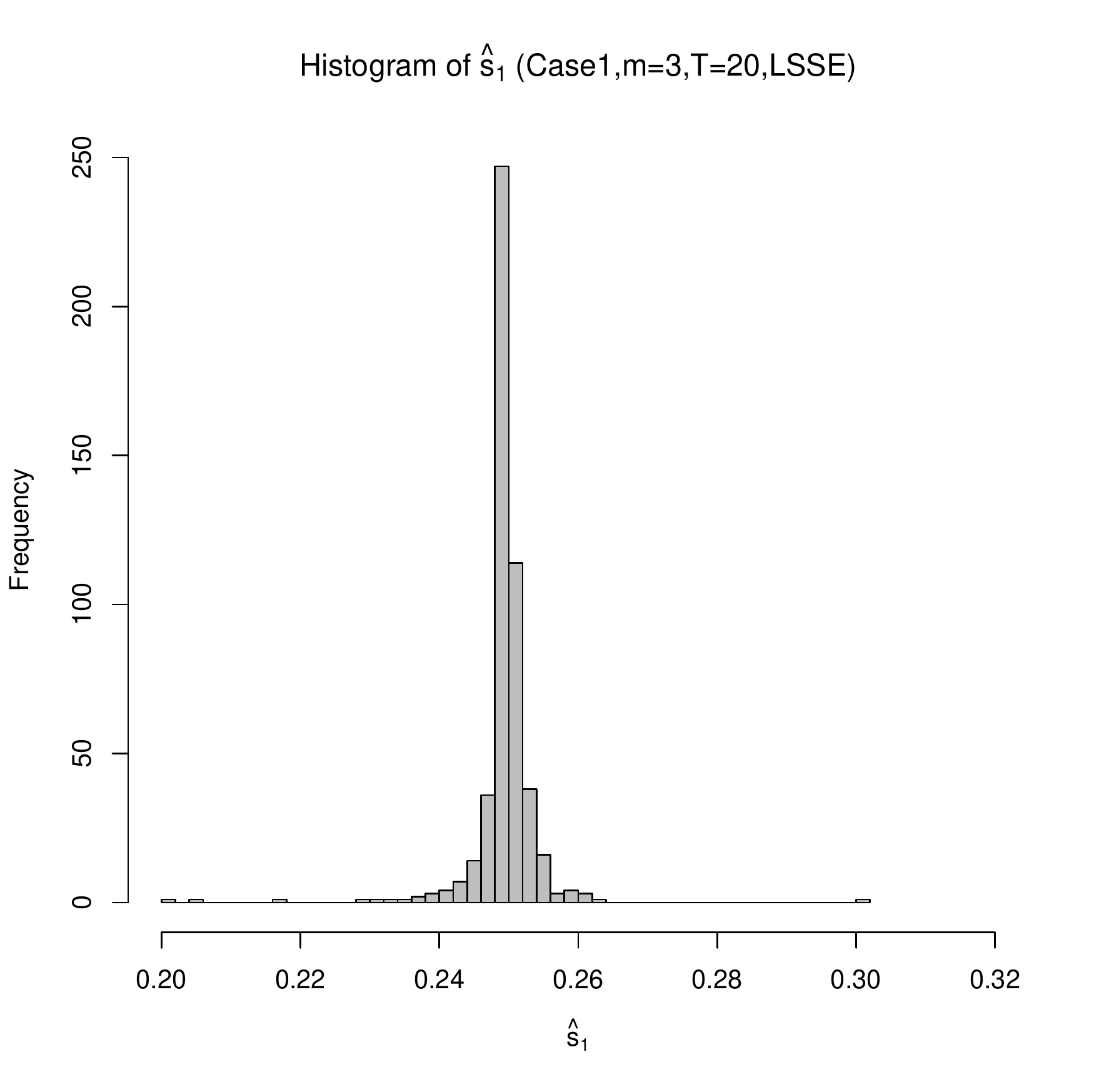}
\includegraphics[height=2.35in,width=2.04in]{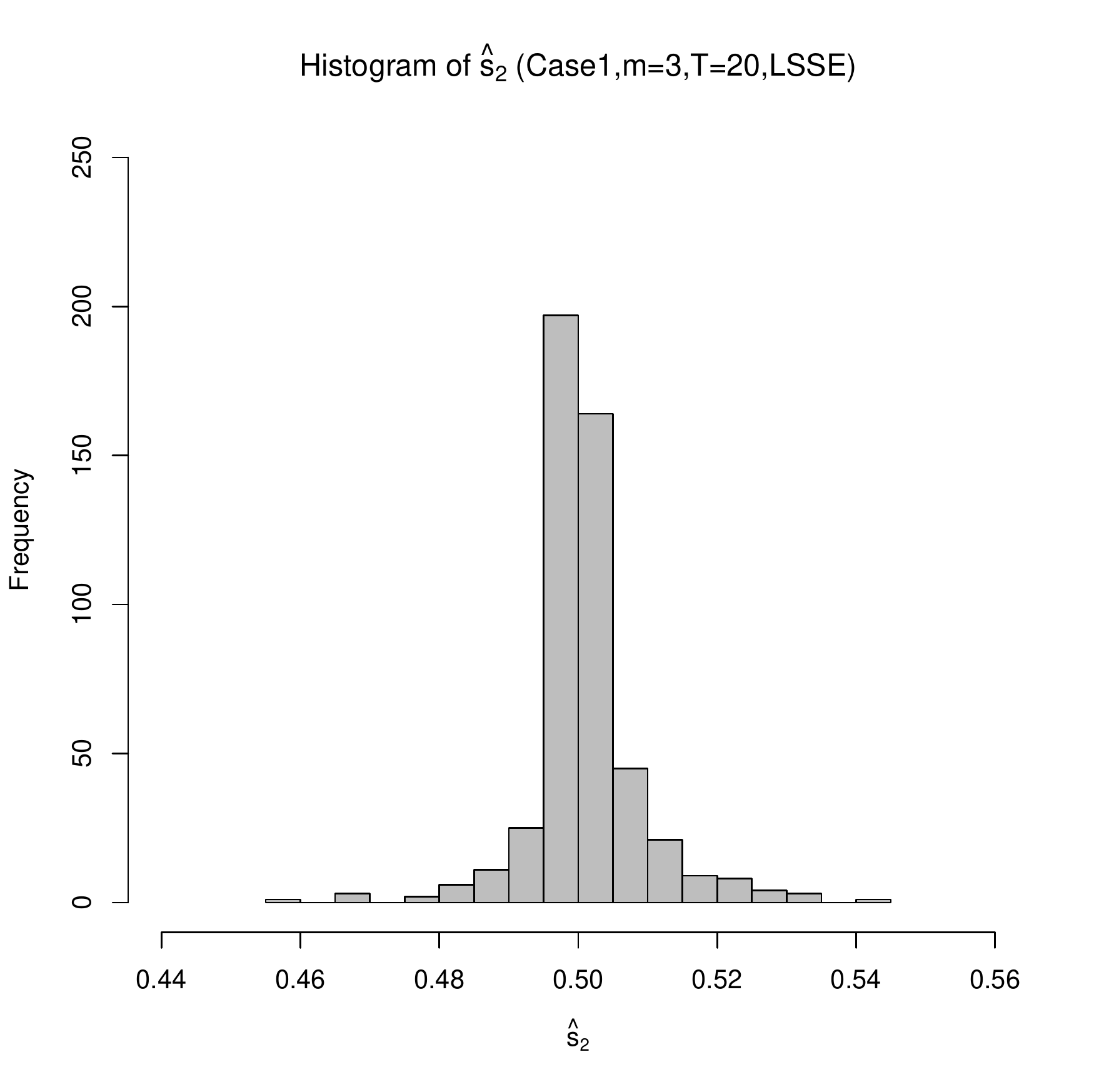}
\includegraphics[height=2.35in,width=2.04in]{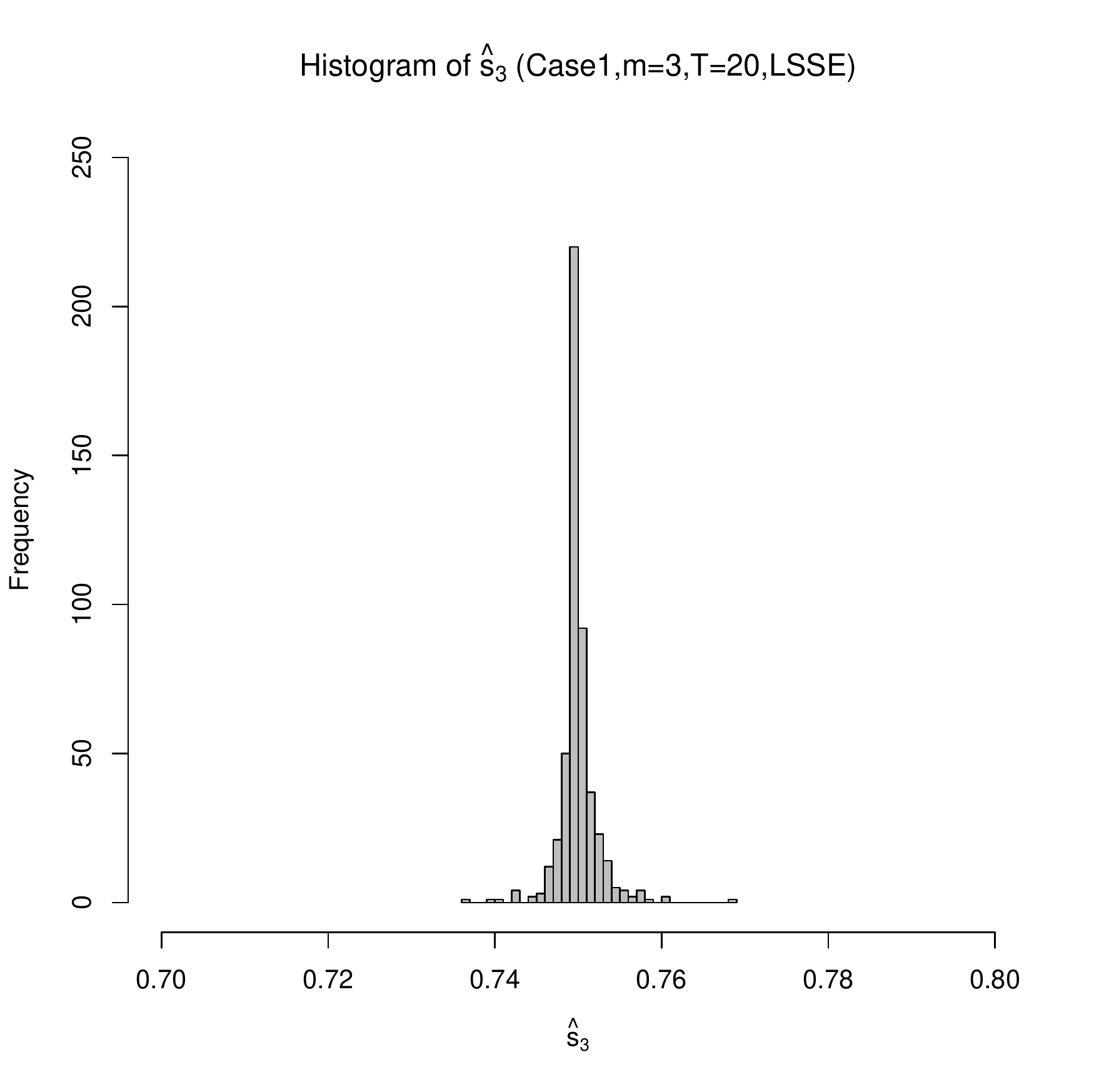}\\
\includegraphics[height=2.35in,width=2.04in]{histMLLCase1m3t20s1.pdf}
\includegraphics[height=2.35in,width=2.04in]{histMLLCase1m3t20s2.pdf}
\includegraphics[height=2.35in,width=2.04in]{histMLLCase1m3t20s3.pdf}\\
\includegraphics[height=2.35in,width=2.04in]{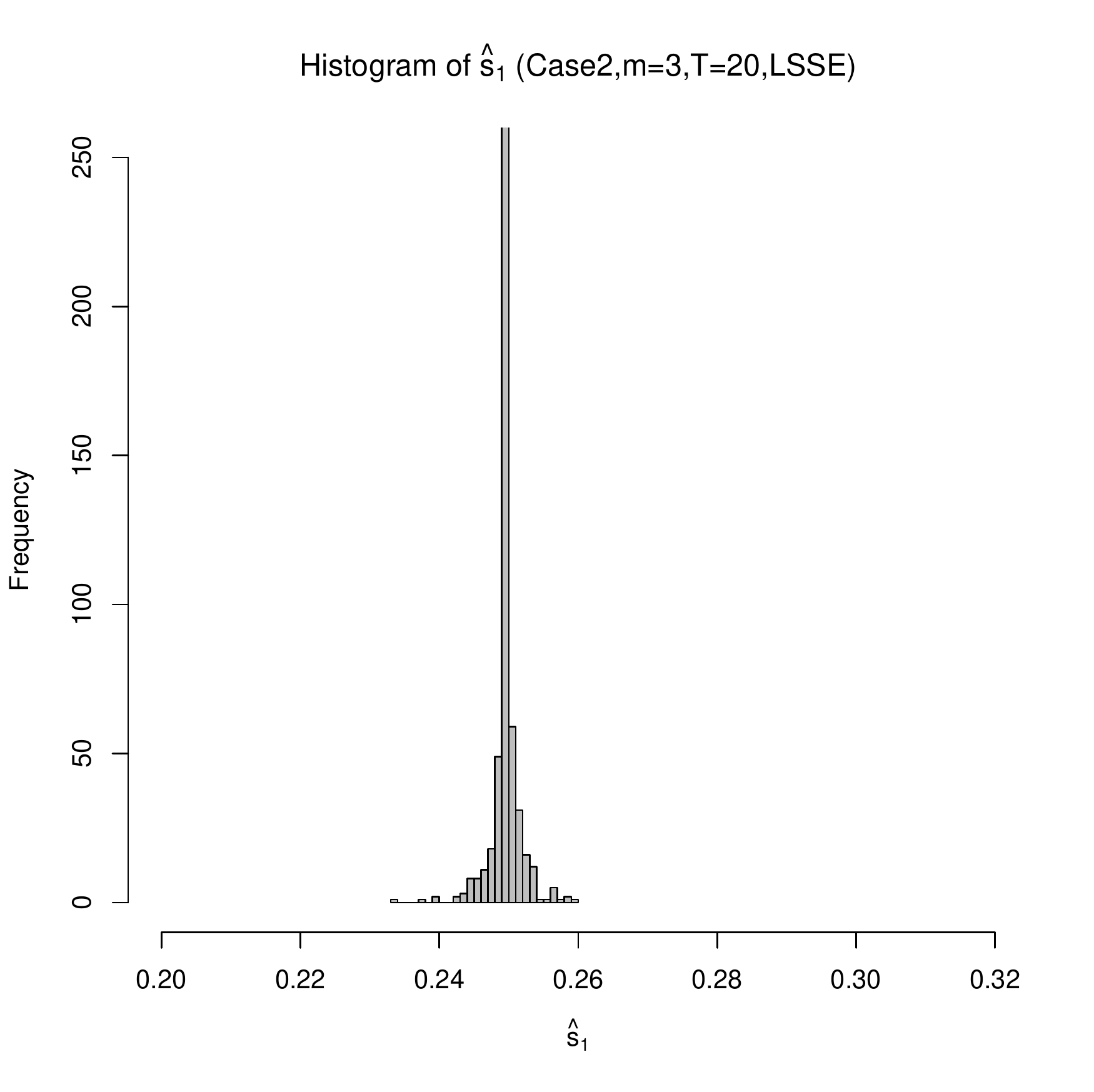}
\includegraphics[height=2.35in,width=2.04in]{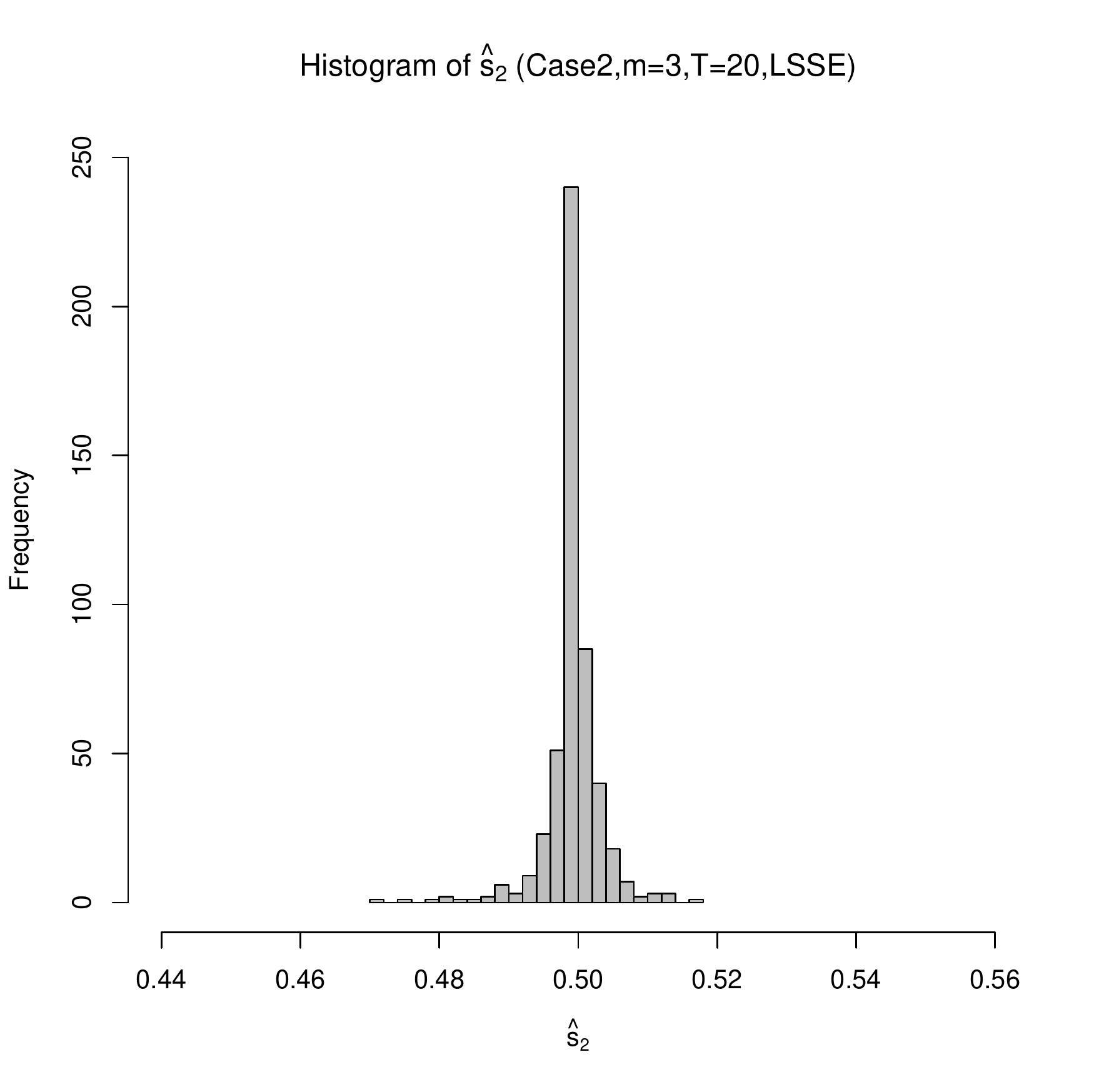}
\includegraphics[height=2.35in,width=2.04in]{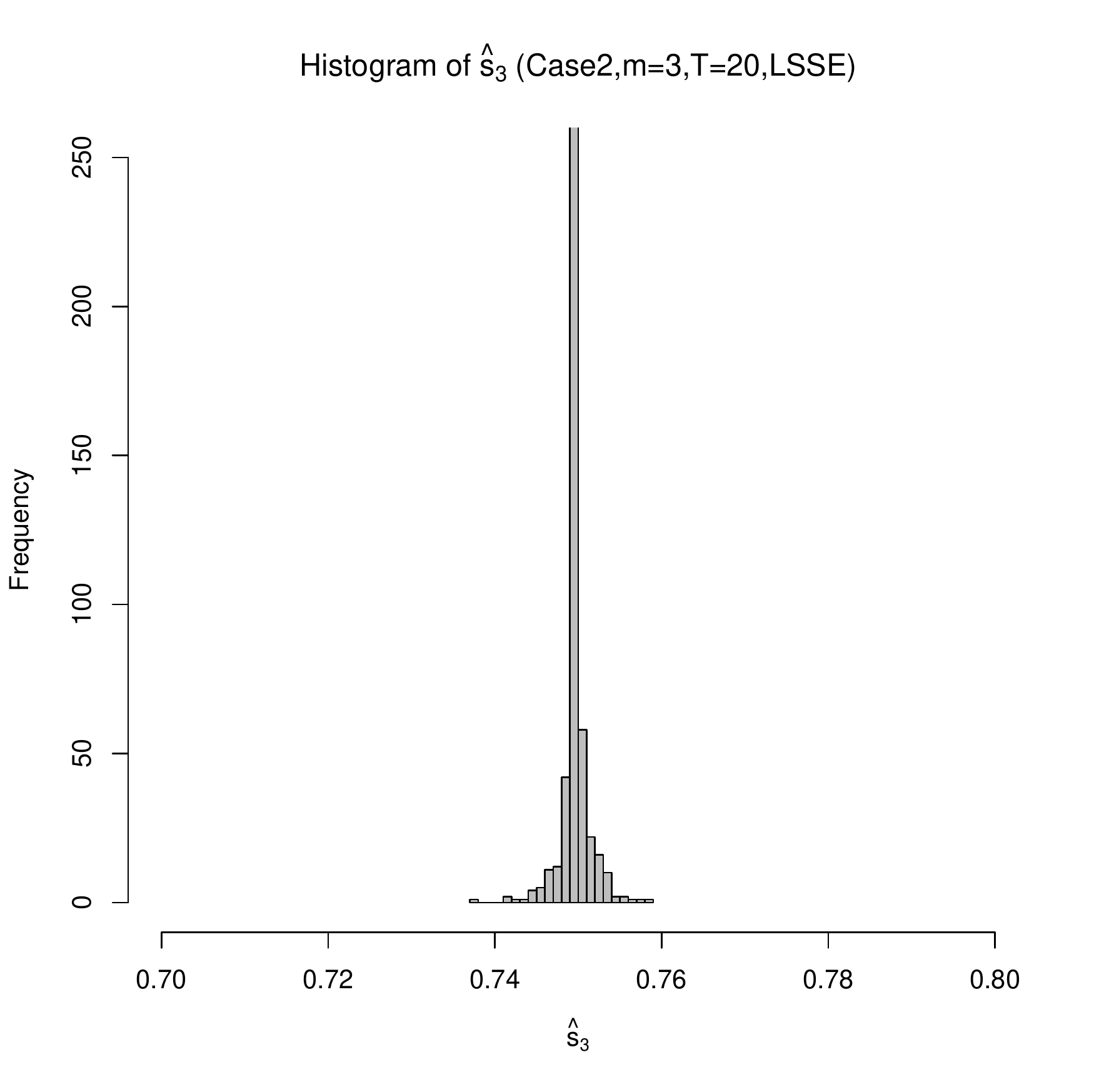}\\
\includegraphics[height=2.35in,width=2.04in]{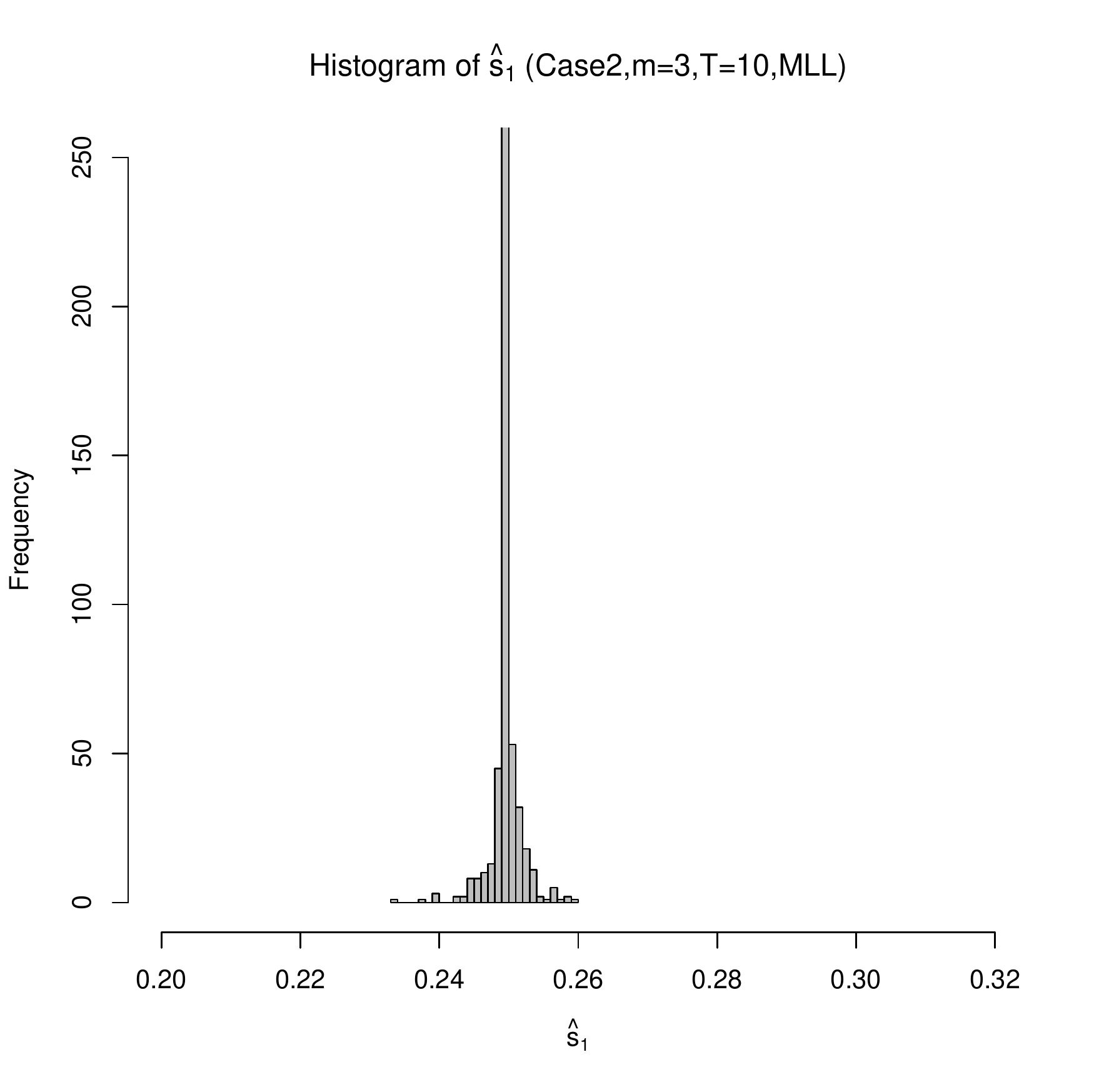}
\includegraphics[height=2.35in,width=2.04in]{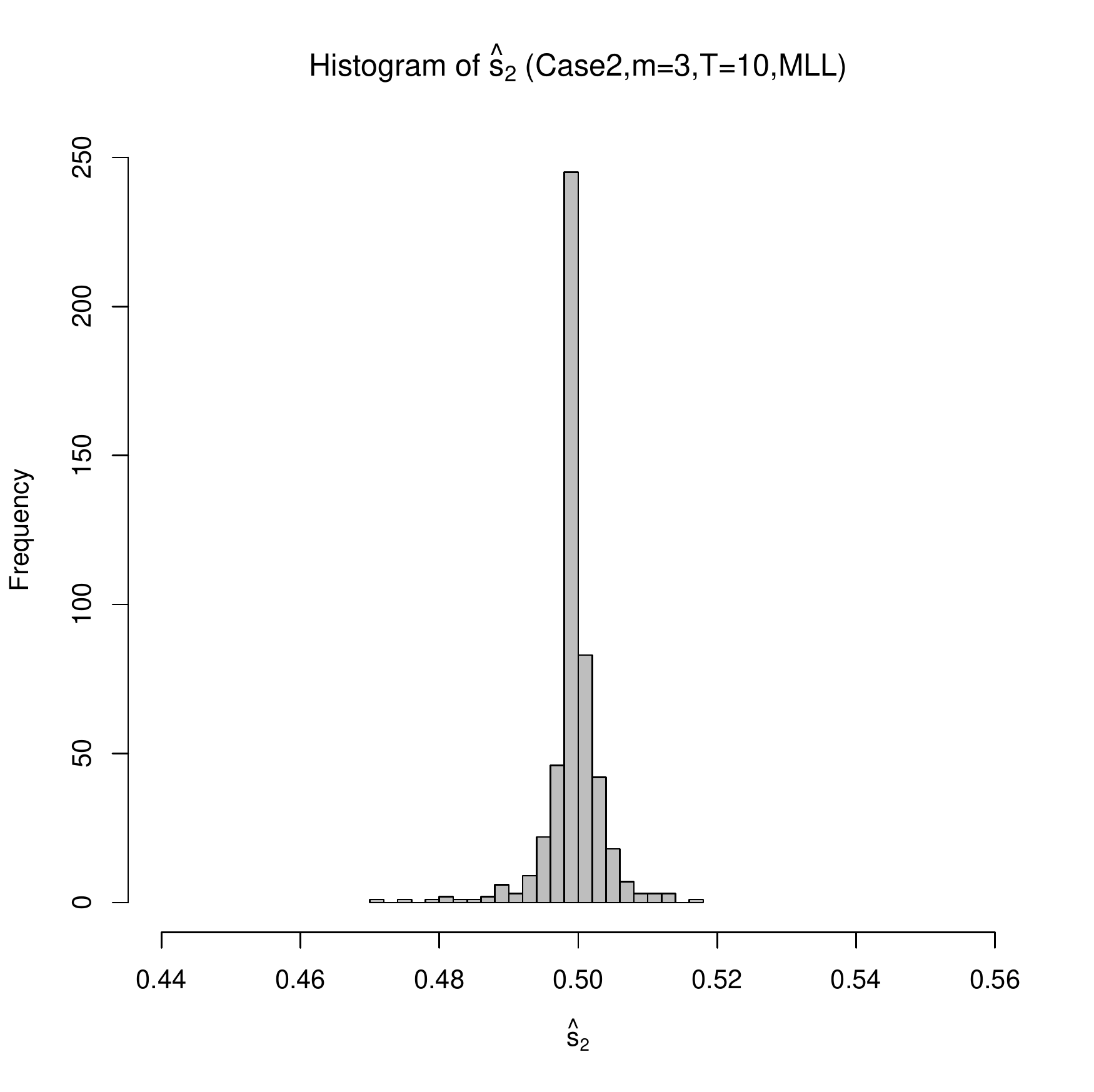}
\includegraphics[height=2.35in,width=2.04in]{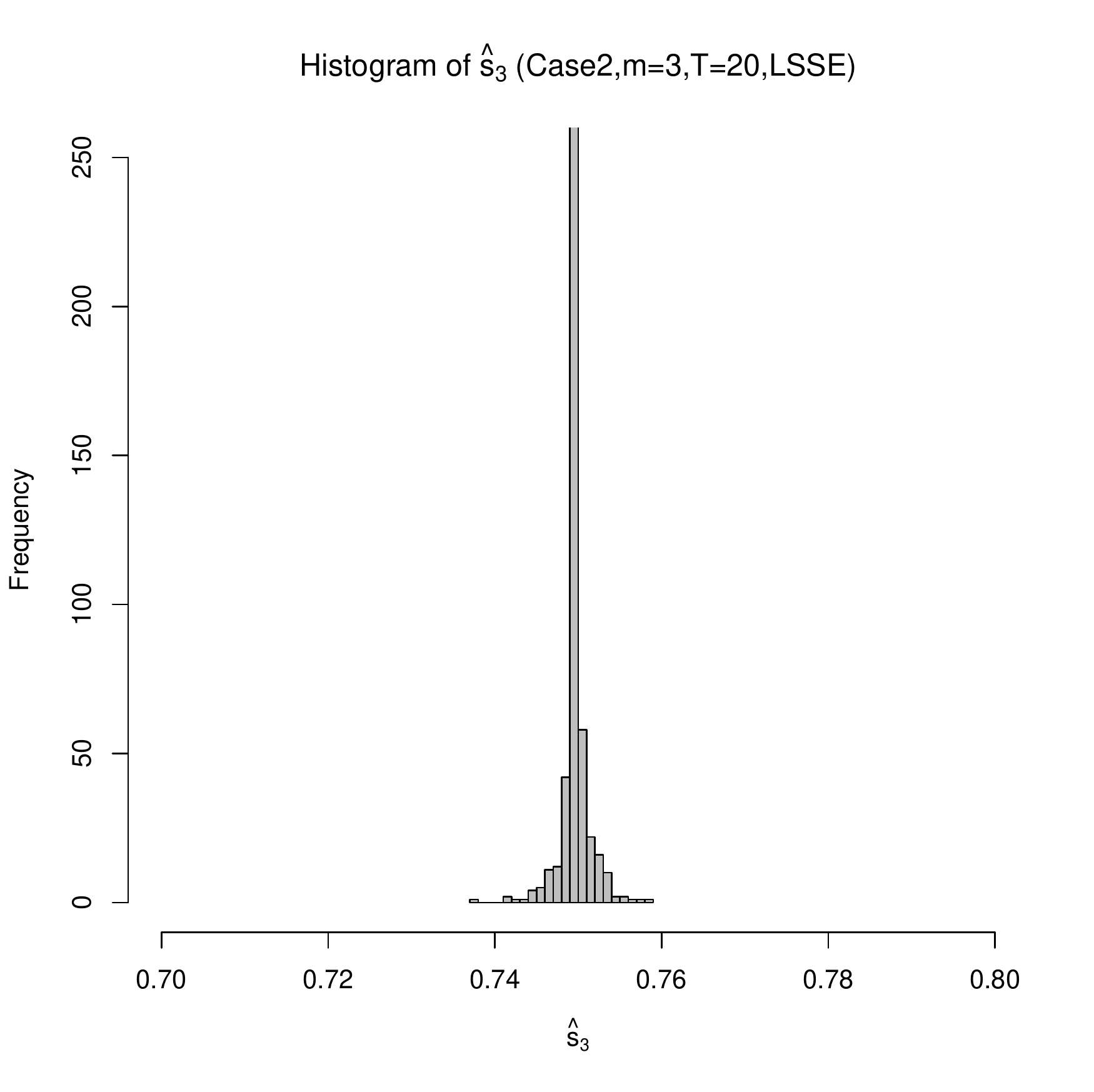}
\caption{\small Histogram of $\hat{s}$ for $m^0=3$ and $T=20$ with exact value $s^0=(0.25, 0.50, 0.75)$}
\label{figure6}
\end{figure}

\newpage
\noindent {\bf Acknowledgements:} \textit{F. Chen and R. Mamon wish to thank the hospitality
and financial support of the Fields Institute for Research in Mathematical
Sciences, Toronto, Ontario, Canada, where this research was initially
conceived and partially conducted. Generous support
on this research collaboration from the Dean of Faculty of Science
is gratefully acknowledged.}\\
\ \\

\noindent {\large \bf References}

\end{document}